\documentclass[10pt]{article}
\usepackage[margin=1in]{geometry}
\usepackage{amsthm,amsfonts,amssymb,amsmath,comment,slashed,mathtools,enumitem}
\usepackage{empheq} 
\usepackage[T1]{fontenc} 
\usepackage[utf8]{inputenc}

\usepackage[affil-it]{authblk} 
\usepackage[hidelinks]{hyperref}    
\usepackage{graphicx} 
\usepackage{epstopdf} 
\usepackage{thm-restate} 
\usepackage{xargs}          
\usepackage{mathrsfs} 
\usepackage{xcolor}  
\usepackage{bm}
\usepackage{bbm}
\usepackage{here} 
\usepackage[capitalize]{cleveref} 
\usepackage{tikz-cd} 



\pdfsuppresswarningpagegroup=1 

\allowdisplaybreaks

\usepackage[colorinlistoftodos,prependcaption,textsize=tiny]{todonotes}
\newcommandx{\unsure}[2][1=]{\todo[linecolor=red,backgroundcolor=red!25,bordercolor=red,#1]{#2}}
\newcommandx{\change}[2][1=]{\todo[linecolor=blue,backgroundcolor=blue!25,bordercolor=blue,#1]{#2}}
\newcommandx{\info}[2][1=]{\todo[linecolor=OliveGreen,backgroundcolor=OliveGreen!25,bordercolor=OliveGreen,#1]{#2}}
\newcommandx{\improvement}[2][1=]{\todo[linecolor=Plum,backgroundcolor=Plum!25,bordercolor=Plum,#1]{#2}}
\newcommandx{\thiswillnotshow}[2][1=]{\todo[disable,#1]{#2}}
\usetikzlibrary{decorations.pathmorphing}

\hyphenation{Schr\"o-dinger Schwarz-schild Reiss-ner Nord-str\"om}

\theoremstyle{plain}
\newtheorem{definition}{Definition} 
\newtheorem{prop}{Proposition}
\newtheorem{lemma}{Lemma}
\newtheorem{cor}{Corollary}

\newtheorem{rmk}{Remark}
\declaretheorem[name=Theorem]{theorem}

\newcommand{\mr}{\mathring}
\newcommand{\wh}{\widehat}
\newcommand{\wt}{\widetilde}
\newcommand{\br}{\Breve}
\newcommand{\hz}{\hat{z}}

\makeatletter
\renewcommand{\paragraph}[1]{%
	\par 
	\addvspace{\medskipamount}
	\textit{#1\@addpunct{.}}\enspace\ignorespaces
}
\makeatother
\numberwithin{equation}{section}
\numberwithin{prop}{section}
\numberwithin{rmk}{section}
\numberwithin{lemma}{section}
\numberwithin{definition}{section}
\numberwithin{cor}{section}
\setcounter{tocdepth}{4}
\title{High regularity waves on self-similar naked singularity \\ interiors: decay and the role of blue-shift}
\author[1]{Jaydeep Singh\thanks{jaydeeps@math.princeton.edu}}
\affil[1]{\small  Department of Mathematics, Princeton University, Washington~Road,~Princeton,~NJ~08544,~United~States~of~America \vskip.1pc \ }

\date{\today}

\begin{document}
\maketitle
\begin{abstract}
    We consider solutions to the linear wave equation $\Box_{g}\varphi = 0$ on a class of approximately $k$-self-similar naked singularity interiors. This equation models the blue-shift effect, an instability exploited by Christodoulou \cite{chris3} in the proof of low-regularity weak cosmic censorship. Using a combination of resonance expansions and multiplier estimates, we find in the small-mass regime $k^2 \ll 1$ that the asymptotics of solutions are strongly sensitive to the regularity assumed on outgoing, characteristic initial data across the past light-cone of the singularity. Above a threshold regularity set by the $k$-self-similar scalar field, solutions are shown to always obey self-similar bounds, indicating that the blue-shift instability \textit{competes} with the stabilizing influence of high regularity. We conclude that a proper statement of weak cosmic censorship, as well as an understanding of the role of naked singularities in phenomena such as critical collapse, may depend on the topology of initial data.
\end{abstract}
\tableofcontents
\section{Introduction}
The spherically symmetric Einstein-scalar field system in $(3\!+\!1)$-dimensions is a geometric system of equations for a Lorentzian spacetime $({\mathcal{M}}, {g})$ and an associated real-valued scalar field ${\mathcal{\phi}}: {\mathcal{M}} \rightarrow \mathbb{R}$. Denoting by $\textrm{Ric}_{\mu \nu}, R$ the Ricci curvature tensor and scalar curvature respectively, the system takes the form 
\begin{equation}
    \label{eq:0}
    \begin{cases}
        \textrm{Ric}_{\mu \nu}[{g}] - \frac12 {g}_{\mu \nu} \textrm{R}[g]   = 2 \textrm{T}_{\mu \nu}[{g}, {\mathcal{\phi}}], \\[2\jot]
        \textrm{T}_{\mu \nu}[{g}, {\mathcal{\phi}}] = \partial_\mu \phi \partial_\nu \phi - \frac12 {g}_{\mu \nu} \partial^\alpha \phi \partial_\alpha \phi,\\[2\jot]
        \Box_{g} {\mathcal{\phi}} = 0.
    \end{cases}
\end{equation}
The foundational works \cite{chris0.5}--\cite{chris3} established (\ref{eq:0}) as a useful model for studying a variety of problems in general relativity: low regularity well-posedness \cite{chris1}, the behavior of general dispersive solutions \cite{chris0.5,lukoh1,lukohyang1}, trapped surface formation and black hole stability \cite{chris1.5, IgorMihalis_Price}, 
the structure of spacelike singularities \cite{anzhang2,angajic}, and---most relevant to the current work---the existence and stability of naked singularities \cite{chris2, chris3, liuli, singh}.

Naked singularities are a class of spacetimes containing gravitational or matter singularities, without a corresponding event horizon; see Figure \ref{fig1}. In black hole spacetimes, the horizon plays an important role in the lives of observers who remain in the exterior region, preventing signals from the black hole interior (and any singularity lurking inside) from reaching them. In contrast, the breakdown of spacetime near a naked singularity propagates to all observers, with potentially disastrous consequences \cite{wald_WCC}.
\begin{figure}[t]
\centering
\includegraphics{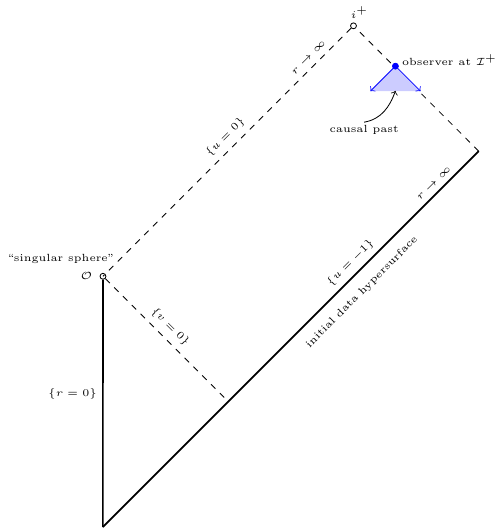}
\caption{Penrose diagram representation of a spherically symmetric, globally naked singularity. To the future of initial data, the spacetime is foliated by outgoing null hypersurfaces, each extending from the center ${(r=0)}$ to null infinity ${(r\rightarrow \infty)}.$ Within finite $u$-coordinate time all observers intersect the future light-cone of the first singularity $\mathcal{O},$ which lies at the center. 
Also shown are the conventions for double-null coordinates adopted in this paper; the initial data surface is $\{u\!=\!-1\}$, and the singularity is normalized to ${(u,v)=(0,0)}$. }
\label{fig1}
\end{figure}

To make the discussion more precise, define\footnote{This terminology is non-standard, and is meant to distinguish the class of naked singularities considered here from the well-known examples of singular spacetimes that lack an event horizon, e.g. negative mass or super-extremal black holes.} a \textit{globally naked singularity arising from collapse} as a solution to (\ref{eq:0}) which is the maximal development of regular, asymptotically flat initial data, and which moreover admits a future incomplete null infinity. A modern formulation of Penrose's weak cosmic censorship conjecture asserts that this class of spacetimes should be \textit{non-generic} \cite{penroseWCC, chris5}. It is thus of great interest to understand the space of possible naked singularities, and the mechanism for their (in-)stability.

The most definitive result concerning the nonlinear instability of naked singularity solutions to (\ref{eq:0}) is \cite{chris3} (see also \cite{liuli}). One statement proved there is the following: any globally naked singularity in the low-regularity bounded variation (BV) class\footnote{This class was introduced in \cite{chris1}, where local existence and small-data global existence was shown.} with $\frac{m}{r} \not\rightarrow 0$ along the past light-cone of the singular point is unstable. Explicit unstable perturbations of the scalar field are identified, which have the properties that they (a) are generically only BV regular, and (b) are supported in the exterior region of the singularity. 

The mechanism serving as the ``cosmic censor'' in \cite{chris3} is the blue-shift instability. In contrast to the celebrated red-shift effect along black hole horizons (cf. \cite{redshifteffect, lectures_linearwaves}), which provides a \textit{stabilizing} mechanism for waves, the blue-shift is an instability mechanism\footnote{Although we will only view the blue-shift in the context of naked singularities, it plays a significant role in the study of strong cosmic censorship, dating back to Penrose's investigations of Cauchy horizons in the interior of black holes \cite{penroseSCC}.}. Importantly, the proof in \cite{chris3} quantifies a blue-shift localized to the past light-cone of the singular point, which makes heavy use of both the regularity and support of the initial data perturbations.

In the present paper we initiate the study of stability of naked singularity solutions to (\ref{eq:0}) under perturbations of general regularity and support. Our analysis applies to $k$-self-similar naked singularity solutions $(\mathcal{M}, g_k, \phi_k)$, originally studied in \cite{brady,chris2}, as well as the related examples constructed in \cite{singh}. We moreover restrict to the setting of small mass spacetimes, equivalent to taking\footnote{Without loss of generality we can restrict to $k>0$.} \underline{$k^2 \ll 1$}. For any such spacetime with metric $g$, we analyze solutions to the linear wave equation 
\begin{equation}
    \Box_{g} \varphi = 0, \label{eq:1}
\end{equation}
which can be viewed as a crude linearization of the full Einstein-scalar field system (\ref{eq:0}), neglecting linear backreaction on the geometry. The approach of considering (\ref{eq:1}) as a first step to understanding nonlinear instability has been fruitful in other general relativity contexts, e.g. the stability problem for black hole exteriors \cite{lectures_linearwaves}. Similar studies of linear field equations on negative mass and overcharged black holes spacetimes have been considered, yielding a complex picture of the stability of these singular spacetimes \cite{GibbHartIshib,GleisDotti,GleisDottiPull,shadi_stalker}. See also \cite{nolan} for a study of (\ref{eq:1}) in a wide class of naked singularities arising from collapse, in a region close to the future lightcone of the singularity.

We consider the characteristic initial value problem with data posed on an outgoing null cone\footnote{For coordinate conventions, refer to Figure \ref{fig1}.} ${\{u=-1\}}$, and establish boundedness and sharp decay for $\varphi$ and first derivatives, as a function of the \underline{regularity} and \underline{support} of data. Here, regularity refers to the behavior of data towards the past light-cone of the singular point, $\{v=0\}$, which is quantified by Hölder-type spaces. Support refers to a classification of data as exterior data or interior data, according to whether $\varphi$ is supported entirely in the exterior region of the naked singularity, $\{v \geq 0\}$, or has non-trivial support in the causal past of the singularity, $\{v < 0 \}$.

Given that the background spacetime is itself singular as $(u,v) \rightarrow (0,0)$, we must specify what behavior for solutions to (\ref{eq:1}) is considered indicative of ``stability.'' The approach here is to compare with the $k$-self-similar scalar field $\phi_k$, which obeys the self-similar asymptotic ${\big|\frac{1}{\partial_v r_k} \partial_v \phi_k\big| \sim |u|^{-1}}$. As discussed in Section \ref{subsec:intro_blueshift}, it is known that generic \textit{low-regularity, exterior} solutions to (\ref{eq:1}) do not obey such a bound, but rather experience a blue-shift instability localized to $\{v=0\}$, resulting in the lower bound ${\big|\frac{1}{\partial_v r_k} \partial_v \varphi\big| \gtrsim |u|^{-1-k^2}}$. We therefore adopt the perspective that stability refers to the presence of self-similar bounds; see Definition \ref{defn:self-similarbounds}.

The behavior of (linear and nonlinear) waves on the exteriors of self-similar naked singularities was first studied in \cite{igoryak2}. There, it is shown for vacuum self-similar spacetimes that high-regularity can act as a stabilizing influence against the blue-shift; the same was shown to be true for the system (\ref{eq:0}) in \cite{singh}. We discuss the analogs of these results for the linear problem (\ref{eq:1}) below; see Proposition \ref{prop:extinst}. The behavior in this case is characterized by a threshold phenomenon: below a given regularity, set by that of the $k$-self-similar scalar field $\phi_k$ across the past light-cone of the singularity, a weaker form of the blue-shift instability continues to hold. However, above this threshold the instability suddenly ceases, and all solutions obey self-similar bounds.  

The case of interior perturbations, however, has not been previously considered in either the linear or nonlinear context. Heuristic arguments reviewed in Section \ref{subsec:intro_blueshift} suggest that the dependence of the blue-shift on regularity could be weaker in the interior, leading to the existence of high-regularity instabilities. Our main result is that this is not the case. As stated in Theorem \ref{thm:introrough1}, a similar threshold regularity exists for interior perturbations, at and above which self-similar bounds always hold. We moreover establish precise expansions for the self-similar components of the solution, up to (faster decaying) errors. As an application of the methods of this paper, we also show (Theorem \ref{thm:introrough2}) that a similar stability result holds for non-spherically symmetric solutions, provided the regularity is sufficiently above this threshold.

We conclude by outlining the remainder of the introduction. Section \ref{subsec:intro_statement} states our main result concerning solutions to (\ref{eq:1}) in the interior. Section \ref{subsec:intro_blueshift} makes precise the notion of the blue-shift instability, and gives additional motivation to the study of the naked singularity interiors. Section \ref{subsec:intro_proofoutline} sketches the proof, and Sections \ref{subsec:intro_outlook}--\ref{subsec:intro_related} interpret our results in relation to weak cosmic censorship and other problems in general relativity.

\subsection{Rough statement of results}
\label{subsec:intro_statement}
Let $(u,v)$ denote renormalized double-null coordinates defined in Section \ref{sec:renormalizedgauge}, $\omega \in \mathbb{S}^2$ angular coordinates, and $r_k(u,v)$ the area radius function associated to a $k$-self-similar spacetime $(\mathcal{M},g_k,\phi_k)$. With respect to this gauge, the singularity $\mathcal{O}$ will be represented by the idealized sphere  $\{u=v=0\}$, the interior region by $\{v \leq 0\}$, and the exterior by $\{v \geq 0\}$.

The following definition makes the notion of unstable solutions to (\ref{eq:1}) precise, as those whose $C^1$ norm grows strictly faster than dictated by self-similarity: 
\begin{definition} 
    \label{defn:self-similarbounds}
    A solution $\varphi(u,v,\omega) \in C^2({\mathcal{M}} \setminus \{v=0\}) \cap C^1(\mathcal{M})$ to (\ref{eq:1}) in renormalized gauge is said to satisfy \boldmath $C^1$ \unboldmath \textbf{self-similar bounds} if there exists a constant $A$ such that 
    \begin{equation}
        \label{eq:intro1}
       \max_{i+j+l = 1} \sup_{{\mathcal{M}}} \Big| |u|^{i+j+l} \Big(\frac{1}{\partial_u r_k}\partial_u \Big)^i \Big( \frac{1}{\partial_v r_k}\partial_v\Big)^j  \slashed{\nabla}^l \varphi \Big| \leq A. 
    \end{equation}
    Solutions for which (\ref{eq:intro1}) fails to hold are said to exhibit \textbf{faster than self-similar growth}, or more simply are said to exhibit \textbf{instability}.
\end{definition}

Our first result concerns spherically symmetric solutions to (\ref{eq:1}). To contextualize the statement, let $\mr{\phi}(v): [-1,0] \rightarrow \mathbb{R}$ denote the restriction of the $k$-self-similar scalar field $\phi_k(u,v)$ to $\{u\!=\!-1\}$. The function $\mr{\phi}(v)$ has a finite Hölder regularity $\mr{\phi}(v) \in C^{1,\frac{k^2}{1-k^2}}([-1,0])$ (cf. Proposition \ref{lemma:backgroundprelims1}), which will serve as the ``threshold regularity'' separating instability from self-similar bounds.
\begin{theorem}[Spherically symmetric solutions]
    \label{thm:introrough1}
    Fix a value of $k$ sufficiently small, and let $(\mathcal{M},g,\phi)$ denote a naked singularity interior that is ``sufficiently close'' to $(\mathcal{M}, g_k, \phi_k)$. Assume spherically symmteric, characteristic initial data $\varphi_0(v): \{u\!=\!-1\} \rightarrow \mathbb{R}$ is given for (\ref{eq:1}), which is supported in the interior and sufficiently regular on $\{v<0\}$. Then the following holds:
    \begin{enumerate}[label=(\alph*)]
        \item (Polynomial convergence to constants above threshold) Let $\varphi_0(v)$ have regularity as ${v \rightarrow 0}$ which is ``strictly better'' than that of $\mr{\phi}(v)$. The solution $\varphi$ to (\ref{eq:1}) then satisfies $C^1$ self-similar bounds. Moreover, $\varphi$ converges polynomially in $|u|$ to a constant, and the first derivatives of $\varphi$ obey bounds which are polynomially in $|u|$ better than self-similar. The polynomial improvement over self-similarity is dictated by the regularity gap between that of the initial data, and the threshold regularity.
        
        \item (Self-similar bounds at threshold) Suppose there exists a constant $c \in \mathbb{R}$, and data $\varphi_{reg}(v)$ with regularity ``strictly better'' than that of $\mr{\phi}(v)$, such that $\varphi_0(v) \sim c |v|^{1+\frac{k^2}{1-k^2}} + \varphi_{reg}(v)$ holds as $v \rightarrow 0$. The solution $\varphi$ to (\ref{eq:1}) then satisfies $C^1$ self-similar bounds. Moreover, $\varphi$ converges polynomially in $|u|$ to a linear combination of constants and $\phi_k(u,v)$, and similarly for the first derivatives of $\varphi$.
        \item (Instability below threshold) Let $\varphi_0(v)$ have regularity ``strictly between'' $\text{BV}([-1,0])$ and that of $\mr{\phi}(v)$. Generic solutions $\varphi$ to (\ref{eq:1}) with this data are unstable, with $\frac{1}{\partial_v r_k}\partial_v \varphi$ growing at a rate strictly between $|u|^{-1}$ (self-similar rate) and $|u|^{-1-k^2}$ (blue-shift rate).
    \end{enumerate}
\end{theorem}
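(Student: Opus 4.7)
My plan is to combine a self-similar resonance expansion in the interior with multiplier estimates controlling the resulting errors. On the background $(\mathcal{M}, g_k, \phi_k)$ the principal part of $\Box_g$ commutes with the dilation vector field $L = u\partial_u + v\partial_v$ (up to small corrections, by the assumed closeness of $(\mathcal{M}, g, \phi)$ to the self-similar spacetime). Writing $z = v/u$ for the self-similar coordinate in the interior and separating $\varphi = |u|^{-\lambda} F_\lambda(z)$ turns the wave equation into a regular-singular ODE in $z$ at $z = 0$ (the past light cone of the singularity), whose indicial exponents depend on $\lambda$ and $k$. Classifying the pairs $(\lambda, F_\lambda)$ for which a global self-similar solution exists with the Hölder regularity dictated by the data at $z=0$ yields the resonance spectrum.

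For $k^2 \ll 1$ the relevant modes are: $\lambda = 0$ with $F_0 \equiv 1$ (the constant mode); $\lambda = 0$ with a non-trivial profile agreeing with $\phi_k$ and of Hölder regularity exactly $C^{1, k^2/(1-k^2)}$ at $z = 0$; and a continuous family of positive-$\lambda$ modes whose regular Frobenius branch at $z=0$ has regularity strictly below threshold and whose solution grows like $|u|^{-\lambda}$, with $\lambda = k^2$ realizing the blue-shift rate $|u|^{-1-k^2}$ for first derivatives. The threshold regularity of $\mr{\phi}(v)$ enters precisely as the regularity of the non-trivial $\lambda = 0$ profile: data above threshold is orthogonal to every non-constant mode, data at threshold is orthogonal to every positive-$\lambda$ mode, and data below threshold excites a mode of exponent $\lambda \in (0, k^2)$ fixed by the regularity. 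The three parts of the theorem now follow by projecting the data onto the admissible modes and estimating the remainder: for (a), data above threshold has only the constant projection, so $\varphi$ converges polynomially to the limit $\varphi_\infty$ at a rate equal to the excess regularity gap; for (b), the matching $c|v|^{1 + k^2/(1-k^2)}$ is engineered to isolate the $\phi_k$-mode, so the limit is a linear combination of a constant and $\phi_k(u,v)$ at the self-similar rate, with polynomially faster errors; for (c), generic data below threshold projects non-trivially onto an intermediate-$\lambda$ mode, producing growth strictly between self-similar and blue-shift, with genericity verified by constructing one-parameter families of data for which the projection passes continuously through zero. Throughout, the remainder after extracting finitely many modes is handled by multiplier estimates applied to the commuted quantities $(u\partial_u)^i (v\partial_v)^j \varphi$, with weights adapted to the self-similar geometry and degenerating appropriately as $v \to 0$.

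The main obstacle I anticipate is the Frobenius analysis at $\{v=0\}$ on the perturbed spacetime $(\mathcal{M}, g, \phi)$, which is only approximately self-similar. The indicial equation must depend only weakly on the perturbation so that the threshold regularity and resonance spectrum are preserved, and the non-trivial $\lambda = 0$ branch must be matched precisely with the prescribed data profile in part (b). A related difficulty is the calibration of the multiplier weights near the past light cone: the blue-shift obstructs naive energy estimates, and the weights must be chosen to produce exactly the polynomial-in-$|u|$ improvement set by the regularity gap, without either losing integrability at $\{v=0\}$ or failing to close the estimate globally in the interior.
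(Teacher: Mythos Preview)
Your high-level architecture---identify a resonance spectrum, extract leading modes, control the remainder by multiplier estimates---matches the paper. But the implementation you sketch differs from the paper's in an essential way, and your sketch has a genuine gap at the hardest step.

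The paper does not carry out a direct Frobenius analysis in the similarity coordinate $z$, nor does it ``project'' data onto modes via any orthogonality. Instead it passes to hyperbolic coordinates $(t,x)$ (with $x \to \infty$ corresponding to $z \to 0$), takes a Fourier--Laplace transform in $t$, and constructs a scattering resolvent $R(\sigma)$ for a half-line operator $P_k(\sigma) = -\partial_x^2 + 4q_k V_k(x)e^{-2q_k x} - \sigma^2$ with exponentially decaying potential. The ``resonances'' are then zeros of a Wronskian $\mathcal{W}(\sigma) = W[f_{(dir),\sigma}, f_{(out),\sigma}]$ between Dirichlet-at-the-axis and outgoing-at-infinity solutions. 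A backwards-scattering argument translates null regularity $\mathcal{C}^\alpha_{(hor)}$ of characteristic data into exponential decay $e^{-\alpha q_k x}$ of induced spacelike data; this decay rate determines how far into the lower half-plane one can deform the inverse-transform contour, and hence which resonances contribute to the late-time expansion. This is the mechanism replacing your ``orthogonality'': data of regularity $\alpha$ yields spacelike data whose resolvent is meromorphic only on $\{\Im\sigma > -\alpha q_k\}$, so only poles above this line contribute.

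The gap in your proposal is the step you label ``for $k^2 \ll 1$ the relevant modes are\ldots''. This is the heart of the proof and is highly nontrivial. The blue-shift heuristic predicts a resonance near $\sigma = -iq_k = -i(1-k^2)$, not at $\sigma = -i$; distinguishing these requires showing an $O(k^2)$ cancellation. The paper does this by (i) extracting the leading exponential tail of $V_k$ via a small-$k$ expansion of the $k$-self-similar metric, reducing the leading-order spectral problem to a modified Bessel equation with explicit solutions $I_{\pm i\sigma/q_k}$; (ii) applying Rouch\'e's theorem to $\mathcal{W}(\sigma)$ against $1/\Gamma(1-i\sigma/q_k)$ to show there is exactly one simple zero in the relevant strip; and (iii) identifying that zero as $\sigma = -i$ by checking that $e^x\mr{r}(x)$ (the quantity $r_k$ rescaled) satisfies both Dirichlet and outgoing conditions---the outgoing condition being verified via a regularity criterion (outgoing at $x\to\infty$ $\Leftrightarrow$ $C^{2-}_z$ at $z=0$), which holds precisely because $\mr{r} \in C^2_z$. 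Frobenius indicial data at $z=0$ alone cannot resolve this global matching problem between the axis and the horizon.

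Two smaller corrections. First, $\phi_k$ is not a second $\lambda=0$ mode: it carries a $-k\log|u|$ term and arises in the paper as a \emph{generalized} resonance---when the resolvent is extended to act on threshold-decay data $\sim e^{-x}$, the simple pole at $\sigma=-i$ becomes a double pole, and $\phi_k$ is the rank-two Jordan partner of the constant mode. Second, for part (c) the unstable solutions are not resonances at all (they are Dirichlet but not outgoing); their $|z|^\alpha$ singularity at $z=0$ is precisely the ingoing component, and the instability rate comes from an \emph{additional} pole of the resolvent at $\sigma = -i\alpha q_k$ that appears only when acting on the specific exponential tail $e^{-\alpha q_k x}$ produced by below-threshold data.
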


For non-spherically symmetric solutions, we establish a corresponding stability statement for solutions with sufficiently high regularity.
\begin{theorem}[Non-spherically symmetric solutions]
    \label{thm:introrough2}
    With the assumptions of Theorem \ref{thm:introrough1} on the background spacetime, let ${\varphi_0(v,\omega)\!: \{u\!=\!-1\} \rightarrow \mathbb{R}}$ denote characteristic initial data supported on angular modes $\ell \geq 1$. Assume the data is supported in the interior region, sufficiently regular in angular directions, and has null regularity ``sufficiently above'' that of $\mr{\phi}(v)$. Then the solution $\varphi$ to (\ref{eq:1}) obeys $C^1$ self-similar bounds. Moreover, $\varphi$ converges to zero at a polynomial rate in $|u|$, and the first derivatives of $\varphi$ obey bounds which are polynomially in $|u|$ better than self-similar.
\end{theorem}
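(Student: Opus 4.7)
The plan is to reduce Theorem \ref{thm:introrough2} to a mode-by-mode application of the techniques developed for Theorem \ref{thm:introrough1}. Decomposing $\varphi(u,v,\omega) = \sum_{\ell \geq 1} \varphi_\ell(u,v)\, Y_\ell(\omega)$ in spherical harmonics, spherical symmetry of the background $(\mathcal{M},g_k,\phi_k)$ and of the perturbation to $g$ implies that each $\varphi_\ell$ satisfies an effectively $(1\!+\!1)$-dimensional wave equation of the form treated in Theorem \ref{thm:introrough1}, with an additional zeroth-order term $\ell(\ell+1)\, r^{-2} \varphi_\ell$. In the self-similar coordinates $(\tau,z) = (-\log|u|, v/u)$ the factor $r^{-2}$ contributes a bounded, strictly positive potential to the rescaled $\tau$-evolution, effectively shifting the spectrum of the underlying stationary problem.

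First I would rerun the resonance and multiplier analysis of Theorem \ref{thm:introrough1} at fixed $\ell \geq 1$. The key qualitative change is that the angular mass term removes the zero resonance responsible for convergence to nonzero constants in the spherically symmetric problem, and displaces the remaining resonances into the region yielding strictly better than self-similar decay. For $k^2 \ll 1$ this shift is stable under perturbation from the trivial $k=0$ case, where the claim reduces to elementary decay of fixed-$\ell$ modes in the interior lightcone of Minkowski. This produces a per-$\ell$ decay estimate of the form $|\varphi_\ell| \lesssim |u|^{\alpha_\ell}$ with $\alpha_\ell > 0$, together with analogous improvements for $(\partial_u r_k)^{-1}\partial_u \varphi_\ell$ and $(\partial_v r_k)^{-1}\partial_v \varphi_\ell$ over their self-similar counterparts.

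The main obstacle is obtaining uniform control when summing the harmonic expansion, because the blue-shift on $\{v=0\}$ is driven by null regularity while the summation requires angular regularity. The plan is to commute the wave equation with the angular momentum fields generating $SO(3)$, which preserve $\Box_{g_k}$ and produce only controllable errors against the approximately self-similar $g$. Applying the per-mode estimates to $\slashed{\nabla}^n \varphi$ and then an $L^2(\mathbb{S}^2) \hookrightarrow L^\infty(\mathbb{S}^2)$ Sobolev inequality converts angular $L^2$ bounds into the pointwise estimate (\ref{eq:intro1}). The hypothesis that the null regularity of the data lie ``sufficiently above'' the threshold $C^{1,\frac{k^2}{1-k^2}}$ encodes precisely the size of the regularity gap needed to absorb both the derivative loss from this commutation-Sobolev step and the exchange of null regularity for the claimed quantitative polynomial improvement over self-similar decay.
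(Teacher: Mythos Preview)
Your proposal takes a genuinely different route from the paper. You plan to rerun the full Steps 1--3 machinery (backwards scattering, resonance expansion, then multipliers) of Theorem \ref{thm:introrough1} at each fixed $\ell \geq 1$, arguing that the angular mass shifts resonances into the stable region. The paper instead bypasses Steps 1--2 entirely for $\ell \geq 1$: the proof of Theorem \ref{theorem:angularthm} (Section \ref{subsec:proofcompleted_4}) uses only the second-order multiplier estimate of Proposition \ref{prop:secondordermultiplierest_l1}. The reason this closes directly, without any spectral input, is the pointwise sign $\partial_z L_{k,\ell}(z) \leq 0$ established in Lemma \ref{lem:propertiesofL_kl}; this makes the dangerous bulk term favorable \emph{uniformly in $\ell$}, so the commuted energy estimate already yields decay at rate $e^{-(1+\delta k^2)s}$ for each mode with constants growing only polynomially in $\ell$, after which summation via angular Sobolev is elementary. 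The price the paper pays is that regularity must sit at least $O(k^2)$ above threshold (the constant $B$ in Theorem \ref{theorem:angularthm}), which is exactly what ``sufficiently above'' encodes.

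Your route is plausible in outline, and your observation that the $\sigma=-i$ resonance (the constant mode) is absent for $\ell \geq 1$ is correct. But there is a real gap: to make the resonance expansion useful after summation you would need uniform-in-$\ell$ control on the cutoff resolvent $R_\ell(\sigma)$ and on the location of its poles, and you do not indicate how to obtain this. The paper's spectral analysis at $\ell=0$ (Section \ref{sec:scattering}) is already delicate, relying on explicit small-$k$ expansions of $V_k$; redoing it with the additional $\ell$-dependent, order-one potential $L_{k,\ell}$ and extracting bounds growing only polynomially in $\ell$ is substantially more work than your perturbation-off-Minkowski heuristic suggests. The paper sidesteps all of this by exploiting the monotonicity of $L_{k,\ell}$, which you do not mention and which is the structural reason the $\ell\geq 1$ case is in fact \emph{easier} than $\ell=0$.
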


\subsection{Exterior perturbations and the blue-shift}
\label{subsec:intro_blueshift}
In this section we argue that sufficiently low regularity solutions to (\ref{eq:1}) supported in the exterior region experience a blue-shift effect. We then state a result, Proposition \ref{prop:extinst}, which is the analog of Theorem \ref{thm:introrough1} for the exterior region. 

\subsubsection*{The blue-shift heuristic}

For naked singularity solutions to (\ref{eq:0}), or more generally first central singularities, the blue-shift triggers growth of transversal derivatives of the gravitational degrees of freedom along the ingoing null cone of the singularity. On the $k$-self-similar metric\footnote{The arguments of this section apply to asymptotically $k$-self-similar metrics as well.} $g_k$ this effect is explicit: restricting (\ref{eq:1}) to $\{v=0\}$ and inserting the values fixed by self-similarity, we find  
    \begin{equation}
        \label{eq:intro2}
        \partial_u \Big(\frac{1}{\partial_v r_k} \partial_v \varphi \Big) - \frac{1+k^2}{|u|} \Big(\frac{1}{\partial_v r_k} \partial_v \varphi \Big) = -\frac{1}{r_k}\partial_u \varphi.
    \end{equation}
This equation can be interpreted as a differential equation for $\partial_v \varphi$. Directly integrating yields 
\begin{align}
    \label{eq:intro2.4}
    \Big(\frac{1}{\partial_v r_k} \partial_v \varphi \Big)(u,0) &= |u|^{-1-k^2} I_k(u),
\end{align}
where
\begin{align}
    \label{eq:intro2.5}
    I_k(u) \doteq \Big(\frac{1}{\partial_v r_k} \partial_v \varphi \Big)(-1,0) -  \int_{-1}^u |u'|^{1+k^2}\frac{1}{r_k}\partial_u \varphi (u',0)du'.
\end{align}
Recall that self-similar bounds require (\ref{eq:intro2.4}) to be bounded by $|u|^{-1}$ as $u \rightarrow 0$, i.e. for $|I_k(u)| \lesssim |u|^{k^2}$. However, the individual terms in (\ref{eq:intro2.5}) are generically non-vanishing, and thus the asymptotic for $\frac{1}{\partial_v r_k} \partial_v \varphi$ depends on the degree of cancellation (if any) between the terms defining $I_k(u)$. 

We will, somewhat informally, term the expectation that generic data to (\ref{eq:1}) leads to non-self-similar growth along $\{v=0\}$ with rate $\frac{1}{\partial_v r_k} \partial_v \varphi \sim |u|^{-1-k^2}$ the blue-shift heuristic. The explicit rate $|u|^{-1-k^2}$ is termed the \textbf{blue-shift rate}, in contrast to the \textbf{self-similar rate} $|u|^{-1}$. In light of Definition \ref{defn:self-similarbounds}, any rate strictly between the two is also considered an instability.

\subsubsection*{Exterior perturbations}
For solutions with support in the exterior region, the value of the ingoing scalar field, $\partial_u \varphi(u,0)$, and therefore the integrand in (\ref{eq:intro2.5}), is fixed by the choice of initial data. Assuming this data satisfies self-similar bounds, it follows that for \textit{generic} choices of $\partial_v \varphi (-1,0)$, the solution grows with the blue-shift rate along $\{v=0\}$. For example, in the low regularity bounded variation (BV) class\footnote{Local well-posedness is known to hold in this class, by arguments analogous to \cite{chris1}.}, it is sufficient to take as data $\partial_u \varphi(u,0) =0, \ \partial_v \varphi(-1,v) = \mathbbm{1}_{\{v \in [0,1]\}}(v).$ 

If we require higher regularity for the scalar field data---for example, even continuity of $\partial_v \varphi(-1,v)$ across $\{v=0\}$---the situation changes considerably. Regularity implies $\partial_v\varphi (-1,v) \rightarrow 0$ as $v\rightarrow 0^+$, and thus the analysis of (\ref{eq:intro2.5}) is inconclusive, assuming as before the ingoing scalar field data to vanish. One still expects the blue-shift instability to exist, but its strength is now modified by the degree of localization to $\{v=0\}$; the latter is directly tied to the degree of regularity assumed on initial data. 

The following proposition quantifies the strength of the blue-shift for a family of exterior data whose regularity interpolates between below-threshold, threshold, and above-threshold regularities as the index $\alpha$ varies. Compare the statement below with Theorem \ref{thm:introrough1}.
 
\begin{prop}
    \label{prop:extinst}
    Fix any $k$-self-similar spacetime\footnote{This proposition is valid for any $(\epsilon_0,k)$-admissible spacetime, as defined in Section \ref{sec:admissiblespacetimes}.} with $k^2 \in (0,\frac13)$, and consider spherically symmetric, exterior characteristic initial data of the form 
        \begin{equation}
            \label{thm:extinst_eqn1}
            \begin{cases}
                \partial_v \varphi_0(-1,v) = g(v) |v|^{\alpha-1}\mathbbm{1}_{\{v \in [0,1]\}}(v) , \\[\jot]
                \partial_u \varphi_0(u,0) = 0, \\[\jot]
                \varphi_0(-1,0) = 0,
            \end{cases}
        \end{equation}
        where $g(v) \in C_v^1([0,1])$ is bounded above and below by positive constants, and $\alpha \in (1,2)$ is a fixed parameter. The following dichotomy holds:
        \begin{enumerate}[label=(\alph*)]
            \item (Self-similar bounds at and above threshold) If $\alpha \geq \frac{1}{1-k^2} $, the solution $\varphi$ to (\ref{eq:1}) with data (\ref{thm:extinst_eqn1}) satisfies $C^1$ self-similar bounds. Moreover, if $\alpha > \frac{1}{1-k^2}$, then $\varphi$ vanishes polynomially in $|u|$ as $u \rightarrow 0$, and its first derivatives obey bounds which are polynomially better than the self-similar rate.
            \item (Instability below threshold) If $1 < \alpha < \frac{1}{1-k^2} $, the solution $\varphi$ to (\ref{eq:1}) with data (\ref{thm:extinst_eqn1}) satisfies  
            \begin{equation*}
                 \big\| \frac{1}{\partial_v r_k}\partial_v \varphi \big\|_{L^\infty(\Sigma_u)} \sim |u|^{-1 -k^2 + (1-k^2)(\alpha-1)},
            \end{equation*}
            where $\Sigma_u$ denotes the outgoing null surfaces $\{u=\text{const}.\}$. This rate approaches the blue-shift rate as $\alpha \rightarrow 1$, and the self-similar rate as $\alpha \rightarrow \frac{1}{1-k^2}$.
        \end{enumerate}
\end{prop}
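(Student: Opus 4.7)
The plan is to identify $\varphi$ with a dominant self-similar mode on the background and read off the $L^\infty$ scaling directly from the corresponding exponent.

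First, on the exactly $k$-self-similar spacetime, I would separate variables via the ansatz $\varphi(u,v) = |u|^s F_s(z)$, $z = v/|u|$, reducing the spherically symmetric wave equation to a second-order ODE for $F_s$ with a regular singular point at $z=0$. A Frobenius analysis at $z=0$, using the self-similar structure of $r_k$ responsible for the $(1+k^2)/|u|$ coefficient in (\ref{eq:intro2}), should produce two indicial roots $\rho = 0$ and $\rho(s)$. The branch $\rho(s)$ is pinned down by the blue-shift: at $s = -k^2$ we must have $\rho = 1$, so as to reproduce $\partial_v \varphi|_{v=0} \sim |u|^{-1-k^2}$ from (\ref{eq:intro2.4}), and matching the full blue-shift structure gives $\rho(s) = (1+s)/(1-k^2)$.

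Second, I would match the data. The exterior conditions force $\varphi(u,0) = 0$ (from $\varphi(-1,0) = 0$ and $\partial_u \varphi(u,0) = 0$), selecting the $\rho = \rho(s)$ Frobenius branch. Integrating the $v$-data near $v = 0^+$ yields $\varphi(-1,v) \sim \frac{g(0)}{\alpha} v^\alpha + (\text{more regular})$, and matching this $\alpha$-power with $z^{\rho(s)}$ at $\tau = 1$ identifies the dominant mode exponent
\begin{equation*}
    s_\star = (1-k^2)\alpha - 1.
\end{equation*}
I would then decompose the data as $\varphi_0(v) = c\, v^\alpha \chi(v) + \varphi_{\text{reg}}(v)$, with $\chi$ a smooth cutoff near $v=0$ and $\varphi_{\text{reg}}$ strictly more regular near $v = 0^+$. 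The first piece evolves to $c|u|^{s_\star} F_{s_\star}(v/|u|)$ with coefficient proportional to $g(0) > 0$, and the remainder $\varphi_{\text{reg}}$ is controlled by the multiplier estimates developed elsewhere in the paper, producing bounds strictly better than the $s_\star$-mode.

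Third, I would differentiate. Since $\partial_v \varphi \sim |u|^{s_\star - 1} F_{s_\star}'(v/|u|)$ with $F_{s_\star}'$ bounded on $[0, z_+]$ and $F_{s_\star}'(z) \sim \alpha z^{\alpha-1}$ near $z = 0$, the $L^\infty$ norm on $\Sigma_u$ is attained at $z \sim 1$, giving matching upper and lower bounds
\begin{equation*}
    \Big\| \tfrac{1}{\partial_v r_k}\partial_v \varphi \Big\|_{L^\infty(\Sigma_u)} \sim |u|^{s_\star - 1} = |u|^{(1-k^2)\alpha - 2} = |u|^{-1 - k^2 + (1-k^2)(\alpha - 1)}.
\end{equation*}
The dichotomy then becomes arithmetic: $s_\star - 1 \geq -1 \Leftrightarrow \alpha \geq 1/(1-k^2)$, with strict inequality yielding the claimed polynomial improvement over the self-similar rate and $\varphi$ itself decaying at rate $|u|^{s_\star}$. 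The hard part will be closing the remainder estimate for $\varphi_{\text{reg}}$ on the merely $(\epsilon_0, k)$-admissible backgrounds, where the mode decomposition is asymptotic rather than exact; the smallness $k^2 \in (0, 1/3)$ is likely what secures both coercivity of the multiplier identities and non-degeneracy of the indicial root $\rho(s)$, keeping it away from integer collisions that would force logarithmic corrections.
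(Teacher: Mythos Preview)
Your approach is genuinely different from the paper's, and considerably more elaborate than necessary. The paper (Appendix~\ref{section:app1}) never separates variables or invokes Frobenius theory. Instead it works directly with $\partial_v\varphi$: set $\Psi(u,v) \doteq |u|^{-1} f_0(v)$, which solves the transport equation $\partial_u\Psi - |u|^{-1}\Psi = 0$ obtained by freezing the coefficient $|\nu|/r$ of (\ref{sec2.5:eq1.5}) at its value on $\{v=0\}$. The remainder $(\partial_v\varphi)_p = \partial_v\varphi - \Psi$ is then shown to be lower order via a short bootstrap in a self-similar neighborhood $\{v/|u|^{q_k} \leq \delta\}$, using only the soft geometric input $\big|\tfrac{|\nu|}{r} - \tfrac{1}{|u|}\big| \lesssim v/|u|^{q_k}$. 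One then reads off
\[
\tfrac{1}{\lambda_k}\partial_v\varphi \sim \tfrac{f_0(v)}{|u|^{1+k^2}} + O(|u|^{-1+\eta}),
\]
and evaluates on the self-similar curve $(u,\delta|u|^{q_k})$ to get the exact rate. No mode construction, no remainder theory, no smallness of $k$; the argument works for all $k^2\in(0,\tfrac13)$.

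Your route could in principle be made to work on the exactly self-similar background, but several steps are underspecified or misdirected. First, the ``multiplier estimates developed elsewhere in the paper'' (Section~\ref{sec:multipliers}) are built for the \emph{interior} region and are not what the paper uses in the exterior; you would need to supply your own control of $\varphi_{\mathrm{reg}}$, and for below-threshold $\alpha$ this is precisely the nontrivial content. Second, your belief that $k^2\in(0,\tfrac13)$ furnishes smallness for coercivity is a misreading: $\tfrac13$ is the naked-singularity range, not a perturbative parameter, and the paper's exterior argument uses no smallness at all. Third, there is some coordinate slippage: with your ansatz $z=v/|u|$ and $\partial_v r_k \sim |u|^{k^2}$ in renormalized gauge, the chain $\partial_v\varphi \sim |u|^{s_\star-1}F'$ followed by $\tfrac{1}{\partial_v r_k}\partial_v\varphi \sim |u|^{s_\star-1}$ drops a factor of $|u|^{-k^2}$; your final exponent is correct only because of a compensating inconsistency in the choice of self-similar variable. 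The paper's transport-plus-bootstrap argument sidesteps all of this.
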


This result is a straightforward application of the techniques in \cite{igoryak1, igoryak2, singh}, and is proved in Appendix \ref{section:app1}. In fact, applying similar methods gives a stronger nonlinear result, stated informally here:
\begin{prop}
        \label{prop:extinst2}
        Fix an $(\epsilon_0,k)$-admissible spacetime, and consider sufficiently small, spherically symmetric, exterior perturbations of the outgoing scalar field data as in (\ref{thm:extinst_eqn1}), with $\alpha \geq \frac{1}{1-k^2}$. Then the solution to the Einstein-scalar field system exists in any  self-similar neighborhood\footnote{i.e. a neighborhood of the form $\{0 \leq \frac{v}{|u|^{1-k^2}} \leq c\}$, with $c$ an arbitrary positive constant.} of $\{v=0\}$. The spacetime is free of trapped surfaces, and the scalar field satisfies $C^1$ self-similar bounds. Moreover, if $\alpha > \frac{1}{1-k^2}$, then the background spacetime is ``asymptotically stable'' as $(u,v) \rightarrow (0,0)$.
\end{prop}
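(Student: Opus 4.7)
The plan is to run a nonlinear bootstrap argument in a fixed self-similar neighborhood $\{0 \leq v/|u|^{1-k^2} \leq c\}$ of $\{v=0\}$, leveraging the linear transport analysis behind Proposition \ref{prop:extinst}(a) together with the nonlinear double-null framework developed in \cite{singh}. Write $\psi \doteq \phi - \phi_{\text{bg}}$ for the scalar field perturbation off the admissible background, and denote by $(\delta r, \delta \Omega^2, \delta m)$ the corresponding metric and Hawking mass deviations. The bootstrap norms would measure these quantities in self-similar weighted $L^\infty$: for instance $|u|\,|(\partial_v r)^{-1}\partial_v \psi|$, together with analogous weights for $(\delta r, \delta \Omega^2, \delta m)$, are assumed bounded by $2A\epsilon$ in the region $\{-1 \leq u \leq u_*,\ 0 \leq v \leq c|u|^{1-k^2}\}$, where $\epsilon$ measures the size of the data perturbation in (\ref{thm:extinst_eqn1}) and $A$ absorbs the implicit constants from Proposition \ref{prop:extinst}.

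Under the bootstrap assumptions the perturbed geometry is an $\epsilon$-small deformation of the admissible background, so the nonlinear wave equation for $\psi$ reduces, modulo lower-order terms controlled by the bootstrap, to the linear equation on the background. I would then reproduce the argument of Proposition \ref{prop:extinst}(a) on this effective background: derive the analog of the transport equation (\ref{eq:intro2}) for $(\partial_v r)^{-1}\partial_v \psi$ along $\{v=0\}$, integrate against the weight $|u'|^{1+k^2}$ to produce the analog of $I_k(u)$ in (\ref{eq:intro2.5}), and use $\alpha \geq \frac{1}{1-k^2}$ to show $|I_k(u)| \lesssim |u|^{k^2}$, with an additional factor $|u|^{(1-k^2)(\alpha - \frac{1}{1-k^2})}$ when $\alpha > \frac{1}{1-k^2}$. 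Propagation of this bound from $\{v=0\}$ to the full self-similar neighborhood then follows from integrating the radial wave equation in the $v$-direction, whose source terms are controlled by the already-established estimates and the bootstrap assumptions on the geometry.

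The controlled scalar field is then inserted into the Raychaudhuri and constraint equations in double-null gauge, producing transport equations for $r, \partial_u r, \partial_v r$ and $m$. Integrating these along the null directions and applying Gronwall yields self-similar bounds on $(\delta r, \delta \Omega^2, \delta m)$ that improve the bootstrap assumption by a factor of $\epsilon$. In particular, $\partial_v r > 0$ is preserved throughout the region, precluding trapped surface formation, and a standard continuity argument extends the solution to any self-similar neighborhood with arbitrary $c>0$. When $\alpha > \frac{1}{1-k^2}$, the polynomial gain inherited by $\psi$ and its first derivatives propagates through the Einstein equations to the geometric deviations, so all bootstrap norms vanish as $(u,v)\to (0,0)$, giving the asymptotic stability claim.

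The hard part will be closing the bootstrap at the threshold $\alpha = \frac{1}{1-k^2}$, where the linear estimate of Proposition \ref{prop:extinst}(a) is sharp and the $C^1$ self-similar bound for $(\partial_v r)^{-1}\partial_v \psi$ is only marginally satisfied. One must verify that the nonlinear corrections to the transport equation for $(\partial_v r)^{-1}\partial_v \psi$—arising from backreaction of $\psi$ onto $r, \partial_v r,$ and $\Omega^2$—do not introduce polynomial or logarithmic losses that would destroy this marginal control. The small-mass regime $k^2 \ll 1$ is essential here: it ensures that the weight $|u'|^{1+k^2}$ dominates the perturbation-induced factors, so that the integration defining $I_k$ can absorb these corrections without losing the required self-similar rate.
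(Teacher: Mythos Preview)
The paper does not give a proof of this proposition: it is stated informally immediately after Proposition~\ref{prop:extinst}, with the comment that it follows by ``applying similar methods,'' citing \cite{igoryak1, igoryak2, singh}. The only relevant argument in the paper is the linear proof of Proposition~\ref{prop:extinst} in Appendix~\ref{section:app1}, which would serve as the template for a nonlinear bootstrap.

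Your overall bootstrap scheme is in the right spirit, but the scalar-field step as you describe it has a gap. You propose to reproduce the $I_k(u)$ computation of (\ref{eq:intro2.4})--(\ref{eq:intro2.5}) along $\{v=0\}$ and then propagate outward. But for the data (\ref{thm:extinst_eqn1}) with $\alpha>1$, both $\partial_v\varphi(-1,0)=0$ and $\partial_u\varphi(u,0)=0$, so $I_k\equiv 0$ and the restriction to $\{v=0\}$ is vacuous; the paper itself remarks in Section~\ref{subsec:intro_blueshift} that for continuous $\partial_v\varphi$ the $I_k$ analysis is ``inconclusive.'' The actual linear argument in Appendix~\ref{section:app1} does not restrict to $\{v=0\}$: it subtracts an explicit leading-order profile $\Psi(u,v)=|u|^{-1}f_0(v)$ valid throughout the self-similar neighborhood, and closes a bootstrap for the remainder $(\partial_v\varphi)_p$ by conjugating the transport equation by a weight $w=|u|\,v^{-p_k\eta}$ and exploiting $v$-smallness (the factor $\delta$ in $\mathcal{Q}^{(ex),\delta}$). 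The nonlinear version would follow this template, with additional quadratic sources from the metric perturbation absorbed by the smallness of the data. Separately, your claim that $k^2\ll1$ is essential at threshold is not supported: Proposition~\ref{prop:extinst} and the argument of Appendix~\ref{section:app1} cover the full range $k^2\in(0,\tfrac13)$ without any smallness assumption on $k$.
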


\subsubsection*{Critical collapse and the existence of dispersive perturbations}
Above, we have motivated our use of interior perturbations as a problem of extending \cite{chris3} to show nonlinear instability to trapped surface formation in high-regularity. There is, however, a complementary motivation coming from the literature surrounding ``critical collapse.'' For a detailed review, see \cite{gundreview}. 

Numerical simulations of Einstein-matter systems interpolating between small-data and large-data regimes have consistently found that naked singularities (typically with a continuous or discrete self-similarity) appear as local attractors for solutions on the verge of either dispersing or forming black holes. These attractors are termed ``critical solutions,'' and it is of interest to identify these solutions. Candidates for the system (\ref{eq:0}) have been numerically constructed in \cite{gund1,gund2}, and analytically for the Einstein-$\textrm{SU}(2)$ system in \cite{bizon_2002}. However, the full critical collapse picture has not been established for any model.

A consequence of this picture, if valid, would be the existence of perturbations of certain naked singularities which lead to dispersion, i.e. to the singularity itself disappearing! Such an instability cannot be captured by arguments of the type given in \cite{chris3}, which concern the exterior region (and thus cannot affect the dynamics of singularity formation). Even in the case of $k$-self-similar solutions, it is not known if dispersive perturbations could exist. The methods for studying interior perturbations introduced in the present work, although linear in nature, may be useful in addressing these questions.

\subsection{Proof outline}
\label{subsec:intro_proofoutline}
We focus this discussion on the proof of Theorem \ref{thm:introrough1}(a), which contains the key techniques of the paper. Recall this theorem asserts $C^1$ self-similar bounds and convergence to constants for solutions to (\ref{eq:1}) arising from data with regularity \textit{above threshold}. When relevant, we interperse remarks on the threshold and below threshold settings, as well as the case of non-spherically symmetric solutions. Therefore, unless otherwise specified we work with spherically symmetric solutions $\varphi(u,v)$ supported in the interior region of a fixed spacetime $(\mathcal{M}, g, \phi)$. For concreteness, this spacetime can be assumed to be $k$-self-similar for some $0 < k \ll 1$.

For a fixed value of $k$, define the constants $q_k \doteq 1-k^2$, $p_k \doteq (1-k^2)^{-1}$. Observe $q_k < 1 < p_k.$

\subsubsection*{Coordinates and geometric setup}
We first introduce two non-double null coordinate systems: similarity coordinates $(s,z)$ and hyperbolic coordinates $(t,x)$. Here $s, t$ serve as ``time'' coordinates, and $z, x$ as ``space'' coordinates. Similarity coordinates are regular across $\{v=0\}$, and surfaces of constant $s$ are outgoing null. Hyperbolic coordinates, on the other hand, are built out of an asymptotically (past) null slicing, with surfaces of constant $t$ tracing out asymptotically (past) null hyperboloids $\{|u| |v|^{p_k} = \text{const.}\}$.

Define $\psi \doteq r \varphi$, where $r(u,v)$ is the area radius function of the spacetime. It follows that (\ref{eq:1}) reduces to
\begin{equation}
        \label{eq:proofout1}
        \partial_s \partial_z \psi - q_k |z| \partial_z^2 \psi + q_k \partial_z \psi + V(s,z) \psi = 0,
\end{equation}
where $V(s,z) \sim k^2$ is a positive potential with small amplitude. For a $k$-self-similar spacetime, $V(s,z) \doteq V_k(z)$ reduces to a function of the spatial coordinate $z$. 

To prove self-similar bounds for the solution, it is equivalent to establish $|\partial_s \psi|, |\partial_z \psi| \lesssim e^{-s}$ as ${s \rightarrow \infty}$. The main obstacle to proving these estimates is the blue-shift effect, which we see by restricting (\ref{eq:proofout1}) to $\{z=0\}$ as an equation for $\partial_z \psi(s,0)$. Directly integrating the equation as in Section \ref{subsec:intro_blueshift} suggests the asymptotic $\partial_z \psi \sim e^{-(1-k^2) s}$, which is the blue-shift rate\footnote{Recall we are estimating $r\varphi$, rather than $\varphi$ itself. The bound consistent with self-similarity is then $|r\varphi| \lesssim |u|$, and the blue-shift bound is $|r\varphi| \lesssim |u|^{1-k^2}$.} expressed in similarity coordinates. The challenge of the proof is in extracting additional \textit{decay} from the term $V(s,z)\psi$. As ${V \sim k^2}$ is small, it is natural that this balance is delicate. 

\begin{rmk}
    For non-spherically symmetric solutions, the projection onto a fixed $(m,\ell)$-mode satisfies (\ref{eq:proofout1}) with an additional potential $L_{\ell}(s,z) \approx \frac{\ell(\ell+1)|u|^2}{r^2}$. Unlike the potential $V(s,z)$, the angular potential is of size $\sim 1$, and so for small $k$ one expects the blue-shift term to be dwarfed by the contribution of the angular terms.
\end{rmk}

The argument for Theorem \ref{thm:introrough1}(a) is naturally divided into three steps. We devote a section to each below.

\subsubsection*{Step 1: Backwards scattering}
The regularity of characteristic initial data $\psi_0(z)$ is described by a scale of Hölder-type spaces $\mathcal{C}^{\alpha}_{(hor)}(\{s=0\})$, $\alpha \in (1,2)$, modeled on the functions $|z|^{\alpha} \in \mathcal{C}^{\alpha}_{(hor)}(\{s=0\})$. These spaces only distinguish regularity as $z \rightarrow 0$. To the past of this null cone, we require all data to be sufficiently smooth (e.g., $C^m$ for $m \geq 5$). The $k$-self-similar scalar field lies in $\mathcal{C}^{p_k}_{(hor)}(\{s=0\})$, and we thus take $\alpha \in (1,p_k)$ as the range of below-threshold regularity, $\alpha = p_k$ as threshold regularity, and $\alpha \in (p_k,2)$ as above-threshold regularity.

Although we are ultimately interested in the asymptotics of solutions in $\{s \geq 0\}$, it is helpful to first translate the problem from one with outgoing characteristic data $\psi_0(z)$ along $\{s=0\}$ to an equivalent problem with data $(\psi(x), \partial_t \psi(x)) = (f_0(x), f_1(x))$ along the spacelike slice $\{t = 0\}$. This induced data is quantitatively more regular than the null data; the tradeoff, however, is that we must allow for data with exponential tails as $x \rightarrow \infty$. See Figure \ref{fig2}.

We build a dictionary between the regularity of $\psi_0(z)\in \mathcal{C}^{\alpha}_{(hor)}(\{s=0\})$ and the optimal decay achievable for spacelike data $(f_0(x),f_1(x))$, with the latter recorded in a scale of function spaces $\mathcal{D}^{\alpha,\bullet}_{\infty}(\{t=0\})$. This result, which can be viewed as a backwards scattering statement, follows by solving (\ref{eq:proofout1}) in a region $\{s \leq 0\}$ subject to prescribed data along $\{s=0\}$ and free data on $\{z=0, s \leq 0\}$. Exploiting the choice of free data, we use multiplier estimates to establish the optimal decay rates $|f_0|, |f_1| \lesssim e^{-\alpha q_k x}$. The estimates rely on the use of a multiplier vector field $\partial_s$, a multiple of the self-similar vector field. As the underlying spacetimes are asymptotically $k$-self-similar, this vector field is suitable for producing energy-type estimates.

\begin{figure}[h]
    \centering
    \includegraphics[scale=.8]{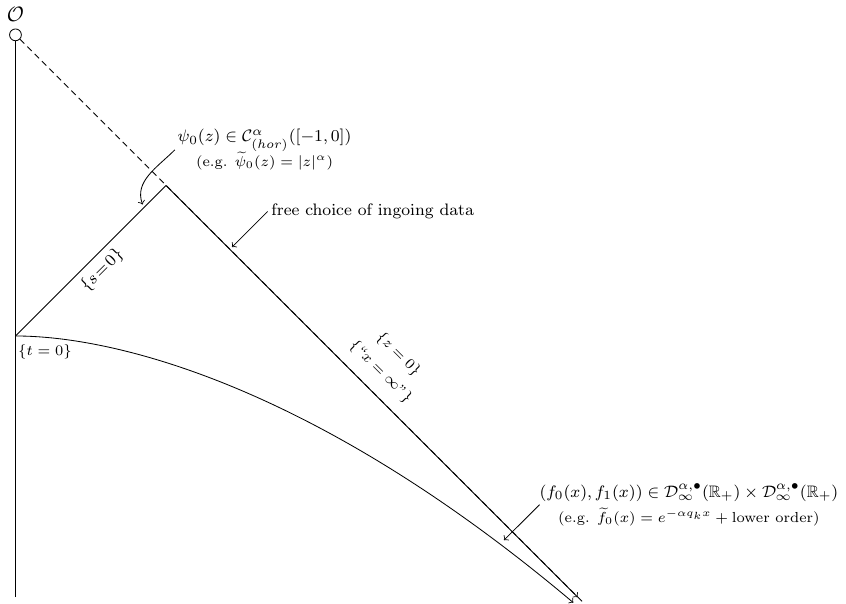}
    \caption{Schematic representation of Step 1. The diagram should be read ``towards the past,'' with the result being the construction of data along $\{t=0\}$. Also shown is an explicit example of null data $\wt{\psi}_0$, and a component of the associated spacelike data.}
    \label{fig2}
    \end{figure}

\subsubsection*{Step 2: Resonance expansion and solution theory in hyperbolic coordinates}
Having constructed spacelike data, we transform (\ref{eq:proofout1}) to hyperbolic coordinates and study the resulting wave equation. For simplicity, consider the case of exact $k$-self-similarity. The equation for $\psi$ is schematically
    \begin{equation}
        \label{eq:proofout2}
        \partial_t^2 \psi - \partial_x^2 \psi + 4q_k V_k(x)e^{-2q_k x} \psi = F(t,x),
    \end{equation}
where we allow for a forcing $F(t,x)$, and the potential $V_k(x)$ is a function of the spatial coordinate alone. Define the spectral family 
    \begin{equation*}
        P_k(\sigma) \doteq  -\partial_x^2 + (4q_k V_k(x)e^{-2q_k x} - \sigma^2)
    \end{equation*}
on the half-line $\mathbb{R}_+$. Taking a formal Fourier-Laplace transform of (\ref{eq:proofout2}), it is natural to study the spectral theory of $P_k(\sigma)$, and in particular the existence and analytic properties of a scattering resolvent $R(\sigma)$. 

The procedure for constructing $R(\sigma)$ goes through the methods of $1$-dimensional scattering theory on the half-line, as discussed in \cite{yafaev,dyatlovzworski}. Let $\rho_{x_i}$ denote a cutoff to $\{x \lesssim x_i\}$, and define the cutoff resolvent $R_{x_0,x_1}(\sigma) \doteq \rho_{x_0} R(\sigma)\rho_{x_1}$. It is straightforward given the exponential decay of the potential to construct $R_{x_0,x_1}(\sigma)$ as a meromorphic family of operators on $L^2_x(\mathbb{R}_+)$, for $\{\Im \sigma > -q_k\}$. The construction can be made explicit by the introduction of outgoing and Dirichlet solutions $f_{(out),\sigma}(x), f_{(dir),\sigma}(x)$ to the equation $P_k(\sigma)f = 0$, cf. the definition (\ref{eq:resolvkernel}). See also Figure \ref{fig3}.

However, in order to establish self-similar bounds via a resonance expansion, we require the existence of a meromorphic extension of $R_{x_0,x_1}(\sigma)$ on a full neighborhood of $\{\Im \sigma = -1\}$. To achieve this we apply the small-$k$ expansion of the metric $g_k$ in Appendix \ref{section:app3} to produce a splitting 
    \begin{equation*}
        P_k(\sigma) = \underbrace{-\partial_x^2 + (w_k^2 k^2 e^{-2q_kx})}_{P_k^{(0)}(\sigma)} + O_{L^\infty}(k^2e^{-4q_k(1-\epsilon)x}),
    \end{equation*}
where the leading order operator $P_k^{(0)}(\sigma)$ has an explicit exponential tail, and the constant $w_k$ admits an asymptotic expansion in $k$. With this splitting, we are able to define a cutoff resolvent $R^{(0)}_{x_0,x_1}(\sigma)$ for the operator $P_k^{(0)}(\sigma)$ which is meromorphic on $\mathbb{C}$. By gaining sufficiently good control on the outgoing and Dirichlet solutions for the leading order operator, a perturbation argument yields the existence of a meromorphic cutoff resolvent $R_{x_0,x_1}(\sigma)$ in the domain $\mathbb{I}_{[-\frac32, \frac12]} \doteq \{-\frac32 \leq \Im \sigma \leq \frac12 \}$.

This argument rests on having sufficient regularity on the background metric to extract the leading order term in an expansion of $V_k(x)$ near infinity. Such an expansion does not obviously continue to higher order, and therefore the existence of a meromorphic extension to $\mathbb{C}$ is unclear. The limited regularity motivates the construction of a resolvent by hand, and distinguishes this problem from (much more general) resolvent constructions in the asymptotically hyperbolic case (see \cite{dyatlovzworski}).

With the construction of a cutoff resolvent in $\mathbb{I}_{[-\frac32, \frac12]}$, two problems remain. The first consists in identifying the locations and multiplicities of any poles of $R_{x_0,x_1}(\sigma)$. This problem may be equivalently stated as one involving the location of zeros of the Wronskian $\mathcal{W}(\sigma) \doteq W[f_{(out),\sigma}, f_{(dir),\sigma}]$. It is here that the delicate balance in (\ref{eq:proofout1}) between the blue-shift growth mechanism and the potential damping due to $V(s,z)\psi$ is manifest. A perturbation argument using the explicit form of the leading order operator yields that there is a zero of $\mathcal{W}(\sigma)$ in a neighborhood of ${\sigma = -i q_k}$, and we have  
    \begin{equation*}
        \mathcal{W}(\sigma) \sim \frac{1}{\Gamma(\sigma + i q_k)} + O(k^2).
    \end{equation*}
    The $O(k^2)$ term depends on the asymptotic expansion for $w_k$, as well as the sub-leading terms we dropped by considering the leading order operator. As ${\sigma = -i q_k}$ corresponds to a resonance at the blue-shift rate, and ${\sigma = -i}$ the self-similar rate, it follows that distinguishing stability and instability rests on the sign of the small $O(k^2)$ terms. The assumption of small $k$ allows us to prove that the region $\Im \sigma \geq -1-O(k^2)$ contains exactly one, simple zero of $\mathcal{W}(\sigma)$. However, it offers no insight into whether this zero lies in the stable ($\Im \sigma \leq -1$) or unstable ($\Im \sigma > -1$) region. 

    The approach we take towards identifying the pole is motivated by \cite{warn}, which emphasizes the relationship between scattering resonances defined with repect to spacelike slices, and the regularity of the respective mode solutions along outgoing null slices. Exploiting this relationship, we are able to show that the unique zero of $\mathcal{W}(\sigma)$ lies at $\sigma = -i$, and corresponds to solutions of (\ref{eq:1}) with $\varphi = \text{const.}$ In a residue expansion, it is the residue at this pole that is responsible for the appearance of constants in the statement of Theorem \ref{thm:introrough1}(a).

    With the mode identified, the remaining problem concerns the applicability of a leading order resonance expansion for initial data that is not compactly supported, but rather decays with exponential tails. We show that in the high regularity case $\alpha > p_k$, the decay established by the backwards scattering result is fast enough to define a meromorphic extension of the resolvent with the same pole structure in $\{\Im \sigma \geq -1-\delta \}$, for some $\delta \in (0,  \alpha q_k - 1)$ sufficiently small depending on the regularity gap $|\alpha - p_k|$. A standard resonance expansion then yields the desired decay of $\varphi$ to constants, in regions $\{x \leq \text{const.}\}$. 

    We note that although the spectral theory is performed on a $k$-self-similar background, the flexibility to include a (suitably decaying) forcing $F(t,x)$ in (\ref{eq:proofout2}) allows the argument to go through on general asymptotically $k$-self-similar spacetimes. 

    \begin{figure}[h]
        \centering
        \includegraphics[scale=.75]{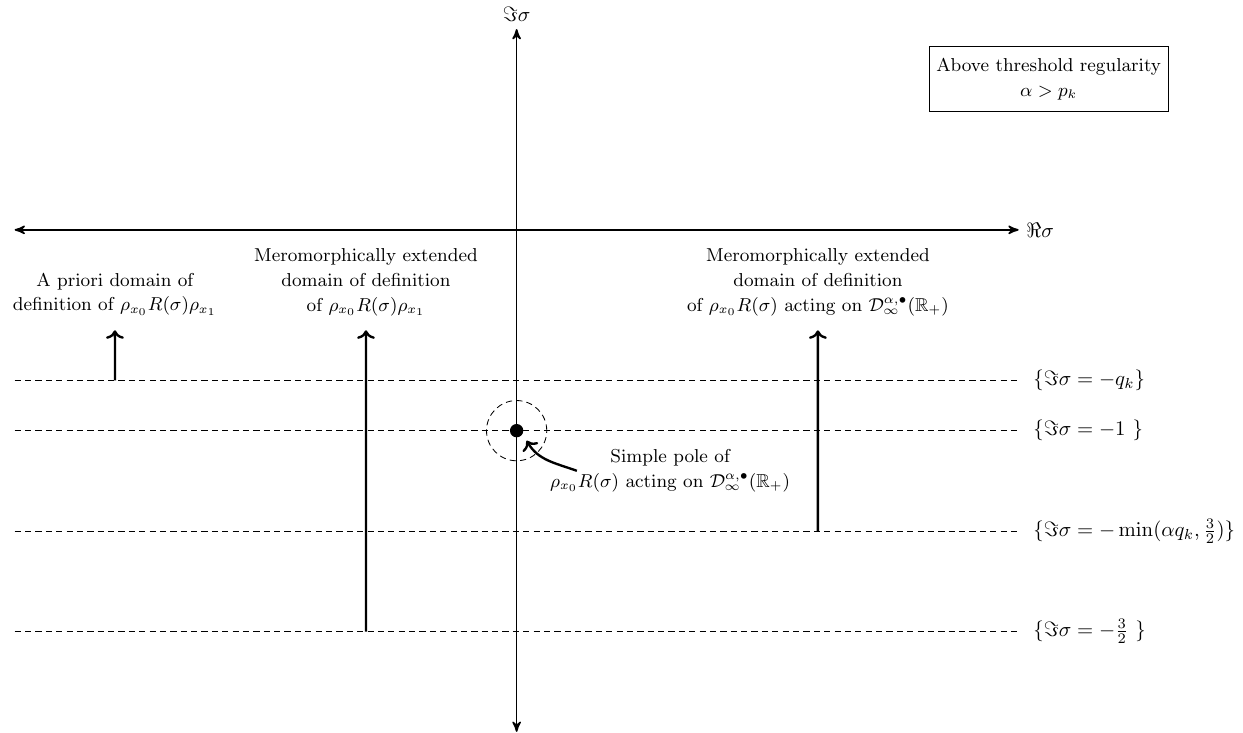}
        \caption{Domains in the complex plane for which various resolvent operators are defined. In the context of a resonance expansion, the imaginary part of $\sigma$ corresponds to growth/decay of waves at a rate $e^{(\Im \sigma) t}$ in regions $\{x \leq \text{const.}\}$.
        This diagram is specific to setting of Theorem \ref{thm:introrough1}(a).    } 
        \label{fig3}
        \end{figure}

\begin{rmk}
    \label{rmk:introthresholdreg}
    Steps 1--2 illustrate that the \textit{regularity} of null initial data determines the \textit{decay} of induced spacelike data, which in turns limits the \textit{domain of definition} of the resolvent operator. For regularities above threshold, i.e. ${\alpha>p_k}$, this procedure allows for a straightforward definition of the resolvent on a strict neighborhood of $\{\Im \sigma = -1\}$, corresponding in a resonance expansion to the self-similar rate. 
    
    For regularities at or below threshold, however, the situation is more complicated. Taking null data $\psi_0 \in \mathcal{C}^{p_k}_{(hor)}(\{s=0\})$, for example, the associated spacelike data decays at a rate which only allows for a resolvent to be defined in the half plane $\{\Im \sigma > -1 \}$. To address this issue, we further restrict the class of threshold regularity data to include functions in the space $\mathcal{C}^{\alpha,\delta}_{(hor)} \subset \mathcal{C}^{\alpha}_{(hor)}$. Data in this class admits an expansion 
    \begin{equation*}
        \varphi_0(z) \sim |z|^{p_k} + O(|z|^{p_k+\delta}),
    \end{equation*}
    This condition is verified for the $k$-self-similar scalar field, and the proof of backwards scattering in Step 1 can be extended to show that the induced data for $\varphi$ lies in $\wt{\mathcal{D}}_\infty^{\alpha,\bullet,\delta,c_0}(\mathbb{R}_+) \subset \mathcal{D}^{\alpha,\bullet}_{\infty}(\mathbb{R}_+)$, corresponding to functions with the exact exponential tail 
    \begin{equation*}
        f_0(x) = c_0 e^{-x} + O(e^{-(1+\delta q_k) x}).
    \end{equation*}
    It will follow from the spectral analysis that the resolvent, applied to \textit{this class of functions}, extends meromorphically to a strict neighborhood of $\{\Im \sigma = -1\}$. 
    
    An interesting consequence of this extension (cf. Proposition \ref{lem:extR_sigma_2}) is that the previously identified single pole at $\sigma=-i$ becomes a double pole, with the $k$-self-similar scalar field function serving as a generalized resonance. This serves to explain the origin of the logarithmic growth (in $|u|$) of $\phi_k(u,v)$; the resonance expansion for solutions in this threshold regularity class encounters a double pole at $\sigma=-i$, the residue of which generates an additional logarithmically growing term. 
\end{rmk}

\subsubsection*{Step 3: Multiplier estimates}
Expressing the results of the previous step in similarity coordinates yields the estimates 
\begin{equation}
    \label{eq:proofout3}
|\partial_s \big(r(\varphi - \varphi_\infty) \big)|, \  |\partial_z \big(r(\varphi - \varphi_\infty) \big)| \lesssim e^{-(1+\delta) s},
\end{equation}
holding in a region $\{z \leq z_0 <0\}$ bounded strictly away from the cone $\{z=0\}$, for some constants $\varphi_{\infty}$ and $\delta > 0$. The remaining step is to propagate these bounds globally in the interior region. The argument proceeds in physical space, using vector field multipliers and commutators for (\ref{eq:proofout1}). See Figure \ref{fig4}.

Define $\psi_{\delta} \doteq  e^{(1+\delta) s} r(\varphi - \varphi_\infty)$, which satisfies
\begin{equation*}
    \partial_s \partial_z \psi_{\delta} - q_k |z| \partial_z^2 \psi_{\delta} - (\delta+k^2)\partial_z \psi_{\delta} + V(s,z)\psi_{\delta} = 0.
\end{equation*} 
The challenge is finding a way to absorb the remaining blue-shift term $(\delta+k^2)\partial_z \psi_{\delta}$, which can contribute unfavorable bulk terms to a multiplier estimate. 
The key structure we use is that under commutation by $\partial_z$, this blue-shift term gains a good sign provided $\delta, k$ are small:
\begin{equation}
    \label{eq:proofout4}
        \partial_s \partial_z^2 \psi_{\delta} - q_k |z| \partial_z^3 \psi_{\delta} + (1-\delta-2k^2)\partial_z^2 \psi_{\delta} + V(s,z)\partial_z \psi_{\delta} + \partial_z V(s,z) \psi_{\delta}=0.
\end{equation} 
If $\psi_{\delta} \in W^{2,2}_z(\{s=0\})$, we may multiply by $\partial_z^2 \psi_{\delta}$ and integrate by parts in $\mathcal{R}(0,s_0) \doteq \{0 \leq s \leq s_0 \}$. To control the error terms, we observe that any term with support away from $\{z=0\}$ is already controlled, and that the potential $V$ is small in integrated norms: 
\begin{equation}
    \label{eq:proofout5}
    \sum_{0 \leq i+j \leq 1}\|\partial_s^i \partial_z^j V(s,z)\|_{L^p_z(\{s=s_0\})} \lesssim_p k^2,
\end{equation}
which holds for all sufficiently small $k$ and any $p < \infty$. The commuted energy estimate then closes, yielding control on the top order quantity $\|\partial_z^2 \psi_\delta \|_{L^2_z(\{s=s_0\})}$. Combined with the decay (\ref{eq:proofout3}) on lower order quantities near the axis and integration along characteristics, we will conclude the desired bounds.

An important complication arises, however, for less regular data. We allow for general data $\psi_{\delta} \in \mathcal{C}^{\alpha}_{(hor)}(\{s=0\}),$ and therefore for $\alpha < \frac{3}{2}$ it is not the case that $\psi_{\delta} \in W^{2,2}_z(\{s=0\})$. To adjust our estimates in this low regularity setting, we instead multiply (\ref{eq:proofout4}) by $|z|^{\omega} \partial_z^2 \psi_{\delta}$ and integrate by parts, for an $\omega>0$ depending on $\alpha$. The favorable bulk term in (\ref{eq:proofout4}) appearing in the energy estimate now depends on both $\delta$ and $\omega$, however, and one must be careful to choose the parameters appropriately for the estimate to close. 

\begin{figure}[h]
    \centering
    \includegraphics[scale=.7]{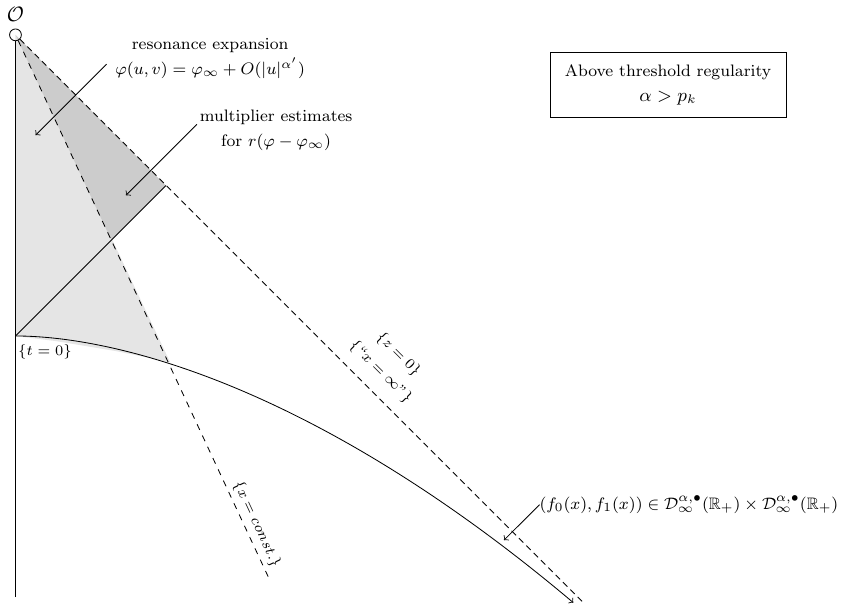}
    \caption{The conclusion of the resonance expansion (active in light shaded area) and the multiplier estimates (active additionally in the darker shaded area).}
    \label{fig4}
    \end{figure}

\begin{rmk}
The proof of Theorem \ref{thm:introrough2} concerning non-spherically symmetric solutions is carried out entirely at the level of multiplier estimates for the individual projections onto angular modes, and Steps 1--2 are not required. The primary novelty in this case is the presence of the angular potential $L_{\ell}(s,z)$. We must ensure that the estimates associated to $\partial_z$ multipliers and commutators are independent of $\ell$, as the angular terms are not small in $k$. It turns out that the derivatives of this potential have favorable sign (Lemma \ref{lem:propertiesofL_kl}), and so the estimates proceed much as in the spherically symmetric case. 

For regularities above, but within $O(k^2)$, of the threshold $\alpha = p_k$, the estimate (\ref{eq:proofout5}) turns out to be too weak to close the multiplier estimates. As a result, the precise statement of our result (cf. Theorem \ref{theorem:angularthm}) is restricted to regularities $\alpha \geq p_k + O(k^2)$. A similar obstacle is faced for spherically symmetric solutions; however, in that case we use additional estimates arising from the resonance expansion.
\end{rmk}

\subsection{Outlook}
    \label{subsec:intro_outlook}
In this section we bring together our main results for interior perturbations, Theorems \ref{thm:introrough1}--\ref{thm:introrough2}, with the corresponding statements Propositions \ref{prop:extinst}--\ref{prop:extinst2} in the exterior, and discuss the implications for weak cosmic censorship. We also remark on various open problems.

\vspace{.5em}
\noindent
\textbf{(Nonlinear) Stability above threshold regularity: } Theorem \ref{thm:introrough1}(a) implies that solutions to (\ref{eq:1}) arising from interior, sufficiently high-regularity initial data obey $C^1$ self-similar bounds. In fact, $\partial \varphi$ obeys bounds which are polynomially in $|u|$ \textit{better} than self-similar. This behavior suggests that similar decay should hold for classes of nonlinear wave equations on $(\epsilon_0,k)$-admissible backgrounds, which (a) respect self-similar scaling, (b) have a linearization given by (\ref{eq:1}), and (c) have only derivative nonlinearities. 

Ultimately we are interested in the stability of $(\mathcal{M}, g_k, \phi_k)$ as a solution to the spherically symmetric Einstein-scalar field system. Although this system does reduce (in an appropriate gauge) to a nonlinear wave equation, (\ref{eq:1}) is not the linearization. There is an additional nonlocal term capturing the linear backreaction on the geometry. Robust decay results for equations of this type are the main obstacle to proving a true stability statement in regularities above threshold, for interior perturbations. 

Under exterior, high-regularity perturbations, it is known that these spacetimes are asymptotically stable (Proposition \ref{prop:extinst2}). The techniques introduced in \cite{igoryak1,igoryak2}, when translated to the spherically symmetric system (\ref{eq:0}), allow one to bypass the linearization altogether, and directly apply the analysis of the linear wave equation (\ref{eq:1}). It is not clear if similar techniques apply to the study of the interior region, or if there are novel effects associated to the linear backreaction.

Results showing nonlinear stability of the interior region of finite regularity, singularity forming solutions to nonlinear wave equations have been obtained in other contexts, cf. \cite{burzio_krieger, kriegermiao, kriegermiaoschlag}. We emphasize that an analogous result for $k$-self-similar naked singularities, if true, would not pose any direct challenge to weak cosmic censorship. The most natural formulation of the conjecture is to require only that finite regularity naked singularities be unstable to perturbations of commensurate regularity. Still, such a result would signal that the conjecture depends sensitively on the topology of initial data considered, and may even depend on the scale of spaces in which one measures the ``threshold'' regularity. 

\vspace{.5em}
\noindent
\textbf{(Nonlinear) Stability/instability at threshold regularity: } In light of the stabilizing influence of high-regularity perturbations, one can instead ask if $k$-self-similar spacetimes are nonlinearly unstable to perturbations at precisely the regularity of the background scalar field. Our (mostly linear) analysis offers few expectations in this case.

Proposition \ref{prop:extinst2} establishes that the exterior region is \textit{orbitally stable} to perturbations of threshold regularity\footnote{In fact we do not expect asymptotic stability, a consequence of an additional $1$-parameter family of $k$-self-similar spacetimes with fixed $k$. The latter correspond to a different choice of Neumann data in a self-similar expansion; for more details, we refer to \cite{igoryak1, igoryak2}.}. Despite the absence of a quantitative asymptotic stability statement, we rule out any instability leading to trapped surface formation or to dispersion.

The linear result Theorem \ref{thm:introrough1}(b) establishes convergence of $\varphi$ to a leading order self-similar profile, comprised of a linear combination of constants and the $k$-self-similar scalar field $\phi_k$. Whether this decay is preserved for solutions to the full linearized system, and whether it is sufficient to close a nonlinear argument, remains to be understood. 

\vspace{.5em}
\noindent
\textbf{(Nonlinear) Instability below threshold regularity: }
Sufficiently irregular solutions to (\ref{eq:1}) experience an instability in both the interior and exterior regions. The explicit bound\footnote{A similar, but slightly weaker, bound holds for interior solutions.} in Proposition \ref{prop:extinst}(b) shows that the instability rate is confined to a range between the self-similar rate and the blue-shift rate, with the latter achieved only for BV perturbations. 

A consequence is that the blue-shift heuristic, stated quantitatively, does not hold for spherically symmetric solutions to (\ref{eq:1}) with data in any topology that is of the form $C^{1,\beta}([-1,0])$, $\beta > 0 $, near ${v=0}$. Such a result is essentially contained in \cite{igoryak1,igoryak2,singh} for perturbations supported in the exterior; however, that this should be true in the interior region as well, where both terms in (\ref{eq:intro2.5}) are generically non-trivial, is surprising. It follows that the existence of a cancellation along $\{v=0\}$ in (\ref{eq:intro2.5}) is a robust phenomenom, applying even in low regularities. 

We nevertheless have that generic soluions to (\ref{eq:1}) are unstable in all regularities strictly below threshold. For exterior perturbations one expects instability to hold nonlinearly as well, and to lead to trapped surface formation. For interior perturbations, the situation is unclear for similar reasons as sketched above. It is not clear, for example, whether trapped surface formation is the generic endstate of instability, or whether interior perturbations of regularity at, or strictly below, threshold could lead to a globally dispersive spacetime.

\vspace{1em}
    \noindent
    \textbf{Extensions to other naked singularity backgrounds: } A natural extension of this work is to consider solutions to (\ref{eq:1}) on different naked singularity interiors. With the explosion of interest in the subject in recent decades, there are many examples to consider \cite{igoryak2,yakov1,guohadzicjang,anzhang,choptuik1,gund1,gund2}. We briefly comment here on the issues that may arise in extending the analysis of this paper to new settings.

    The construction \cite{igoryak2,yakov1} of vacuum naked singularities in ${3\!+\!1}$ dimensions shares many features with Christodoulou's $k$-self-similar spacetimes, and are in fact built on an analogous $\kappa$-self-similarity adapted to the vacuum equations in double-null gauge. In particular, the spacetimes share a limited regularity near $\{v=0\}$ and a global smallness of all double-null quantities away from $\{v=0\}$, both of which are crucial to the setup of this paper. The interior construction in \cite{yakov1} is highly non-spherically symmetric, however, and one expects more sophisticated tools from scattering theory to be required in proving an analog of Theorem \ref{thm:introrough1}.

    In the setting of the spherically symmetric Einstein-scalar field system, the discretely self-similar solutions numerically constructed in \cite{choptuik1, gund1} are a natural counterpart to the continuously self-similar solutions considered here. These solutions are believed to be smooth, linearly stable to non-spherically symmetric perturbations \cite{gund2}, and linearly (mode) unstable to smooth, spherically symmetric perturbations \cite{gund1}. It would be instructive to consider the relationship of the blue-shift heuristic to the observed mode instability \cite{gund1}, as well as to lower regularity instabilities. Related comments apply to the continuously self-similar naked singularities constructed in \cite{bizon_2002} for the Einstein--$\textrm{SU}(2)$ system.

     One can also hope to extend the results of this paper to $k$-self-similar naked singularities in the full range $k^2 \in (0,\frac13).$ A closer inspection of the proof of Theorem \ref{thm:introrough1}(a) shows that the assumption of small-$k$ is only required in analyzing the locations of poles of the cutoff resolvent $\rho_{x_0}R(\sigma)\rho_{x_1}$. It is in principle possible that additional mode instabilities (i.e. poles of the cutoff resolvent with imaginary part greater than $-1$) could arise for small, but finite, values of $k$. 
    
    Finally, we point out a setting in which the role of blue-shift remains unclear, even for low-regularity perturbations. The work \cite{guohadzicjang} constructs self-similar Einstein-Euler spacetimes with singularities arising due to fluid blowup, rather than the collapse of gravitational degrees of freedom. The blue-shift heuristic no longer holds directly along the backwards light cone of the singularity, and further study into the nature of instability (or perhaps stability) of these solutions is required. 

\subsection{Related works}
\label{subsec:intro_related}

    \vspace{1em}
    \noindent
    \textbf{Waves on black hole exteriors: } Spectral theoretic methods, of the type applied in the current paper, have had significant application in the study of linear (and nonlinear) wave equations on stationary black hole exteriors. In the asymptotically de Sitter ($\Lambda > 0$) case, a characteristic feature of massless waves is the exponential convergence to constants, with the rate determined by a spectral gap between a constant mode and shallow scattering resonances \cite{bonyhafner,dyat1,dyat2}. For a physical space perspective on exponential decay, see \cite{mavr1}. 

    In the asymptotically flat case, decay for waves is characterized by polynomial Price-law type tails, with the scattering resonances playing a secondary role to branch cut singularities in the scattering resolvent. This theory is well developed on Schwarzschild in the massless and massive cases, see \cite{hintz1,YakovFedMax}.

    The conclusions of our analysis share many similarities with the positive cosmological constant case. However, the obstacles to proving decay in self-similar naked singularity and black hole spacetimes are quite different. In our context the (potential) instability is restricted to the spherically symmetric mode, and higher $\ell$-modes are strictly easier to study; in particular, there is no trapping of null geodesics. The main novelty of the problem here is the implicit and highly non-analytic nature of the naked singularity metrics.

    \vspace{1em}
    \noindent
    \textbf{Waves on black hole interiors: } 
    The blue-shift instability appearing in the study of (\ref{eq:1}) has an analog in the study of the strong cosmic censorship conjecture; see \cite{dafluk} for a detailed introduction. As a linear analog of the conjecture, many works have investigated conditions for $C^1$ (or $H^1$) blowup of transversal derivatives of linear waves near Cauchy horizons appearing in black hole interiors. In the $\Lambda = 0$ case  a sample includes \cite{lukoh2,luksb,YakovChristoph}. The blowup is induced by a blue-shift effect associated to the positive surface gravity of the future Cauchy horizon $\mathcal{CH}^+$. However, there is inevitably a competition between the \textit{decay} of the solution along the future event horizon $\mathcal{H}^+$ due to dispersion of waves in the exterior region, and the \textit{growth} near $\mathcal{CH}^+$ due to the blue-shift. In the $\Lambda = 0$ case, it appears that the latter mechanism is decisive.
    
    An intriguing picture emerges in the $\Lambda > 0$ case, however, which has surprising parallels with the present work. We refer to \cite{YakovDaf} for a discussion of the novel issues associated to Reissner-Nordström de-Sitter and Kerr de-Sitter black holes. Based on heuristic work \cite{Cardoso_etal}, there appears to be a regime of near-extremal Reissner-Nordström de-Sitter black holes for which \textit{sufficiently regular} Cauchy data does not lead to any instability along $\mathcal{CH}^+$! In the high regularity, $\Lambda > 0$ setting, the decay of waves along $\mathcal{H}^+$ is exponential in an appropriate $v$ coordinate, and is determined by the location of quasinormal modes associated to the black hole exterior. These modes (particularly, the shallow modes determining the late-time tails) are difficult to determine analytically, and have little a priori relation to the geometric surface gravities associated to the event and Cauchy horizons, which determine the scale of the red-shift and blue-shift effects respectively. For near-extremal Reissner-Nordström de-Sitter black holes these modes are located such that the exponential decay along $\mathcal{H}^+$ is too fast to trigger $H^1$ blowup along the Cauchy horizon. The fate of strong cosmic censorship for such black holes thus appears unclear. Further complicating the story, it seems that for Kerr de-Sitter black holes this issue does not arise \cite{Dias_etal}.

    One suggestion for recovering strong cosmic censorship in the Reissner-Nordström de-Sitter case is given in \cite{YakovDaf}. By dropping the requirement that initial data be highly regular (say, $C^\infty$), one can choose a class of lower regularity data for which the decay along $\mathcal{H}^+$ is shown to generically be much slower; in fact, at an exponential rate determined by the surface gravity of $\mathcal{H}^+$. It is moreover shown that generic data in this class does blow up in $H^1$ at the Cauchy horizon. 

    A correspondence can be made with the $k$-self-similar naked singularity case. The present study aims to identify a blue-shift instability along the ingoing null surface $\{v=0\}$, which however fails to manifest for all threshold and above-threshold regularity data. That this should be true reflects the fact that decay for sufficiently regular solutions to (\ref{eq:1}) is determined by scattering resonances lying closest to the real-axis. In contrast, in regularities strictly below that of the background an analogous mode construction as in \cite{YakovDaf} can be performed. One can then construct mode solutions with growth arbitrarily close to that predicted by the blue-shift heuristic. Such a construction is in fact contained in Theorem \ref{thm:introrough1}(c).

    We conclude from this connection that it may be unphysical to study the instability of finite regularity naked singularities in regularities strictly higher than a threshold set by the background. One would still hope to show instability at the threshold regularity; in light of Theorem \ref{thm:introrough1}, it is unclear what the role of the blue-shift would be in such a result.

\subsection*{Guide to the paper}
    \label{subsec:overviewofpaper}

    In Section \ref{sec:prelims}, we introduce the $k$-self-similar spacetime and define the full class of $(\epsilon_0,k)$-admissible spacetimes to which our main results apply. We also discuss various coordinate systems, the form of (\ref{eq:1}) with respect to such coordinates, and translate the assumptions on $(\epsilon_0,k)$-admissible spacetimes to statements on the coefficients of (\ref{eq:1}).

    Section \ref{section:mainresults} gives precise statements of the main results, Theorems \ref{theorem:mainthm}--\ref{theorem:angularthm}.

    Section \ref{sec:multipliers} derives a series of multiplier estimates for (\ref{eq:1}) in similarity coordinates. As a corollary, we close weak decay bounds for the spherically symmteric part of the solution (Propositions \ref{prop:basicdecayphysicalspace}, \ref{prop:terrible_boundedness}).
    
    Section \ref{sec:scattering} discusses the relevant scattering theory construction on $k$-self-similar spacetimes. In particular, the meromorphic extension of the resolvent $R(\sigma)$ on a variety of function spaces is discussed, as well as the structure and location of its poles.
    
    Section \ref{sec:proof} completes the proofs of Theorems \ref{theorem:mainthm}--\ref{theorem:angularthm}. 
    
    Appendix \ref{section:app1} contains the proof of Proposition \ref{prop:extinst}. Finally, Appendix \ref{section:app3} derives the first terms in a small-$k$ asymptotic expansion of all metric and scalar field quantities on a $k$-self-similar interior.

\subsection*{Acknowledgements}
The author benefited extensively from conversations with Mihalis Dafermos, Igor Rodnianski, and Yakov Shlapentokh-Rothman.

\section{Preliminaries}
\label{sec:prelims}
In Section \ref{section:sphersymm} we review the language of spherically symmetric spacetimes, and in Sections \ref{section:christodoulou}--\ref{sec:renormalizedgauge} we review Christodoulou's $k$-self-similar metrics in different double-null gauges. We collect useful notation in Section \ref{sec:notation}.

In Section \ref{sec:admissiblespacetimes} we define $(\epsilon_0, k)$-admissible spacetimes, and Section \ref{subsec:coordinatesystems} introduces additional (non-double null) coordinate systems. The remaining Sections \ref{subsec:waveeqnandseparationofvars}--\ref{section:functspaces} are devoted to a discussion of the wave equation (\ref{eq:1}), including well-posedness in appropriate function spaces.

\subsection{Spherically symmetric spacetimes with scalar field}
\label{section:sphersymm}
We will study solutions to (\ref{eq:1}) on an underlying spacetime $({\mathcal{M}}, {g})$, where ${g}$ is a Lorentzian metric. The assumption of spherical symmetry implies we can define the quotient manifold $\mathcal{Q} ={\mathcal{M}} / \textrm{SO}(3)$, a ${(1\!+\!1)}$-dimensional Lorentzian manifold with metric also denoted $g$, and with a boundary $\Gamma$ comprised of fixed points of the $\textrm{SO}(3)$ action. This boundary is alternatively called the \textit{center} or \textit{axis}, and will be assumed to be a timelike curve.  

A function associated to the orbits of the symmetry action is the \textit{area radius}, defined geometrically for $p \in \mathcal{Q}$ by  
\begin{equation*}
    r(p) \doteq \sqrt{ \frac{\text{Area}(\text{proj}^{-1}(p))}{4\pi}}.
\end{equation*}
Here, $\text{proj}: {\mathcal{M}} \rightarrow \mathcal{Q}$ is the quotient map. In terms of the area radius, the center $\Gamma$ is given by $\{r(p)=0\}$. 

The quotient spacetimes considered here admit a global double-null gauge $(u,v)$, with respect to which the quotient metric $g$ assumes the form 
\begin{equation}
    g = -\Omega^2(u,v) du dv,
\end{equation}
for a gauge-dependent quantity $\Omega$ called the \textit{null lapse}. The ${(3\!+\!1)}$-dimensional metric is determed by the pair of functions $\Omega(u,v), r(u,v)$. We also define the \textit{Hawking mass}, 
\begin{equation}
    \label{eq:defnofm}
    m \doteq \frac{r}{2}\big(1 - g(\nabla r, \nabla r)\big) = \frac{r}{2}\bigg(1 + \frac{4\partial_u r \partial_v r}{\Omega^2}\bigg),
\end{equation}
as well as the \textit{mass ratio}
\begin{equation}
    \label{eq:defnofmu}
    \mu \doteq \frac{2m}{r}
\end{equation}
and the null derivatives of the area radius
\begin{equation}
    \label{eq:defnoflambda}
\nu \doteq \partial_u r, \quad \lambda \doteq \partial_v r.
\end{equation}
The wave equation (\ref{eq:1}) on a fixed spherically symmetric spacetime depends on the underlying geometry through the metric quantities $\Omega(u,v),r(u,v)$ (or equivalently, through $m(u,v),r(u,v)$). The $k$-self-similar quotient spacetimes, and more generally the class of $(\epsilon_0,k)$-admissible spacetimes defined in Section \ref{sec:admissiblespacetimes}, carry an additional real-valued scalar field $\phi(u,v)$, which is dynamically coupled to the metric via the spherically symmetric Einstein-scalar field system. In particular, this scalar field $\phi(u,v)$ is \textit{itself} a solution to (\ref{eq:1}). For details on the Einstein-scalar field system we refer to \cite{chris1.5,lukoh1,singh}. In the present paper, the scalar field associated to the background spacetime is denoted $\phi$, in contrast to the notation $\varphi$ used for general solutions to (\ref{eq:1}). We denote a given spherically symmetric spacetime with scalar field by the tuple $(\mathcal{Q},g,r,\phi).$

\subsection{$k$-self-similarity and self-similar gauge}
\label{section:christodoulou}
We do not attempt a detailed motivation of $k$-self-similarity here, and refer to \cite{chris2,igoryak2,singh}. Our focus is on the realization of $k$-self-similarity in double-null gauge, as well as quantiative bounds for double-null quantities. In the following, functions associated to an exactly $k$-self-similar spacetime are denoted with a subscript $k$. 

\begin{definition}
Fix a parameter $k \in \mathbb{R}$. A spherically symmetric solution $(\mathcal{Q}, g_k, r_k, \phi_k)$ to the Einstein-scalar field system is \textbf{k-self-similar} if there exists a one-parameter family of scaling diffeomorphisms $f_a: \mathcal{Q}\rightarrow \mathcal{Q}$, $a \in \mathbb{R}_+$, with respect to which
\begin{equation}
    f_a^{*}g_k = a^2 g_k, \ \ f_a^* r_k = a r_k, \ \ f_a^* \phi_k = \phi_k - k \log a.
\end{equation}
The vector field generating the family of diffeomorphisms is denoted $K$, and is a conformal Killing vector field of the $4$-dimensional metric. Denote by $\mathcal{O}$ the scaling origin, where the vector field $K$ vanishes.
\end{definition}

The assumption of $k$-self-similarity imposes a simple functional form for the metric and scalar field quantities, recorded in the following lemma. Each double-null quantity is determined by its value along a single outgoing null hypersurface, up to scaling by a power of $\hat{u}$. This scaling property is moreover preserved under differentiation.

The coordinate $\hat{z} \doteq -\frac{\hat{v}}{\hat{u}}$ will appear frequently in the following, and parameterizes integral curves of $K$. With respect to this coordinate, the regular center will be given by $\{\hat{z}=-1\},$ and the past light-cone of $\mathcal{O}$ by $\{\hat{z}=0\}$. 

\begin{lemma}[\cite{singh}]
    \label{lem:meaningof_Psi_selfsimilar}
    For any double-null quantity $\Psi_k \in \{r_k, \nu_k, \lambda_k, \Omega_k, m_k, \mu_k\},$ there exist integers $s_{\Psi_k}$ and functions\footnote{The restriction on the domain of the functions $\mr{\Psi}(\hat{z})$ is discussed further in Section \ref{sec:renormalizedgauge} below.} $\mr{\Psi}(\hat{z}): [-1,\infty) \setminus \{0\} \rightarrow \mathbb{R}$ such that 
    $$\Psi_k(\hat{u},\hat{v}) = |\hat{u}|^{s_{\Psi_k}}\mr{\Psi}(\hat{z})$$
    holds on $\mathcal{Q} \setminus \{\hat{v}=0\}$. Explicitly, 
    \begin{align*}
        (s_{r_k}, s_{\nu_k}, s_{\lambda_k}, s_{\Omega_k}, s_{m_k}, s_{\mu_k}) = (1,0,0,0,1,0).
    \end{align*}
    There moreover exists a function $\mr{\phi}(\hat{z}): [-1,\infty) \setminus \{0\} \rightarrow \mathbb{R}$ with derivative denoted $\mr{\phi}'(\hat{z}) = \frac{d}{d\hat{z}}\mr{\phi}(\hat{z})$ such that 
    \begin{gather*}
        \phi_k(\hat{u},\hat{v}) = \mr{\phi}(\hat{z}) - k \ln |\hat{u}|, \\[.5em]
        \partial_u \phi_k(\hat{u},\hat{v}) = |\hat{u}|^{-1}\big(\hat{z} \mr{\phi}'(\hat{z}) + k\big), \ \ \partial_v \phi_k(\hat{u},\hat{v}) = |\hat{u}|^{-1}\mr{\phi}'(\hat{z}),
    \end{gather*}
    hold on $\mathcal{Q} \setminus \{\hat{v}=0\}$. Denoting by $\mr{\Psi}(\hat{z})$ the restrictions all double-null quantities (including $\phi_k$) to $\{\hat{u} =\! -1\}$, the system (\ref{ss:SSESF1})--(\ref{ss:SSESF7}) of coupled differential equations holds.
\end{lemma}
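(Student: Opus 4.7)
The plan is to exploit the definition of $k$-self-similarity to extract the functional form of all metric and scalar field quantities in the self-similar double-null coordinates $(\hat{u},\hat{v})$. The first step is to identify the conformal Killing field $K$ in this gauge. Since $\{f_a\}$ acts on $\mathcal{Q}$ with fixed point the scaling origin $\mathcal{O}$ and must preserve the double-null structure up to rescaling of the lapse, one concludes $f_a:(\hat{u},\hat{v}) \mapsto (a\hat{u}, a\hat{v})$, so that $K = \hat{u}\partial_{\hat{u}} + \hat{v}\partial_{\hat{v}}$. The key observation is that $f_a$ preserves level sets of $\hat{z} = -\hat{v}/\hat{u}$, so any scale-invariant quantity is necessarily a function of $\hat{z}$ alone.

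Next I would translate the scaling relations into functional equations for each double-null quantity. From $f_a^*(d\hat{u}\,d\hat{v}) = a^2\,d\hat{u}\,d\hat{v}$ and $f_a^* g_k = a^2 g_k$, the lapse satisfies $\Omega_k(a\hat{u},a\hat{v}) = \Omega_k(\hat{u},\hat{v})$ and hence depends only on $\hat{z}$; this yields $s_{\Omega_k} = 0$. The relation $r_k(a\hat{u},a\hat{v}) = a\, r_k(\hat{u},\hat{v})$ is a first-order Euler homogeneity whose general solution is $r_k(\hat{u},\hat{v}) = |\hat{u}|\mr{r}(\hat{z})$, giving $s_{r_k}=1$. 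The exponents for $\nu_k, \lambda_k$ follow by differentiating the ansatz for $r_k$ using the chain rule (with $\partial_{\hat{u}}|\hat{u}| = -1$, $\partial_{\hat{u}}\hat{z} = \hat{z}/|\hat{u}|$, $\partial_{\hat{v}}\hat{z} = 1/|\hat{u}|$ in the interior); after cancellation these produce expressions of the form $|\hat{u}|^0\cdot(\text{function of } \hat{z})$. The exponents for $m_k$ and $\mu_k$ are then read off from the defining relations (\ref{eq:defnofm})--(\ref{eq:defnofmu}).

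The scalar field requires an extra step owing to the inhomogeneous scaling $f_a^*\phi_k = \phi_k - k\log a$. I would note that the explicit ansatz $\phi_k(\hat{u},\hat{v}) = \mr{\phi}(\hat{z}) - k\ln|\hat{u}|$ solves this equation, since $-k\ln|a\hat{u}| + k\ln|\hat{u}| = -k\log a$ and $\mr{\phi}(\hat{z})$ is manifestly scale-invariant. Any other solution differs from this ansatz by a scale-invariant function, which can be absorbed into the profile $\mr{\phi}$. The derivative identities then follow from the chain rule: using $\partial_{\hat{u}}\ln|\hat{u}| = -1/|\hat{u}|$ together with the formulas for $\partial_{\hat{u}}\hat{z}, \partial_{\hat{v}}\hat{z}$ above, a short computation yields $\partial_u\phi_k = |\hat{u}|^{-1}(\hat{z}\mr{\phi}'(\hat{z}) + k)$ and $\partial_v\phi_k = |\hat{u}|^{-1}\mr{\phi}'(\hat{z})$, with attention to the sign conventions in the interior $\hat{u}<0$.

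Finally, to obtain the ODE system (\ref{ss:SSESF1})--(\ref{ss:SSESF7}), I would substitute the self-similar ans\"atze into the spherically symmetric Einstein-scalar field system. By the verified homogeneities, every factor of $|\hat{u}|$ cancels, leaving a coupled system of ODEs in $\hat{z}$ for the profiles $\mr{r}, \mr{\Omega}, \mr{\phi}$ on $[-1,\infty)\setminus\{0\}$. I do not anticipate any genuine obstacle; the argument is purely structural, and the main care required lies in consistent tracking of absolute-value and orientation signs across the regular center $\{\hat{z}=-1\}$ and the excluded past light-cone $\{\hat{z}=0\}$.
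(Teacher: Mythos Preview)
Your proposal is correct and follows the standard derivation that the cited reference \cite{singh} carries out; the paper does not reprove this lemma but simply imports it. One small refinement: the form $f_a:(\hat u,\hat v)\mapsto(a\hat u,a\hat v)$ is not so much a conclusion as part of the \emph{definition} of self-similar double-null coordinates (the gauge is chosen precisely so that $K=\hat u\partial_{\hat u}+\hat v\partial_{\hat v}$), so you need not argue for it.
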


The seminal work \cite{chris2} considers the global behavior of $k$-self-similar solutions by fixing an outgoing Bondi gauge\footnote{This gauge is constructed such that the conformal Killing field takes the simple form $K = r\partial_r + u\partial_u$.}, and reducing the Einstein-scalar field system to a coupled system of differential equations. An important observation is that the range\footnote{To be more precise, for each value of $k$ there is a further $1$-parameter subfamily of exterior regions associated to a free choice of data in the exterior construction. We will not work with the exterior further in this paper, and assume an arbitrary member of this family has been chosen.} $k^2 \in (0, \frac{1}{3})$ corresponds -- after a suitable asymptotically flat truncation -- to globally naked singularities. We therefore restrict to this range of $k$ in the following; without loss of generality we may also assume $k > 0$. 

It will be more convenient to work with a double-null gauge. A natural such choice is given by \textbf{self-similar double-null coordinates} $(\hat{u},\hat{v})$, in which $\mathcal{O}$ lies at $(u,v) = (0,0)$, and the self-similar vector field assumes the simple form 
\begin{equation*}
    K = \hat{u}\partial_{\hat{u}} + \hat{v}\partial_{\hat{v}}.
\end{equation*}
The remaining gauge freedom is spent by identifying the axis $\Gamma$ with $\{\hat{u} = \hat{v}\}$, and normalizing the lapse at the axis, e.g. by setting $\Omega^2 \big|_\Gamma = 1$. In this gauge, the spacetimes constructed in \cite{chris2} exist on the coordinate domain 
\begin{equation}
    \label{eq:defnofsolnmnfldQFULL}
    \mathcal{Q}^{(full)} \doteq  \{ (\hat{u},\hat{v}) :  \hat{u} < 0, \ -1 \leq -\frac{\hat{v}}{\hat{u}} < \infty \}.
\end{equation}
See \cite[Appendix A]{singh} for details. For the purposes of this paper, we primarily work with a subset $\mathcal{Q} \subset \mathcal{Q}^{(full)}$ of the full quotient manifold, defined by 
\begin{equation}
    \label{eq:defnofsolnmnfldQ}
    \mathcal{Q} \doteq  \{ (\hat{u},\hat{v}) : -1 \leq \hat{u} < 0, \ -1 \leq -\frac{\hat{v}}{\hat{u}} \leq 1 \}.
\end{equation}

\subsection{Renormalized gauge}
\label{sec:renormalizedgauge}
In this section we motivate the use of an alternative double-null gauge for studying $k$-self-similar spacetimes near $\{\hat{v}=0\}$, and record useful estimates on the metric $g_k$ and scalar field $\phi_k$.

As discussed in \cite[Appendix A]{singh}, an analysis of the local behavior of solutions to the self-similar system (\ref{ss:SSESF1})--(\ref{ss:SSESF7}) reveals that solutions arising from regular data at the axis become singular as $\hat{z} \rightarrow 0$. Each of $\mr{\phi}'(\hat{z}), \ \mr{\Omega}^2(\hat{z}), \ \mr{\lambda}(\hat{z})$ behaves like $|\hat{z}|^{-k^2}$, and thus the metric and scalar field fail to be $C^0$ and $C^1$, respectively. This behavior is in fact only a coordinate singularity, and is resolved by transforming to \textbf{k-renormalized double-null coordinates} $(u,v)$, where 
\begin{equation*}
    (u,v) \doteq (\hat{u}, -|\hat{v}|^{q_k}), \quad  \big(\frac{\partial}{\partial \hat{u}}, \frac{\partial}{\partial \hat{v}}\big) \doteq \big(\frac{\partial}{\partial u}, {q_k} |v|^{-p_k k^2} \frac{\partial}{\partial v}\big).
\end{equation*}
Here we have introduced the constants 
\begin{equation*}
    q_k \doteq 1-k^2, \ \ p_k \doteq (1-k^2)^{-1}.
\end{equation*}
Moreover, define a renormalized version of the coordinate $\hat{z}$ by 
\begin{equation}
    \label{coords:ztozhat}
    z \doteq -|\hat{z}|^{q_k} = v |u|^{-q_k},
\end{equation}
with respect to which $\mathcal{Q} \subset \{z \in [-1,1]\}.$  The manifolds $\mathcal{Q}, \mathcal{Q}^{(full)}$ may be expressed in terms of the $(u,v)$ coordinates; however, these representations depend explicitly on $k$. See Figure \ref{fig5} for a Penrose diagram representation.

The following lemma records the regularity of $k$-self-similar metric and scalar field quantities in $k$-renormalized coordinates.
\begin{prop}[\cite{singh}, Appendix \ref{section:app3}]
    \label{lemma:backgroundprelims1}
    Let $\mr{\Psi}(z)$ denote the restrictions of double-null quantities to $\{\hat{u} \!=\!-1\}$, computed with respect to $k$-renormalized double-null gauge. Then the following regularity holds:
    \begin{itemize}
        \item $\mr{\Psi}(z) \in C^{\infty}_z\big(\{z < 0 \} \cup \{0 < z \leq 1\}\big)$.
        \item $\mr{m}(z), \mr{\Omega}(z), \mr{\phi}(z) \in C^1_z\big(\{z \leq 1 \} \big),$ $\mr{r}(z) \in  C^2_z\big(\{z \leq 1 \} \big).$ Moreover, 
        \begin{equation*}
            \partial_z \mr{\phi}(z) \sim k^{-1} \ \text{as} \ z \rightarrow 0,
        \end{equation*}
        and we have the identities 
        \begin{align}
            \mr{\mu}(0) &= \frac{k^2}{1+k^2}, \\[.5\jot] 
            \mr{\nu}(0) &= - \mr{r}(0). \label{ident:rvsnu}
        \end{align}
        \item For $k$ sufficiently small the above regularity as $z\rightarrow 0$ is \underline{sharp}. We have $\mr{m}(z), \mr{\Omega}(z), \mr{\phi}(z) \notin C^2_z\big(\{z \leq 1 \} \big),$ $\mr{r}(z) \notin  C^3_z\big(\{z \leq 1 \} \big),$ and 
        \begin{equation}
            \partial_z^2 \mr{\phi}(z) \sim  |z|^{-1+p_k k^2} \ \text{as} \ z \rightarrow 0. \label{eq:dvsquaredblowup}
        \end{equation}
    \end{itemize}
    The spacetime $(\mathcal{Q}, g_k, r_k, \phi_k)$ lies in the BV solution class (cf. \cite{chris1}) strictly to the past of $(u,v)=(0,0)$, and for $k$ sufficiently small the induced characteristic data has BV norm of size $k^{-1}$. 
\end{prop}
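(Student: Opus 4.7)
The plan is to reduce everything to an analysis of the self-similar ODE system (\ref{ss:SSESF1})--(\ref{ss:SSESF7}) for the restrictions $\mr{\Psi}(\hat z)$ along $\{\hat u = -1\}$ in the original gauge, then transfer the resulting regularity through the coordinate change $z = -|\hat z|^{q_k}$. On each component of $[-1,0) \cup (0,1]$ the reduced system has smooth coefficients and nonvanishing principal part, so $\mr{\Psi} \in C^\infty$ follows from standard ODE theory; since $\hat z \mapsto z$ is a smooth diffeomorphism on each component, the first bullet is immediate.

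The heart of the argument is the local analysis at $\hat z = 0$. Inspection of the equations shows that the singularities are concentrated in $\mr{\lambda}, \mr{\Omega}^2, \mr{\phi}'$, each with leading $|\hat z|^{-k^2}$ behavior. Making the ansatz $\mr{\lambda}(\hat z) = |\hat z|^{-k^2}\wt{\mr{\lambda}}(\hat z)$ (and similarly for the other two) reduces the system to an autonomous problem that is regular at $\hat z = 0$, with computable boundary values $\wt{\mr{\lambda}}(0), \wt{\mr{\Omega}^2}(0), \wt{\mr{\phi}'}(0)$. Under $z = -|\hat z|^{q_k}$ the Jacobian $|d\hat z/dz| \sim |z|^{p_k k^2}$ precisely cancels the factor $|\hat z|^{-k^2} = |z|^{-p_k k^2}$, so $\partial_z \mr{\phi}, \mr{\lambda}/|\partial_v v|, \mr{\Omega}^2$ extend continuously with nonzero limits at $z = 0$. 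The claimed $C^1$ regularity for $\mr{\phi},\mr{\Omega},\mr{m}$ and $C^2$ regularity for $\mr{r}$ then follow, the latter gaining one derivative because $r$ is an integrated quantity. The identity $\mr{\nu}(0) = -\mr{r}(0)$ is read off directly from the conformal Killing relation $K(r)=r$, i.e.\ $\hat u \mr{\nu}(\hat z) + \hat v \mr{\lambda}(\hat z) = \mr{r}(\hat z)$, evaluated at $\hat v = 0$ and $\hat u = -1$. The identity $\mr{\mu}(0) = k^2/(1+k^2)$ is read off from the constraint defining the Hawking mass at $\hat z=0$ in terms of the tilded boundary values, and the small-$k$ asymptotic $\partial_z \mr{\phi}(0) \sim k^{-1}$ drops out of a perturbative expansion of $\wt{\mr{\phi}'}(0)$ around the flat $k = 0$ solution, as carried out in Appendix \ref{section:app3}.

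The sharpness statement requires tracking the \emph{subleading} term of $\mr{\phi}'(\hat z)$. A Frobenius-type analysis of the linearization of the tilded system at $\hat z = 0$ shows that $|\hat z|^{k^2}$ is a resonant exponent, so one expects an expansion $\wt{\mr{\phi}'}(\hat z) = \wt{\mr{\phi}'}(0) + D |\hat z|^{k^2} + O(|\hat z|^{\text{higher}})$ for a constant $D$. Transferring through the coordinate change contributes a term of size $|z|^{p_k k^2}$ to $\partial_z \mr{\phi}$, and differentiating once more produces precisely the rate $|z|^{-1 + p_k k^2}$ in $\partial_z^2 \mr{\phi}$ claimed in (\ref{eq:dvsquaredblowup}); the analogous statements for the remaining quantities then follow from the coupling in the system. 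The \textbf{main obstacle} is verifying the non-degeneracy $D \neq 0$ for all sufficiently small $k > 0$: this cannot be seen at leading order and requires the explicit small-$k$ perturbative expansion of the full self-similar solution around the Minkowski background, which computes the leading nonzero power of $k$ in $D$. This non-degeneracy check is, I expect, the central role of Appendix \ref{section:app3}.

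Finally, the BV claim follows by restricting to a fixed slice $\{\hat u = -1\} \cap \{\hat v \in [-1, v_0]\}$ with any $v_0 < 0$, on which all quantities are smooth by the first bullet and hence of finite total variation. The $k^{-1}$ scaling of the BV norm is inherited from $\partial_z \mr{\phi}(0) \sim k^{-1}$, since the total variation of $\mr{\phi}$ on a fixed segment is controlled by $\|\partial_v \mr{\phi}\|_{L^1_v}$, which picks up exactly this factor as $v \to 0$.
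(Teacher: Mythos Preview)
Your proposal is correct in outline and identifies the same key obstacle as the paper: the non-degeneracy of the subleading coefficient in the expansion of $\mr{\phi}'$ near $\hat z = 0$, which cannot be seen at leading order and requires the small-$k$ analysis of Appendix~\ref{section:app3}. The structural reductions you describe---smoothness away from $\hat z=0$ by ODE theory, regularization at the singular point by extracting a $|\hat z|^{-k^2}$ factor, and the identity $\mr{\nu}(0)=-\mr{r}(0)$ from the algebraic relation (\ref{ss:alg1})---match the paper's treatment.

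The methodological difference is that you frame the local analysis as a Frobenius expansion with resonant exponent $|\hat z|^{k^2}$ and a to-be-determined coefficient $D$, whereas the paper proceeds by direct integration of (\ref{ss:SSESF5}) with explicit error tracking (Propositions~\ref{prop:appendixLpbound}--\ref{prop:expansionPhiM}, Corollary~\ref{cor:expnear0}). Rather than computing a single perturbative coefficient, the paper derives the expansion $k\mr{\phi}'(\hat z) = C_{1,k}|\hat z|^{-k^2} - (p_k + O(k^2)) + \ldots$ and then, by a consistency argument with the ODE, shows that the \emph{same} constant $C_{1,k}$ appears as the leading coefficient of $\mr{\lambda}/\mr{r}$. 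The blowup of $\partial_z^2\mr{\phi}$ is then read off from the formula $\partial_z^2\mr{\phi} \sim k|z|^{-1+p_kk^2}\bigl(\mr{\lambda}/\mr{r} - k\mr{\phi}'\bigr)$, in which the singular $C_{1,k}$ contributions cancel exactly and the surviving constant term is $p_k \approx 1$, manifestly nonzero. This is equivalent to your $D\neq 0$, but the mechanism is an algebraic cancellation rather than a direct perturbative evaluation. The paper's route additionally produces the quantitative $L^p$ and pointwise small-$k$ bounds (\ref{eq:smallkbound7})--(\ref{eq:smallkbound15}) as byproducts, which are needed for Propositions~\ref{lem:propertiesofV}--\ref{lem:propertiesofL_kl}; a bare Frobenius argument would prove the proposition as stated but would not supply these.
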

\begin{rmk}
The sharp regularity statement (\ref{eq:dvsquaredblowup}) is proved in Appendix \ref{section:app3}, cf. Proposition \ref{prop:dvsquaredblowupproof}. One expects this property to hold in the full naked singularity range $k^2 \in (0,\frac13)$, but current methods are limited to the regime of small-$k$.
\end{rmk}

Translating back to the spacetime picture, regularity for all double-null quantities follow in $\mathcal{Q}$. We emphasize that in the $k$-renormalized gauge, the metric quantity $\Omega^2$, as well as all $v$ coordinate derviatives of double-null quantities, do not have the same scaling behavior as in self-similar gauge (cf. Lemma \ref{lem:meaningof_Psi_selfsimilar}). Such quantities do not in general scale along generators of $K$ by integer powers of $u$. As an illustration, consider the behavior of the renormalized coordinate derivative $\lambda(u,v) \doteq \partial_v r(u,v)$:
\begin{align*}
    \partial_v r(u,v) \sim |\hat{v}|^{k^2} \mr{\lambda}(\hz) \sim  |\hat{v}|^{k^2} |\hz|^{-k^2} \sim |u|^{k^2}.
\end{align*}
In particular, $\lambda(u,v)$ vanishes as $|u|\rightarrow 0$. Similar computations determine the scaling of remaining double-null quantities, and higher coordinate derivatives. For the purposes of exposition, we record the scaling behavior of relevant low-order quantities here:
\begin{lemma}
\label{lem:christodouloulowerorder}
The spacetime double-null quantities satisfy the following bounds in $\mathcal{Q}$, where implied constants are allowed to depend on $k$.
\begin{gather*}
    |r_k| \lesssim |u|, \ \ (-\nu_k) \sim 1, \ \ \lambda_k \sim |u|^{k^2}, \\[2\jot]
    |m_k| \lesssim r_k^3 |u|^{-2}, \ \Omega^2_k \sim |u|^{k^2},\\[2\jot]
    |\partial_u \phi_k| \lesssim |u|^{-1}, \ \ |\partial_v \phi_k| \lesssim |u|^{-q_k}.
\end{gather*}
\end{lemma}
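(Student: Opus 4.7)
The plan is to translate each bound from its self-similar expression in Lemma \ref{lem:meaningof_Psi_selfsimilar} via the change of variables from self-similar to renormalized gauge, $(\hat u, \hat v) \to (u,v)$. Since $u = \hat u$ and $|\hat z| = |z|^{p_k}$, the only nontrivial feature is the $v$-derivative relation $\partial_{\hat v} = q_k |v|^{-p_k k^2}\partial_v$. The algebraic identity
\begin{equation*}
|v|^{p_k k^2}|z|^{-p_k k^2} = |u|^{q_k p_k k^2} = |u|^{k^2}
\end{equation*}
then organizes the entire proof: it absorbs the coordinate singularity of the self-similar gauge quantities at $\hat z = 0$ into a clean power of $|u|$.

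The bounds on $r_k$, $\nu_k$, and $\partial_u\phi_k$ involve no $v$-derivative, so they reduce to uniform estimates for $\mr r$, $\mr \nu$, and $\hat z\mr{\phi}'(\hat z) + k$ on the compact interval $\hat z \in [-1, 1]$. These are immediate from Proposition \ref{lemma:backgroundprelims1} together with the self-similar ODE system of \cite{singh}: $\mr r$ and $\mr \nu$ are smooth, $\mr \nu$ is uniformly negative by standard Raychaudhuri considerations, and $\mr{\phi}'(\hat z) \sim |\hat z|^{-k^2}$ forces $\hat z\,\mr{\phi}'(\hat z) \to 0$. For $m_k$, the claim $|m_k| \lesssim r_k^3|u|^{-2}$ is equivalent to the pointwise estimate $|\mr m(\hat z)| \lesssim \mr r(\hat z)^3$ on $\mathcal{Q}$, which away from the axis $\hat z = -1$ is plain continuity, and near the axis follows from the standard fact that the Hawking mass of a spacetime with regular center vanishes to third order in $r$.

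For the remaining bounds $\lambda_k, \Omega_k^2 \sim |u|^{k^2}$ and $|\partial_v \phi_k| \lesssim |u|^{-q_k}$, the gauge change contributes an extra factor $|v|^{p_k k^2}$ from converting $\partial_{\hat v}$ to $\partial_v$ (for $\lambda_k$ and $\partial_v\phi_k$) or from the transformation of the lapse ($\Omega_k^2$). Near $\hat z = 0$, this factor is matched against the $|\hat z|^{-k^2} = |z|^{-p_k k^2}$ singularities of $\mr \lambda$, $\mr \Omega^2$, and $\mr{\phi}'$ recorded in Proposition \ref{lemma:backgroundprelims1}, and the displayed identity immediately produces the advertised $|u|^{k^2}$, with the scalar field case picking up an additional $|u|^{-1}$ prefactor from Lemma \ref{lem:meaningof_Psi_selfsimilar}. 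For $|z|$ bounded away from zero, smoothness of $\mr \Psi$ combined with the direct computation $|v|^{p_k k^2} = |u|^{k^2}|z|^{p_k k^2}$ yields the same conclusion. I do not anticipate a genuine obstacle; the proof is careful bookkeeping, with the only subtlety being the consistent tracking of $k$-dependent exponents of $|u|, |v|, |z|, |\hat z|$, which the displayed identity renders automatic.
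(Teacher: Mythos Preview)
Your proposal is correct and matches the paper's approach: the paper simply records these bounds after the illustrative computation $\partial_v r(u,v) \sim |\hat{v}|^{k^2}\mr{\lambda}(\hat z) \sim |\hat{v}|^{k^2}|\hat z|^{-k^2} \sim |u|^{k^2}$, remarking that ``similar computations determine the scaling of remaining double-null quantities.'' Your write-up carries out exactly this bookkeeping in somewhat more detail, organizing it around the same identity $|v|^{p_k k^2}|z|^{-p_k k^2} = |u|^{k^2}$ that the paper uses implicitly.
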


\begin{figure}
    \centering
    \includegraphics[scale=.35]{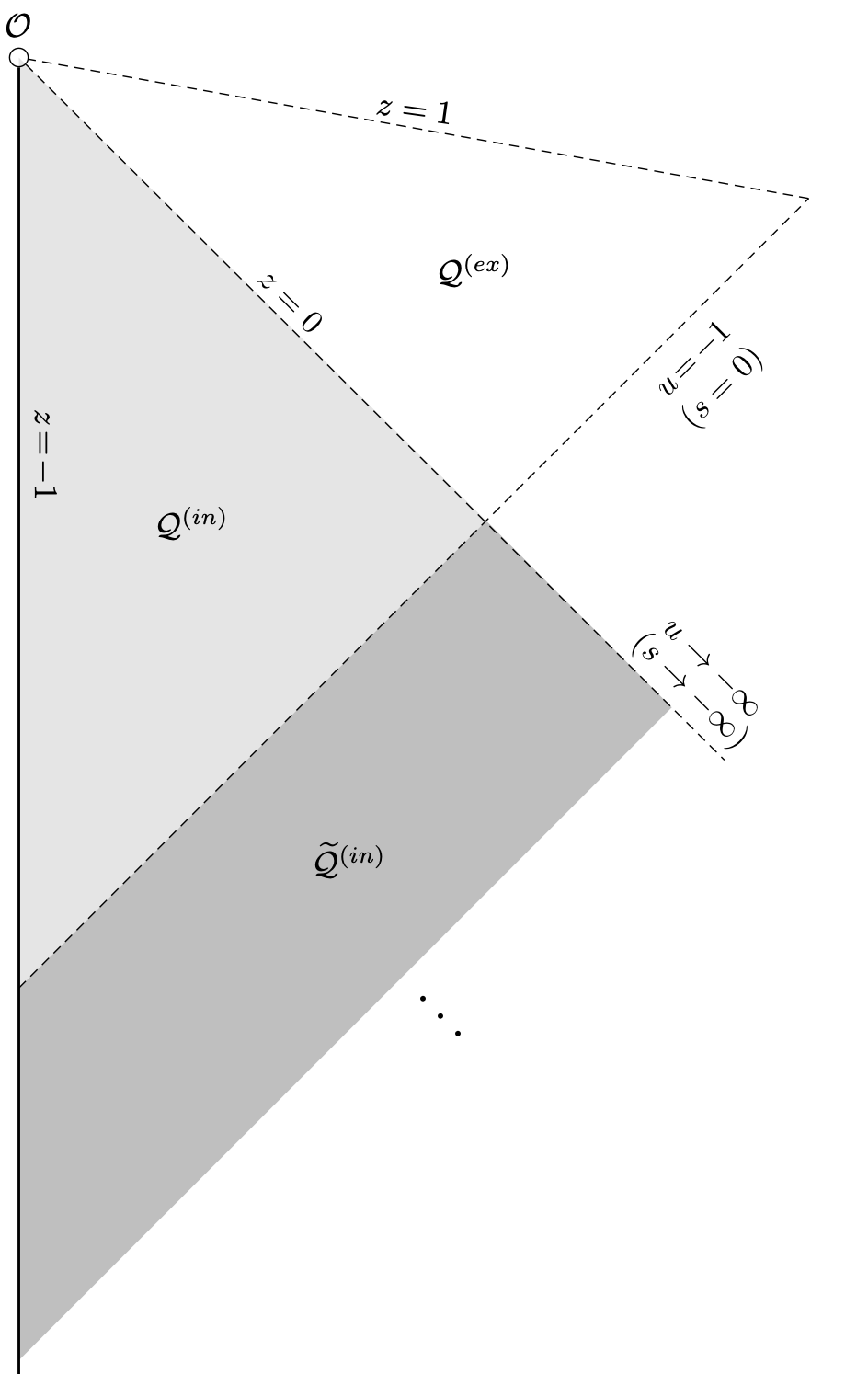}
    \caption{Penrose diagram of the relevant subdomains of the interior and exterior regions. The interior region $\mathcal{Q}^{(in)}$ (lightly shaded) is a subset of  $\{u\geq -1\}$, and the extended interior $\wt{\mathcal{Q}}^{(in)}$ (lightly shaded and darkly shaded) extends to $\{u > -\infty \}$.}
    \label{fig5}
\end{figure}

\subsection{Notation and conventions}
\label{sec:notation}
Recall the definitions (\ref{eq:defnofsolnmnfldQFULL})--(\ref{eq:defnofsolnmnfldQ}) of the quotient spacetimes $\mathcal{Q}^{(full)}, \mathcal{Q}$. It will be helpful to distinguish the \textit{interior} and \textit{exterior} regions, given respectively as 
\begin{equation*}
    \mathcal{Q}^{(in)} \doteq \mathcal{Q} \cap \{v \leq 0\}, \ \ \mathcal{Q}^{(ex)} \doteq \mathcal{Q} \cap \{v \geq 0\}.
\end{equation*}
We may also define the \textit{extended interior} by allowing the $u$ coordinate to take arbitrarily large, negative values: 
\begin{equation*}
    \wt{\mathcal{Q}}^{(in)} \doteq \mathcal{Q}^{(full)} \cap \{v \leq 0\}. 
\end{equation*}

The majority of the analysis takes place on the quotient manifold $\mathcal{Q}$ (or subsets thereof). When we wish to define explicitly a subset of the ambient $(3\!+\!1)$-dimensional spacetime $\mathcal{M}$, the angular coordinates $\omega \in \mathbb{S}^2$ will be explicitly written, or a notation will be used to indicate the dimensionality of the set. For example, $\Sigma_{u_0} \doteq \{u = u_0\}$ denotes the one-dimensional null hypersurface within $\mathcal{Q}$, and $^{(3)}\Sigma_{u_0} \!\subset \! {\mathcal{M}}$ the corresponding three-dimensional null hypersurface.

\vspace{1em}
For subsets $U \! \subset \! \mathcal{Q}$, denote by $L^p(U)$ the standard Lebesgue spaces, and $W^{k,p}(U)$ the Sobolev spaces. When the underlying coordinates are unclear, we expressly include the coordinates as subscripts, e.g. $L^p_{z}([-1,0])$ or $L^p_{\hz}([-1,0])$. When writing mixed norms $W^{k,p}_{u,v}W_\omega^{k',p'}(U \times \mathbb{S}^2)$ of functions defined on subsets of ${\mathcal{M}}$, the volume form is given by the standard volume form on $\mathbb{R}^2_{u,v} \times \mathbb{S}^2_\omega$, where $\mathbb{S}^2$ is the round, unit sphere.
 
\vspace{1em}
In addition to the parameter $k$, it is convenient to define the constants 
\begin{equation*}
    q_k \doteq 1-k^2, \ \  p_k \doteq (1-k^2)^{-1}.
\end{equation*}
One has $q_k < 1 < p_k$ as well as the identities $q_k p_k = 1$, $1 + p_k k^2 = p_k$. 

\vspace{1em}
Geometric quantities associated to the background solution are notated in various ways to differentiate between coordinate systems and scalings. For example, in this paper $r$ appears in the forms $\mr{r}, \br{r}, r_k, r$. The $k$-self-similar function $\mr{r}(z)$ depends on a single coordinate $z$ (or $\hz$), and $r_k(u,v) = \mr{r}(z)|u|$ is the extension to a self-similar function on spacetime. $r(u,v)$ is the double-null area radius function on the $(\epsilon_0,k)$-admissible spacetime, which is assumed to be close (in terms of $\epsilon_0$) to $r_k(u,v)$. Finally, $\br{r}(u,v) = |u|^{-1}r(u,v)$ is the same quantity with self-similar scaling removed. Note that $\br{r}(u,v)$ is not necessarily a function of $z$ alone. 

Similarly, one has $\lambda(u,v) = \partial_v r(u,v)$, \ $\br{\lambda}(u,v) = |u|^{-k^2}\lambda(u,v)$, \ and $\lambda_k(u,v) = \partial_v r_k(u,v) $.

\vspace{1em}
We use standard big-O notation $O(\cdot)$ and the relations $\lesssim, \gtrsim$. When the dependence of a given estimate on a parameter (often $k$, spectral parameters $\sigma$, or cutoffs $x_0$) is important, we use appropriate subscripts. We also allow for error terms of the form $O_{L^p}(\cdot)$, for which only a bound on the $L^p$ norm is tracked.

\subsection{Admissible spacetimes}
\label{sec:admissiblespacetimes}
In this section we define the class of approximately $k$-self-similar spacetimes. Assume a value of $k^2\in(0,\frac13)$ is fixed, as well as a spacetime $(\mathcal{Q}, g, r, \phi)$ defined in $k$-renormalized gauge. For any double-null quantity $\Psi(u,v)$ on the spacetime, let $\Psi_k(u,v)$ denote the coordinate expression of the same quantity in a fixed $k$-self-similar spacetime, and define $\Psi_p(u,v)$ by 
\begin{equation}
    \label{eq:psivspsi_p}
    \Psi(u,v) = \Psi_k(u,v) + \Psi_p(u,v).
\end{equation}
This definition naturally extends to rational functions of $\Psi$. 

\begin{definition}
    \label{dfn:admissiblespacetimesDEFN}
    Fix parameters ${\epsilon_0 \ll 1}$, $k^2 \in (0, \frac13)$. A spacetime $(\mathcal{Q}, g, r, \phi)$ with $r \in C^5(\mathcal{Q} \setminus \{v=0\}), m \in C^4(\mathcal{Q} \setminus \{v=0\}), \phi \in C^3(\mathcal{Q} \setminus \{v=0\})$ is an \textbf{$\mathbf{(\epsilon_0, k)}$-admissible spacetime} if the following conditions hold:
\begin{enumerate}
    \item For all $\delta > 0$ small, $(\mathcal{Q} \cap \{u \leq -\delta\}, g, r, \phi)$ is a BV solution to the spherically symmetric Einstein-scalar field system.
    \item (Normalization) Along $\{v=0\}$ the condition $r_p(u,0) = 0$ holds\footnote{By (\ref{ident:rvsnu}), this implies $\nu(u,0) = -r(u,0)$.}.
    \item (Axis regularity) 
    \begin{align}
       &\sup_{\{\frac{v}{|u|^{q_k}} \leq -\frac12\}} \Big| \partial_u^i \partial_v^j \Big( \frac{m}{r^3}\Big)_p\Big| \lesssim \epsilon_0 k^2|u|^{-i-q_k j}, &0 \leq i+j \leq 4 \label{eq:admissiblebounds1}
    \end{align}
    \item (Ingoing bounds) 
    \begin{align}
        |r_p| &\lesssim \epsilon_0 k^2 r_k |u|^2, \ \ \label{eq:admissiblebounds2}\\[2\jot]
        |\partial_u^i \partial_v^j r_p| &\lesssim \epsilon_0 k^2 |u|^{3-i-q_k j},  &1 \leq i+j \leq 5, \ j \leq 2 \label{eq:admissiblebounds3}\\[2\jot] 
        |\partial_u^i \partial_v^j m_p| &\lesssim \epsilon_0 k^2  |u|^{3-i-q_k j} ,   &0 \leq i+j \leq 4, \ j \leq 1  \label{eq:admissiblebounds4}
    \end{align}
    \item (Outgoing bounds) 
    \begin{align}
        |\partial_u^i \partial_v^j r_p| &\lesssim_{\epsilon_0,k} |u|^{q_k-i} |v|^{-1+p_k k^2+(j-3)}, \ & 3 \leq i+j \leq 5, \ j \geq 3 \label{eq:admissiblebounds5}\\[2\jot] 
        |\partial_u^i \partial_v^2 m_p| &\lesssim_{\epsilon_0,k} |u|^{2-i} |v|^{-1+p_k k^2+(j-2)}. \ & 2 \leq i+j \leq 4, \ j \geq 2 \label{eq:admissiblebounds6}
    \end{align}
\end{enumerate}
Given an $(\epsilon_0, k)$-admissible spacetime, we define an \textbf{extended $\mathbf{(\epsilon_0, k)}$-admissible spacetime} $(\wt{\mathcal{Q}}^{(in)} \cup \mathcal{Q}^{(ex)}, g, r, \phi)$ as follows. 
Extend $\Psi_k(u,v)$ to $\wt{\mathcal{Q}}^{(in)}$ via self-similarity, and fix an arbitrary extension of the $\Psi_p$ subject to the regularity requirements, the conditions (2)--(5), and the condition that the support of all $\Psi_p$ is contained in $\{u \geq -2 \}$. By (\ref{eq:psivspsi_p}), this procedure defines the double-null quantities for the extended spacetime.
\end{definition}

\subsection{Coordinate systems in $\wt{\mathcal{Q}}^{(in)}$}
\label{subsec:coordinatesystems}
The bulk of the analysis of the (extended) interior region $\wt{\mathcal{Q}}^{(in)}$ takes place in non-double null gauges. In this section we introduce two such gauges: similarity coordinates, adapted to the multiplier estimates of Section \ref{sec:multipliers}, and hyperbolic coordinates, adapted to the scattering theory constructions in Section \ref{sec:scattering}.

Define \textbf{similarity coordinates} $(s,z)$ by 
\begin{equation}
    \label{coords:uvtosim}
    (u,v) \doteq (-e^{-s}, e^{-q_k s}z), \quad (s,z) \doteq (-\log |u|, \frac{v}{|u|^{q_k}}).
\end{equation} 
Here, $s$ is a time coordinate serving to push the singularity to ${s = +\infty}$. It follows that power dependence on $u$ translates to exponential dependence on $s$, and surfaces of constant $s$ are reparameterizations of surfaces of constant $u$, and are therefore null. Surfaces of constant $z$ parameterize integral curves of the conformal Killing field, and the interior region corresponds to the range $z \in [-1,0]$.

We next define \textbf{hyperbolic coordinates} $(t,x)$. These coordinates cover $\wt{\mathcal{Q}}^{(in)} \cap \{v < 0 \}$, and are defined by 
\begin{align}
    \label{coords:simtohyper}
    &(s,z) \doteq (t-x, -e^{-2q_k x}), \quad (t,x) \doteq (s-\frac{1}{2q_k}\ln |z|, -\frac{1}{2q_k}\ln |z| ).
\end{align}
The null surface $\{v=0\}$ formally corresponds to the set $\{x=\infty\}$. Level sets $\{t = t_0\}$ trace out hyperbolas $\{|\hat{u}\hat{v}| = e^{-2t_0}\}$ in the $(\hat{u},\hat{v})$ plane. Moreover, note that the ``time'' coordinate $t$ is not equivalent to $s$ (or $-\ln|u|$), except in compact regions $\{x \leq \text{const}.\}$.

A summary of the coordinate systems introduced thus far, as well as useful formulas for relating coordinate derivatives, is given in Table \ref{table:1} below.
\begin{table}[h!]
\begin{center}
    \begin{tabular}{ |c||c|c|c| } 
     \hline
     Coordinates & $K$ & Transformation & Coordinate derivatives \\[2\jot] 
     \hline 
     $(\hat{u},\hat{v})$ & $\hat{u}\partial_{\hat{u}} + \hat{v}\partial_{\hat{v}}$ & $(u,v) = (\hat{u},-|\hat{v}|^{q_k})$ & $\big(\partial_{u},\partial_v\big) = \big(\partial_{\hat{u}}, p_k|\hat{v}|^{k^2}\partial_{\hat{v}} \big)$  \\[2\jot] 
     $(u,v)$ & $u\partial_u + q_k v\partial_v $ & Id.  & Id. \\[\jot] 
     $(s,z)$ & $-\partial_s $ & $(u,v) = (-e^{-s},e^{-q_k s}z)$ & $\big(\partial_{u},\partial_v\big) = \big(e^s \partial_s-q_k|z|e^{s}\partial_z, e^{q_k s}\partial_z \big) $ \\[2\jot]
     $(t,x)$ & $-\partial_t $ & $(u,v) = (-e^{x-t}, -e^{-q(t+x)}) $ & $\big(\partial_{u},\partial_v\big) = \big(\frac12 e^{t-x}(\partial_t -\partial_x), \frac{1}{2q_k}e^{q_k(t+x)}(\partial_t +\partial_x) \big) $ \\[\jot]
     \hline
    \end{tabular}
    \end{center}
    \caption{Relations between various coordinate systems.}
\label{table:1}
\end{table}

\subsection{The wave equation (\ref{eq:1}) and separation of variables}
\label{subsec:waveeqnandseparationofvars}
In a general double-null gauge, the linear wave equation (\ref{eq:1}) assumes the following two forms, for variables $r\varphi$ and $\varphi$ respectively:
\begin{equation}
    \label{sec2.5:eq1}
    \partial_u \partial_v (r \varphi) + \frac{ \lambda (-\nu)}{(1-\mu)r^2}\big( \slashed{\Delta}_{\mathbb{S}^2} + \mu \big)(r \varphi) =0.
\end{equation} 
\begin{equation}
    \label{sec2.5:eq1.5}
    \partial_u \partial_v \varphi + \frac{\lambda}{r}\partial_u \varphi + \frac{\nu}{r}\partial_v \varphi + \frac{\lambda (-\nu)}{(1-\mu)r^2} \slashed{\Delta}_{\mathbb{S}^2}\varphi = 0.
\end{equation} 
Here, $\slashed{\Delta}_{\mathbb{S}^2}$ is the Laplacian on the round, unit sphere. 

With the exception of the analysis in the exterior region (cf. Appendix \ref{section:app1}), it is convenient to work with the weighted quantity $r\varphi$ instead of the wave $\varphi$ itself. Recall $r(u,v)$ is assumed to be a given function associated to the background spacetime; away from the axis, it follows from Definition \ref{dfn:admissiblespacetimesDEFN} that ${r \sim |u|}$, and therefore the $r$ factor is roughly equivalent to a $|u|$ weight. Near the axis however, the structure of (\ref{sec2.5:eq1}) makes it simpler to establish regularity of quantities formed out of $r\varphi$.

We first record the form of (\ref{sec2.5:eq1}) in similarity coordinates.
\begin{lemma}
    The wave equation (\ref{sec2.5:eq1}) is equivalent to 
\begin{equation}
    \label{eq:wavesimfull}
    \partial_s \partial_z (r \varphi) - q_k |z| \partial_z^2 (r\varphi) + q_k \partial_z (r\varphi) + e^{-q_k(1+s)}\frac{\lambda(-\nu)}{(1-\mu)r^2}( \slashed{\Delta}_{\mathbb{S}^2}+\mu)(r\varphi)=0,
\end{equation}
where double-null quantities are viewed as functions of similarity coordinates via (\ref{coords:uvtosim}).
\end{lemma}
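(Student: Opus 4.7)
The plan is a direct change of variables in (\ref{sec2.5:eq1}) using the relations recorded in Table \ref{table:1}.

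First I would write down the chain-rule expressions
\begin{equation*}
    \partial_u = e^s \partial_s - q_k |z| e^s \partial_z, \qquad \partial_v = e^{q_k s}\partial_z,
\end{equation*}
which come from $(s,z) = (-\log|u|, v|u|^{-q_k})$. The sign in front of the $\partial_z$ term in $\partial_u$ uses the convention $z \leq 0$ in the interior (so that $\partial_u z = q_k z\, e^s = -q_k |z| e^s$); in the exterior one has $\partial_u z = +q_k |z| e^s$, but the form (\ref{eq:wavesimfull}) is written in terms of $|z|$ and the $q_k \partial_z(r\varphi)$ term so as to be valid in both regions simultaneously.

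Next I would compute $\partial_u \partial_v (r\varphi)$. Since the coefficient $e^{q_k s}$ multiplying $\partial_z$ in the expression for $\partial_v$ depends only on $s$, the product rule gives
\begin{equation*}
    \partial_u \partial_v (r\varphi) = e^{(1+q_k)s}\Big[\,\partial_s\partial_z(r\varphi) \,-\, q_k |z|\partial_z^2(r\varphi) \,+\, q_k \partial_z(r\varphi)\,\Big],
\end{equation*}
which is precisely the bracket appearing on the left of (\ref{eq:wavesimfull}) multiplied by a single exponential. To conclude I would substitute this identity into (\ref{sec2.5:eq1}) and divide through by the common factor $e^{(1+q_k)s}$. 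The angular Laplacian $\slashed{\Delta}_{\mathbb{S}^2}$, the mass ratio $\mu$, and the combination $\frac{\lambda(-\nu)}{(1-\mu)r^2}$ are untouched by the transformation since similarity coordinates act only on $(u,v)$; they are simply re-expressed as functions of $(s,z)$ via (\ref{coords:uvtosim}). The overall exponential weight on the curvature/angular term then reads off directly from the division step and yields the coefficient appearing in (\ref{eq:wavesimfull}).

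There is no real analytic obstacle; the statement is a mechanical change-of-variables identity. The one point worth being careful about is the sign convention for $\partial_u z$ on the two sides of the past light cone $\{z=0\}$, which is what makes it possible to state the equation uniformly with $|z|$. A useful sanity check comes from Lemma \ref{lem:christodouloulowerorder}: $\frac{\lambda(-\nu)}{(1-\mu)r^2} \sim |u|^{-(1+q_k)} = e^{(1+q_k)s}$, so after being multiplied by the exponential weight produced by the calculation, the potential term in (\ref{eq:wavesimfull}) is of order one in $s$, as expected for a self-similar reduction.
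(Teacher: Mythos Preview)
Your proposal is correct and follows essentially the same route as the paper: both compute $\partial_u\partial_v(r\varphi)=e^{(1+q_k)s}\bigl[\partial_s\partial_z-q_k|z|\partial_z^2+q_k\partial_z\bigr](r\varphi)$ via the chain rule from Table~\ref{table:1}, substitute into (\ref{sec2.5:eq1}), and divide out the common exponential weight. The paper additionally introduces the notation $H(u,v)=\tfrac{\lambda(-\nu)}{(1-\mu)r^2}$ and its self-similar decomposition, but that is setup for the subsequent lemma rather than an ingredient here; your treatment of the potential term is equally valid, and your explicit remark on the $|z|$ sign convention across $\{z=0\}$ is a useful clarification the paper leaves implicit.

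One small point: your computation (and the paper's own proof) produces the weight $e^{-(1+q_k)s}$ on the potential term, whereas the displayed equation (\ref{eq:wavesimfull}) literally reads $e^{-q_k(1+s)}$ --- this appears to be a typo in the statement, as confirmed both by your scaling sanity check and by the matching exponent $e^{-(1+q_k)s}$ in the error term (\ref{eq:defn_Vkp}) of the next lemma.
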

\begin{proof}
    Introduce the quantity in double-null gauge
\begin{equation*}
    H(u,v) \doteq \Big(\frac{\lambda(-\nu)}{(1-\mu)r^2}\Big)(u,v),
\end{equation*}
appearing (up to the factor $ \slashed{\Delta}_{\mathbb{S}^2}+\mu $) as the coefficient of the zeroth order term in (\ref{sec2.5:eq1}). Decompose $H(u,v) = H_k(u,v) + H_p(u,v)$ as in (\ref{eq:psivspsi_p}). By the scalings (\ref{lem:christodouloulowerorder}), we may write 
\begin{align*}
    H_k(u,v) = \mr{H}_k(z)\mr{\mu}(z)^{-1}e^{(1+q_k)s},
\end{align*}
for an appropriate function $\mr{H}_k(z)$. Similarly, transforming the wave operator to similarity coordinates using Table \ref{table:1} gives 
\begin{align*}
    \partial_u \partial_v (r \varphi) &= e^s(\partial_s - q_k|z|\partial_z)(e^{q_k s}\partial_z)(r \varphi) \\
    &=e^{(1+q_k)s}(\partial_s \partial_z - q_k|z|\partial_z^2 +q_k \partial_z)(r \varphi).
\end{align*}
It now suffices to insert these expressions in similarity coordinates into (\ref{sec2.5:eq1}) and cancel $s$ weights to arrive at (\ref{eq:wavesimfull}).
\end{proof}

The assumption of spherical symmetry allows for a separation of variables in (\ref{sec2.5:eq1})--(\ref{sec2.5:eq1.5}). Denote by $Y_{m \ell}(\omega),\  \omega \in \mathbb{S}^2$ the standard spherical harmonics, and let $P_{m \ell}: L^2_{\omega}(\mathbb{S}^2) \rightarrow \mathbb{R}$ denote the projection operator onto the coefficient of the $(m, \ell)$-th mode: 
\begin{equation}
    P_{m \ell}f = \langle f, Y_{m \ell}\rangle_{L^2_\omega(\mathbb{S}^2)}.
\end{equation}
We introduce the following spherical harmonic decomposition for functions $\varphi: {\mathcal{M}} \rightarrow \mathbb{R}$ with sufficient regularity in the angular coordinates:
\begin{align*}
    \varphi(u,v,\omega) &= \varphi_{0}(u,v) + \sum_{ \substack{ \ell > 0 \\ |m| \leq \ell }  }\varphi_{m \ell}(u,v,\omega) \\
    &\doteq \varphi_{0}(u,v) + \sum_{ \substack{ \ell > 0 \\ |m| \leq \ell }  } \hat{\varphi}_{m \ell }(u,v) Y_{m \ell}(\omega),
\end{align*}
where the coefficient functions $\hat{\varphi}_{0}(u,v), \hat{\varphi}_{m \ell }(u,v)$ are defined on the quotient spacetime $\mathcal{Q}$. Extending $P_{m \ell}$ in the natural way to spacetime functions, we have $P_{m \ell}\varphi_{m \ell} = \hat{\varphi}_{m \ell}$. We often abuse notation by using the same symbol $\varphi_{m \ell}$ for both the function on $\mathcal{M}$, and the projection onto a function on $\mathcal{Q}$.

Consider a fixed projection $\hat{\varphi}_{m \ell}(u,v)$, and define   
\begin{equation}
    \label{def:psivar}
    \psi_{m \ell} \doteq r \hat{\varphi}_{m \ell}.
\end{equation}
The next lemma records various forms of (\ref{sec2.5:eq1}) for this mode-reduced quantity.
\begin{lemma}
    \label{lem:formsofwaveequation}
In $k$-renormalized double-null coordinates, $\psi_{m \ell}$ satisfies
\begin{equation}
    \label{sec2.5:eq2}
     \partial_u \partial_v \psi_{m \ell} + \frac{ \lambda (-\nu)}{(1-\mu)r^2}\big( \ell(\ell+1) + \mu \big)\psi_{m \ell} =0.
\end{equation}
In similarity coordinates we have 
\begin{equation}
\label{eq:wavesim}
\partial_s \partial_z \psi_{m \ell} - q_k|z| \partial_z^2 \psi_{m \ell} + q_k \partial_z \psi_{m \ell} + (V_k(z) + L_{k,\ell}(z))\psi_{m \ell} = \mathcal{E}_{p,\ell}(s,z),
\end{equation}   
where
\begin{equation}
V_k(z) \doteq p_k \frac{\mr{\mu}(\hat{z})(\mr{\lambda}(\hat{z})|\hat{z}|^{k^2})(-\mr{\nu}(\hat{z})) }{(1-\mr{\mu}(\hat{z}))\mr{r}(\hat{z})^2},
\end{equation}
\begin{equation}
\label{eq:defn_Lkell}
L_{k,\ell}(z) \doteq p_k \frac{(\mr{\lambda}(\hat{z})|\hat{z}|^{k^2})(-\mr{\nu}(\hat{z})) }{(1-\mr{\mu}(\hat{z}))\mr{r}(\hat{z})^2}\ell(\ell+1),
\end{equation}
and 
\begin{align}
    \label{eq:defn_Vkp}
\mathcal{E}_{p,\ell}(s,z) \doteq -\underbrace{e^{-(1+q_k) s}\bigg(\frac{\mu\lambda(-\nu)}{(1-\mu)r^2} \bigg)_p \psi_{m \ell}}_{V_{k,p}(s,z)\psi_{m\ell}}- \underbrace{e^{-(1+q_k) s}\bigg(\frac{\lambda(-\nu)}{(1-\mu)r^2} \bigg)_p \ell(\ell+1) \psi_{m \ell}}_{L_{k,\ell,p}(s,z)\psi_{m\ell}}.
\end{align}
The terms appearing in the above expression are recast as functions of $z$ via (\ref{coords:ztozhat}). 
Finally, the following equation holds in hyperbolic coordinates:
\begin{equation}
    \label{eq:wavehyper}
    \partial_t^2 \psi_{m \ell} - \partial_x^2 \psi_{m \ell} + 4q_k e^{-2q_k x}(V_k(x) + L_{k,\ell}(x))\psi_{m \ell} =  4q_k e^{-2q_k x} \mathcal{E}_{p,\ell}(t,x),
\end{equation}
where $V_k, L_{k,\ell}, \mathcal{E}_{p,\ell}$ are the functions appearing in (\ref{eq:wavesim}), viewed in hyperbolic coordinates via (\ref{coords:simtohyper}).
\end{lemma}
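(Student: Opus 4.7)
The proof is a direct computation involving a projection onto spherical harmonics followed by two coordinate changes, with no essential difficulties beyond careful bookkeeping of the self-similar scaling behavior of each double-null quantity.

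First, to obtain \eqref{sec2.5:eq2}, I would expand $\varphi$ in spherical harmonics and apply the projection $P_{m\ell}$ to \eqref{sec2.5:eq1}. Since the coefficient $\frac{\lambda(-\nu)}{(1-\mu)r^2}$ depends only on $(u,v)$, it commutes with $P_{m\ell}$, and $\slashed{\Delta}_{\mathbb{S}^2}$ acts on $Y_{m\ell}$ as a scalar eigenvalue; combined with the sign convention employed in \eqref{sec2.5:eq1} this produces the $\ell(\ell+1)+\mu$ factor multiplying $\psi_{m\ell}$.

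Second, to obtain \eqref{eq:wavesim}, I would transform $\partial_u\partial_v$ to similarity coordinates using Table~\ref{table:1}. A short computation in which $\partial_u$ hits the prefactor $e^{q_k s}$ inside $\partial_v$ produces the identity
\begin{equation*}
\partial_u\partial_v = e^{(1+q_k)s}\bigl(\partial_s\partial_z - q_k|z|\partial_z^2 + q_k\partial_z\bigr).
\end{equation*}
For the zeroth-order coefficient, I would split $\frac{\lambda(-\nu)}{(1-\mu)r^2}$ via \eqref{eq:psivspsi_p} into its exactly $k$-self-similar piece plus a perturbation. For the self-similar piece, Lemma~\ref{lem:meaningof_Psi_selfsimilar} dictates the $|\hat u|$-homogeneity of $r_k,\nu_k,\mu_k$, and the renormalized-gauge conversion $\partial_v = p_k|\hat v|^{k^2}\partial_{\hat v}$ introduces the combination $\mr{\lambda}(\hz)|\hz|^{k^2}$ along with an overall factor of $|u|^{k^2 - 2} = e^{(1+q_k)s}$. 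Cancelling this factor against the prefactor from $\partial_u\partial_v$ yields precisely the $z$-only potentials $V_k$ and $L_{k,\ell}$ written in the statement, and the remaining non-homogeneous contributions assemble into $\mathcal{E}_{p,\ell}$ after extracting the common $e^{-(1+q_k)s}$ weight.

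Third, for \eqref{eq:wavehyper}, I would pass to hyperbolic coordinates. Since $x = -\frac{1}{2q_k}\log|z|$ depends on $z$ alone, one has $\partial_s|_z = \partial_t|_x$ and $\partial_z|_s = \frac{1}{2q_k|z|}(\partial_t+\partial_x)$; a short algebraic manipulation (in which the first-order $\partial_z$ term cancels exactly the subleading piece generated by $\partial_z^2$) yields the key identity
\begin{equation*}
\partial_s\partial_z - q_k|z|\partial_z^2 + q_k\partial_z = \frac{1}{4q_k|z|}\bigl(\partial_t^2 - \partial_x^2\bigr).
\end{equation*}
Multiplying \eqref{eq:wavesim} through by $4q_k|z| = 4q_k e^{-2q_k x}$ then produces \eqref{eq:wavehyper}. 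The main obstacle, such as it is, consists entirely of tracking the powers of $|u|$, $|\hat v|$, and $|z|$ and the conversions between self-similar and renormalized gauges for $\lambda_k$; all other steps are routine.
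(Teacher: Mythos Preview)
Your proposal is correct and follows essentially the same route as the paper. The paper's own proof is terser: it simply invokes the previously derived full-mode equation \eqref{eq:wavesimfull} (whose derivation contains the same $\partial_u\partial_v = e^{(1+q_k)s}(\partial_s\partial_z - q_k|z|\partial_z^2 + q_k\partial_z)$ computation you carry out) and then projects onto a single $(m,\ell)$-mode, while for \eqref{eq:wavehyper} it just cites Table~\ref{table:1}; your version makes the hyperbolic-coordinate identity $\partial_s\partial_z - q_k|z|\partial_z^2 + q_k\partial_z = \tfrac{1}{4q_k|z|}(\partial_t^2-\partial_x^2)$ explicit, which is a welcome addition but not a different argument.
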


\begin{proof}
    The equations (\ref{sec2.5:eq2}), (\ref{eq:wavesim}) follow from (\ref{sec2.5:eq1}), (\ref{eq:wavesimfull}) respectively after setting $\slashed{\Delta}_{\mathbb{S}^2}\psi_{m \ell} = \ell(\ell+1)\psi_{m \ell}$, which follows for functions $\psi_{m \ell}$ supported on a single $(m,\ell)$-mode. Moreover, (\ref{eq:wavehyper}) is a direct consequence of (\ref{eq:wavesim}) and the coordinate transformations given in Table \ref{table:1}.
\end{proof}

For convenience, define the combined potentials
\begin{equation}
    \label{eq:combinedpotentials}
    V(s,z) \doteq V_k(z) + V_{k,p}(s,z), \ \ L_{\ell}(s,z) \doteq L_{k,\ell}(z) + L_{k,\ell,p}(s,z).
\end{equation}

\subsection{Properties of the potentials $V_k(z), L_{k,\ell}(z)$}
\label{sec:propertiesofpotentials}
On a fixed $k$-self-similar background, the geometric properties of the background enter the wave equation through the pair of potentials $V_k, L_{k,\ell}$. Self-similarity implies these potentials are function of a single ``spatial'' variable $z$ (or $x$). Applying the results of Appendix \ref{section:app1} allows us to control these quantities quantitatively for $k$ sufficiently small. In this section we record the regularity and estimates we shall need in the following.

\begin{prop}
    \label{lem:propertiesofV}
    For $k$ sufficiently small and $p \in [1,\infty)$, the following bound holds:
    \begin{equation}
        \label{eq:potentialest1}
        \|V_k\|_{L^\infty([-1,0])} + \|\partial_z V_k\|_{L^p_z([-1,0])} +  \|\partial_z V_k\|_{L^\infty([-1,-\frac{1}{2}])}  \lesssim_p k^2.
    \end{equation}
    Moreover, there exists a constant $\gamma_k$ such that for any $\epsilon >  0$, 
    \begin{align}
        V_k(z) &= \gamma_k k^2 + k^2 E_{k}(z) |z|^{1-\epsilon}, \label{eq:potentialest2} \\[1.5\jot]
        V_k(x) &= \gamma_k k^2 + k^2 E_k(z(x))e^{-2q_k(1-\epsilon)}. \label{eq:potentialest3}
    \end{align}
    Here, $E_k(z)$ depends on $\epsilon$, and satisfies $\|E_k\|_{L^\infty} \lesssim_\epsilon 1$. The constant $\gamma_k$ satisfies 
    \begin{equation*} 
        \gamma_k = 1 + O(k^2).
    \end{equation*} 
    The higher derivatives of $H_k$ satisfy the following bounds, which degenerate as $z \rightarrow 0$:
    \begin{equation}
        \label{eq:potentialest3.5}
        \big\||z|^{1-p_k k^2 + (j-2)}\frac{d^j}{dz^j}V_k(z) \big\|_{L^\infty([-1,0])} \lesssim_k 1, \ \ 2 \leq j \leq 5.
    \end{equation}
    Finally, for fixed $k$ and ${\omega > 0}$, there exists a ${z_k < 0}$ and ${c_k > 0}$ such that for any $(\epsilon_0,k)$-admissible background, the following repulsivity statement holds:
    \begin{equation}
        \label{eq:potentialest_repulsivity}
       \sup_{\{ s \geq 0, \ z_k \leq z \leq 0 \}} \partial_z(|z|^{\omega} V(s,z)) < - c_k.
    \end{equation}
\end{prop}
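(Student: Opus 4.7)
The plan is to unpack the definition $V_k(z) = p_k \,\mr{\mu}(\hat z)\,(\mr{\lambda}(\hat z)|\hat z|^{k^2})(-\mr{\nu}(\hat z)) \cdot (1-\mr{\mu}(\hat z))^{-1}\mr{r}(\hat z)^{-2}$ as a product of five factors and track the role of $\mu$. By Proposition~\ref{lemma:backgroundprelims1}, $\mr{r}(\hat z), -\mr{\nu}(\hat z), (\mr{\lambda}|\hat z|^{k^2})$ and $\mr{\mu}(\hat z)$ are all $C^1$ up to $\{z=0\}$ in renormalized gauge, and $\mr{r}(0), -\mr{\nu}(0) = \mr{r}(0)$ are bounded strictly below. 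The source of the $k^2$ smallness is $\mr{\mu}$, which satisfies $\mr{\mu}(0) = k^2/(1+k^2)$ together with $\|\mr{\mu}\|_{L^\infty} \lesssim k^2$; the $k^2$ factor is preserved under differentiation via the Leibniz rule, since every term after one application of $\partial_z$ carries either $\mr{\mu}$ or $\partial_z \mr{\mu}$. This yields the $L^\infty$ bound on $V_k$ in (\ref{eq:potentialest1}) and the pointwise $L^\infty$ bound on $\partial_z V_k$ away from $z=0$; the $L^p$ bound on $\partial_z V_k$ follows by tracking the blowup of $\partial_z\mr{\mu}$, which is no worse than $|z|^{-k^2}$ from the small-$k$ expansion in Appendix~\ref{section:app3}, yielding integrability for every $p < \infty$ for $k$ sufficiently small.

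The expansion (\ref{eq:potentialest2}) is obtained by writing $V_k(z) = V_k(0) + \int_0^z \partial_{z'}V_k(z')\,dz'$; the constant $\gamma_k k^2 \doteq V_k(0)$ is computed by plugging in the boundary values from Proposition~\ref{lemma:backgroundprelims1} and the first-order-in-$k$ asymptotic expansions of Appendix~\ref{section:app3}, which give $\gamma_k = 1 + O(k^2)$. The remainder is controlled by the pointwise bound $|\partial_z V_k(z)| \lesssim_\epsilon k^2 |z|^{-\epsilon}$ (coming from the $C^{1,k^2/(1-k^2)}$ regularity of the background), which integrates to $O(k^2|z|^{1-\epsilon})$; the function $E_k(z)$ is then defined as $k^{-2}|z|^{-(1-\epsilon)}(V_k(z) - \gamma_k k^2)$. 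The hyperbolic-coordinate form (\ref{eq:potentialest3}) is just the substitution $|z| = e^{-2q_k x}$. For (\ref{eq:potentialest3.5}), the higher-derivative bounds are obtained by inductively differentiating the self-similar ODE system and using that the only obstruction to classical smoothness of background quantities is the blowup $\partial_z^2 \mr{\phi} \sim |z|^{-1 + p_k k^2}$; each additional $\partial_z$ contributes one factor of $|z|^{-1}$ relative to this baseline, propagating through the nonlinear structure of the ODEs.

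The repulsivity statement (\ref{eq:potentialest_repulsivity}) is the most delicate point and relies crucially on the positivity of $\gamma_k$. Since $\gamma_k = 1 + O(k^2) > 0$ for $k$ small and $V_k$ is continuous on $[-1, 0]$, there exists $z_k < 0$ (depending on $\omega$) such that $V_k(z) \geq \tfrac{1}{2}\gamma_k k^2$ on $[z_k, 0]$. For $z \in (z_k, 0)$, expanding gives
\begin{equation*}
\partial_z\bigl(|z|^\omega V_k(z)\bigr) = -\omega |z|^{\omega-1} V_k(z) + |z|^\omega \partial_z V_k(z),
\end{equation*}
where the first term is bounded above by $-\tfrac{\omega}{2}\gamma_k k^2 |z|^{\omega-1}$ and the second obeys $|z|^\omega|\partial_z V_k(z)| \lesssim_\epsilon k^2 |z|^{\omega - \epsilon}$ by the preceding paragraph. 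Choosing $\epsilon$ strictly less than $1$ and shrinking $z_k$ toward $0$ ensures the first term dominates in magnitude, yielding a uniform negative upper bound $-c_k$. To extend to the full $V = V_k + V_{k,p}$, note that Definition~\ref{dfn:admissiblespacetimesDEFN} yields smallness of $V_{k,p}$ and its first derivatives on any set bounded away from $z=0$, with size $O(\epsilon_0 k^2)$; taking $\epsilon_0$ small enough preserves the strict negativity. The main obstacle in executing this plan is confirming that $\gamma_k$ is \emph{exactly} of the form $1 + O(k^2)$ (rather than vanishing at leading order), since everything downstream — the $k^2$-size lower bound on $V_k$ near the axis and hence the sign in the repulsivity estimate — depends on this positivity; this must be read off from the explicit leading-order ODE system solved in Appendix~\ref{section:app3}.
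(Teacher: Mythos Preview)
Your overall plan matches the paper's: reduce everything to the small-$k$ estimates on the self-similar quantities in Appendix~\ref{section:app3}, get the expansion \eqref{eq:potentialest2} by writing $V_k(z)=V_k(0)+\int$, and read off repulsivity from the strict positivity of $\gamma_k$. The paper's proof is essentially a list of citations to Appendix~B together with one H\"older step, so your more verbal account is in the same spirit.

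There is, however, a genuine gap in your treatment of $\partial_z V_k$. You assert a \emph{pointwise} bound $|\partial_z V_k(z)|\lesssim_\epsilon k^2|z|^{-\epsilon}$, attributing it to the $C^{1,p_kk^2}$ regularity of the background, and you use it both for the expansion \eqref{eq:potentialest2} and for the repulsivity argument. This bound is not available with the $k^2$ factor. The factors of $V_k$ are all $C^1_z$ (so $\partial_z V_k$ is pointwise bounded), but the $L^\infty$ norm of $\partial_z V_k$ is \emph{not} $O(k^2)$: near $z=0$ one has $\partial_z\mr\phi\sim k^{-1}$ (Proposition~\ref{lemma:backgroundprelims1}), which feeds through \eqref{ss:SSESF6} and \eqref{eq:weightedlambda} to make quantities like $\partial_z\mr m$ and $\partial_z(|\hat z|^{k^2}\mr\lambda)$ pointwise large in $k$; the $O(k^2)$ size in \eqref{eq:smallkbound8}, \eqref{eq:smallkbound10.6} is only available in $L^p_z$, $p<\infty$, because the bad region is exponentially thin. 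Your related claim that $\partial_z\mr\mu$ ``blows up no worse than $|z|^{-k^2}$'' is also off---there is no blowup (since $\mr\mu\in C^1_z$), the issue is loss of the $k^2$ smallness pointwise.

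The fix is exactly what the paper does: for \eqref{eq:potentialest2}, bound the integral remainder by H\"older,
\[
|V_k(z)-V_k(0)|\le\int_z^0|\partial_z V_k|\,dz'\lesssim |z|^{1-\epsilon}\|\partial_z V_k\|_{L^{1/\epsilon}_z}\lesssim_\epsilon k^2|z|^{1-\epsilon},
\]
using only the $L^p$ bound \eqref{eq:potentialest1}. For repulsivity you do not need the $k^2$ in the pointwise bound on $\partial_z V_k$: since $z_k,c_k$ are allowed to depend on $k$, it suffices that $\partial_z V_k\in L^\infty$ with some ($k$-dependent) bound $M_k$, so that $|z|^\omega|\partial_z V_k|\le M_k|z|^\omega$ is dominated by $-\omega|z|^{\omega-1}V_k\le -\tfrac{\omega}{2}\gamma_k k^2|z|^{\omega-1}$ once $|z|$ is small enough depending on $k$. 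With these two repairs your argument goes through.
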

\begin{proof}
    The estimate (\ref{eq:potentialest1}) is a consequence of the collection of bounds (\ref{backgroundbd:1}), (\ref{eq:smallkbound8}), (\ref{eq:smallkbound10}), (\ref{eq:smallkbound10.5}), (\ref{eq:smallkbound10.6}), (\ref{eq:smallkbound14}), (\ref{eq:smallkbound15}). Similarly, (\ref{eq:potentialest3.5}) follows from (\ref{eq:smallkbound10.7}). The repulsivity estimate (\ref{eq:potentialest_repulsivity}) is direct for $V_k(z)$, and follows for $V(s,z) = V_k(z) + V_{k,p}(s,z)$ by the assumptions on admissible spacetimes.
    
    We next turn to the expansions (\ref{eq:potentialest2})--(\ref{eq:potentialest3}). The latter is a direct translation of the former in hyperbolic coordinates, so we only discuss (\ref{eq:potentialest2}). Write $V_k(z) = V_k(0) + \mathcal{E}_k(z),$ and estimate $\mathcal{E}_k$ using Hölder and (\ref{eq:potentialest1}): 
    \begin{align*}
        |\mathcal{E}_k(z)| &\leq \int_{z}^{0} |\partial_z V_k(z')|dz' \\[\jot]
        & \lesssim |z|^{1-\epsilon}\|\partial_z V_k\|_{L^{\frac{1}{\epsilon}}_z([-1,0])} \lesssim_\epsilon k^2 |z|^{1-\epsilon}.
    \end{align*}
    Defining $E_k(z) \doteq k^{-2}|z|^{-(1-\epsilon)}\mathcal{E}_k(z)$ gives the result.
    \end{proof}

    \begin{figure}[t]
        \centering
        \includegraphics[scale=.7]{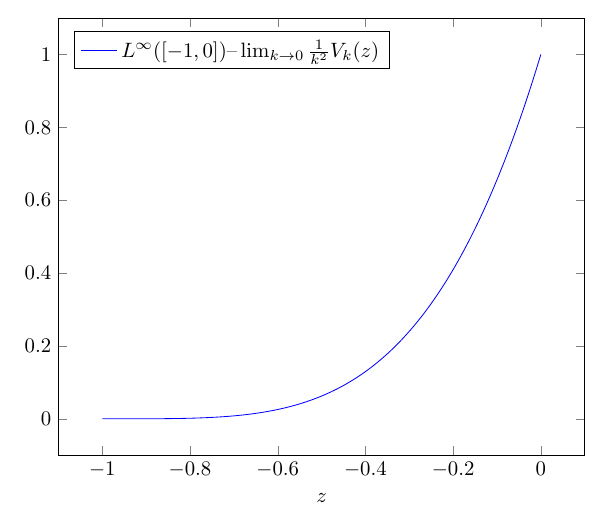}
        \caption{A plot of the leading order behavior of $V_k(z)$ as $k \rightarrow 0$, using the asymptotics of Appendix \ref{app:eq3}.    }
        \label{fig6}
        \end{figure}

The next proposition establishes similar results for the angular potential $L_{k,\ell}$. Note that $L_{k,\ell}$ lacks the smallness present in $V_k$ (compare the factors of $\mu$). It is therefore important that this angular potential (and its derivatives) carry definite signs.
\begin{prop}
    \label{lem:propertiesofL_kl}
    For all $\ell \geq 1$, the inequality $L_{k,\ell} \geq 0$ holds. Moreover, 
    \begin{align}
        \frac{d}{dz}L_{k, \ell}(z) = \underbrace{-2\ell(\ell+1) \frac{(\partial_z \mr{r})^2(-\mr{\nu})}{(1-\mr{\mu})\mr{r}^3}}_{L^{(0)}_{k,\ell}(z)} + L^{(1)}_{k,\ell}(z), \label{eq:angmompotentialest1}
    \end{align}
    where 
    \begin{align}
    \|\mr{r}^2 L^{(1)}_{k,\ell}\|_{L^\infty([-1,-\frac{1}{2}])} \lesssim k^2 \ell(\ell+1), \label{eq:angmompot_temp1}\\
    \|\mr{r}^2 L^{(1)}_{k,\ell}\|_{L^p_z([-1,0])} \lesssim_p k^2 \ell(\ell+1),\label{eq:angmompot_temp2}
    \end{align}
    and 
    \begin{equation}
        \frac{d}{dz}L_{k, \ell}(z) \leq L^{(0)}_{k,\ell}(z) \lesssim -\frac{\ell(\ell+1)}{\mr{r}^3}. \label{eq:L_kl_derivative_sign}
    \end{equation}
    \end{prop}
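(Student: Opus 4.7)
The plan is to first reduce $L_{k,\ell}$ to a manifestly tractable form, then differentiate, and finally use the small-$k$ expansion of the background to control the error terms. The starting observation is the identity
\begin{equation*}
\mr{\lambda}(\hz)\, |\hz|^{k^2} \;=\; q_k\, \partial_z \mr{r}(z),
\end{equation*}
which follows by comparing $\partial_v = p_k|\hat v|^{k^2}\partial_{\hat v}$ (Table \ref{table:1}) with the restriction $\lambda_k(-1,v) = \partial_z \mr{r}(z)|_{z=v}$ and using $p_k q_k = 1$. Substituting into (\ref{eq:defn_Lkell}) gives
\begin{equation*}
L_{k,\ell}(z) \;=\; \ell(\ell+1)\,\frac{\partial_z \mr{r}\cdot(-\mr{\nu})}{(1-\mr{\mu})\,\mr{r}^2},
\end{equation*}
and each factor on the right is strictly positive in $\mathcal{Q}^{(in)}$ by Lemma \ref{lem:christodouloulowerorder} together with the absence of trapped surfaces (since $\mr{\mu}(0)=k^2/(1+k^2)$ is small and $\mr{\mu}$ vanishes at the axis). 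This establishes $L_{k,\ell} \geq 0$.

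I would next differentiate in $z$ by the product and quotient rules, grouping the four resulting terms into $L^{(0)}_{k,\ell}$---the contribution $-2(\partial_z\mr{r}/\mr{r})\, L_{k,\ell}$ coming from the $\mr{r}^{-2}$ factor---and $L^{(1)}_{k,\ell}$, defined as the sum
\begin{equation*}
L^{(1)}_{k,\ell} \;=\; \frac{\ell(\ell+1)}{(1-\mr{\mu})\,\mr{r}^2}\Big[\partial_z^2\mr{r}\cdot(-\mr{\nu}) \;+\; \partial_z\mr{r}\cdot\partial_z(-\mr{\nu}) \;+\; \frac{\partial_z\mr{r}\cdot(-\mr{\nu})\cdot\partial_z\mr{\mu}}{1-\mr{\mu}}\Big].
\end{equation*}
The explicit form of $L^{(0)}_{k,\ell}$ then coincides with the expression in (\ref{eq:angmompotentialest1}).

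The quantitative estimates on $L^{(1)}$ would come from Appendix \ref{section:app3}: in the $k \rightarrow 0$ limit the background reduces to a trivial Minkowski-type solution with $\partial_z^2 \mr{r} \equiv \partial_z \mr{\nu} \equiv \partial_z \mr{\mu} \equiv 0$, so each of these derivatives carries an $O(k^2)$ amplitude at leading order. On $[-1,-\tfrac{1}{2}]$, where $\mr{r}$ is bounded below and the background is smooth, the bound $\|\mr{r}^2 L^{(1)}_{k,\ell}\|_{L^\infty}\lesssim k^2 \ell(\ell+1)$ follows directly. On all of $[-1,0]$ the same derivatives can degenerate as $z \rightarrow 0$ through integrable powers of $|z|$ (mirroring the $|z|^{-1+p_k k^2}$ blowup of $\partial_z^2 \mr{\phi}$ in (\ref{eq:dvsquaredblowup})), and tracking these tails yields (\ref{eq:angmompot_temp2}). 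For the sign inequality (\ref{eq:L_kl_derivative_sign}), the uniform bound $(\partial_z\mr{r})^2(-\mr{\nu})/(1-\mr{\mu}) = 1 + O(k^2)$ gives $L^{(0)}_{k,\ell} \leq -c\, \ell(\ell+1)/\mr{r}^3$, while the comparison $|L^{(1)}_{k,\ell}|/|L^{(0)}_{k,\ell}| \lesssim k^2 \mr{r}$ is $\ll 1$ for $k$ small, so $L^{(1)}$ is absorbed into $L^{(0)}$ (with an adjusted constant), recovering the chain $\partial_z L_{k,\ell} \lesssim L^{(0)}_{k,\ell} \lesssim -\ell(\ell+1)/\mr{r}^3$.

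The main obstacle lies entirely in the small-$k$ analysis of the background quantities $\partial_z^2 \mr{r}$, $\partial_z \mr{\nu}$, and $\partial_z \mr{\mu}$ near $z=0$, where the regularity of the background degenerates exactly in line with the sharpness statement in Proposition \ref{lemma:backgroundprelims1}. Once these estimates are in place, the remainder of the proof is algebraic; in particular, nothing in the argument requires spherical symmetry of the background beyond what is already embedded in the definition of the $k$-self-similar quantities, so the extension from the $k$-self-similar metric to a general $(\epsilon_0,k)$-admissible background follows by the same perturbation arguments used for $V_k$ in Proposition \ref{lem:propertiesofV}.
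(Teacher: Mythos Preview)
Your argument for $L_{k,\ell}\geq 0$ and for the bounds (\ref{eq:angmompot_temp1})--(\ref{eq:angmompot_temp2}) is correct and matches the paper's. The gap is in your treatment of (\ref{eq:L_kl_derivative_sign}). You assert $|L^{(1)}_{k,\ell}|/|L^{(0)}_{k,\ell}| \lesssim k^2\,\mr{r}$, but this fails near $z=0$. Using the identity $\tfrac{\partial_z\mr{r}\,(-\mr{\nu})}{1-\mr{\mu}} = \tfrac{p_k}{4}|\hz|^{k^2}\mr{\Omega}^2$ (from the definition of $\mr{\mu}$) and (\ref{eq:monotonic_omega}) one finds $\mr{r}^2 L^{(1)}_{k,\ell} = -\tfrac{k p_k}{2}\ell(\ell+1)\,|\hz|^{k^2}\mr{\Omega}^2\,\partial_z\mr{\phi}$; since $\partial_z\mr{\phi}(0)\sim k^{-1}$ (Proposition \ref{lemma:backgroundprelims1}), the pointwise value $\mr{r}^2 L^{(1)}_{k,\ell}(0)$ is of order $\ell(\ell+1)$, not $k^2\ell(\ell+1)$. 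Equivalently, your three bracketed terms individually grow like $k^{-2}$ at $z=0$ (e.g.\ $\partial_z^2\mr{r} = -\mr{r}(\partial_z\mr{\phi})^2 + O(1)$) and only cancel down to $O(1)$, not to $O(k^2)$. So $L^{(1)}$ cannot be absorbed into $L^{(0)}$ perturbatively near the horizon; this is exactly the point of Remark \ref{rmk:introthresholdreg}+1 following the proposition.

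The paper closes this gap by a sign argument rather than a smallness argument: from the same identity, $L^{(1)}_{k,\ell} = \tfrac{\ell(\ell+1)}{4\mr{r}^2}\partial_z\big(p_k|\hz|^{k^2}\mr{\Omega}^2\big)$, and (\ref{eq:monotonic_omega}) shows $|\hz|^{k^2}\mr{\Omega}^2$ is \emph{monotone decreasing}, whence $L^{(1)}_{k,\ell}\leq 0$ exactly. This gives the sharp inequality $\partial_z L_{k,\ell}\leq L^{(0)}_{k,\ell}$ with constant $1$, for all $k$ in the admissible range---not merely for $k$ small. Note also that (\ref{eq:L_kl_derivative_sign}) as stated is an exact inequality, which your absorption scheme could not recover even where it applies.
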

    \begin{proof}
        The non-negativity of $L_{k,\ell}$ is immediate from the definition (\ref{eq:defn_Lkell}). Similarly, the bounds (\ref{eq:angmompot_temp1})--(\ref{eq:angmompot_temp2}) follow from estimates on the derivative of background double-null quantities, precisely as in the proof of Proposition \ref{lem:propertiesofV} above.
        
        We finally establish a pointwise sign for $\frac{d}{dz}L_{k,\ell}$. By (\ref{eq:angmompotentialest1}) it suffices to consider $L^{(1)}_{k,\ell}$, which is given explicitly by 
        \begin{equation}
            L^{(1)}_{k,\ell} = \partial_z\bigg(\frac{\partial_z \mr{r}(-\mr{\nu})}{(1-\mr{\mu})} \bigg)\frac{\ell(\ell+1)}{\mr{r}^2} = \partial_z\big(p_k |\hz|^{k^2}\mr{\Omega}^2 \big)(z) \frac{\ell(\ell+1)}{4\mr{r}^2}.
        \end{equation}
        It follows from (\ref{eq:monotonic_omega}) that $|\hz|^{k^2}\mr{\Omega}^2$ is \textit{decreasing}, implying $L^{(1)}_{k,\ell} \leq 0$ as desired.

        \end{proof}

\begin{rmk}
    The estimate (\ref{eq:L_kl_derivative_sign}) above is an instance where the $k$-self-similar solution may not be treated as a perturbation of corresponding flat space potential $L_{\ell}^{(flat)}\doteq \frac{\ell(\ell+1)}{(1+z)^2}$. Derivatives of $L_{k,\ell}-L_{\ell}^{(flat)}$ are of size $O(1)$ as $z \rightarrow 0$, for all $k$ small. If this derivative did not have an appropriate sign, there would be the possibility of slowly decaying solutions to (\ref{eq:1}) supported on $\ell \gg 1$ localized near the cone $\{z=0\}$.  
\end{rmk}

\subsection{Function spaces and local well-posedness}
\label{section:functspaces}
In this section we discuss local well-posedness for (\ref{sec2.5:eq1}) with characteristic initial data along $\{u\!=\!-1\}$ (equivalently, along $\{s=0\}$). The main subtlety concerns the limited regularity of the background geometry as $v \rightarrow 0$. It follows from Proposition \ref{lemma:backgroundprelims1} that the coefficients $V_k, L_{k,\ell} \in C^1_v \setminus C^2_v$, and therefore we cannot hope to propagate arbitrarily high regularity on data. It will prove convenient to allow for limited outgoing regularity already in initial data. 

These considerations lead to the following definitions. Let $\alpha \in (1, 2), \delta \in (0,1), \gamma \in (0,\frac{1}{2}]$ be constants, and ${I \subset [-1,1]}$ a closed subinterval containing $0$. We introduce the following spaces:
\begin{align}
     &\mathcal{C}^{\alpha,\delta}_{(hor)}(I) \doteq  \big\{f(z): I \rightarrow \mathbb{R} \mid  f(z) \in C^{5}_z(I \setminus \{0\} ) \cap C^{1}_z(I), \nonumber \\[\jot]
     & \hspace{15em} |z|^{j - \alpha } \frac{d^j}{dz^j}f(z) \in C^{0,\delta}_z(I), \ \text{for} \ 2 \leq j \leq 5 \big\}, \label{defn:Chorizonspaces} \\[3\jot]
     &\mathcal{H}^{1,\gamma}_{(hor)}(I) \doteq \{f(z): I \rightarrow \mathbb{R} \mid \ f(z) \in W^{5,2}_z(I \setminus \{0\}) \cap  W^{1,2}_z(I), \ |z|^{\frac12 - \gamma} \partial_z^{2} f(z) \in L^2_z(I) \}, \label{defn:Chorizonspaces2.0}\\[3\jot]
     &\mathcal{H}^{1,\gamma}_{(hor)}(I \times \mathbb{S}^2) \doteq \{f(z,\omega): I\times\mathbb{S}^2 \rightarrow \mathbb{R} \mid \ f(z,\omega) \in W^{5,2}_z W^{4,2}_\omega( (I\setminus \{0\}) \times \mathbb{S}^2) \cap W^{1,2}_z W^{4,2}_\omega(I \times \mathbb{S}^2), \nonumber \\[\jot]
     &\hspace{15em} |z|^{\frac12 - \gamma} \partial_z^{2} f(z,\omega) \in L^2_z W^{4,2}_\omega(I \times \mathbb{S}^2)  \} . \label{defn:Chorizonspaces2}
\end{align}
Define the norms
\begin{align*}
    \|f \|_{\mathcal{C}^{\alpha,\delta}_{(hor)}(I)} &\doteq \sum_{j=0}^{1} \big\|\frac{d^j}{dz^j}f \big\|_{L^\infty([-1,0])} + \sum_{j=2}^{5} \big\||z|^{j-\alpha}\frac{d^j}{dz^j}f \big\|_{C^{0,\delta}_z([-1,0])}, \\[2\jot]
    \|f \|_{\mathcal{H}^{1,\gamma}_{(hor)}(I)} &\doteq \|f \|_{W^{1,2}_z(I)} + \big\||z|^{\frac12 - \gamma} \partial_z^2 f \big\|_{L^2_z(I)}, \\[2\jot]
    \|f \|_{\mathcal{H}^{1,\gamma}_{(hor)}(I\times \mathbb{S}^2)} &\doteq \|f \|_{W^{1,2}_zW^{4,2}_\omega(I\times \mathbb{S}^2)} + \big\||z|^{\frac12 - \gamma} \partial_z^2 f \big\|_{L^2_zW^{4,2}_\omega(I\times\mathbb{S}^2)}.
\end{align*}
where $\|f\|_{C^{0,\delta}_z([-1,0])}$ is the standard Hölder norm. We shall often work with $C^{\alpha,\delta}_{(hor)}(I)$ spaces with $\delta = 0$, which we abbreviate as $C^{\alpha}_{(hor)}(I) \doteq  C^{\alpha,0}_{(hor)}(I)$.

\begin{rmk}
    \label{rmk:functionspaces}
    The regularity of a general spherically-symmetric function $f(z) \in \mathcal{C}^{\alpha,0}_{(hor)}(I)$, $\alpha\in (1,2)$ can be thought of as a generalization of that of the explicit function $|z|^{\alpha}$. Low order derivatives $f, f'$ are pointwise bounded, the second order derivative $f''$ is permitted to be singular, but integrable, as $z \rightarrow 0$, and higher derivatives $f^{(3)}, f^{(4)}, f^{(5)}$ each lose at most an additional power of $|z|^{-1}$. Note that $|z|^{\alpha} \in C^{\alpha,\delta}_{(hor)}([-1,1])$ for all $\delta \in (0,1)$. 

    A similar intepretation applies to $\mathcal{H}^{1,\gamma}_{(hor)}(I)$. A direct computation shows $|z|^{\alpha} \in \mathcal{H}^{1,\frac{1}{2}}_{(hor)}(I)$ when $\alpha > \frac32$, and $|z|^{\alpha} \in \mathcal{H}^{1,\gamma}_{(hor)}(I)$ for any $\gamma \in (0, \alpha-1)$ when $\alpha \leq \frac32$.

    In this scale of spaces, the $k$-self-similar scalar $\mr{\phi}(z)$ satisfies 
    \begin{equation*}
        \mr{\phi}(z) \in \mathcal{C}^{p_k, p_k k^2}_{(hor)}([-1,1]) \cap \mathcal{H}^{1,\gamma}_{(hor)}([-1,1]), \ \gamma \in (0, p_k k^2).
    \end{equation*}
    This is a consequence of (\ref{eq:newexpansions2}).
\end{rmk}

\begin{rmk}
    The pointwise $\mathcal{C}^{\alpha,\delta}_{(hor)}(I)$ and integrated $\mathcal{H}^{1,\gamma}_{(hor)}(I)$ spaces will be used in the study of the $\ell = 0$ and $\ell > 0$ components of $\varphi$, respectively. 
\end{rmk}

\begin{rmk}
    Along the null surface $\{u=-1\}$, the $\mathcal{C}^{\alpha}_{(hor)}(I)$ regularity of a function $f(v)$ in double-null coordinates is equivalent to that of $f(z)$ in similarity coordinates. The same is true for $\mathcal{H}^{1,\gamma}_{(hor)}(I)$.
\end{rmk}

The following lemma establishes a useful decomposition for functions in $\mathcal{C}^{\alpha,\delta}_{(hor)}(I)$.

\begin{lemma}
    \label{lem:holderspacedecomp}
    Fix $\alpha \in (1,2)$, $\delta \in (0,1)$, and $f(z) \in \mathcal{C}^{\alpha,\delta}_{(hor)}(I)$. Then there exists a decomposition 
    \begin{equation}
        \label{eq:holderspacedecomp}
        f(z) = c|z|^{\alpha} + f_1(z),
    \end{equation}
    where $f_1(z) \in \mathcal{C}^{\alpha+\delta'}_{(hor)}(I)$ for any $\delta' < \delta$ and we have 
    \begin{equation}
        \label{eq:holderspacedecomp_est}
        |c| + \|f_1\|_{\mathcal{C}^{\alpha+\delta'}_{(hor)}(I)} \lesssim_{\alpha} \|f\|_{\mathcal{C}^{\alpha,\delta}_{(hor)}(I)}.
    \end{equation} 
    For $k$ sufficiently small and $f(z) \in C^{p_k k^2, \delta}_{(hor)}(I)$, there exists a decomposition 
    \begin{equation}
        \label{eq:holderspacedecomp_threshold}
        f(z) = c\mr{\phi}(z) + f_1(z),
    \end{equation}
    where $f_1(z) \in \mathcal{C}^{p_k k^2 + \delta'}_{(hor)}(I)$ for any $\delta' < \min(\delta,p_k k^2)$. The estimate (\ref{eq:holderspacedecomp_est}) continues to hold. 
\end{lemma}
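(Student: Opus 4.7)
The decomposition extracts the leading $|z|^{\alpha}$ singularity from $f$. My plan is to define the coefficient by evaluating a continuous representative of the second derivative at the horizon: since $|z|^{2-\alpha}\partial_z^2 f \in C^{0,\delta}_z(I)$, set
\[
c \doteq \frac{1}{\alpha(\alpha-1)}\lim_{z\to 0}|z|^{2-\alpha}\partial_z^2 f(z),
\]
which is well-defined and obeys $|c|\lesssim_\alpha \|f\|_{\mathcal{C}^{\alpha,\delta}_{(hor)}(I)}$ from Hölder continuity. Taking $f_1 \doteq f - c|z|^{\alpha}$, note that $c|z|^{\alpha} \in C^1(I)$ since $\alpha > 1$, so the $C^1(I)$ part of the axis regularity of $f_1$ is inherited from $f$. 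The remaining task is to bound $|z|^{j-(\alpha+\delta')}\partial_z^j f_1$ for $2 \leq j \leq 5$ and any $\delta' < \delta$.

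For $j = 2$, the definition of $c$ gives $|z|^{2-\alpha}\partial_z^2 f_1(z) \to 0$ as $z \to 0$, and the $C^{0,\delta}_z$ regularity upgrades this to the pointwise bound $|z|^{2-\alpha}\partial_z^2 f_1(z) = O(|z|^\delta)\|f\|_{\mathcal{C}^{\alpha,\delta}_{(hor)}(I)}$; multiplication by $|z|^{-\delta'}$ produces the desired $O(|z|^{\delta-\delta'})$ estimate. The cases $j = 3, 4, 5$ rest on what I regard as the main (mild) technical point: a consistency constraint forcing the boundary values of the higher-order quantities $g_j(z) \doteq |z|^{j-\alpha}\partial_z^j f(z)$. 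Working on the interior side $z < 0$, direct differentiation of $\partial_z^j f = |z|^{\alpha-j}g_j$ yields the recursion
\[
g_{j+1}(z) = -(\alpha - j)\,g_j(z) + |z|\,g_j'(z), \qquad z \in I \setminus \{0\}.
\]
Since both $g_j$ and $g_{j+1}$ are bounded through $z = 0$ by the $C^{0,\delta}_z$ assumption, so must be $|z|g_j'(z)$; any nonzero limit at the horizon would force $g_j$ to blow up logarithmically upon integration, contradicting its Hölder continuity. Consequently $|z|g_j'(z) \to 0$, yielding the rigid relation $g_{j+1}(0) = -(\alpha - j)g_j(0)$. Iterating from $g_2(0) = \alpha(\alpha-1)c$ reproduces exactly the constants $g_j(0)$ dictated by the model $c|z|^{\alpha}$, so $|z|^{j-\alpha}\partial_z^j f_1$ vanishes at $z = 0$ for each $j = 2, \ldots, 5$, and the $j=2$ argument closes the remaining estimates and proves (\ref{eq:holderspacedecomp_est}).

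The threshold statement proceeds along identical lines with $|z|^{\alpha}$ replaced by $\mr{\phi}(z)$. From the regularity $\mr{\phi}(z) \in \mathcal{C}^{p_k, p_k k^2}_{(hor)}(I)$ noted in the preceding remark, $\mr\phi$ shares the leading singular structure of $|z|^{p_k}$ up to a remainder in $\mathcal{C}^{p_k + p_k k^2}_{(hor)}(I)$. One extracts $c$ by evaluating $|z|^{2-p_k}\partial_z^2 f$ at $z = 0$ and dividing by the analogous value $\lim_{z\to 0}|z|^{2-p_k}\partial_z^2 \mr{\phi}(z)$, which is nonzero for $k$ sufficiently small by the small-$k$ asymptotics of Appendix \ref{section:app3}. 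The consistency argument above applies verbatim to the remainder $f_1 = f - c\mr{\phi}$; the cap $\delta' < \min(\delta, p_k k^2)$ in the resulting Hölder improvement reflects that any refinement past $\mr{\phi}$ is bounded by the precision with which $\mr\phi$ itself resembles the pure power $|z|^{p_k}$.
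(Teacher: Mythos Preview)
Your proof is correct and follows essentially the same approach as the paper: both define $c$ via the limit of $|z|^{2-\alpha}\partial_z^2 f$ at $z=0$, establish the recursion $g_{j+1}(0)=-(\alpha-j)g_j(0)$ by observing that a nonvanishing coefficient of $|z|^{-1}$ in $g_j'$ would force logarithmic divergence of $g_j$, and conclude that the boundary values of $|z|^{j-\alpha}\partial_z^j f_1$ all vanish. The only cosmetic difference is in the threshold case: the paper first extracts $c|z|^{p_k}$ via the general decomposition and then invokes the expansion $|z|^{p_k}=c'\mr{\phi}+g$ with $g\in\mathcal{C}^{p_k+\delta''}_{(hor)}$ (from Appendix~\ref{section:app3}), whereas you extract $c\mr{\phi}$ directly by normalizing against $\lim_{z\to 0}|z|^{2-p_k}\partial_z^2\mr{\phi}$---both routes rely on the same nonvanishing fact from the small-$k$ asymptotics and yield the same conclusion.
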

\begin{proof}
Define the constants $c_j \doteq \lim_{z \rightarrow 0} |z|^{j-\alpha}\frac{d^j}{dz^j}f(z)$ for $2 \leq j \leq 5$, which exist and are finite by assumption on $f$. We first derive a relationship between the $c_j$.

As $f(z) \in C^5(I \setminus \{0\})$, it follows that $F_j(z) \doteq |z|^{j-\alpha}\frac{d^j}{dz^j}f(z)$ satisfies
\begin{align*}
    \frac{d}{dz}F_j(z) &= \frac{\alpha-j}{|z|} F_j(z) + \frac{1}{|z|}F_{j+1}(z) \\ 
    &= \frac{c_j(\alpha-j) + c_{j+1}}{|z|}  + O( \frac{1}{|z|^{1-\delta}}).
\end{align*}
By assumption $F_j(z)$ is continuous, and assumes a finite limit at $z=0$. Therefore the coefficient of the $|z|^{-1}$ term must vanish, giving
\begin{equation}
    c_j(j-\alpha) = c_{j+1}.
\end{equation} 
With this relationship in hand, we show that 
\begin{equation*}
    f(z) = \frac{c_2}{\alpha(\alpha-1)}|z|^{\alpha} + \underbrace{(f(z) - \frac{c_2}{\alpha(\alpha-1)}|z|^{\alpha})}_{f_1(z)}
\end{equation*}
is the required decomposition. It is immediate that $f_1(z) \in C^1(I)$, and thus it suffices to check 
\begin{equation*}
    |z|^{j-\alpha-\delta'}\frac{d^j}{dz^j} f_1(z) = |z|^{-\delta'}\big(F_j(z) - c_{j} \big) \in C^0(I), \quad  2 \leq j \leq 5.
\end{equation*}
This condition follows from the regularity of $F_j(z)$ in $I \setminus \{0\}$, and the Hölder continuity with index $\delta$.

We next consider the case $\alpha = p_k k^2$. Given the decomposition (\ref{eq:holderspacedecomp}), it suffices to show that there exists a constant $c$ such that  $|z|^{p_k k^2} = c \mr{\phi}(z) + g(z)$, where $g(z) \in \mathcal{C}^{p_k k^2 + \delta'}_{(hor)}(I)$ for all $\delta' < p_k k^2$. This latter statement is a consequence of (\ref{eq:newexpansions2}).
\end{proof}

We conclude this section with a well-posedness statement for the wave equation (\ref{eq:wavesimfull}), given data in similarity coordinates with finite regularity. 

\begin{prop}
\label{prop:localexist}
Fix an $(\epsilon_0, k)$-admissible spacetime $(\mathcal{Q},g,r,\phi)$. Let initial data $\varphi|_{\{s=0\}} = f(z,\omega)$ to (\ref{eq:wavesimfull}) be given. Decomposing into $\ell=0$ and $\ell>0$ components as $f(z,\omega) = f_0(z) + f_{> 0}(z,\omega)$, assume 
\begin{align*}
    f_0(z) \in \mathcal{C}^{\alpha}_{(hor)}([-1,1]), \quad f_{> 0}(z,\omega) \in \mathcal{H}_{(hor)}^{1,\gamma}([-1,1] \times \mathbb{S}^2),
\end{align*} 
for parameters $\alpha \in (1, 2)$, $\gamma \in (0,\frac12]$. Assume the projections to fixed $(m,\ell)$-mode, $\ell > 0$ satisfy 
\begin{align}
    \label{eq:localext_higherlmodereg}
    (P_{m 1}f)(z) = O(r(z)), \quad (P_{m \ell}f)(z) = O(r(z)^2), \ \ell \geq 2, \ |m| \leq \ell.
\end{align}
Then there exists a unique solution $\varphi(s,z,\omega)$ to (\ref{eq:wavesimfull}) on $\{s \geq 0\}\times\mathbb{S}^2 $ with spherical harmonic decomposition $\varphi(s,z,\omega) = \varphi_{0}(s,z) + \varphi_{>0}(s,z,\omega)$ satisfying 
\begin{align*}
    r\varphi_0(s,z) \in C^0_s(\mathbb{R}_{+}; \mathcal{C}^{\alpha}_{(hor)}([-1,1])), \quad r\varphi_{> 0}(s,z,\omega) \in C^0_s(\mathbb{R}_+; \mathcal{H}^{1,\gamma}_{(hor)}([-1,1] \times \mathbb{S}^2)).
\end{align*} 
The solution has the regularity 
\begin{align}
    &r \varphi_0 \in C^5_{s,z}(\mathcal{Q}\setminus\{z=0\}), \text{and} \ \partial_s^{i}\partial_z^{j}(r \varphi_0) \in C^0_{s,z}(\mathcal{Q}), \ i+j \leq 5, \ j \leq 1, \label{eq:localexist_regular_1} \\[\jot]
    &P_{m \ell} (r \varphi) \in C^4_{s,z}(\mathcal{Q}\setminus\{z=0\}), \ \ \ell \geq 1,\label{eq:localexist_regular_2}
\end{align}
and for any $s_0 > 0$ satisfies the bound
\begin{align}
    \sup_{s' \in [0,s_0]}\|r\varphi_0 \|_{C^\alpha_{(hor)}(\{s=s'\})}+\sup_{s' \in [0,s_0]}\|r\varphi_{> 0} &\|_{\mathcal{H}^{1,\gamma}_{(hor)}(\{s=s'\}\times\mathbb{S}^2)} \nonumber \\[\jot]
    &\leq C_{s_0} \big(\|f_0 \|_{C^\alpha_{(hor)}(\{s=0\})} + \|f_{>0} \|_{\mathcal{H}^{1,\gamma}_{(hor)}(\{s=0\} \times \mathbb{S}^2)} \big). \label{eq:localexist_bound}
\end{align}
Along any fixed $\{s = s_0\}$, (\ref{eq:localext_higherlmodereg}) moreover holds.
\end{prop}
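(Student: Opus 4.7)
This proposition is a local well-posedness statement for the wave equation (\ref{eq:wavesimfull}) in similarity coordinates; it is standard in spirit, but the weighted function spaces $\mathcal{C}^{\alpha}_{(hor)}$ and $\mathcal{H}^{1,\gamma}_{(hor)}$, which permit limited outgoing regularity as $z \to 0$, require some care. My approach is to decompose the solution into spherical harmonic modes via Lemma \ref{lem:formsofwaveequation}, solve each $(m,\ell)$-projection separately via a Duhamel-type iteration adapted to the characteristic structure of (\ref{eq:wavesim}), and propagate the weighted regularity using the bounds on the potentials $V, L_{k,\ell}$ from Propositions \ref{lem:propertiesofV}--\ref{lem:propertiesofL_kl}.

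\emph{Step 1} (Integral formulation). For each fixed $(m,\ell)$-mode, setting $w_{m\ell} \doteq \partial_z \psi_{m\ell}$ transforms (\ref{eq:wavesim}) into the transport equation $(\partial_s - q_k|z|\partial_z)w_{m\ell} + q_k w_{m\ell} = -(V + L_\ell)\psi_{m\ell} + \mathcal{E}_{p,\ell}$ for $w_{m\ell}$. The integral curves of $\partial_s - q_k|z|\partial_z$ emanating from $(0, z_0)$ are $z(s) = z_0 e^{\mp q_k s}$, with the $-$ sign in the exterior $\{z > 0\}$ and the $+$ sign in the interior $\{z < 0\}$; these curves remain in their respective half-regions, and do not cross the characteristic hypersurface $\{z = 0\}$. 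In the interior, integrating along these characteristics and using the axis condition $\psi_{m\ell}(s,-1) = 0$ (forced by continuity of $\varphi$ and the vanishing of $r$ at $\{z = -1\}$) yields coupled Volterra-type integral equations in the pair $(\psi_{m\ell}, w_{m\ell})$; the exterior piece is treated analogously, with integration from $\{v = 1\}$ playing the role of the axis.

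\emph{Step 2} (Contraction scheme and energy estimates). For $\ell = 0$, one closes a contraction mapping argument in $C^0_s([0,s_0]; \mathcal{C}^{\alpha}_{(hor)})$ for $s_0$ small, extending to any finite time by iterating on intervals of uniform size. The key bound is that multiplication by $V(s,z)$ together with integration along characteristics preserves the $\mathcal{C}^\alpha_{(hor)}$ structure; this rests on the weighted derivative estimates (\ref{eq:potentialest3.5}) together with the integrability (\ref{eq:potentialest1}), which exactly match the rate $|z|^{j-\alpha}$ at which $\partial_z^j \psi_0$ is allowed to blow up near $\{z = 0\}$. For $\ell \geq 1$, the angular potential $L_{k,\ell}$ has size $\sim \ell(\ell+1)$ and cannot be treated as a perturbation; we instead close energy estimates at the level of $\partial_z^2 \psi_{m\ell}$, multiplying the $\partial_z$-commuted equation by $|z|^{1-2\gamma}\partial_z^2 \psi_{m\ell}$ and integrating over $\mathcal{R}(0, s_0) \doteq \{0 \leq s \leq s_0\} \times [-1,1]$. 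The favorable sign of $\partial_z L_{k,\ell}$ (cf.\ (\ref{eq:L_kl_derivative_sign})) together with the repulsivity property (\ref{eq:potentialest_repulsivity}) of $V$ yield a bulk term of good sign that absorbs the remaining errors. The $W^{4,2}_\omega$ regularity in angular directions, summable over $\ell$, is propagated by commutation with the spherical Laplacian, giving the claimed $\mathcal{H}^{1,\gamma}_{(hor)}$ control on $\varphi_{>0}$.

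The main obstacle is the weighted regularity bookkeeping in Step 2: the $\mathcal{C}^\alpha_{(hor)}$ class allows blowup of $\partial_z^j\psi$ at rate $|z|^{\alpha-j}$ for each $j \geq 2$, and propagating this structure through the iteration requires the weighted bounds (\ref{eq:potentialest3.5}) on $\partial_z^j V_k$ to precisely match. This matching is available only up to $j = 5$, reflecting the fact that the background $\mr{r}, \mr{m}, \mr{\phi}$ are only $C^1$ across $\{z = 0\}$ (Proposition \ref{lemma:backgroundprelims1}); this explains the order of regularity chosen in (\ref{defn:Chorizonspaces})--(\ref{defn:Chorizonspaces2}). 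Finally, the axis conditions (\ref{eq:localext_higherlmodereg}) are equivalent to smoothness of $\varphi$ at the center $\Gamma = \{r = 0\}$; they are preserved by the wave equation via a standard local expansion of (\ref{sec2.5:eq2}) in powers of $r$ near the axis.
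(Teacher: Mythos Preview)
Your proposal is correct and follows essentially the same route as the paper's proof sketch: integration along the characteristics of $\partial_s - q_k|z|\partial_z$ with weighted commutators $|z|^{j-\alpha}\partial_z^j$ for the $\ell=0$ piece, multiplier/energy estimates for $\ell\geq 1$, and an expansion near the axis to propagate (\ref{eq:localext_higherlmodereg}). One small over-complication: for \emph{local} existence the paper does not invoke the repulsivity (\ref{eq:potentialest_repulsivity}) or the sign (\ref{eq:L_kl_derivative_sign}) of $\partial_z L_{k,\ell}$ --- it simply applies Gr\"onwall with constants allowed to depend on $s_0$ and $\ell$ (the $W^{4,2}_\omega$ weight in $\mathcal{H}^{1,\gamma}_{(hor)}$ absorbs the polynomial-in-$\ell$ loss when summing modes). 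Those sign structures are only needed later, in Section~\ref{sec:multipliers}, to obtain uniform-in-$s$ decay.
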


\begin{proof}[Proof sketch]

    Existence of solutions with the prescribed regularity will follow from appropriate a priori estimates. For simplicity, we restrict attention to the interior region.

    Beginning with the spherically symmetric component $\varphi_0$, we observe that the main obstacle to closing estimates is the limited regularity of the coefficients of (\ref{eq:wavesim}), cf. Proposition \ref{lem:propertiesofV}. As the data and background solution are at least $C^5_{s,z}$ in $\{z < 0\}$, standard well-posedness for the wave equation implies the existence of a solution in $\{z < 0\}$ satisfying $r \varphi_0 \in C^5_{s,z}(\mathcal{Q}\setminus\{z=0\})$. Moreover, it is straightforward to see that $\partial_z (r\varphi_0) \in C^0(\mathcal{Q})$, and thus by commuting with $\partial_s$ repeatedly that the statement (\ref{eq:localexist_regular_1}) holds.

    To show that the norm (\ref{eq:localexist_bound}) is controlled on finite $s$ intervals, commute with the vector fields $\partial_s^j$, $|z|^{j+1-\alpha}\partial_z^j$, $j \leq 4$. The coefficients remain bounded due to the presence of $|z|$ weights, and it thus suffices to integrate (\ref{eq:wavesim}) along the integral curves of $\partial_s - q_k |z|\partial_z$, and apply Grönwall. It follows that $\mathcal{C}^{\alpha}_{(hor)}([-1,0])$ bounds on data are propagated to the future.

    We next consider the non-spherically symmetric component, and estimate individual projections $\varphi_{m \ell}$. Local existence in $\{z < 0\}$ follows by standard arguments, and by propagation of regularity and Sobolev inequalities, (\ref{eq:localexist_regular_2}) follows. It remains to control the $\mathcal{H}^{1,\gamma}_{(hor)}([-1,0])$ norms. We use multiplier vector fields $X \in \{\partial_s, \chi_{z_0}(z) \partial_z\}$, where $\chi(z)$ is an increasing cutoff with support on $\{-\frac12 < z < 0\}$.

    Multiplying (\ref{eq:wavesim}) by $X(r\varphi_{m \ell})$, integrating by parts in $\mathcal{R}(s_0) \doteq \{0 \leq s \leq s_0, \ z \leq 0 \}$, adding the resulting estimates, and applying Grönwall, gives 
    \begin{align*}
        \sup_{s'\in[0,s_0]} \|r\varphi_{m \ell} \|_{W^{1,2}_z(\{s=s'\})} \lesssim_{s_0} \|r\varphi_{m \ell} \|_{W^{1,2}_z(\{s=0\})}.
     \end{align*} 
     An analogous estimate at second order follows by commuting with the set $\{\partial_s, |z|^{1-2\gamma}\chi_{z_0}(z) \partial_z\}$ and applying the same multipliers. It now remains to collect these estimates for individual angular modes to conclude (\ref{eq:localexist_bound}). 

     We finally sketch an argument that (\ref{eq:localext_higherlmodereg}) holds for fixed angular projections $\varphi_{m \ell}$. Observe that by (\ref{eq:localexist_regular_2}) and the averaging estimate (\ref{eq:averagingest}), we have $\varphi_{m \ell} \in C^3_{s,z}(\mathcal{Q} \setminus \{z=0\})$. With this pointwise control, a direct inspection of (\ref{eq:wavesim}) shows that we must have $\mr{r}^{-1} \varphi_{m \ell} \in L^\infty(\{s=s_0\})$, and thus $\varphi_{m \ell} = O(\mr{r})$.

     For $\ell \geq 2$ this vanishing can be improved. Denote by $\Phi_1$ the quantity $\mr{r}^{-1}(r\varphi_{m\ell})$, which satisfies 
     \begin{align}
        \partial_s \partial_z \Phi_1 - q_k |z|\partial_z^2  \Phi_1 &= q_k(2|z|\partial_z\mr{r} \frac{1}{\mr{r}} - 1)\partial_z \Phi_1  - \frac{1}{\mr{r}}\partial_z \mr{r} \partial_s \Phi_1 \nonumber \\[\jot]
        &\hspace{1em}+ \big(-V(s,z)-L(s,z) +q_k|z|\frac{1}{\mr{r}}\partial_z^2 \mr{r}  - q_k \frac{1}{\mr{r}}\partial_z \mr{r} \big)\Phi_1. \label{eq:LWPeq1}
    \end{align}
    Restricting to a given $\{s=s_0\}$, we have established that $\Phi_1 \in C^3_z$, and that $\Phi_1 = O(\mr{r})$. It therefore follows from (\ref{eq:LWPeq1}) that 
    \begin{equation}
        \label{eq:LWPeq2}
       \big( 2q_k |z| \partial_z \mr{r} \partial_z \Phi_1 - \mr{r}L\Phi_1 \big)\big|_{\{s=s_0\}}= O(\mr{r}).
    \end{equation}
    Calculate 
    \begin{align*}
        2q_k |z| \partial_z \mr{r}(z) \partial_z \Phi_1(s_0,z)  &= \partial_z \Phi_1(s_0,-1) + O(\mr{r}). \\
        \mr{r}(z)L(s_0,z)\Phi_1(s_0,z) &=  \frac12 \ell(\ell+1)\partial_z \Phi_1 (s_0,-1) + O(\mr{r})
    \end{align*}
    For (\ref{eq:LWPeq2}) to hold for $\ell \geq 2$, it must be the case that $\partial_z \Phi_1(s_0,-1) = 0$, implying $\Phi_1 = O(\mr{r}^2)$.
     
    \end{proof}

\section{Main Results}
\label{section:mainresults}

In this section we state precise versions of our main results. Compare with the informal statements in Theorems \ref{thm:introrough1}--\ref{thm:introrough2}.

\begin{theorem}[Spherically symmetric solutions]
    \label{theorem:mainthm}
    Fix parameters $\epsilon_0, k$ sufficiently small, and $\alpha \in (1,2)$. Let $(\mathcal{Q}, g, r, \phi)$ be an $(\epsilon_0, k)$-admissible background, and $\varphi_0(v) \in \mathcal{C}^{\alpha}_{(hor)}([-1,0])$ spherically symmetric, characteristic initial data for (\ref{eq:wavesim}). 

    \vspace{.5em}
    \noindent
    \textbf{Above threshold regularity:}
    Assume ${\alpha \in (p_k,2)}$. There exist constants $\varphi_{\infty}, C$ depending on $\|\varphi_0 \|_{\mathcal{C}^{\alpha}_{(hor)}([-1,0])} $, and a constant $\alpha' \in \big(p_k, \min(\frac32, \alpha)\big)$ such that the unique solution $\varphi(u,v)$ to (\ref{eq:wavesim}) satisfies the following pointwise bound in $\{u \geq -1\}$:
    \begin{equation}
        \label{eq:mainthmbound1}
           \max_{i+j \leq 1} \| \partial_u^i \partial_v^j (\varphi-\varphi_\infty)\|_{L^\infty(\Sigma_u)} \leq C |u|^{\alpha'q_k -1 - i - q_k j }.
        \end{equation}

    \vspace{.5em}
    \noindent
    \textbf{Threshold regularity:} Fix $\alpha = p_k, \ \delta \in (0,1)$, and assume $\varphi_0(v) \in \mathcal{C}^{p_k,\delta}_{(hor)}([-1,0])$. There exist constants $\varphi_{\infty}^{(i)}, C$ depending on $\|\varphi_0 \|_{\mathcal{C}^{p_k,\delta}_{(hor)}([-1,0])} $, and a constant $\delta' \in \big(0, \min(\frac{1}{2},\delta) \big)$ such that the unique solution $\varphi(u,v)$ to (\ref{eq:wavesim}) satisfies the following pointwise bound in $\{u \geq -1\}$:
    \begin{equation}
        \label{eq:mainthmbound2}
        \max_{i+j \leq 1} \| \partial_u^i \partial_v^j (\varphi - \varphi_\infty^{1} \phi_k(u,v) - \varphi_\infty^{2} )\|_{L^\infty(\Sigma_u)} \leq C |u|^{\delta'q_k - i - q_k j }.
    \end{equation}

    \vspace{.5em}
    \noindent
    \textbf{Below threshold regularity:} Assume $\alpha \in (1,p_k)$. For any $\epsilon > 0$, there exists a constant $C_\epsilon$ depending on $\|\varphi_0 \|_{\mathcal{C}^{\alpha}_{(hor)}([-1,0])} $ and $\epsilon$ such that the unique solution $\varphi(u,v)$ to (\ref{eq:wavesim}) satisfies the following pointwise bound in $\{u \geq -1\}$:
    \begin{equation}
        \label{eq:mainthmbound2.5}
        \max_{i+j \leq 1} \| \partial_u^i \partial_v^j \varphi\|_{L^\infty(\Sigma_u)} \leq C_\epsilon |u|^{-\epsilon -\alpha q_k - i - q_k j }.
    \end{equation}
    There moreover exists $\delta > 0$ and a choice of initial data $\wt{\varphi}_0(v)\in \mathcal{C}^{\alpha,\delta}_{(hor)}([-1,0])$, as well as a constant $C$, such the unique solution $\wt{\varphi}(u,v)$ to (\ref{eq:wavesim}) satisfies the lower bound, for any $0 \leq i+j \leq 1$:
    \begin{equation}
        \label{eq:mainthmbound2.7}
        \|\partial_u^i \partial_v^j \wt{\varphi}\|_{L^\infty(\Sigma_u)} \geq C |u|^{-\alpha q_k - i - q_k j }.
    \end{equation} 
    The statements in this case hold also for $\alpha = p_k$.

\end{theorem}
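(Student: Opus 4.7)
The strategy is to execute in order the three steps sketched in Section \ref{subsec:intro_proofoutline} and to interpret each of the three regularity regimes as a different conclusion one can draw from the same spectral machinery. Throughout, we reduce to the $\ell = 0$ equation for $\psi = r\varphi$ in similarity coordinates, namely
\begin{equation*}
    \partial_s\partial_z \psi - q_k|z|\partial_z^2 \psi + q_k \partial_z \psi + V(s,z)\psi = 0,
\end{equation*}
and we work on the extended interior $\wt{\mathcal{Q}}^{(in)}$ so that the forward problem from $\{s=0\}$ may be compared to a spacelike Cauchy problem in the hyperbolic coordinates $(t,x)$.

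\paragraph{Step 1: Backwards scattering to a spacelike slice} First I would extend $\varphi_0(v)$ from $\{u=-1\}$ and solve the wave equation backwards in $\{s \leq 0\}$ with prescribed free data on $\{z=0\}$, in order to produce Cauchy data $(f_0(x),f_1(x))$ on $\{t=0\}$. The regularity of $\varphi_0 \in \mathcal{C}^\alpha_{(hor)}$ should translate into the exponential decay bound $|f_0|, |f_1| \lesssim e^{-\alpha q_k x}$, which I would prove via multiplier estimates using the conformal Killing field $\partial_s$ (a suitable multiplier because the background is asymptotically $k$-self-similar). Since the calculation is in a fixed-regularity setting, the energy estimates and their commuted versions are routine; the only delicate point is that for threshold data I must use the refined space $\mathcal{C}^{p_k,\delta}_{(hor)}$ so that the scattered data lies in the subspace $\wt{\mathcal{D}}^{\alpha,\bullet,\delta,c_0}_\infty$ with the sharp tail $f_0 = c_0 e^{-x} + O(e^{-(1+\delta q_k)x})$ described in Remark \ref{rmk:introthresholdreg}.

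\paragraph{Step 2: Resolvent and resonance expansion} Transform to hyperbolic coordinates and study the spectral family $P_k(\sigma) = -\partial_x^2 + 4q_k V(x) e^{-2q_k x} - \sigma^2$ on $\mathbb{R}_+$. Using the expansion of Appendix \ref{section:app3} I would split $P_k(\sigma) = P_k^{(0)}(\sigma) + O(k^2 e^{-4q_k(1-\epsilon)x})$, construct outgoing/Dirichlet solutions $f_{(out),\sigma}, f_{(dir),\sigma}$ for the leading order operator explicitly (giving a meromorphic cutoff resolvent on all of $\mathbb{C}$), and then obtain the cutoff resolvent for $P_k$ on the strip $\mathbb{I}_{[-3/2,1/2]}$ by perturbation. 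The asymptotic $\mathcal{W}(\sigma)\sim \Gamma(\sigma+iq_k)^{-1} + O(k^2)$ then isolates a single simple zero in $\{\Im\sigma \geq -1-O(k^2)\}$, and to pin it down I would follow \cite{warn}: solutions of the mode equation that are smooth across $\{v=0\}$ are identified with poles lying in the region $\{\Im\sigma \geq -1\}$, and since $\varphi \equiv 1$ gives such a smooth solution at $\sigma=-i$, this must be the unique resonance. For data above threshold the resolvent extends meromorphically to $\{\Im \sigma \geq -1 -\delta\}$ for some $\delta \in (0, \alpha q_k -1)$ applied to the class of exponentially decaying spacelike data produced in Step 1; a contour shift then yields the bound \eqref{eq:mainthmbound1} for $\psi$ in regions $\{x \leq \mathrm{const.}\}$. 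At threshold regularity the extension applied to $\wt{\mathcal{D}}^{\alpha,\bullet,\delta,c_0}_\infty$ produces a double pole at $\sigma=-i$ whose residue is precisely a multiple of $\phi_k$, yielding the refined expansion \eqref{eq:mainthmbound2}. For the below-threshold upper bound \eqref{eq:mainthmbound2.5} one can still extend meromorphically up to $\{\Im \sigma > -\alpha q_k\}$ and pick up only the $\sigma=-i$ pole (giving a constant which will vanish at leading order) with error $e^{-\alpha q_k t + \epsilon t}$. The corresponding lower bound \eqref{eq:mainthmbound2.7} is obtained by constructing initial data supported entirely on a mode solution lying in $\mathcal{C}^{\alpha,\delta}_{(hor)}$, via an explicit analog of the mode construction from \cite{YakovDaf} adapted to this setting.

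\paragraph{Step 3: Multiplier estimates to propagate decay to the axis} The spectral output controls $\psi$ only in regions bounded away from $\{z=0\}$. To globalize, set $\psi_\delta = e^{(1+\delta)s} r(\varphi-\varphi_\infty)$ (and the analogous difference in the threshold case), which satisfies a wave equation with a residual blue-shift term $-(\delta+k^2)\partial_z \psi_\delta$. Commuting with $\partial_z$ turns this term into $(1-\delta-2k^2)\partial_z^2\psi_\delta$, which has a favorable sign for $k,\delta$ small. Multiplying by $|z|^\omega \partial_z^2 \psi_\delta$ for a suitable $\omega = \omega(\alpha) \geq 0$ and integrating by parts in $\{0\leq s \leq s_0\}$, one absorbs the potential error terms using the $L^p$-smallness of $V$ from Proposition \ref{lem:propertiesofV}, and closes an energy estimate at one derivative above the natural regularity. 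Combined with the weighted decay on $\{z \leq z_0 <0\}$ from Step 2 and integration along the characteristics of $\partial_s - q_k|z|\partial_z$ from the axis, this gives the pointwise bounds on $\Sigma_u$. The choice of $\omega$ must be tuned to the regularity $\alpha$ (so that the data lies in the appropriate weighted Sobolev space); this is the one place where the three regularity regimes require slightly different parameter choices but no new ideas.

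\paragraph{Where I expect difficulty} The main obstacle is Step 2, and within it the identification of the exact location of the unique resonance near $\sigma=-i$. The spectral perturbation argument only gives an $O(k^2)$ window around $\Im\sigma=-1$, and whether the zero of $\mathcal{W}(\sigma)$ sits in the stable or unstable half of this window is not decidable from the small-$k$ expansion of Appendix \ref{section:app3} alone; without the regularity/scattering correspondence of \cite{warn} the argument cannot distinguish between a decaying mode and a blue-shift unstable mode. A secondary subtlety is the construction itself of the meromorphic extension across $\{\Im \sigma = -1\}$, since the limited regularity of the background metric precludes a naive asymptotic expansion of $V_k(x)$ to higher order; one is forced to work with the explicit leading-order potential and control the remainder as a small exponentially suppressed perturbation. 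The threshold case compounds this difficulty because the meromorphic extension must now be performed on the restricted class $\wt{\mathcal{D}}^{\alpha,\bullet,\delta,c_0}_\infty$, where the $c_0 e^{-x}$ tail produces a pole of the extension itself (rather than just a resonance of $P_k$), doubling the order of the pole at $\sigma=-i$ and generating the $\phi_k$ contribution in \eqref{eq:mainthmbound2}.
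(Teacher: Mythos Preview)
Your overall architecture matches the paper's proof closely, and Steps~1--2 are essentially right. The substantive gap is in Step~3, specifically for regularities $\alpha$ close to (but above) the threshold $p_k$.

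When you multiply the commuted equation by $|z|^{2\omega}\partial_z^2\psi_\delta$ with $\omega = \tfrac{3}{2}-\alpha+\epsilon$, the favorable bulk coefficient becomes $q_k(1-2\omega) + 2\rho \approx q_k(2\alpha-2) - 2(\delta+k^2)$. For $\alpha$ within $O(k^2)$ of $p_k$ this is of size $O(k^2)$ without a definite sign, and the error terms $V\partial_z\psi_\rho$, $V'\psi_\rho$ cannot be absorbed by $L^p$-smallness of $V$ alone (that smallness is also $O(k^2)$; cf.\ the remark at the end of the proof outline in Section~\ref{subsec:intro_proofoutline}). The paper closes this case using two additional ingredients you do not mention:
\begin{itemize}
\item For the term $\int V|z|^{2\omega}\partial_z\psi_\rho\,\partial_z^2\psi_\rho$, the paper integrates by parts in $z$ rather than applying Cauchy--Schwarz, and then uses the \emph{repulsivity} estimate $\partial_z(|z|^\omega V) < -c_k$ near $\{z=0\}$ (Proposition~\ref{lem:propertiesofV}, eq.~(\ref{eq:potentialest_repulsivity})) to obtain a favorable sign on the only portion of the bulk supported near the horizon.
\item For the term $\int V'|z|^{2\omega}\psi_\rho\,\partial_z^2\psi_\rho$, the paper exploits the fact that the resonance expansion in Step~2 yields a pointwise bound on $r(\varphi-c_\infty)$ that is \emph{uniform up to} $\{z=0\}$: the constant $C_{x_0,\beta_*'}\lesssim e^{\beta_*' x_0}$ exactly cancels the $e^{-\beta_*' t}$ decay when written in similarity coordinates (since $e^{\beta_*' x_0}e^{-\beta_*' t} = e^{-\beta_*' s}$), producing the global bound (\ref{eq:completingprooftemp_est2}). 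This gives $\|\psi_\rho\|_{L^2_{s,z}} \leq C$, which is enough to handle the $V'\psi_\rho$ term by Cauchy--Schwarz without any $k$-smallness.
\end{itemize}
Without these two inputs your multiplier estimate does not close for $\alpha \in (p_k, p_k + O(k^2))$.

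A minor correction in the below-threshold case: for $\alpha < p_k$ the resolvent extends only to $\{\Im\sigma > -\alpha q_k\}$, and since $-\alpha q_k > -1$ the pole at $\sigma = -i$ lies \emph{below} this region and is never encountered; the contour shift produces the rate $e^{-(\alpha q_k - \epsilon)t}$ directly, with no residue contribution.
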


\begin{theorem}[Non-spherically symmetric solutions]
    \label{theorem:angularthm}
    Fix parameters $\epsilon_0, k$ sufficiently small, and $B$ sufficiently large independently of $k$. Let $(\mathcal{Q}, g, r, \phi)$ be an $(\epsilon_0, k)$-admissible background, and $\varphi_0(v,\omega) \in \mathcal{H}^{1, B k^2}_{(hor)}([-1,0] \times \mathbb{S}^2)$ characteristic initial data for (\ref{eq:wavesimfull}) which is supported on angular modes ${\ell\geq 1}$. Then there exist constants $C$, depending on $\|\varphi_0 \|_{\mathcal{H}^{1, B k^2}_{(hor)}([-1,0] \times \mathbb{S}^2)}$, and ${\epsilon >0}$, such that 
    \begin{equation}
        \label{eq:mainthmbound3}
        \max_{i+j+l \leq 1} \| \partial_u^i \partial_v^j \slashed{\nabla}^l  (\br{r}\varphi) \|_{L^\infty(^{(3)}\Sigma_u)} \leq C |u|^{\epsilon - i - q_k j -l }.
    \end{equation} 
\end{theorem}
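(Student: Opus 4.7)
The plan is to work mode-by-mode in similarity coordinates, closing multiplier estimates for $\psi_{m\ell} \doteq r \varphi_{m\ell}$ with constants uniform in $(m,\ell)$, and then to reassemble via a Sobolev embedding on $\mathbb{S}^2$. In contrast to the spherically symmetric case, no scattering-theoretic machinery is required: the angular potential $L_{k,\ell}(z)$ in equation (\ref{eq:wavesim}) is $O(1)$ rather than $O(k^2)$, so it dominates the blue-shift potential $V$ and provides the decisive good signs. The two structural facts from Proposition \ref{lem:propertiesofL_kl} that drive the argument are $L_{k,\ell} \geq 0$, making the zeroth-order contribution coercive in the first-order energy, and $\partial_z L_{k,\ell} \leq 0$ with the lower bound in (\ref{eq:L_kl_derivative_sign}), which provides a sign-definite bulk term after $\partial_z$ commutation.

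For $\epsilon > 0$ small to be chosen, I set $\psi_\epsilon \doteq e^{(1+\epsilon)s}\psi_{m\ell}$; the quantity $\psi_\epsilon$ satisfies a variant of (\ref{eq:wavesim}) with the transport coefficient $q_k$ replaced by $-(k^2+\epsilon)$, that is, a blue-shift term small in $k$ and $\epsilon$. The first multiplier is $\partial_s\psi_\epsilon$: integrated over $\mathcal{R}(s_0) \doteq \{0 \leq s \leq s_0, \ -1 \leq z \leq 0\}$, it produces fluxes of $(\partial_s \psi_\epsilon)^2$ along $\{s=s_0\}$ and of $|z|(\partial_z\psi_\epsilon)^2 + (V + L_\ell)\psi_\epsilon^2$ along $\{z=0\}$, both non-negative. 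The only bulk obstruction is the cross-term $(k^2+\epsilon)(\partial_z\psi_\epsilon)\,\partial_s \psi_\epsilon$, absorbed by Cauchy--Schwarz and smallness. A second multiplier $\chi(z)\partial_z\psi_\epsilon$, with $\chi$ an increasing cutoff supported near $\{z=0\}$, then generates bulk control on $(\partial_z \psi_\epsilon)^2$ and, after integration by parts, extracts a bulk term $-(\partial_z L_\ell)\psi_\epsilon^2 \geq 0$ consistent with Proposition \ref{lem:propertiesofL_kl}. Summing over angular modes gives a first-order energy estimate uniform in $\ell$.

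To upgrade to the $C^1$ pointwise bound (\ref{eq:mainthmbound3}), I would commute (\ref{eq:wavesim}) with $\partial_z$. As in the spherically symmetric identity (\ref{eq:proofout4}), the commutator transforms the blue-shift into the favorably signed term $(1 - 2k^2 - \epsilon)\partial_z^2 \psi_\epsilon$, positive for $k$ and $\epsilon$ small. Pairing the commuted equation against $|z|^\omega \partial_z^2 \psi_\epsilon$ with an $\omega > 0$ calibrated to the weight $\gamma = B k^2$ of the space $\mathcal{H}^{1,Bk^2}_{(hor)}$ in (\ref{defn:Chorizonspaces2.0}) yields a coercive second-order energy estimate. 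The contribution $(\partial_z L_\ell)\psi_\epsilon \partial_z^2 \psi_\epsilon$ is the most delicate commutator: it must be integrated by parts and bounded through the first-order energy already controlled, with constants that remain $\ell$-uniform thanks to the precise structure of $\partial_z L_\ell$. The four angular derivatives encoded in $\mathcal{H}^{1,Bk^2}_{(hor)}$ then suffice to Sobolev-embed the summed $L^2$ estimates into pointwise bounds, with $s = -\log |u|$ translating the $e^{-\epsilon s}$ gain into the $|u|^\epsilon$ factor in the statement.

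The main obstacle is closing the commuted estimate with a constant uniform in $\ell$. Because $L_{k,\ell}$ is not small in $k$, one cannot absorb its derivative-commutators by smallness as in the spherically symmetric argument; the estimate must rely essentially on the sign $\partial_z L_{k,\ell} \leq 0$ and on trading derivatives through integration by parts rather than through absorption. A secondary, quantitative issue is the choice of $B$ in the threshold weight $\gamma = B k^2$: the small-$L^p$ bound (\ref{eq:potentialest1}) on $\partial_z V$ must dominate the error terms produced by the $|z|^\omega$ weight in the top-order test function, and balancing these forces $B$ to be chosen large but independent of $k$, exactly as stated. Finally, the hypothesis of support on $\ell \geq 1$ is essential: without the angular potential's sign and size, the blue-shift in (\ref{eq:wavesim}) cannot be dominated by multiplier estimates alone, and one would be pushed back to the spectral framework of Theorem \ref{theorem:mainthm}.
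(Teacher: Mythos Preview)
Your high-level plan (mode-by-mode multiplier estimates, then angular Sobolev summation) matches the paper, but the specific scheme you describe does not close, and the paper's argument is structurally different at the key step.

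First, your first-order $\partial_s$-multiplier computation is miscounted. When you integrate $-q_k|z|\partial_z^2\psi_\epsilon\cdot\partial_s\psi_\epsilon$ by parts in $z$, the term $\partial_z(|z|\,\partial_s\psi_\epsilon)\partial_z\psi_\epsilon$ produces an additional bulk cross-term $-q_k\iint\partial_s\psi_\epsilon\,\partial_z\psi_\epsilon$. Combined with the $\rho\partial_z\psi_\epsilon$ term you identified, the total cross-term carries coefficient $\rho-q_k\approx -1$, not $-(k^2+\epsilon)$; it cannot be absorbed by smallness. More fundamentally, neither the $\partial_s$ nor the $\partial_z$ multiplier yields a closed first-order estimate for $\rho<0$: inspecting Proposition~\ref{prop:firstordermultiplierest}, the bulk on the right-hand side of (\ref{eq:energyest1}) has coefficient $1+C_1k^2-2\rho\approx 1$. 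The paper therefore does \emph{not} use any first-order estimate in the proof of Theorem~\ref{theorem:angularthm}.

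Second, and this is the missing idea, the paper does \emph{not} commute with a plain $\partial_z$. It commutes (\ref{3.4:eq1}) by the weighted operator $\partial_z\big(w\br{r}^2\,\cdot\,\big)$ with $w=(1-\mu)/(\br{\lambda}(-\nu))$, producing the variable $\psi_\rho^{(1)}=w\br{r}^2\partial_z\psi_\rho$ satisfying (\ref{eq:wavesim_alpha_comm1}). The point of this choice is that the angular potential now enters only through the zeroth-order coefficient $\mathcal{A}_2=\ell(\ell+1)/(w\br{r}^2)+\text{lower order}$, multiplying $\psi_\rho^{(1)}$ itself. Pairing against $|z|^{2\omega}\br{r}^{-4}\partial_z\psi_\rho^{(1)}$ and integrating by parts converts the entire $\ell$-dependent contribution into the signed bulk $\tfrac{1}{2}\iint\frac{\ell_r^2}{w\br{r}^2}\frac{|z|^{2\omega}}{\br{r}^4}\partial_z\log(w\br{r}^6)\,(\psi_\rho^{(1)})^2\geq 0$, with $\ell_r^2=\ell(\ell+1)-2$ (this vanishes for $\ell=1$, handling that case automatically). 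The resulting estimate (\ref{eq:energyest2_1}) is \emph{self-closing}: no first-order input is required, and the only parameter constraint is $q_k(1-2\omega)-C_1k^2+2\rho\geq 0$, which with $\omega=\tfrac12-Bk^2$ and $\rho=-(1+\delta)k^2$ reads $(2Bq_k-C_1-2(1+\delta))k^2\geq 0$ and fixes the size of $B$. Your plain-$\partial_z$ commutation would instead leave you with $L_\ell\,\partial_z\psi_\epsilon$ and $L_\ell'\psi_\epsilon$ paired against $\partial_z^2\psi_\epsilon$; after integration by parts these generate cross-terms of the type $L_\ell''\,\psi_\epsilon\,\partial_z\psi_\epsilon\sim\ell^2\br{r}^{-4}\psi_\epsilon\,\partial_z\psi_\epsilon$, which you cannot bound $\ell$-uniformly by a first-order energy you have not in fact closed.
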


\section{Multiplier estimates}
\label{sec:multipliers}

\subsection{Integral inequalities}

We start with a basic one-dimensional integral estimate:
\begin{lemma}
    \label{lemma:basic1dintegral}
    Let $f(t) \in C^1([0,1])$ be given, and $\omega,  \omega'  \in [0,\frac{1}{2})$ with $\omega \leq \omega'$. Define $a_{\omega,\omega'} \doteq \frac{a}{1-2(\omega'-\omega)}$. Then for all $a \in (0,a_{\omega,\omega'})$, the following estimates hold:
    \begin{equation}
        \int_{0}^{1}(1-t)^{2\omega} f(t)^2 dt \leq \sup_{t\in [0,1]}|(1-t)^{\omega} f(t)|^2 \leq \frac{1}{1-a_{\omega,\omega'}}f(0)^2 + \frac{1}{a(1-a_{\omega,\omega'})}\int_0^1 (1-t)^{2\omega'}  f'(t)^2 dt. \label{1d_est:Linf}
    \end{equation}
\end{lemma}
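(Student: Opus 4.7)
The first inequality is immediate: pointwise, $(1-t)^{2\omega}f(t)^2 = |(1-t)^\omega f(t)|^2 \leq \sup_{s\in[0,1]}|(1-s)^\omega f(s)|^2$, and integrating over $[0,1]$ (whose length is $1$) preserves the bound. The content of the lemma is in the second inequality, which has the structure of a weighted Sobolev trace estimate.

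My plan is to set $g(t) \doteq (1-t)^\omega f(t)$ and $M \doteq \sup_{[0,1]} |g|$, then apply the fundamental theorem of calculus to $g^2$:
\begin{equation*}
    g(t)^2 = f(0)^2 + 2\int_0^t g(s) g'(s)\, ds = f(0)^2 - 2\omega\int_0^t (1-s)^{2\omega-1} f(s)^2\, ds + 2\int_0^t (1-s)^{2\omega} f(s) f'(s)\, ds.
\end{equation*}
The middle term has a favorable (non-positive) sign because $\omega \geq 0$, so it may be dropped. This yields the pointwise bound
\begin{equation*}
    M^2 \leq f(0)^2 + 2\int_0^1 (1-s)^{2\omega} |f(s)||f'(s)|\, ds.
\end{equation*}

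Next, I would split the weight as $(1-s)^{2\omega} = (1-s)^\omega \cdot (1-s)^{\omega-\omega'} \cdot (1-s)^{\omega'}$ in order to pull out a factor of $M$ and apply Cauchy--Schwarz:
\begin{equation*}
    2\int_0^1 (1-s)^{2\omega}|f||f'|\, ds \;\leq\; 2M\Bigl(\int_0^1 (1-s)^{2(\omega-\omega')}\, ds\Bigr)^{1/2}\Bigl(\int_0^1 (1-s)^{2\omega'} f'(s)^2\, ds\Bigr)^{1/2}.
\end{equation*}
The hypothesis $\omega, \omega' \in [0,\tfrac12)$ with $\omega \leq \omega'$ guarantees $2(\omega - \omega') > -1$, so the weight integral converges and equals $\frac{1}{1-2(\omega'-\omega)}$.

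Finally, I would apply Young's inequality $2XY \leq a_{\omega,\omega'} X^2 + a_{\omega,\omega'}^{-1} Y^2$ with $X = M$ and $Y$ equal to the Cauchy--Schwarz right-hand side, absorb the $a_{\omega,\omega'} M^2$ term into the left, and invoke the defining identity $a_{\omega,\omega'}(1-2(\omega'-\omega)) = a$ to simplify the constant on the gradient term into the claimed form $\frac{1}{a(1-a_{\omega,\omega'})}$. The constraint that $a_{\omega,\omega'} < 1$ (needed for absorption) is implicit in the hypothesis. There is no real obstacle in this argument; the only subtle bookkeeping is choosing Young's parameter so that the resulting constants match precisely, and checking the convergence of the weight integral in the stated range.
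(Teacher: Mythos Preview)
Your proposal is correct and follows essentially the same route as the paper: apply the fundamental theorem of calculus to $(1-t)^{2\omega}f(t)^2$, drop the favorable-sign term coming from differentiating the weight, split the remaining cross term using the weight decomposition $(1-t)^{2\omega}=(1-t)^{\omega}(1-t)^{\omega-\omega'}(1-t)^{\omega'}$, and absorb. The only cosmetic difference is that the paper applies Young's inequality pointwise with parameter $a$ and then integrates the leftover weight $(1-t)^{2(\omega-\omega')}$, whereas you first pull out the sup, use Cauchy--Schwarz, and then apply Young with parameter $a_{\omega,\omega'}$; both orderings produce exactly the same constants.
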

\begin{proof}
        The first inequality is immediate. For the second, assume $\omega > 0$ and write 
        \begin{align}
            (1-t)^{2\omega} f(t)^2 &= f(0)^2 + \int_0^t \frac{d}{dt}\big( (1-t)^{2\omega}f^2 \big) (t)dt \nonumber \\
            & =f(0)^2  +2\int_0^t (1-t)^{2\omega}f(t)f'(t) dt - 2\omega \int_0^t (1-t)^{2\omega-1}f(t)^2 dt  \nonumber \\
            &\leq f(0)^2 + a \int_0^1 (1-t)^{2(\omega-\omega')} (1-t)^{2\omega} f(t)^2 dt + \frac{1}{a} \int_0^1 (1-t)^{2\omega'} f'(t)^2 dt \nonumber \\
            &\leq f(0)^2 + a_{\omega,\omega'} \sup_{t \in [0,1]}|(1-t)^{\omega} f(t)|^2 + \frac{1}{a} \int_0^1 (1-t)^{2\omega'} f'(t)^2 dt
        \end{align}
        which holds for any $a >0$. Taking the supremum over $t$ and absorbing terms to the left hand side yields the desired result. 
        
        For $\omega = 0$ the computation is similar.
\end{proof}
We next turn to a Hardy-type inequality that will used for absorbing low order terms with singular $\mr{r}$ weights. 
\begin{lemma}
    Let $f(z) \in C^1([0,1])$ be given, and $\nu \geq 2$ a parameter. Assume 
    \begin{equation}
        \label{hardy1condition}
        \lim_{z\rightarrow -1}\mr{r}(z)^{-\frac{1}{2}(\nu-1)}f(z)=0.
    \end{equation}
    For $k$ sufficiently small, there exists a constant $C_\nu > 0$ such that 
    \begin{equation}
        \label{hardy1}
        \int_{-1}^{-\frac12} \frac{f(z)^2}{\mr{r}^\nu} dz \leq C_\nu \int_{-1}^{-\frac12} \frac{(f'(z))^2}{\mr{r}^{\nu-2}} dz. 
    \end{equation}
    The same identity holds with $\br{r}$ in place of $\mr{r}$.
\end{lemma}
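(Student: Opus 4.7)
The plan is to reduce (\ref{hardy1}) to the classical one-dimensional Hardy inequality on an interval $[0,b]$ via the change of variables $w = \mr{r}(z)$, which unfolds the linear vanishing of $\mr{r}$ at the axis $z = -1$.

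The first step is to justify that this substitution is a diffeomorphism. By Proposition~\ref{lemma:backgroundprelims1} together with the small-$k$ asymptotics of Appendix~\ref{section:app3}, $\mr{r}$ is $C^2$ on $[-1,-\tfrac12]$ with $\mr{r}(-1) = 0$, and $\partial_z \mr{r}$ is bounded above and below by positive constants uniformly in $k$ sufficiently small (the flat-space limit $k \to 0$ gives $\mr{r}(z) = (1-|z|)/2$ and $\partial_z \mr{r}(-1) \to \tfrac12$). Hence $w = \mr{r}(z)$ is a $C^2$-diffeomorphism from $[-1,-\tfrac12]$ onto $[0,b]$ with $b \doteq \mr{r}(-\tfrac12)$. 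Setting $F(w) \doteq f(\mr{r}^{-1}(w))$ and using $f'(z) = F'(w)\, \partial_z \mr{r}$, $dz = (\partial_z \mr{r})^{-1}\,dw$, each side of (\ref{hardy1}) is comparable, up to multiplicative constants depending only on the uniform bounds on $\partial_z \mr{r}$, to the corresponding side of
\begin{equation*}
    \int_0^b \frac{F(w)^2}{w^\nu}\, dw \leq C_\nu' \int_0^b \frac{(F'(w))^2}{w^{\nu-2}}\, dw,
\end{equation*}
and (\ref{hardy1condition}) transforms into $\lim_{w \to 0^+} w^{-(\nu-1)/2} F(w) = 0$.

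The second step is to invoke the standard 1D Hardy inequality on $[0,b]$. I would prove this by writing $w^{-\nu} = -(\nu-1)^{-1}\partial_w(w^{-(\nu-1)})$ and integrating by parts: the boundary term at $w=0$ vanishes under the transformed boundary condition (this is precisely why the hypothesis $\nu \geq 2 > 1$ is required), the boundary term at $w = b$ carries a favorable sign and is dropped, and Cauchy--Schwarz followed by absorption of $\int_0^b F^2/w^\nu\,dw$ into the left-hand side yields the sharp constant $C_\nu' = 4/(\nu-1)^2$.

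The extension to $\br{r}$ proceeds by the same reduction: from the admissibility bound (\ref{eq:admissiblebounds2}) and the smallness of $\epsilon_0 k^2$, both $\br{r}$ and $\partial_z \br{r}$ are comparable to $\mr{r}$ and $\partial_z \mr{r}$ respectively on $[-1,-\tfrac12]$, so the change-of-variables argument applies with constants uniform in the admissibility parameters. The only (very mild) technical point is to pin down the $k$-uniform positive lower bound on $\partial_z \mr{r}$ near the axis, which is supplied directly by the asymptotic expansions of Appendix~\ref{section:app3}; no scale competition enters, so there is no genuine obstacle.
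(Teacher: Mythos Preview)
Your proof is correct. Both you and the paper use the standard integration-by-parts derivation of the weighted Hardy inequality, but the executions differ in a structurally meaningful way. You first change variables $w=\mr{r}(z)$ to reduce to the \emph{constant-coefficient} Hardy inequality on $[0,b]$, so that the only role of the background geometry is to supply uniform upper and lower bounds on $\partial_z\mr{r}$; the smallness of $k$ is used only to guarantee those bounds. The paper instead integrates by parts directly in $z$ using $\mr{r}^{-\nu}=-\tfrac{1}{\nu-1}\,\partial_z(\mr{r}^{-(\nu-1)})\,(\partial_z\mr{r})^{-1}$, which produces an extra lower-order term $\int \mr{r}^{-(\nu-1)}\partial_z\big((\partial_z\mr{r})^{-1}\big)f^2\,dz$; this is controlled by $C_\nu k^2\int \mr{r}^{-\nu}f^2\,dz$ (since $\partial_z^2\mr{r}=O(k^2)$ near the axis) and absorbed on the left for $k$ small. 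Your route is a touch cleaner and makes the $k$-dependence less delicate; the paper's route avoids the change of variables and keeps everything in the physical coordinate. Both yield the same conclusion with constants uniform in small $k$, and your treatment of the $\br{r}$ case via the admissibility bounds is equally valid.
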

\begin{proof}
Begin the proof of (\ref{hardy1}) by writing (we allow $C_\nu$ to change from line to line)
\begin{align*}
    \int_{-1}^{-\frac12} \frac{f(z)^2}{\mr{r}^\nu} dz &= -\frac{1}{\nu-1}\int_{-1}^{-\frac12} \partial_z \bigg(\frac{1}{\mr{r}^{\nu-1}} \bigg) \frac{1}{\partial_z \mr{r}} f(z)^2 dz \\[\jot]
    &= -\frac{1}{\nu-1}\frac{1}{\big(\mr{r}^{\nu-1} \partial_z\mr{r}\big)(-\frac12)}f\big({-\frac12}\big)^2 + \frac{1}{\nu-1}\int_{-1}^{-\frac12}\frac{1}{\mr{r}^{\nu-1}} \partial_z \bigg(\frac{1}{\partial_z \mr{r}} f(z)^2  \bigg)dz \\[\jot]
    &\leq C_\nu k^2 \int_{-1}^{-\frac12} \frac{f(z)^2}{\mr{r}^{\nu-1}}dz + C_\nu \int_{-1}^{-\frac12} \frac{1}{\mr{r}^{\nu-1}}f(z)f'(z)dz \\[\jot] 
    &\leq C_\nu k^2 \int_{-1}^{-\frac12} \frac{f(z)^2}{\mr{r}^{\nu-1}}dz + \delta C_\nu \int_{-1}^{-\frac12} \frac{f(z)^2}{\mr{r}^{\nu}}dz + \delta^{-1} C_\nu \int_{-1}^{-\frac12} \frac{f'(z)^2}{\mr{r}^{\nu-2}}dz \\[\jot]
    &\leq C_\nu (\delta + k^2)\int_{-1}^{-\frac12} \frac{f(z)^2}{\mr{r}^{\nu}}dz + \delta^{-1} C_\nu \int_{-1}^{-\frac12} \frac{f'(z)^2}{\mr{r}^{\nu-2}}dz.
\end{align*}
We have used the condition (\ref{hardy1condition}) in order to drop boundary terms at $z = -1$, as well as uniform (in $k$) bounds for derivatives of $\mr{r}$ in the interval $z \in [-1,-\frac12]$. Choosing $\delta, k$ sufficiently small and absorbing the first integral on the right gives the stated estimate.
\end{proof}

The next estimate is key to deriving non-degenerate $C^k$ bounds on $\varphi_{m\ell}$ near the axis, given estimates on $r \varphi_{m\ell}$. Such estimates necessarily lose a derivative; this loss is quantified by \textit{averaging estimates}. A general framework for arbitrary order averaging estimates is given in \cite{lukoh1,lukohyang1}. In the context of $k$-self-similar spacetimes, and for the estimate below, we refer to \cite{singh}.

\begin{lemma}
Let $f(s,z): \mathcal{Q} \rightarrow \mathbb{R}$ be a given $C^2_{s,z}$ function. Uniformly in $s$, the following holds: 
\begin{equation}
    \label{eq:averagingest}
    \|\partial_z f(s,z)\|_{L^\infty([-1,-\frac{1}{2}])} \lesssim_k e^{-s}\|\partial_z (r_k f)(s,z)\|_{L^\infty([-1,-\frac{1}{2}])} + e^{-s}\|\partial_z^2 (r_k f)(s,z)\|_{L^\infty([-1,-\frac{1}{2}])}.
\end{equation} 
The same inequality holds with $r$ in place of $r_k$.
\end{lemma}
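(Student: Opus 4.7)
The plan is to exploit that both $\psi := r_k f$ and $r_k$ vanish to first order at the axis $z=-1$, so that a common $(z+1)$ factor can be extracted and $f = \psi/r_k$ can be rewritten as a ratio of regular functions. This is a quantitative L'H\^opital-type argument, and the loss of one derivative on the right-hand side is the inevitable price of the factorization.

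First, using the fundamental theorem of calculus, I would write
\begin{equation*}
\psi(s,z) = (z+1)\tilde\psi(s,z), \qquad \tilde\psi(s,z) := \int_0^1 \partial_z\psi(s,-1+t(z+1))\,dt,
\end{equation*}
and likewise $r_k(s,z) = (z+1)\tilde r_k(s,z)$; note $\psi(s,-1)=0$ since $r_k(s,-1)=0$ and $f$ is bounded. Differentiating under the integral yields immediately
\begin{equation*}
\|\tilde\psi\|_{L^\infty([-1,-\tfrac12])} \leq \|\partial_z\psi\|_{L^\infty([-1,-\tfrac12])}, \qquad \|\partial_z\tilde\psi\|_{L^\infty([-1,-\tfrac12])} \leq \|\partial_z^2\psi\|_{L^\infty([-1,-\tfrac12])},
\end{equation*}
which is precisely where the derivative loss enters.

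Second, by self-similarity $r_k(s,z) = e^{-s}\mr r(\hz(z))$, so $\tilde r_k(s,z) = e^{-s}\bar R(z)$ is a function of $z$ alone (times $e^{-s}$), and by Proposition \ref{lemma:backgroundprelims1} the area radius vanishes linearly at the axis with $\mr r'(-1)>0$. Hence $\bar R$ is bounded above and below by positive $k$-dependent constants on $[-1,-\tfrac12]$. Writing $f = \tilde\psi/\tilde r_k$ and computing
\begin{equation*}
\partial_z f = \tilde r_k^{-2}\bigl(\partial_z\tilde\psi\cdot\tilde r_k - \tilde\psi\cdot\partial_z\tilde r_k\bigr)
\end{equation*}
together with the $L^\infty$ bounds above yields the advertised inequality, with the stated exponential factor coming directly out of $\tilde r_k^{-1}$. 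For the version with $r$ in place of $r_k$, I would write $r = r_k + r_p$ and use the axis normalization in Definition \ref{dfn:admissiblespacetimesDEFN} together with the ingoing bound (\ref{eq:admissiblebounds3}) to conclude that $r_p$ inherits linear vanishing at the axis and that $\partial_z r_p$ is subleading by a factor of $\epsilon_0 k^2$; the factorization then passes through verbatim, with the same $\bar R$ providing a lower bound absorbing the small perturbation.

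The only non-automatic step is verifying that $\bar R(-1)\neq 0$, i.e.~that $\mr r$ vanishes \emph{exactly} linearly at the axis. This is essentially the regular-center condition and should follow from the smoothness of $\mr r$ down to $z=-1$ in Proposition \ref{lemma:backgroundprelims1} together with the positivity of $-\mr\nu$ near the axis, without requiring further analysis of the self-similar system (\ref{ss:SSESF1})--(\ref{ss:SSESF7}).
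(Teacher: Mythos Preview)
Your approach is correct and self-contained; the paper does not actually prove this lemma, instead deferring to \cite{lukoh1,lukohyang1} for the general averaging framework and to \cite{singh} for this specific estimate. Your factorization $f=\tilde\psi/\tilde r_k$ via the fundamental theorem of calculus is exactly the standard averaging-estimate mechanism, and the derivative loss is accounted for correctly.

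Two remarks. First, the exponential factor in the stated inequality appears to be a typo: it should be $e^{+s}$, not $e^{-s}$. Since $r_k(s,z)=e^{-s}\mr r(z)$ one has $\tilde r_k^{-1}\sim e^{+s}$, and this is what your computation actually produces. (Sanity check: for $f$ independent of $s$ the left side is $O(1)$ while the right side as written is $O(e^{-2s})$.) You wrote ``the stated exponential factor coming directly out of $\tilde r_k^{-1}$'' without flagging the discrepancy; make the sign explicit.

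Second, a minor citation issue: the normalization in Definition~\ref{dfn:admissiblespacetimesDEFN} is along $\{v=0\}$, not along the axis. The vanishing of $r_p$ at $z=-1$ follows instead from (\ref{eq:admissiblebounds2}), since $|r_p|\lesssim \epsilon_0 k^2 r_k|u|^2$ and $r_k$ vanishes at the axis; this already gives the linear vanishing you need. With that correction your perturbative argument for the $r$-version goes through as you describe.
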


\subsection{First order estimates}
We now proceed to derive multiplier estimates for the wave equation (\ref{eq:wavesim}). Recall from (\ref{def:psivar}) the definition of the $r$ weighted quantity $\psi_{m \ell}(u,v)$, a function on the quotient spacetime. To simplify notation, we assume a $(m,\ell)$-mode is fixed and write $\psi = \psi_{m \ell}$.

To study the behavior of solutions to (\ref{eq:wavesim}) as $s \rightarrow \infty$, we consider $\psi_\rho \doteq e^{(q_k-\rho)s}\psi$ for $\rho \in \mathbb{R}$, which solves 
\begin{equation}
    \label{eq:wavesim_alpha}
    \partial_s \partial_z \psi_\rho - q_k |z| \partial_z^2 \psi_\rho + \rho \partial_z \psi_\rho + (V_k(z) + L_{k,\ell}(z))\psi_\rho = e^{(q_k-\rho)s}\mathcal{E}_{p,\ell}(s,z). 
\end{equation}
As suggested by the first order term with coefficient $\rho$, the availability of multiplier estimates will depend heavily on the value of $\rho$. For $\rho < 0$, a bound $|\psi_{\rho}| \lesssim 1$ asserts an exponential \textit{improved} decay rate with respect to the blue-shift rate ${\psi \sim e^{-q_k s}}$. Setting $\rho = 0$ corresponds to the blue-shift rate, and ${\rho > 0}$ to slower decay. 

In this section we prove a multiplier estimate at the level of $\partial \psi_\rho$. For $s_0, s_1 \in \mathbb{R}$ define  
\begin{align*}
    \mathcal{R}(s_0,s_1) &\doteq \mathcal{Q} \cap \{s_0 \leq s \leq s_1\}, \\
    \Gamma_{s_0, s_1} &\doteq \Gamma \cap \{s_0 \leq s \leq s_1\}, \\
    H_{s_0,s_1} &\doteq \{z=0\} \cap \{s_0 \leq s \leq s_1\}.
\end{align*}
We denote by $\mathcal{R}(s_1)$ the set $\mathcal{R}(0,s_1)$ provided $s_1 > 0$. Similarly we define $\Gamma_{s_1}, H_{s_1}$. 

\begin{prop}
    \label{prop:firstordermultiplierest}
Fix an $(\epsilon_0, k)$-admissible background spacetime, and a parameter $\rho \in [-1,1]$. There exists $k$ sufficiently small (independently of $\ell$) and constants $C_0$, $C_1 = C_1(\|V_k \|_{W^{1,2}_z([-1,0])})$, such that the following estimate holds in $\mathcal{R}(s_0)$ for sufficiently regular solutions to (\ref{eq:wavesim_alpha}):
\begin{align}
    \|\partial_z \psi_\rho\|^2_{L^2_z(\{s=s_0\})} +  \|\partial_z \psi_\rho \|^2_{L^2_s(\Gamma_{s_0})} &+\|(-L_{k,\ell}')^{\frac{1}{2}}\psi_\rho \|^2_{L^2_{s,z}(\mathcal{R}(s_0))}    + (k^2+\ell^2)\|\psi_\rho\|^2_{L^2_s(H_{s_0})} \nonumber \\[2\jot]
    &\leq  C_0 \|\partial_z \psi_\rho\|^2_{L^2_z(\{s=0\})} + (1 + C_1 k^2 - 2\rho)\|\partial_z \psi_\rho \|^2_{L^2_{s,z}(\mathcal{R}(s_0))}. \label{eq:energyest1}
\end{align}
\end{prop}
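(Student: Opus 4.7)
The plan is to multiply (\ref{eq:wavesim_alpha}) by the multiplier $\partial_z\psi_\rho$ and integrate over $\mathcal{R}(s_0) = \{0\leq s\leq s_0,\ -1\leq z\leq 0\}$, integrating by parts so that each term on the left-hand side of (\ref{eq:wavesim_alpha}) becomes a combination of boundary contributions on $\{s=0,s_0\}$, $\Gamma_{s_0}$, $H_{s_0}$ and a bulk contribution on $\mathcal{R}(s_0)$. The transport term yields the energy flux $\tfrac{1}{2}\bigl(\|\partial_z\psi_\rho(s_0)\|_{L^2_z}^2 - \|\partial_z\psi_\rho(0)\|_{L^2_z}^2\bigr)$. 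The dispersive term $-q_k|z|\partial_z^2\psi_\rho$, after one integration by parts in $z$ and using $\partial_z|z|=-1$, yields the positive axis flux $\tfrac{q_k}{2}\|\partial_z\psi_\rho\|_{L^2_s(\Gamma_{s_0})}^2$ together with a bulk defect $-\tfrac{q_k}{2}\|\partial_z\psi_\rho\|_{L^2_{s,z}(\mathcal{R}(s_0))}^2$; the $|z|$ weight kills the boundary contribution at $\{z=0\}$. The first order term gives $\rho\|\partial_z\psi_\rho\|_{L^2_{s,z}(\mathcal{R}(s_0))}^2$. Rewriting the potential term as $(V_k+L_{k,\ell})\psi_\rho\partial_z\psi_\rho = \tfrac{1}{2}\partial_z\bigl((V_k+L_{k,\ell})\psi_\rho^2\bigr) - \tfrac{1}{2}\partial_z(V_k+L_{k,\ell})\psi_\rho^2$ and integrating by parts produces the horizon flux $\tfrac{1}{2}(V_k(0)+L_{k,\ell}(0))\|\psi_\rho\|_{L^2_s(H_{s_0})}^2$ along with the bulk $-\tfrac{1}{2}\int\!\!\int \partial_z(V_k+L_{k,\ell})\,\psi_\rho^2\,dz\,ds$; the corresponding axis contribution vanishes because $\psi_\rho = e^{(q_k-\rho)s} r\varphi$ vanishes identically along $\Gamma\subset\{z=-1\}$.

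Three of the four left-hand side quantities of (\ref{eq:energyest1}) then follow directly from this identity once signs are collected. The axis flux is already present, with $q_k\geq \tfrac{1}{2}$ for $k$ small. The horizon flux gives $\|\psi_\rho\|_{L^2_s(H_{s_0})}^2$ with coefficient $V_k(0)+L_{k,\ell}(0)\gtrsim k^2+\ell^2$, using $V_k(0)=\gamma_k k^2\gtrsim k^2$ from Proposition \ref{lem:propertiesofV} and $L_{k,\ell}(0)\gtrsim \ell(\ell+1)$ from the definition (\ref{eq:defn_Lkell}) together with Lemma \ref{lemma:backgroundprelims1}. Most importantly, the angular part of the bulk potential term has the \emph{good sign} $-\tfrac{1}{2}\partial_z L_{k,\ell}\geq 0$ by the pointwise bound (\ref{eq:L_kl_derivative_sign}) of Proposition \ref{lem:propertiesofL_kl}, producing the $\|(-L_{k,\ell}')^{1/2}\psi_\rho\|^2_{L^2_{s,z}(\mathcal{R}(s_0))}$ control on the left-hand side.

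After multiplying the identity through by two, the coefficient of $\|\partial_z\psi_\rho\|^2_{L^2_{s,z}(\mathcal{R}(s_0))}$ on the right is $q_k-2\rho = 1-k^2-2\rho$. Two error terms remain to be absorbed into the $+C_1 k^2$ allowance. For the $V_k$-bulk contribution $-\int\!\!\int \partial_z V_k\,\psi_\rho^2\,dz\,ds$, the vanishing $\psi_\rho|_{z=-1}=0$ allows Lemma \ref{lemma:basic1dintegral} (applied with $\omega=\omega'=0$ after reparameterizing so that the axis endpoint plays the role of $t=0$) to yield $\|\psi_\rho(s,\cdot)\|^2_{L^\infty_z}\lesssim \|\partial_z\psi_\rho(s,\cdot)\|^2_{L^2_z}$; combined with $\|\partial_z V_k\|_{L^1_z}\lesssim \|V_k\|_{W^{1,2}_z}\lesssim k^2$ from Proposition \ref{lem:propertiesofV} this gives an error $\lesssim k^2\|\partial_z\psi_\rho\|^2_{L^2_{s,z}}$, contributing the $+C_1 k^2$ to the final coefficient. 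The forcing integral $\int\!\!\int e^{(q_k-\rho)s}\mathcal{E}_{p,\ell}\cdot \partial_z\psi_\rho$ is handled analogously after rewriting $e^{(q_k-\rho)s}\mathcal{E}_{p,\ell} = -(V_{k,p}+L_{k,\ell,p})\psi_\rho$ via (\ref{eq:defn_Vkp}), integrating by parts in $z$, and invoking the admissibility bounds (\ref{eq:admissiblebounds1})--(\ref{eq:admissiblebounds6}); the resulting error is $O(\epsilon_0)$ and is absorbed by taking $\epsilon_0$ small, possibly at the cost of enlarging $C_1$.

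The main obstacle I anticipate is achieving the uniformity in $\ell$ of the smallness condition on $k$. The angular potential $L_{k,\ell}$ is of order $\ell(\ell+1)$ rather than $O(k^2)$, so a naive absolute value bound on $\partial_z L_{k,\ell}$ would contribute an $O(\ell^2)$ term to the coefficient on the bulk $\|\partial_z\psi_\rho\|^2_{L^2_{s,z}}$ and destroy any estimate uniform in $\ell$. It is precisely the sign statement $\partial_z L_{k,\ell}\leq 0$ from Proposition \ref{lem:propertiesofL_kl} --- itself a reflection of the monotonicity of $|\hat z|^{k^2}\mr\Omega^2$ on the background --- that allows the angular bulk term to be moved to the left with a good sign, making the smallness threshold in $k$ independent of $\ell$.
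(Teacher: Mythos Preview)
Your proposal is correct and follows essentially the same approach as the paper's proof: multiply by $\partial_z\psi_\rho$, integrate by parts, exploit the sign $\partial_z L_{k,\ell}\leq 0$ to place the angular bulk on the left, and control the $V_k'$ and $\mathcal{E}_{p,\ell}$ errors via the Poincar\'e-type bound from Lemma~\ref{lemma:basic1dintegral} together with the $O(k^2)$ and $O(\epsilon_0 k^2)$ smallness respectively. The only point you leave implicit is that the $\ell$-dependent errors coming from the $L_{k,\ell,p}$ piece of $\mathcal{E}_{p,\ell}$---after integration by parts these produce a horizon term $O(\epsilon_0 k^2)\ell(\ell+1)\|\psi_\rho\|^2_{L^2_s(H_{s_0})}$ and a bulk term $O(\epsilon_0 k^2)\iint \ell(\ell+1)\mr r^{-3}\psi_\rho^2$---are absorbed not into the $C_1 k^2$ allowance but into the $\ell$-dependent good terms you have already placed on the left (the horizon flux and $\|(-L_{k,\ell}')^{1/2}\psi_\rho\|^2$, the latter via the lower bound in (\ref{eq:L_kl_derivative_sign})); this is what makes the absorption uniform in $\ell$ for $\epsilon_0$ small.
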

\begin{proof}
Multiplying (\ref{eq:wavesim_alpha}) by $\partial_z \psi_\rho$ and integrating by parts in $\mathcal{R}(s_0)$ yields 

\begin{align*}
    \frac{1}{2}\iint_{\mathcal{R}(s_0)}&\partial_s(\partial_z\psi_\rho)^2dzds - \frac{1}{2}q_k \iint_{\mathcal{R}(s_0)}|z|\partial_z(\partial_z\psi_\rho)^2dzds \\[\jot]
    &+ \rho \iint_{\mathcal{R}(s_0)}(\partial_z\psi_\rho)^2dzds + \frac{1}{2}\iint_{\mathcal{R}(s_0)} (V_k + L_{k,\ell}) \partial_z \psi_\rho^2 dzds = \iint_{\mathcal{R}(s_0)} e^{(q_k-\rho)s}\partial_z \psi_\rho \mathcal{E}_{p,\ell}(s,z) dzds
\end{align*}
\begin{align}
    \implies \frac{1}{2}\int_{\{s=s_0\}}(\partial_z\psi_\rho)^2 dz &+\frac{1}{2}q_k\int_{\Gamma_{s_0}}(\partial_z \psi_\rho)^2 ds + \frac{1}{2}\int_{H_{s_0}}(V_k + L_{k,\ell}) \psi_\rho^2 ds \nonumber \nonumber\\[\jot]
    &= \frac{1}{2}\int_{\{s=0\}}(\partial_z\psi_\rho)^2 dz + (\frac{1}{2}q_k-\rho)\iint_{\mathcal{R}(s_0)} (\partial_z \psi_\rho)^2 dzds \nonumber \\[\jot]
    &+ \frac{1}{2}\iint_{\mathcal{R}(s_0)} (V_k' + L_{k,\ell}') \psi_\rho^2 dzds + \iint_{\mathcal{R}(s_0)} e^{(q_k-\rho)s}\partial_z \psi_\rho \mathcal{E}_{p,\ell}(s,z) dz ds. \label{eq:zerothorderest_temp1}
\end{align}
As $V_k, L_{k,\ell} \geq 0$, the boundary term along $H_{s_0}$ is non-negative. To complete the estimate it remains to understand the bulk terms appearing on the right hand side. The term proportional to $L_{k,\ell}'$ has a good sign by (\ref{eq:L_kl_derivative_sign}). To handle the term proportional to $V_k'$, apply (\ref{eq:potentialest1}), (\ref{1d_est:Linf}), and the Dirichlet boundary condition for $\psi_\rho$ to compute  
\begin{align*}
    \frac{1}{2}\iint_{\mathcal{R}(s_0)} V_k' \psi_\rho^2 dzds &\leq \frac{1}{2}\int_{0}^{s_0}\big(\sup_{z'\in[-1,0]}|\psi_\rho|^2(z',s)\big) \int_{-1}^{0}V_k'(z) dz ds\\[\jot]
    &\lesssim C_1 k^2 \int_{0}^{s_0}\big(\sup_{z'\in[-1,0]}|\psi_\rho|^2(z',s)\big) ds \\[\jot]
    &\lesssim C_1k^2 \iint_{\mathcal{R}(s_0)} (\partial_z \psi_{\rho})^2(z,s)dzds.
\end{align*}
Finally, we consider the term proportional to $\mathcal{E}_{p,\ell},$ containing the perturbation from exact $k$-self-similarity. These terms are small in terms of $k$, and are handled by integration by parts:
\begin{align}
    \iint_{\mathcal{R}(s_0)} e^{(q_k-\rho)s}\partial_z \psi_\rho \mathcal{E}_{p,\ell}(s,z) dz ds &= -\frac{1}{2} \iint_{\mathcal{R}(s_0)} \frac{e^{-q_k s}}{\mr{r}}\Big(\bigg(\frac{\mu\lambda(-\nu)}{(1-\mu)r} \bigg)_p + \bigg(\frac{\lambda(-\nu)}{(1-\mu)r} \bigg)_p\ell(\ell+1)\Big)\partial_z \psi_\rho^2 \nonumber\\[\jot]
    &= O(\epsilon_0 k^2) \int_{H_{s_0}}(1 + \ell(\ell+1))\psi_\rho^2 + O(\epsilon_0 k^2) \iint_{\mathcal{R}(s_0)} \bigg(1 + \frac{\ell(\ell+1)}{\mr{r}^3}\bigg)\psi_\rho^2,\nonumber \\[\jot]
    &\leq O(\epsilon_0 k^2) \int_{H_{s_0}}(1 + \ell(\ell+1))\psi_\rho^2 + O(\epsilon_0 k^2) \iint_{\mathcal{R}(s_0)} (\partial_z \psi_{\rho})^2 \\[\jot]
    &\hspace{2em}+ O(\epsilon_0 k^2) \iint_{\mathcal{R}(s_0)} \frac{\ell(\ell+1)}{\mr{r}^3}\psi_\rho^2,
\end{align} 
where we have used the regularity of the background spacetime to ensure that $z$-derivatives of double-null quantities are bounded. The remaining terms carrying $\ell$-dependent constants can be absorbed for $\epsilon_0$ sufficiently small, and the bulk term proportional to $(\partial_z \psi_\rho)^2$ contributes to the right hand side of (\ref{eq:energyest1}).

Combining the analyses of bulk terms with the integrated estimate (\ref{eq:zerothorderest_temp1}), and choosing $C_1$ sufficiently large, concludes the proof. 
\end{proof}

\subsection{Second order estimates: $\ell = 0$}
In order to close the multiplier estimate, we will have to absorb the unfavorable bulk term appearing in (\ref{eq:energyest1}). The necessary structure to do so emerges after commuting (\ref{eq:wavesim_alpha}) by $\partial_z$, leading to estimates at the level of two derivatives of the solution. In this section we pursue such an estimate for the spherical component of the solution, and drop all terms proportional to $\ell$. A slight complication arises due to the limited regularity assumed on the background solution and the initial data -- recall $\partial_z^2 \psi$ is \textit{not} guaranteed to be bounded pointwise. We thus incorporate singular $|z|$ weights into the analysis.

Commuting (\ref{eq:wavesim_alpha}) by $\partial_z$ and setting $\ell =0$ yields
\begin{equation}
    \label{eq:wavesim_alpha_comm0}
    \partial_s \partial_z^2 \psi_\rho - q_k |z| \partial_z^3 \psi_\rho + (q_k + \rho)\partial_z^2 \psi_\rho + V_k \partial_z \psi_\rho + V_k' \psi_\rho = e^{(q_k-\rho)s}\partial_z \mathcal{E}_{p,0}(s,z).
\end{equation}
\begin{prop}
    \label{prop:secondordermultiplierest_l0}
    Fix an $(\epsilon_0, k)$-admissible background spacetime and parameters $\omega \in [0,\frac{1}{2}) $, $\rho \in [-1,1]$. There exists $k$ sufficiently small and constants $C_0$, $C_1 = C_1(\|V_k \|_{W^{1,2}_z([-1,0])})$, $C_2$ such that the following estimate holds in $\mathcal{R}(s_0)$ for sufficiently regular solutions to (\ref{eq:wavesim_alpha_comm0}):
    \begin{align}
         \label{eq:energyest2_0}
        \||z|^{\omega}\partial_z^2 \psi_\rho\|^2_{L^2_z(\{s=s_0\})} + q_k\|\partial_z^2 \psi_\rho \|^2_{L^2_s(\Gamma_{s_0})} + &\big(q_k(1-2\omega) - C_2 k^2 + 2\rho \big)\||z|^{\omega}\partial_z^2 \psi_\rho \|^2_{L^2_{s,z}(\mathcal{R}(s_0))} \nonumber \\[2\jot]
        &\leq C_0\||z|^{\omega}\partial_z^2 \psi_\rho\|^2_{L^2_z(\{s=0\})} + C_1 k^2 \|\partial_z \psi_\rho \|^2_{L^2_s(\Gamma_{s_0})} 
    \end{align}   
    For any constant $\delta \in (0,1)$, there moreover exists $k$ sufficiently small (depending on $\delta$) such that (\ref{eq:energyest2_0}) holds with the replacements $C_2 k^2 \rightarrow \delta$, and $C_1 k^2  \rightarrow C_1 k^4$.
\end{prop}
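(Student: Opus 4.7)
The plan is to multiply the $\partial_z$-commuted equation (\ref{eq:wavesim_alpha_comm0}) by the multiplier $|z|^{2\omega}\partial_z^2\psi_\rho$ and integrate over $\mathcal{R}(s_0)$. The error structure will be arranged so that contributions from $V_k$ and $V_k'$ either generate small multiples of $\||z|^\omega\partial_z^2\psi_\rho\|^2_{L^2_{s,z}(\mathcal{R}(s_0))}$ (absorbable into the bulk coefficient on the LHS) or $k^2\|\partial_z\psi_\rho\|^2_{L^2_s(\Gamma_{s_0})}$ (kept on the RHS).

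I would begin with the top-order analysis. The $\partial_s\partial_z^2\psi_\rho$ term yields the $s$-boundary difference $\tfrac12\||z|^\omega\partial_z^2\psi_\rho\|^2_{L^2_z}$ at $s=s_0$ minus its value at $s=0$. Integrating by parts in $z$ the term $-q_k|z|\partial_z^3\psi_\rho\cdot|z|^{2\omega}\partial_z^2\psi_\rho = -\tfrac{q_k}{2}|z|^{2\omega+1}\partial_z((\partial_z^2\psi_\rho)^2)$ produces the axis boundary term $\tfrac{q_k}{2}\|\partial_z^2\psi_\rho\|^2_{L^2_s(\Gamma_{s_0})}$ — the $\{z=0\}$ boundary vanishes as $|z|^{2\omega+1}\to 0$ — together with a bulk contribution $-\tfrac{q_k(2\omega+1)}{2}\||z|^\omega\partial_z^2\psi_\rho\|^2_{L^2_{s,z}}$ arising from differentiating the weight. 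Combined with $(q_k+\rho)\||z|^\omega\partial_z^2\psi_\rho\|^2_{L^2_{s,z}}$ from the zeroth order term in the equation, these assemble into the net coefficient $\tfrac{q_k(1-2\omega)}{2}+\rho$, which after multiplication of the entire identity by $2$ becomes the stated $q_k(1-2\omega)+2\rho$.

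For the $V_k\partial_z\psi_\rho$ contribution I would apply the Young inequality $V_k|ab|\leq \tfrac{V_k}{2}(a^2+b^2)$ with $a = |z|^\omega\partial_z^2\psi_\rho$, $b = |z|^\omega\partial_z\psi_\rho$, and use $\|V_k\|_{L^\infty}\lesssim k^2$ from Proposition \ref{lem:propertiesofV}. The $(\partial_z^2\psi_\rho)^2$ piece absorbs directly into the bulk coefficient, contributing a $Ck^2$ correction. The $(\partial_z\psi_\rho)^2$ piece is transferred using Lemma \ref{lemma:basic1dintegral} with parameters $\omega\leq\omega'<\tfrac12$ chosen slightly larger than $\omega$, yielding a bound by $Ck^2(\|\partial_z\psi_\rho\|^2_{L^2_s(\Gamma_{s_0})}+\||z|^{\omega'}\partial_z^2\psi_\rho\|^2_{L^2_{s,z}})$; since $|z|\leq 1$ and $\omega'\geq\omega$ we have $|z|^{\omega'}\leq|z|^\omega$, so the bulk piece again absorbs while the boundary piece contributes to the $C_1k^2\|\partial_z\psi_\rho\|^2_{L^2_s(\Gamma_{s_0})}$ on the RHS. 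The $V_k'\psi_\rho$ term is handled by Cauchy–Schwarz, using the pointwise bound $|\psi_\rho(s,z)|\leq \|\partial_z\psi_\rho\|_{L^\infty_z}$ from the Dirichlet condition at $\Gamma$, the estimate $\|V_k'\|_{L^2_z}\lesssim k^2$ from Proposition \ref{lem:propertiesofV}, and again Lemma \ref{lemma:basic1dintegral}. The forcing $e^{(q_k-\rho)s}\partial_z\mathcal{E}_{p,0}=-\partial_zV_{k,p}\psi_\rho - V_{k,p}\partial_z\psi_\rho$ has the same structure, and admissibility bound (\ref{eq:admissiblebounds4}) provides $|V_{k,p}|,|\partial_zV_{k,p}|\lesssim\epsilon_0 k^2$, so it is treated identically.

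The upgraded version with $C_2 k^2\to\delta$ and $C_1 k^2\to C_1 k^4$ follows by using instead the asymmetric Young inequality $V_k|ab|\leq \epsilon a^2 + \tfrac{V_k^2}{4\epsilon}b^2\leq \epsilon a^2+\tfrac{Ck^4}{4\epsilon}b^2$, and likewise for the $V_k'\psi_\rho$ term exploiting $\|V_k'\|_{L^2_z}^2\lesssim k^4$. Choosing $\epsilon=\delta/4$ independent of $k$, and then $k$ so small that $Ck^4/\epsilon<\delta$, converts the $k^2$ dependence into $k^4$ on the boundary term while the bulk absorption is now controlled by $\delta$ rather than $k^2$. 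The main book-keeping subtlety is the use of Lemma \ref{lemma:basic1dintegral} to transfer weighted $L^\infty_z$ bounds on $\partial_z\psi_\rho$ into $L^2_z$ bounds on $|z|^{\omega'}\partial_z^2\psi_\rho$: this transfer requires $\omega'\in[\omega,\tfrac12)$, which is precisely the range where the allowable weight $|z|^{\omega'}$ remains pointwise bounded by $|z|^\omega$ and therefore absorbs into the bulk term on the LHS.
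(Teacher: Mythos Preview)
Your proposal is correct and follows essentially the same approach as the paper: multiply (\ref{eq:wavesim_alpha_comm0}) by $|z|^{2\omega}\partial_z^2\psi_\rho$, integrate by parts to obtain the identity with bulk coefficient $\tfrac{1}{2}q_k(1-2\omega)+\rho$, then control the $V_k\partial_z\psi_\rho$, $V_k'\psi_\rho$, and $\partial_z\mathcal{E}_{p,0}$ error terms via Young's inequality and Lemma~\ref{lemma:basic1dintegral}, with the two parameter choices $\delta_1\sim k^2$ versus $\delta_1\in(0,1)$ fixed yielding the two stated forms. The only cosmetic differences are that the paper uses the asymmetric Young split with $\|V_k\|_{L^2_z}^2\lesssim k^4$ uniformly (rather than first invoking $\|V_k\|_{L^\infty}\lesssim k^2$ for the basic form), and explicitly flags the $\omega=0$ case as producing an extra boundary term at $\{z=0\}$ that is easily handled.
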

\begin{proof}
    To simplify the number of cases handled in the proof, assume without much loss of generality that $\omega > 0$. The case $\omega = 0$ introduces additional boundary terms that are easily handled.
    
    Multiplying (\ref{eq:wavesim_alpha_comm0}) by $|z|^{2 \omega}\partial_z^2 \psi_\rho$ and integrating by parts in $\mathcal{R}(s_0)$ yields
    \begin{align}
        \frac{1}{2}\int_{\{s=s_0\}}&|z|^{2\omega}(\partial_z^2\psi_\rho)^2 dz + \frac{1}{2}q_k\int_{\Gamma_{s_0}}(\partial_z^2 \psi_\rho)^2 ds + (\frac{1}{2}q_k(1-2\omega)+\rho)\iint_{\mathcal{R}(s_0)} |z|^{2\omega}(\partial_z^2 \psi_\rho)^2 dsdz  \nonumber \\[\jot]
        &= \frac{1}{2}\int_{\{s=0\}}|z|^{2\omega}(\partial_z^2\psi_\rho)^2 dz - \iint_{\mathcal{R}(s_0)} V_k |z|^{2\omega}\partial_z \psi_\rho \partial_z^2 \psi_\rho dzds - \iint_{\mathcal{R}(s_0)} V_k' |z|^{2\omega}\psi \partial_z^2 \psi_\rho dzds \nonumber \\[\jot]
        &+ \iint_{\mathcal{R}(s_0)} e^{(q_k-\rho)s} |z|^{2\omega}\partial_z^2 \psi_\rho \partial_z \mathcal{E}_{p,0}(s,z) dzds. \label{eq:temp2ndordermultiplier}
    \end{align}
    For any $\delta_1 \in (0,1)$ we estimate the first bulk term on the right hand side using Lemma \ref{lemma:basic1dintegral} as
    \begin{align*}
        \Big|\iint_{\mathcal{R}(s_0)}V_k |z|^{2\omega}\partial_z\psi_\rho \partial_z^2 \psi_\rho \Big| &\leq \frac{1}{2\delta_1}\iint_{\mathcal{R}(s_0)}V_k^2 |z|^{2\omega} (\partial_z\psi_\rho)^2 + \frac{\delta_1}{2}\iint_{\mathcal{R}(s_0)}|z|^{2\omega}(\partial_z^2\psi_\rho)^2 \\[\jot]
        &\leq \frac{1}{2\delta_1}\|V_k\|_{L^2_z}^2 \int_0^{s_0}\sup_{z' \in [-1,0]}|z'|^{2\omega} |\partial_z \psi_\rho|^2(z',s) ds + \frac{\delta_1}{2}\iint_{\mathcal{R}(s_0)}|z|^{2\omega}(\partial_z^2\psi_\rho)^2 \\[\jot]
        &\leq \frac{1}{\delta_1}C_1^2k^4 \int_{\Gamma_{s_0}}(\partial_z \psi_\rho)^2 + \big(\frac{2}{\delta_1}C_1^2k^4+\frac{\delta_1}{2}\big)\iint_{\mathcal{R}(s_0)}|z|^{2\omega}(\partial_z^2\psi_\rho)^2
    \end{align*}
    Similarly, the second bulk term may be estimated
    \begin{align*}
        \Bigg| \iint_{\mathcal{R}(s_0)} V_k' |z|^{2\omega} \psi_\rho \partial_z^2 \psi_\rho  \Bigg|  &\leq \frac{1}{2\delta_1}\iint_{\mathcal{R}(s_0)} |z|^{2\omega} (V_k' \psi_\rho)^2  + \frac{\delta_1}{2} \iint_{\mathcal{R}(s_0)} |z|^{2\omega}(\partial_z^2 \psi_\rho)^2 \\[\jot]
        &\leq   \frac{1}{2\delta_1}\|V_k'\|^2_{L^2_z}\int_{0}^{s_0}\sup_{z'\in[-1,0]}|z'|^{2\omega}|\psi_\rho|^2(z',s)ds + \frac{\delta_1}{2} \iint_{\mathcal{R}(s_0)} |z|^{2\omega}(\partial_z^2 \psi_\rho)^2 \\[\jot]
        &\leq   \frac{2}{\delta_1}C_1^2 k^4 \iint_{\mathcal{R}(s_0)} |z|^{2\omega} (\partial_z \psi_\rho)^2  + \frac{\delta_1}{2} \iint_{\mathcal{R}(s_0)} |z|^{2\omega}(\partial_z^2 \psi_\rho)^2 \\[\jot]
        &\leq \frac{4}{\delta_1}C_1^2 k^4 \int_{\Gamma_{s_0}}(\partial_z \psi_\rho)^2 ds + \big(\frac{8}{\delta_1}C_1^2k^4 +\frac{\delta_1}{2}\big)\iint_{\mathcal{R}(s_0)}|z|^{2\omega} (\partial_z^2 \psi_\rho)^2.
    \end{align*}
    It remains to consider the bulk term containing $\partial_z \mathcal{E}_{p,0}$.  
    \begin{align}
        \Bigg|\iint_{\mathcal{R}(s_0)} e^{(q_k-\rho)s}|z|^{2\omega}\partial_z^2 \psi_\rho \partial_z \mathcal{E}_{p,0}     \Bigg| &\lesssim O(\epsilon_0 k^2)  \iint_{\mathcal{R}(s_0)} |z|^{2\omega} |\partial_z^2 \psi_\rho| \big( |\psi_\rho| + |\partial_z \psi_\rho| \big) \label{3.3:eq1} \\[\jot]
        &\lesssim O(\epsilon_0 k^2) \int_{\Gamma_{s_0}}(\partial_z \psi_\rho)^2 + \big(O(\epsilon_0 k^2) + \frac{\delta_1}{2}\big)\iint_{\mathcal{R}(s_0)} |z|^{2\omega}(\partial_z^2 \psi_\rho)^2. \label{3.3:eq2}
    \end{align}
    Choosing $\delta_1 \sim k^2$, $C_1$ large enough and $k$ small, we arrive at (\ref{eq:energyest2_0}). The alternative form follows by choosing $\delta_1 \in (0,1)$ arbitrary, and then choosing $k$ sufficiently small. 
    \end{proof}

\subsection{Second order estimates: $\ell > 0$}
For $\ell > 0$ we first rewrite (\ref{eq:wavesim_alpha}) to collect all $\ell$-dependent quantities:
\begin{equation}
    \label{3.4:eq1}
    \partial_s \partial_z \psi_\rho - q_k |z| \partial_z^2 \psi_\rho + \rho \partial_z\psi_\rho + (V_k(z) + L_{\ell}(s,z))\psi_\rho = e^{(q_k-\rho)s}\mathcal{E}_{p,0}(s,z),
\end{equation}
where 
\begin{equation*}
    L_{\ell}(s,z) \doteq \frac{\br{\lambda} (-\nu)}{(1-\mu)\br{r}^2}\ell(\ell+1),
\end{equation*}
and for convenience we have set $\br{\lambda} \doteq e^{k^2 s}\lambda$, \ $\br{r} \doteq e^{s}r$. A consequence of (\ref{3.4:eq1}) is that the right hand side no longer has any $\ell$-dependence. Define a weight $w(s,z) \doteq \frac{(1-\mu)}{\br{\lambda}(-\nu)}$, and commute (\ref{3.4:eq1}) by $\partial_z(w \br{r}^2 \cdot)$. After a straightforward computation, we arrive at the following equation for $\psi_\rho^{(1)} \doteq w \br{r}^2 \partial_z \psi_\rho$:
\begin{align}
    \partial_s \partial_z &\psi_\rho^{(1)} - q_k |z| \partial_z^2 \psi_\rho^{(1)} + \underbrace{\big(q_k +\rho + q_k |z| \partial_z \log(w \br{r}^2\big) - \partial_s \log(w \br{r}^2))}_{\mathcal{A}_1}\partial_z \psi_\rho^{(1)} \nonumber\\[2\jot]
    &+ \underbrace{\bigg(\frac{\ell(\ell+1)}{w\br{r}^2} + V_k - q_k\partial_z \log(w \br{r}^2) + q_k |z| \partial_z^2 \log(w \br{r}^2) - \partial_z\partial_s \log(w \br{r}^2) \bigg)}_{\mathcal{A}_2}\psi_\rho^{(1)} + \partial_z\big(w\br{r}^2V_k\big)\psi_\rho \nonumber \\[2\jot]
    &\hspace{250pt} = \partial_z\big(w \br{r}^2e^{(q_k-\rho)s}\mathcal{E}_{p,0}\big). \label{eq:wavesim_alpha_comm1}
\end{align}

The key multiplier estimate is contained in the following proposition.
\begin{prop}
\label{prop:secondordermultiplierest_l1}
Fix an $(\epsilon_0, k)$-admissible background spacetime and parameters $\omega \in [0,\frac{1}{2})$, $\rho \in [-1,1]$. Assume $\varphi$ is supported on a single $(m,\ell)$-mode with $\ell > 0$. There exists $k$ sufficiently small (independent of $\ell$ and $\omega$), and constants $C_0, C_1$ such that the following estimate holds in $\mathcal{R}(s_0)$ for sufficiently regular solutions to (\ref{eq:wavesim_alpha_comm1}):
\begin{align}
    \label{eq:energyest2_1}
   \||z|^{\omega}\br{r}^{-2}\partial_z \psi_\rho^{(1)}\|^2_{L^2_z(\{s=s_0\})}  + \big(q_k(1-2\omega) - C_1 k^2 + 2\rho \big)&\||z|^{\omega}\br{r}^{-2}\partial_z \psi_\rho^{(1)} \|^2_{L^2_{s,z}(\mathcal{R}(s_0))} \nonumber \\[2\jot]
   &\leq C_0 \||z|^{\omega}\br{r}^{-2}\partial_z \psi_\rho^{(1)}\|^2_{L^2_z(\{s=0\})}.
\end{align}   
\end{prop}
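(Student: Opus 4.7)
The argument parallels the $\ell=0$ proof of Proposition \ref{prop:secondordermultiplierest_l0}, but applied to the commuted equation (\ref{eq:wavesim_alpha_comm1}) with multiplier $|z|^{2\omega}\br{r}^{-4}\partial_z \psi_\rho^{(1)}$ integrated over $\mathcal{R}(s_0)$. The $\partial_s \partial_z \psi_\rho^{(1)}$ piece combines with the multiplier as $\tfrac{1}{2}\partial_s (\partial_z \psi_\rho^{(1)})^2$, and integration by parts in $s$ produces the energy fluxes along $\{s=0\}$ and $\{s=s_0\}$, plus $O(\epsilon_0 k^2)$ corrections from $\partial_s \br{r}^{-4}$ (identically zero on the exact $k$-self-similar background). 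The transport term $-q_k |z|\partial_z^2 \psi_\rho^{(1)}$ produces $-\tfrac{q_k}{2}|z|^{2\omega+1}\br{r}^{-4}\partial_z(\partial_z \psi_\rho^{(1)})^2$; IBP in $z$ yields a vanishing boundary at $\{z=0\}$ for $\omega \geq 0$, a non-negative axis boundary at $\{z=-1\}$ that we discard on the LHS (using the axis vanishing $\partial_z \psi_\rho^{(1)} = O(\br{r}^{\ell+1})$ implied by Proposition \ref{prop:localexist}), together with the principal bulk contribution to be analyzed below.

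The main bulk coefficient arises by combining
\[
\partial_z[|z|^{2\omega+1}\br{r}^{-4}] = -(2\omega+1)|z|^{2\omega}\br{r}^{-4} - 4|z|^{2\omega+1}\br{r}^{-5}\partial_z\br{r}
\]
with the contribution of $\mathcal{A}_1 = q_k + \rho + q_k|z|\partial_z\log(w\br{r}^2) - \partial_s\log(w\br{r}^2)$. The key observation is an exact cancellation: the $-2q_k|z|\br{r}^{-1}\partial_z\br{r}$ factor from differentiating $\br{r}^{-4}$ is cancelled by the $+2q_k|z|\br{r}^{-1}\partial_z\br{r}$ piece inside $\mathcal{A}_1$ coming from $q_k|z|\partial_z\log(w\br{r}^2)$. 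The residual $q_k|z|\partial_z w/w$ and $\partial_s\log(w\br{r}^2)$ pieces of $\mathcal{A}_1$ are $O(k^2)$ by Proposition \ref{lem:propertiesofV} and, in the perturbative case, by the admissibility bounds of Definition \ref{dfn:admissiblespacetimesDEFN}. Collecting everything and doubling the identity gives bulk coefficient $q_k(1-2\omega) + 2\rho - C_1 k^2$, matching the statement.

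For the zeroth-order piece $\mathcal{A}_2 \psi_\rho^{(1)}$, we rewrite $\psi_\rho^{(1)} \partial_z\psi_\rho^{(1)} = \tfrac{1}{2}\partial_z(\psi_\rho^{(1)})^2$ and integrate by parts in $z$. Splitting $\mathcal{A}_2 = L_{k,\ell} + \mathcal{B}$ with $L_{k,\ell} = \ell(\ell+1)/(w\br{r}^2)$, the $L_{k,\ell}$ contribution to the bulk has a definite favorable sign: Proposition \ref{lem:propertiesofL_kl} yields $\partial_z L_{k,\ell} \leq -c\,\ell(\ell+1)/\mr{r}^3 < 0$, while $\partial_z(|z|^{2\omega}\br{r}^{-4}) < 0$ since $\br{r}$ is monotone increasing in $z$ across the interior. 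Both resulting terms, $-\tfrac{1}{2}\partial_z L_{k,\ell}\cdot|z|^{2\omega}\br{r}^{-4}(\psi_\rho^{(1)})^2$ and $-\tfrac{1}{2}L_{k,\ell}\cdot \partial_z(|z|^{2\omega}\br{r}^{-4})(\psi_\rho^{(1)})^2$, are non-negative and may be discarded from the LHS. The remainder $\mathcal{B}$ carries $V_k$ and log-derivative terms whose $L^p$ norms are $O(k^2)$ (Proposition \ref{lem:propertiesofV}); Cauchy-Schwarz against the main multiplier absorbs these into the principal bulk at cost $O(k^2)$.

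The source terms $\partial_z(w\br{r}^2 V_k)\psi_\rho$ on the LHS and $\partial_z(w\br{r}^2 e^{(q_k - \rho)s}\mathcal{E}_{p,0})$ on the RHS of (\ref{eq:wavesim_alpha_comm1}) are controlled by Cauchy-Schwarz against the main multiplier; the coefficient $|\partial_z(w\br{r}^2 V_k)|$ is $O(k^2)$ by Proposition \ref{lem:propertiesofV}, and the admissible corrections $O(\epsilon_0 k^2)$ by Definition \ref{dfn:admissiblespacetimesDEFN}, while the undifferentiated $\psi_\rho$ factor is recovered from the main quantity through $\partial_z\psi_\rho = \psi_\rho^{(1)}/(w\br{r}^2)$ together with the axis vanishing and the Hardy inequality (\ref{hardy1}). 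The principal obstacle is extracting an $\ell$-independent estimate in the presence of the $\ell^2$-sized coefficient $L_{k,\ell}$ appearing in $\mathcal{A}_2$; this is enabled entirely by the sign structure of $\partial_z L_{k,\ell}$ from Proposition \ref{lem:propertiesofL_kl} together with the monotonicity of $\br{r}$, which together permit the large $\ell$-dependent bulk terms to contribute non-negatively on the LHS and thus be dropped without bookkeeping. A secondary delicate point is the exact cancellation of the $|z|\br{r}^{-1}\partial_z\br{r}$ factors in the principal bulk, without which the coefficient of $\|\,|z|^\omega\br{r}^{-2}\partial_z\psi_\rho^{(1)}\,\|^2_{L^2_{s,z}}$ would degenerate as $z \to -1$.
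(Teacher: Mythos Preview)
Your multiplier choice and the cancellation you identify between the $\br{r}$-derivative pieces in $\mathcal{A}_1$ and those produced by differentiating the weight $|z|^{2\omega+1}\br{r}^{-4}$ are correct and match the paper. The gap is in your treatment of $\mathcal{A}_2$ near the axis.

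Your splitting $\mathcal{A}_2 = \frac{\ell(\ell+1)}{w\br{r}^2} + \mathcal{B}$ leaves in $\mathcal{B}$ the terms $-q_k\partial_z\log(w\br{r}^2) + q_k|z|\partial_z^2\log(w\br{r}^2)$. These are \emph{not} $O(k^2)$: Proposition~\ref{lem:propertiesofV} only controls $V_k$, not the log-derivatives of $\br{r}$, and near the axis the latter produce an $O(1)$-sized contribution behaving like $-2/(w\br{r}^2)$. Cauchy--Schwarz plus Hardy on such a remainder yields a copy of the main bulk $\iint |z|^{2\omega}\br{r}^{-4}(\partial_z\psi_\rho^{(1)})^2$ with an $O(1)$ coefficient, which cannot be absorbed. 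Relatedly, the axis boundary term from your integration by parts on the $\frac{\ell(\ell+1)}{w\br{r}^2}$ piece is $\frac{\ell(\ell+1)}{w\br{r}^6}(\psi_\rho^{(1)})^2\big|_{z=-1}$; for $\ell=1$ one only has $\psi_\rho^{(1)}=O(\br{r}^3)$ from Proposition~\ref{prop:localexist}, so this term is $O(1)$, appears on the right-hand side with a positive sign, and cannot be dropped.

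The paper's fix is to subtract $\frac{\ell(\ell+1)-2}{w\br{r}^2}$ instead, setting $\ell_r^2 = \ell(\ell+1)-2$. This kills the large term outright when $\ell=1$ (so no axis boundary issue), and for $\ell\geq 2$ the improved vanishing $\psi_\rho^{(1)}=O(\br{r}^4)$ makes the boundary term vanish. Crucially, the retained $\frac{2}{w\br{r}^2}$ piece then combines with the log-derivative terms to produce a remainder with $\|\br{r}(\mathcal{A}_2 - \tfrac{\ell_r^2}{w\br{r}^2})\|_{L^\infty([-1,-\frac12])} \lesssim k^2$; this is the non-obvious cancellation recorded as estimate~(\ref{34:est4}), namely $\br{r}^{-1}\big(1 - q_kw\br{r}\partial_z\br{r} - q_k|z|w(\partial_z\br{r})^2\big) = O(k^2)$. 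It is precisely this identity that makes the Cauchy--Schwarz/Hardy absorption go through with a genuinely small coefficient, and it is the missing ingredient in your argument.
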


Before beginning the proof, we state estimates on certain coefficients appearing in (\ref{eq:wavesim_alpha_comm1}).
\begin{lemma}
For all $\ell \geq 1$ and $k$ sufficiently small, we have
\begin{align}
    \partial_z \log w &\geq 0, \label{34:est1}\\[2\jot]
    \sup_{s \geq 0}\|\partial_s (w \br{r}^2) \|_{L^\infty([-1,0])} \lesssim k^2, \ \  \sup_{s \geq 0}\|\partial_z\partial_s (w \br{r}^2) \|_{L^\infty([-1,0])}  &\lesssim k^2, \label{34:est2}\\[2\jot] 
    \sup_{s \geq 0}\| \partial_z w\|_{L^p_z([-1,0])} \lesssim_p k^2, \ \ \sup_{s \geq 0}\| |z|\partial_z^2 w\|_{L^p_z([-1,0])} &\lesssim_p k^2, \label{34:est3}\\[2\jot]
    \sup_{s \geq 0}\big\|\br{r}^{-1}\big(1 - q_k w \br{r} \partial_z \br{r} - q_k |z| w (\partial_z \br{r})^2\big) \big\|_{L^\infty([-1,-\frac12])}  &\lesssim k^2 . \label{34:est4} 
\end{align}
\end{lemma}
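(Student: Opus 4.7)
The plan is to reduce all four estimates to properties that follow either from the Hawking-mass identity or from the small-$k$ expansion in Appendix~\ref{section:app3}, in the same spirit as the proof of Proposition~\ref{lem:propertiesofV}. The single algebraic observation driving everything is that $(1-\mu) = 4\lambda(-\nu)/\Omega^2$ (from \eqref{eq:defnofm}) combined with $\br{\lambda} = e^{k^2 s}\lambda$ yields
\[
w = \frac{1-\mu}{\br{\lambda}(-\nu)} = \frac{4 e^{-k^2 s}}{\Omega^2}, \qquad w\br{r}^2 = \frac{4 e^{-k^2 s}\br{r}^2}{\Omega^2}.
\]
On an exactly $k$-self-similar background, $\Omega^2_k = |u|^{k^2}\mr{\Omega}^2(z)$ and $\br{r}_k = \mr{r}(z)$, so the exponential cancels and both $w_k$ and $w_k \br{r}_k^2$ reduce to functions of $z$ alone; all $s$-dependence in the $(\epsilon_0,k)$-admissible setting is thus carried by the perturbative part and is controlled by \eqref{eq:admissiblebounds2}--\eqref{eq:admissiblebounds4}.

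With this simplification, \eqref{34:est1}--\eqref{34:est3} are essentially of the same type as estimates already proved for $V_k$ in Proposition~\ref{lem:propertiesofV}. For \eqref{34:est1}, $\partial_z \log w = -\partial_z \log \Omega^2$, and the monotonicity of $|\hz|^{k^2}\mr{\Omega}^2(\hz)$ recorded as \eqref{eq:monotonic_omega} in Appendix~\ref{section:app3} (the same fact used to sign $L^{(1)}_{k,\ell}$ in the proof of Proposition~\ref{lem:propertiesofL_kl}) gives $\partial_z \log w_k \geq 0$, and the admissible perturbation contributes at most $O(\epsilon_0 k^2)$. For \eqref{34:est2}, one uses that $w_k\br{r}_k^2$ depends only on $z$, so $\partial_s(w_k\br{r}_k^2)\equiv 0$ and the full $\partial_s(w\br{r}^2)$ equals $\partial_s$ of the perturbative part; writing $\partial_s = -u\partial_u - q_k v\partial_v$ and invoking \eqref{eq:admissiblebounds3}--\eqref{eq:admissiblebounds4} with $j\leq 2$ (together with $|v|\leq |u|^{q_k}$ in the interior) yields the $k^2$ bound, which is preserved after a further $\partial_z$. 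Estimate \eqref{34:est3} is the direct analog for $w_k = 4/\mr{\Omega}^2(z)$ of the $L^p_z$ bounds \eqref{eq:potentialest1} and \eqref{eq:potentialest3.5} established for $V_k$, and follows from the same asymptotic expansions of $\mr{\Omega}^2$ from Appendix~\ref{section:app3}.

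The substantive statement is \eqref{34:est4}. The plan is first to carry out an algebraic simplification using the three identities $\partial_z \br{r} = \br{\lambda}$, $-\nu = \br{r} - \partial_s \br{r} - q_k z \br{\lambda}$ (both direct consequences of the coordinate derivatives in Table~\ref{table:1}), and $w\br{\lambda}(-\nu) = 1-\mu$, which together reduce the combination to
\[
1 - q_k w\br{r}\partial_z \br{r} - q_k |z|w(\partial_z\br{r})^2 = k^2 + q_k \mu - q_k(1-\mu)\frac{\partial_s \br{r} - k^2 z\br{\lambda}}{-\nu}.
\]
The key observation is that this expression vanishes identically at the axis $\{z=-1\}$ on the exact self-similar background: $\mu_k(-1) = 0$, $\partial_s \br{r}_k \equiv 0$, and $-\nu_k(-1) = q_k\br{\lambda}_k(-1)$ together produce the cancellation $k^2 - q_k \cdot k^2 \br{\lambda}_k(-1)/(q_k \br{\lambda}_k(-1)) = 0$. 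The admissible perturbation preserves this axis vanishing up to $O(\epsilon_0 k^2)$.

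Having located the cancellation, the remaining step is a Taylor-expansion argument near the axis. On $[-1,-\tfrac12]$, both the combination and its $z$-derivative are of size $O(k^2)$: $\mu = O(k^2)$ and $\partial_z \mu = O(k^2)$ by the small-$k$ expansions of Appendix~\ref{section:app3}, $\partial_s \br{r}$ is controlled as in the argument for \eqref{34:est2}, and the explicit $k^2$ factor in front of $z\br{\lambda}$ keeps the remaining term small. Since $\br{r} \sim \br{\lambda}_k(-1)(z+1)$ with $\br{\lambda}_k(-1)$ bounded away from zero, dividing by $\br{r}$ yields the desired $L^\infty([-1,-\tfrac12])$ bound of size $k^2$. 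The main technical point I expect is precisely this axis cancellation in \eqref{34:est4}, since it relies critically on the $O(k^2)$-smallness of $\mu$ throughout $[-1,-\tfrac12]$ and on the leading-order Hawking-mass vanishing at the axis, both of which depend on the small-$k$ asymptotic expansions that Appendix~\ref{section:app3} supplies.
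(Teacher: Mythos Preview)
Your approach to \eqref{34:est1}--\eqref{34:est3} matches the paper's essentially exactly: the same reduction of $w$ to $4e^{-k^2 s}/\Omega^2$, the same appeal to the monotonicity \eqref{eq:monotonic_omega} for the sign, the same observation that $\partial_s$ annihilates the self-similar part, and the same invocation of the Appendix~\ref{section:app3} expansions. One minor point on \eqref{34:est1}: you record only $\partial_z\log w_k\geq 0$, whereas the paper notes that the derivative of $|\hz|^{k^2}\mr{\Omega}^2$ is bounded above by a \emph{negative} multiple of $k^2$. This strict lower bound on $\partial_z w_k$ is what actually allows the $O(\epsilon_0 k^2)$ perturbation to be absorbed via the triangle inequality, so you should state it.

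For \eqref{34:est4} your algebraic reduction is correct and in fact equivalent to the paper's, but the paper packages it more efficiently. Where you stop at an expression for $f$ itself, observe the axis cancellation $f(-1)=0$, and then propose to Taylor-expand (requiring you to separately verify $|\partial_z f|\lesssim k^2$), the paper pushes the algebra one step further and writes
\[
\br{r}^{-1}f = \Delta^{-1}\Big(q_k\mu + \frac{\mu}{\br{r}}\,q_k|z|\partial_z\br{r} + k^2 - \partial_s\log\br{r}\Big),
\]
with $\Delta = q_k|z|\partial_z\br{r} + \br{r} - \partial_s\br{r}$; note $\Delta = -\nu$, exactly the denominator you found. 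Each term in the bracket is then \emph{directly} $O(k^2)$ on $[-1,-\tfrac12]$: axis regularity $m/r^3\lesssim k^2$ from \eqref{eq:smallkbound10.5} gives $\mu\lesssim k^2\br{r}^2$, hence both $\mu$ and $\mu/\br{r}$ are $O(k^2)$ pointwise, and $\partial_s\log\br{r}$ is handled as in \eqref{34:est2}. This bypasses the Taylor-expansion step entirely, since the $\br{r}^{-1}$ factor has already been absorbed into the identity. Your route also works, but demands that you control $\partial_z\mu$ and $\partial_z\nu$ pointwise near the axis; these do follow from the same $m/r^3\lesssim k^2$ input, but constitute extra bookkeeping the paper's formulation avoids.
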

\begin{proof}
    Decompose the weight $w(s,z)$ as $w(s,z) = w_k(z) + w_p(s,z)$, where $w_k(z)$ is the $k$-self-similar component. The latter may be rewritten as
    \begin{equation*}
        w_k(z) =  \frac{1-\mr{\mu}(z)}{(p_k |\hz|^{k^2}\mr{\lambda})(-\mr{\mu}(z)) } = \frac{4q_k}{ (|\hz|^{k^2}\mr{\Omega}^2)(z)}.
    \end{equation*} 
    By (\ref{eq:monotonic_omega}), $(|\hz|^{k^2}\mr{\Omega}^2)(z)$ is decreasing in $z$, with derivative bounded above by a multiple of ${- k^2}$. We may also estimate 
    \begin{align*}
        |\partial_z w_p| \lesssim \epsilon_0 k^2. 
    \end{align*} 
    Applying the triangle inequality allows us to conclude (\ref{34:est1}).

    To see (\ref{34:est2}), note that both estimates have at least one $\partial_s$ derivative. These derivatives vanish on the $k$-self-similar contribution, and therefore only see contributions arising from the background perturbation. Applying the bounds (\ref{eq:admissiblebounds1})--(\ref{eq:admissiblebounds4}) associated to admissible spacetimes gives the statement.

    The first bound in (\ref{34:est3}) is a consequence of the $L^p_z$ smallness manifest in (\ref{eq:smallkbound8})--(\ref{eq:smallkbound10}), (\ref{eq:smallkbound10.6}), as well as estimates on the perturbations. Similarly, the second bound in (\ref{34:est3}) follows from (\ref{eq:smallkbound15}).

    We finally consider (\ref{34:est4}). Define $f(s,z) \doteq 1- q_k w \br{r}\partial_z \br{r} + q_k |z| w (\partial_z \br{r})^2$. A direct, if tedious, calculation gives that 
    \begin{align}
        \label{eq:2ndorde_lest_temp1}
        \br{r}^{-1}f(s,z) = \Delta^{-1}(q_k \mu  + \frac{\mu}{\br{r}}q_k|z|\partial_z\br{r} + k^2 - \partial_s \log \br{r}),
    \end{align}
    where 
    \begin{align*}
        \Delta \doteq q_k |z| \partial_z \br{r} + \br{r} - \partial_s \br{r}.
    \end{align*}
    Each term appearing in (\ref{eq:2ndorde_lest_temp1}) is of size $k^2$ in the region $z \in [-1,-\frac12]$. (\ref{34:est4}) therefore follows.
\end{proof}

\begin{proof}[Proof of Proposition \ref{prop:secondordermultiplierest_l1}]
    To simplify the number of cases handled in the proof, assume that $\omega > 0$. The case $\omega = 0$ introduces additional boundary terms that are easily handled.
     
    Multiplying (\ref{eq:wavesim_alpha_comm1}) by $|z|^{2\omega}\br{r}^{-4}\partial_z \psi^{(1)}_\rho$ and integrating by parts in $\mathcal{R}(s_0)$ yields
    \begin{align}
        \frac{1}{2}\int_{\{s=s_0\}}&\frac{|z|^{2\omega}}{\br{r}^4}(\partial_z \psi^{(1)}_\rho)^2 + \frac{1}{2}\int_{\Gamma_{s_0}} \frac{\big(\partial_z \psi_\rho^{(1)} \big)^2 }{\br{r}^4} \nonumber \\[2\jot]
        +\iint_{\mathcal{R}(s_0)} &\big(\frac12 q_k(1-2\omega) + \rho + q_k |z| \partial_z \log w - \partial_s \log w \big) \frac{|z|^{2\omega}}{\br{r}^4}(\partial_z \psi^{(1)}_\rho)^2   \nonumber\\[2\jot]
        &= \frac{1}{2}\int_{\{s=0\}}\frac{|z|^{2\omega}}{\br{r}^4}(\partial_z\psi^{(1)}_\rho)^2  - \iint_{\mathcal{R}(s_0)} \mathcal{A}_2 \frac{|z|^{2\omega}}{\br{r}^4}\partial_z \psi^{(1)}_\rho  \psi^{(1)}_\rho  - \iint_{\mathcal{R}(s_0)} \partial_z(w \br{r}^2V_k) \frac{|z|^{2\omega}}{\br{r}^4}\partial_z \psi^{(1)}_\rho \psi_\rho  \nonumber\\[2\jot]
        &+ \iint_{\mathcal{R}(s_0)} e^{(q_k-\rho)s} \frac{|z|^{2\omega}}{\br{r}^4}\partial_z \psi^{(1)}_\rho \partial_z (w \br{r}^2 \mathcal{E}_{p,0}) .\label{eq:prop33:temp1.0}
    \end{align}
    We turn to estimating the various bulk terms, starting with the term proportional to $\mathcal{A}_2$. Since $\ell \geq 1$ we have $\ell(\ell+1) \geq 2$. Define $\ell^2_r \doteq \ell(\ell+1)-2$. Applying (\ref{34:est2})--(\ref{34:est4}) gives the pair of localized estimates  
    \begin{align}
        \sup_{s \geq 0} \| \mr{r}\bigg( \mathcal{A}_2 - \frac{\ell^2_r}{w\br{r}^2}\bigg)\|_{L^\infty([-1,-\frac{1}{2}])} &\lesssim k^2, \label{eq:prop33:temp2.0}\\
        \sup_{s \geq 0} \| \mathcal{A}_2 - \frac{\ell^2_r}{w\br{r}^2} \|_{L^2_z([-\frac{1}{2},0])} &\lesssim k^2, \label{eq:prop33:temp3.0}
    \end{align}
    Observe that (\ref{eq:prop33:temp2.0}) captures a cancellation in top order powers of $\mr{r}$ near the axis. Therefore
    \begin{align}
        \iint_{\mathcal{R}(s_0)} &\mathcal{A}_2 \frac{|z|^{2\omega}}{\br{r}^4}\partial_z \psi^{(1)}_\rho  \psi^{(1)}_\rho \nonumber  \\[\jot]
        &=  \frac{1}{2}\iint_{\mathcal{R}(s_0)} \frac{\ell^2_r}{w \br{r}^2} \frac{|z|^{2\omega}}{\br{r}^4}\partial_z \big(\psi_\rho^{(1)}\big)^2 + \iint_{\mathcal{R}(s_0)} \bigg(\mathcal{A}_2-\frac{\ell^2_r}{w \br{r}^2} \bigg)\frac{|z|^{2\omega}}{\br{r}^4}\partial_z \psi_\rho^{(1)} \psi_\rho^{(1)}. \label{eq:prop33:temp4.0}
      \end{align} 
      The first term may be integrated by parts to produce favorable bulk terms (cf. (\ref{34:est1}))
      \begin{align*}
          \omega \iint_{\mathcal{R}(s_0)}  \frac{\ell_r^2}{w\br{r}^2}\frac{|z|^{2\omega-1}}{\br{r}^4} \big(\psi_\rho^{(1)}\big)^2 + \frac12 \iint_{\mathcal{R}(s_0)}  \frac{\ell_r^2}{w\br{r}^2} \frac{|z|^{2\omega}}{\br{r}^4}\partial_z \log(w\br{r}^6) \big(\psi_\rho^{(1)}\big)^2.
      \end{align*}
      We have dropped boundary terms at the axis, noting that these terms appear only for $\ell \geq 2$, for which we have $|\psi_{\rho}^{(1)}| \lesssim \br{r}^4.$ This decay is fast enough to overwhelm the singular powers of $\br{r}$ appearing the integrand. 
    
      The latter term in (\ref{eq:prop33:temp4.0}) gives, after applying the localized estimates (\ref{eq:prop33:temp2.0})--(\ref{eq:prop33:temp3.0}), the Hardy inequality (\ref{hardy1}), and Lemma \ref{lemma:basic1dintegral}, 
    \begin{align*}
        \Bigg| \iint_{\mathcal{R}(s_0)} &\bigg(\mathcal{A}_2-\frac{\ell^2_r}{w \br{r}^2} \bigg)\frac{|z|^{2\omega}}{\br{r}^4}\partial_z \psi_\rho^{(1)} \psi_\rho^{(1)} \Bigg| \\[2\jot]
        &\lesssim k^2 \iint\limits_{\substack{\mathcal{R}(s_0) \\ \{z \leq -\frac{1}{2} \} }} \frac{1}{\br{r}^5} \big|\partial_z \psi_\rho^{(1)} \big| \big| \psi_\rho^{(1)}\big| + \iint\limits_{\substack{\mathcal{R}(s_0) \\ \{z \geq -\frac{1}{2} \} }} \bigg(\mathcal{A}_2-\frac{\ell^2_r}{w \br{r}^2} \bigg)|z|^{2\omega}\big|\partial_z \psi_\rho^{(1)} \big| \big|\psi_\rho^{(1)}\big| \\[2\jot]
        &\lesssim k^2 \iint\limits_{\substack{\mathcal{R}(s_0) \\ \{z \leq -\frac{1}{2} \} }} \frac{1}{\br{r}^4} \big(\partial_z \psi_\rho^{(1)} \big)^2 + k^2 \iint\limits_{\substack{\mathcal{R}(s_0) \\ \{z \leq -\frac{1}{2} \} }} \frac{1}{\br{r}^6}\big( \psi_\rho^{(1)} \big)^2 \\[2\jot]
        &+ \delta_1^{-1}\iint\limits_{\substack{\mathcal{R}(s_0) \\ \{z \geq -\frac{1}{2} \} }} \bigg(\mathcal{A}_2-\frac{\ell^2_r}{w \br{r}^2} \bigg)^2 |z|^{2\omega}\big( \psi_\rho^{(1)} \big)^2 + \delta_1 \iint\limits_{\substack{\mathcal{R}(s_0) \\ \{z \geq -\frac{1}{2} \} }} |z|^{2\omega}\big(\partial_z \psi_\rho^{(1)} \big)^2 \\[2\jot]
        &\lesssim (\delta_1+ k^2) \iint_{\mathcal{R}(s_0)} \frac{|z|^{2\omega}}{\br{r}^4} \big(\partial_z \psi_\rho^{(1)} \big)^2 + \delta_1^{-1}\int_{0}^{s_0} \sup_{z \in [-\frac12, 0]} \big| |z|^{\omega}\psi_\rho^{(1)} \big|^2 \bigg( \int_{-\frac12}^{0}\bigg(\mathcal{A}_2-\frac{\ell^2_r}{w \br{r}^2} \bigg)^2 dz\bigg) ds \\[2\jot] 
        &\lesssim (\delta_1+ k^2) \iint_{\mathcal{R}(s_0)} \frac{|z|^{2\omega}}{\br{r}^4} \big(\partial_z \psi_\rho^{(1)} \big)^2 + \delta_1^{-1}k^4 \int_0^{s_0}\sup_{z \in [-\frac12, 0]} \big| |z|^{\omega}\psi_\rho^{(1)} \big|^2 ds \\[2\jot]
        &\lesssim (\delta_1+ k^2) \iint_{\mathcal{R}(s_0)} \frac{|z|^{2\omega}}{\br{r}^4} \big(\partial_z \psi_\rho^{(1)} \big)^2 + \delta_1^{-1}k^4 \iint_{\mathcal{R}(s_0)} |z|^{2\omega}\big(\partial_z \psi_\rho^{(1)} \big)^2 \\[2\jot]
        &\lesssim (\delta_1+ k^2 + \delta_1^{-1}k^4) \iint_{\mathcal{R}(s_0)} \frac{|z|^{2\omega}}{\br{r}^4} \big(\partial_z \psi_\rho^{(1)} \big)^2.
    \end{align*}
    A similar analysis using (\ref{eq:potentialest1}), (\ref{34:est3}) gives for the second bulk term in (\ref{eq:prop33:temp1.0})
    \begin{align*}
    \Bigg|\iint_{\mathcal{R}(s_0)} &\partial_z(w \br{r}^2V_k) \frac{|z|^{2\omega}}{\br{r}^4}\partial_z \psi^{(1)}_\rho \psi_\rho  \Bigg| \nonumber\\
    &\lesssim \delta_1^{-1} \iint_{\mathcal{R}(s_0)} \big(\partial_z(w \br{r}^2V_k)\big)^2\frac{|z|^{2\omega}}{\br{r}^4} (\psi_\rho)^2 + \delta_1\iint_{\mathcal{R}(s_0)} \frac{|z|^{2\omega}}{\br{r}^4} (\partial_z \psi^{(1)}_\rho)^2 \nonumber\\[2\jot]
    &\lesssim \delta_1^{-1}k^4 \iint\limits_{\substack{\mathcal{R}(s_0) \\ \{z \leq -\frac{1}{2} \} }} \frac{1}{\br{r}^2} (\psi_\rho)^2 + \delta_1^{-1} \iint\limits_{\substack{\mathcal{R}(s_0) \\ \{z \geq -\frac{1}{2} \} }}(\partial_z(w V_k))^2 |z|^{2\omega} (\psi_\rho)^2  + \delta_1 \iint_{\mathcal{R}(s_0)} \frac{|z|^{2\omega}}{\br{r}^4} (\partial_z \psi^{(1)}_\rho)^2  \\
    &\lesssim \delta_1^{-1}k^4 \iint\limits_{\substack{\mathcal{R}(s_0) \\ \{z \leq -\frac{1}{2} \} }} \frac{1}{\br{r}^4} (\psi_\rho^{(1)})^2 + \delta_1^{-1} k^4 \int_{0}^{s_0} \sup_{z \in [-1,0]} \big||z|^\omega  \psi_\rho \big|^2 + \delta_1 \iint_{\mathcal{R}(s_0)} \frac{|z|^{2\omega}}{\br{r}^4} (\partial_z \psi^{(1)}_\rho)^2  \\
    &\lesssim \delta_1^{-1} k^4 \iint_{\mathcal{R}(s_0)} \frac{|z|^{2\omega}}{\br{r}^4} ( \psi^{(1)}_\rho)^2 + \delta_1 \iint_{\mathcal{R}(s_0)} \frac{|z|^{2\omega}}{\br{r}^4} (\partial_z \psi^{(1)}_\rho)^2 \\
    &\lesssim (\delta_1+\delta_1^{-1}k^4) \iint_{\mathcal{R}(s_0)} \frac{|z|^{2\omega}}{\br{r}^4} (\partial_z \psi^{(1)}_\rho)^2.
    \end{align*}
    
    The remaining bulk term in (\ref{eq:prop33:temp1.0}) depends on $\mathcal{E}_{p,0}$. The calculation proceeds by the strategy above, in order to handle the $\br{r}$ weights near the axis. The term here is not borderline in terms of $\br{r}$ weights, has all terms proportional to $k^2$, and has no $\ell$-dependence. Therefore one may simply take $k$ small to absorb this term, at the cost of a small loss in the $\partial_z \psi_\rho^{(1)}$ bulk term.
    
    Collecting terms and setting $\delta_1 \sim k^2$, we arrive at the stated estimate.
    \end{proof}

    \subsection{Non-sharp decay: $\ell = 0$}
    For the spherically symmetric component of the solution to (\ref{eq:1}), denoted $\varphi(s,z)$, the multiplier estimates (\ref{eq:energyest1}), (\ref{eq:energyest2_0}) alone do not yield the sharp decay stated in Theorem \ref{theorem:mainthm}. Still, these physical space methods are able to give an upper bound on the growth rate of the scalar field in the case $\varphi\big|_{\{s=0\}} \in \mathcal{C}^{\alpha}_{(hor)}([-1,0])$ for $\alpha > p_k$ independently of $k$. In this section we will allow $k$ to be chosen small in terms of $\alpha$, and thus we are working in a ``high above threshold regularity'' setting.

Subject to this regularity condition, we show that the blue-shift rate $\frac{1}{\partial_v r_k}\partial_v\varphi \sim |u|^{-1-k^2}$ provides the sharp scaling with respect to $k$ for any solutions to (\ref{eq:1}) that are unstable in the sense of Definition \ref{defn:self-similarbounds}. The argument is unable to detect the difference between solutions growing at rates $|u|^{-1}$ (self-similar), $|u|^{-1-k^2}$ (blue-shift), or $|u|^{-1-B k^2}$ (multiplies of the blue-shift), as our multiplier estimates have used only weak information on $V_k$ (i.e., smallness in $W^{1,2}_z$). Therefore, to show sharper decay (and necessarily, to rule out unstable modes) requires either energy estimates of a more refined nature, or appeal to spectral theory as done in the remainder of the paper.

\begin{prop}[Growth at a multiple of the blue-shift rate]
    \label{prop:basicdecayphysicalspace}
    Fix a parameter $\alpha \in (1,2)$. There exists $k$ \underline{small depending on $\alpha$}, such that for any $(\epsilon_0,k)$-admissible spacetime and spherically symmetric initial data to (\ref{eq:wavesim}) with regularity $\varphi_0(z) \in C^{\alpha}_{(hor)}([-1,0])$, there exists a constant $B > 1$ depending on $\alpha$, and a constant $C$ depending on the data such that for all $u \in [-1,0)$ we have the pointwise bounds
    \begin{align}
        \|\varphi\|_{L^\infty(\Sigma_u)} &\lesssim C |u|^{-B k^2}, \label{eq:prelimgrowthbound}\\
        \|\partial_{u}(\br{r} \varphi)\|_{L^\infty(\Sigma_u)} &\lesssim C |u|^{-1-B k^2}, \label{eq:prelimgrowthbound2}\\
        \|\partial_{v}(\br{r} \varphi)\|_{L^\infty(\Sigma_u)} &\lesssim C |u|^{-1-(B-1) k^2}. \label{eq:prelimgrowthbound3}
    \end{align}
\end{prop}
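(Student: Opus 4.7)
The plan is to work with the renormalized quantity $\psi_\rho \doteq e^{(q_k - \rho)s}\psi$, satisfying (\ref{eq:wavesim_alpha}) with $\ell=0$, for a parameter $\rho$ to be chosen of size $O(k^2)$. The target reformulated in similarity coordinates becomes: produce bounds $\|\psi_\rho\|_{L^\infty}$, $\|\partial_z \psi_\rho\|_{L^\infty}$, $\|\partial_s \psi_\rho\|_{L^\infty} \lesssim 1$ uniformly in $s_0$, from which the pointwise statements (\ref{eq:prelimgrowthbound})--(\ref{eq:prelimgrowthbound3}) follow by unwinding the definitions and using $\br{r} = e^s r \sim e^s |u|$, together with the conversion formulas for $\partial_u, \partial_v$ recorded in Table~\ref{table:1}. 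The value $\rho = \rho_*(k,\alpha) \sim Bk^2$ dictates the final constant $B$.

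The core of the proof is to close the combined first- and second-order multiplier estimates for $\psi_\rho$. The obstruction lies in the unfavorable bulk term $(1 + C_1 k^2 - 2\rho)\|\partial_z \psi_\rho\|^2_{L^2_{s,z}(\mathcal{R}(s_0))}$ on the right side of (\ref{eq:energyest1}). Choosing $\omega \in [0, 1/2)$ with $\omega > \max(0, 3/2 - \alpha)$ so that the data-level quantity $\||z|^\omega \partial_z^2 \psi_\rho\|_{L^2_z(\{s=0\})}$ is finite for $\varphi_0 \in \mathcal{C}^{\alpha}_{(hor)}([-1,0])$, I apply Lemma~\ref{lemma:basic1dintegral} (in the form with left-weight $0$ and right-weight $\omega$, after the change of variables $t = z+1$) to obtain
\begin{equation*}
\|\partial_z \psi_\rho\|^2_{L^2_{s,z}(\mathcal{R}(s_0))} \;\lesssim\; \|\partial_z \psi_\rho\|^2_{L^2_s(\Gamma_{s_0})} + \||z|^\omega \partial_z^2 \psi_\rho\|^2_{L^2_{s,z}(\mathcal{R}(s_0))}.
\end{equation*}
The axis piece is absorbed into the favorable boundary term $\tfrac{1}{2}q_k \|\partial_z \psi_\rho\|^2_{L^2_s(\Gamma_{s_0})}$ on the left of (\ref{eq:energyest1}) as long as $\rho > (1 - q_k + C_1 k^2)/2 = (1 + C_1/2)k^2 + O(k^4)$. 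The second-derivative bulk is then dominated by (\ref{eq:energyest2_0}) with the same weight $\omega$, where I invoke the improved form of that estimate (with $\delta$ in place of $C_2 k^2$ and $C_1 k^4$ coupling on the axis) so that the back-reaction $\propto \|\partial_z \psi_\rho\|^2_{L^2_s(\Gamma_{s_0})}$ is absorbable for $k$ small depending on $\omega$. Iterating the two estimates once yields a closed system implying
\begin{equation*}
\||z|^\omega \partial_z^2 \psi_\rho\|^2_{L^2_z(\{s=s_0\})} + \|\partial_z \psi_\rho\|^2_{L^2_z(\{s=s_0\})} + \|\partial_z \psi_\rho\|^2_{L^2_s(\Gamma_{s_0})} \;\lesssim\; \|\varphi_0\|^2_{\mathcal{C}^{\alpha}_{(hor)}([-1,0])},
\end{equation*}
uniformly in $s_0$.

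From these integrated bounds I extract the pointwise estimates. The Dirichlet condition $\psi_\rho(s,-1) = 0$ and the fundamental theorem of calculus give $\|\psi_\rho\|_{L^\infty_z(\{s\})} \lesssim \|\partial_z \psi_\rho\|_{L^2_z(\{s\})} \lesssim 1$. For $\partial_z \psi_\rho$, I view the commuted wave equation (\ref{eq:wavesim_alpha_comm0}) as a transport equation along the characteristic field $\partial_s - q_k |z| \partial_z$,
\begin{equation*}
(\partial_s - q_k |z|\partial_z)(\partial_z \psi_\rho) + \rho\, \partial_z \psi_\rho = -V_k \partial_z \psi_\rho - V_k' \psi_\rho + e^{(q_k - \rho)s}\partial_z \mathcal{E}_{p,0},
\end{equation*}
whose right side lies in $L^\infty_z$ by what has already been proved and the bounds of Proposition~\ref{lem:propertiesofV}. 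Since the characteristics $\dot z = -q_k|z|$ connect each $\{s=s_0, z<0\}$ back to $\{s=0\}$ in finite parameter time (regularizing towards the axis), Grönwall along these characteristics propagates the $L^\infty_z$ data $\partial_z \psi_\rho|_{s=0}$, which is finite since $\varphi_0 \in C^1$. The bound on $\partial_s \psi_\rho$ follows by integrating (\ref{eq:wavesim_alpha}) in $z$ from the axis using the Dirichlet condition $\partial_s \psi_\rho(s,-1) = 0$. Translating back via $\psi = e^{(\rho - q_k)s}\psi_\rho$ and the formulas $\partial_v(\br r \varphi) = e^{(1+q_k)s}\partial_z \psi$ and $\partial_u(\br r \varphi) = e^{2s}(\psi + \partial_s \psi - q_k|z|\partial_z \psi)$ yields (\ref{eq:prelimgrowthbound})--(\ref{eq:prelimgrowthbound3}) with $B = 1 + B_0$ where $B_0 \sim 1 + C_1/2$.

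The main obstacle is the tight balance in the second paragraph: the bad bulk from (\ref{eq:energyest1}) is of order $1$, whereas the smallness available from the potential is only $\|V_k\|_{W^{1,2}_z} = O(k^2)$. This forces the threshold $\rho \gtrsim k^2$ and hence the appearance of the constant $B > 1$ rather than a result with $B = 0$ (which would be the desired self-similar rate). For $\alpha$ close to $1$ the weight $\omega$ must approach $1/2$, but the estimate still closes because $q_k(1 - 2\omega)$ remains strictly positive; the only price is that $B$ and the smallness threshold on $k$ degrade with $\alpha$, matching the statement of the proposition.
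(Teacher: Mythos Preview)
Your overall approach matches the paper's: choose $\rho\sim Bk^2$, couple the first-order estimate (\ref{eq:energyest1}) with the weighted second-order estimate (\ref{eq:energyest2_0}) via Lemma~\ref{lemma:basic1dintegral}, and pick $a\sim k^2$ so that the resulting closed inequality yields uniform control on $\|\partial_z\psi_\rho\|_{L^2_z}$ and $\||z|^\omega\partial_z^2\psi_\rho\|_{L^2_z}$. The paper packages this as the single condition $D(a,\rho,\omega,k)<1$, but the content is the same as your two-step absorption. (Your displayed threshold on $\rho$ has a minor arithmetic slip: $(1-q_k+C_1k^2)/2=\tfrac{1+C_1}{2}k^2$, not $(1+C_1/2)k^2$.)

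There is one genuine error in your pointwise extraction of $\partial_z\psi_\rho$. The equation you display,
\[
(\partial_s - q_k|z|\partial_z)(\partial_z\psi_\rho) + \rho\,\partial_z\psi_\rho = -V_k\partial_z\psi_\rho - V_k'\psi_\rho + e^{(q_k-\rho)s}\partial_z\mathcal{E}_{p,0},
\]
is neither (\ref{eq:wavesim_alpha}) nor (\ref{eq:wavesim_alpha_comm0}): the commuted equation (\ref{eq:wavesim_alpha_comm0}) is a transport equation for $\partial_z^2\psi_\rho$, not $\partial_z\psi_\rho$, and carries the extra damping coefficient $q_k+\rho$. More importantly, your claim that the right side lies in $L^\infty_z$ is not supported by Proposition~\ref{lem:propertiesofV}, which only gives $\partial_z V_k\in L^p_z$ for $p<\infty$. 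The fix is to use the \emph{uncommuted} equation (\ref{eq:wavesim_alpha}), which reads
\[
(\partial_s - q_k|z|\partial_z)(\partial_z\psi_\rho) + \rho\,\partial_z\psi_\rho = -V_k\psi_\rho + e^{(q_k-\rho)s}\mathcal{E}_{p,0},
\]
whose right side is indeed $L^\infty_z$ (since $V_k\in L^\infty$ and you have already shown $\psi_\rho\in L^\infty$). Your Gr\"onwall argument then goes through, with the damping $\rho>0$ giving $\int_0^{s_0}e^{-\rho(s_0-s)}k^2\,ds\lesssim k^2/\rho=O(1)$ uniformly in $s_0$. Alternatively, the paper bypasses the characteristic argument entirely: from uniform control on $\||z|^\omega\partial_z^2\psi_\rho\|_{L^2_z}$ and $|z|^{-\omega}\in L^2_z$ (as $\omega<\tfrac12$), the fundamental theorem of calculus extends the near-axis Sobolev bound on $\partial_z\psi_\rho$ to the full interval.
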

\begin{proof}
The strategy is to close the pair of estimates (\ref{eq:energyest1}), (\ref{eq:energyest2_0}) with $\rho \doteq B k^2$, for an appropriately chosen constant $B > 0$. Unpacking the regularity assumption $\varphi_0 \in C^{\alpha}_{(hor)}([-1,0])$ shows 
\begin{equation*}
    |z|^{\omega} \partial_z^2 (\br{r}\varphi_0) \in L^2_z([-1,0]),
\end{equation*}
where $\omega = 0$ if $\alpha > \frac{3}{2}$, and $\omega = \frac32 - \alpha + \epsilon$ for $\alpha \leq \frac{3}{2}$ and $\epsilon \ll 1 $ sufficiently small.

We focus on the bulk term $\|\partial_z  \psi_\rho \|^2_{L^2_{s,z}(\mathcal{R}(s_0))}$. By Lemma \ref{lemma:basic1dintegral} this term is estimated as follows, for any $a \in (0,1-2\omega)$,
\begin{align*}
    \|\partial_z  \psi_\rho \|^2_{L^2_{s,z}(\mathcal{R}(s_0))} & \leq \frac{1}{1-a_{0,\omega}} \|\partial_z \psi_\rho\|^2_{L^2_s(\Gamma_{s_0})} + \frac{1}{a(1-a_{0,\omega})} \||z|^\omega \partial_z^2 \psi_\rho \|^2_{L^2_{s,z}(\mathcal{R}(s_0))}. 
\end{align*}
To control the latter bulk term, we apply (\ref{eq:energyest2_0}) (in particular, the alternative form stated in Proposition \ref{prop:secondordermultiplierest_l0}) with parameter $\delta \leq \frac{1}{4}(1-2\omega)$. It follows that for $k$ sufficiently small depending on $\delta$ (and therefore, $\alpha$),  
\begin{align*}
    \|\partial_z  &\psi_\rho \|^2_{L^2_{s,z}(\mathcal{R}(s_0))} \\[\jot]
    &\leq \frac{1}{1-a_{0,\omega}} \|\partial_z \psi_\rho \|^2_{L^2_s(\Gamma_{s_0})} + \frac{1}{(\frac{1}{2}q_k(1-2\omega) - \delta +\rho)a(1-a_{0,\omega})}\bigg(\||z|^{\omega} \partial_z^2 \psi_\rho\|^2_{L^2_z(\{s=0\})} + C_0 k^4 \|\partial_z \psi_\rho\|^2_{L^2_s(\Gamma_{s_0})} \bigg) \\[\jot]
    &\leq C(a,\rho, \omega, k) \|\partial_z \psi_\rho \|_{W^{1,2}_z(\{s=0\})} +  D(a,\rho,\omega, k) \| \partial_z \psi_\rho \|^2_{L^2_{s,z}(\mathcal{R}(0,s_0))},
\end{align*} 
where 
\begin{equation*}
    D(a,\rho,\omega, k) \doteq \frac{1}{1-a_{0,\omega}}\Big(1 + \frac{C_0 k^4}{a(\frac{1}{2}q_k(1-2\omega) - \delta + \rho)} \Big)\Big(1 + C_1 k^2 - 2 \rho \Big).
\end{equation*}
To close the estimate, it suffices to check $D(a,\rho,\omega,k) < 1$ for suitable choices of $a$, $B$. Provided $\rho \leq \frac{1}{4}$ it is straightforward to check that the second factor in the definition of $D(a,\rho,\omega, k)$ is decreasing in $\rho$ and the third factor non-negative. Therefore,

\begin{equation*}
    D(a,\rho,\omega, k) \leq \frac{1}{1-a_{0,\omega}}\Big(1 + \frac{C_0 k^4}{a(\frac{1}{2}q_k(1-2\omega) - \delta)} \Big)\Big(1 + C_1 k^2 - 2 \rho \Big).
\end{equation*}
Set $a = k^2$. For $k$ sufficiently small we have $k^2 < 1 - 2\omega$, and thus this is a permissible choice for $a$. With $c_\alpha$ denoting positive, $\alpha$-dependent constants, we may write
$a_{0,\omega} \leq c_\alpha k^2$ and $\frac12 q_k (1-2\omega) - \delta \geq c_\alpha. $
It follows that for $k$ sufficiently small depending on $c_\alpha$, 
\begin{equation*}
    D(a,\rho,\omega,k) \leq (1 + 2 c_\alpha k^2)(1 + \frac{C_0}{c_\alpha}k^2 ) ( 1 + C_1 k^2 - 2 B k^2).
\end{equation*}
It now suffices to choose $B$ large in relation to $C_0, C_1, c_\alpha$ in order to render this term strictly less than $1$ for all $k$ small.

To conclude the proof, we translate control on the bulk integral to the stated pointwise bounds. By (\ref{eq:energyest1})--(\ref{eq:energyest2_0}), we bound
\begin{align*}
    \sup_{s\geq0 }\|\partial_z(r \varphi) \|_{L^2_z(\{s=s_0\})} + \sup_{s\geq0 }\||z|^{\omega} \partial_z^2(r \varphi) \|_{L^2_z(\{s=s_0\})} \lesssim  e^{-(1-Bk^2)s_0}.
\end{align*}
One-dimensional Sobolev embedding on $\{s=s_0, z \leq z_0 < 0\}$ gives pointwise decay for $\partial_z (r \varphi)$ in $L^\infty(\{s=s_0\})$ in the near-axis region. The $L^2_z$ control on $|z|^{\omega}\partial_z^2(r\varphi)$ is sufficient to extend this near-axis bound to one on the whole interval $z \in [-1,0]$. We are using here that $|z|^{-\omega} \in L^2_z([-1,0])$.

Integrating from the axis and applying the boundary condition $r\varphi\big|_{\Gamma}=0$, we conclude an identical pointwise decay bound for $r \varphi$. 

To see the bound for $\partial_s (r\varphi)$, we turn to the wave equation (\ref{eq:wavesimfull}) and use the boundary condition $\partial_s(r\varphi)\big|_{\Gamma}=0$, as $\partial_s$ is tangent to the axis. Note that the angular derivative terms in (\ref{eq:wavesimfull}) drop out, as we are working with the spherically symmetric component of the solution. It follows that 
\begin{align*}
    |\partial_s(r \phi)(s_0,z)| &\lesssim \| |z| \partial_z^2(r\phi)\|_{L^1_z(\{s=s_0\})} + \|  \partial_z(r\phi)\|_{L^1_z(\{s=s_0\})} + \|  r\phi\|_{L^1_z(\{s=s_0\})} \\
    &\lesssim e^{-(1-Bk^2)s_0}.
\end{align*}
Applying the transformation rules in Table \ref{table:1}, the desired bounds in double-null coordinates follow.
\end{proof}

The energy estimates also imply a boundedness statement for spherically symmetric solutions to (\ref{eq:wavesim}).

\begin{prop}
    \label{prop:terrible_boundedness}
    Fix a parameter $\alpha \in (1,2)$. There exists $k$ small independent of $\alpha$ such that for any $(\epsilon_0,k)$-admissible spacetime and spherically symmetric initial data to (\ref{eq:wavesim}) with regularity $\varphi_0(z) \in C^{\alpha}_{(hor)}([-1,0])$, there exists a constant $C$ depending on the data such that for all $u \in [-1,0)$ we have the pointwise bounds
    \begin{align}
        \|\br{r} \varphi\|_{L^\infty(\Sigma_u)} &\lesssim C|u|^{-1} , \label{eq:terribleboundednessbd1}\\
        \|\partial_{u}(\br{r} \varphi)\|_{L^\infty(\Sigma_u)} &\lesssim C |u|^{-2}, \label{eq:terribleboundednessbd2}\\
        \|\partial_{v}(\br{r} \varphi)\|_{L^\infty(\Sigma_u)} &\lesssim C |u|^{-2+ k^2}. \label{eq:terribleboundednessbd3}
    \end{align}
    Moreover, the higher derivative bounds hold
    \begin{align}
        \|\partial_u^i \partial_v^j (\br{r} \varphi) \|_{L^\infty(\Sigma_u)} \lesssim C |u|^{-1-i-q_k j}, \quad 1 \leq i+j \leq 5, \ j \leq 1.
    \end{align}
\end{prop}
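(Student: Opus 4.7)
The strategy is to apply the first- and second-order multiplier estimates of Propositions \ref{prop:firstordermultiplierest} and \ref{prop:secondordermultiplierest_l0} directly to $\psi = r\varphi$, with the distinguished parameter choice $\rho = q_k$ (so that $\psi_\rho = \psi$). The key point is that in (\ref{eq:energyest1}) the bulk coefficient becomes $1 + C_1 k^2 - 2q_k = -1 + (C_1 + 2)k^2$, which is strictly negative for $k$ sufficiently small. In contrast to the bootstrap argument used in the proof of Proposition \ref{prop:basicdecayphysicalspace}, this unfavorable term can now simply be absorbed to the left, yielding unconditional uniform-in-$s_0$ control of $\|\partial_z \psi\|_{L^2_z(\{s=s_0\})}$, of the trace $\|\partial_z \psi\|_{L^2_s(\Gamma_{s_0})}$, and of the bulk norm $\|\partial_z \psi\|_{L^2_{s,z}(\mathcal{R}(s_0))}$, all bounded by the data norm $\|\varphi_0\|_{\mathcal{C}^\alpha_{(hor)}([-1,0])}$.

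At second order, I would apply (\ref{eq:energyest2_0}) with the same $\rho = q_k$ and weight exponent $\omega$ chosen to match the regularity of $\varphi_0$: $\omega = 0$ when $\alpha > \tfrac{3}{2}$, and $\omega = \tfrac{3}{2} - \alpha + \epsilon$ for $\alpha \leq \tfrac{3}{2}$ with $\epsilon$ small. For $\rho = q_k$ the bulk coefficient $q_k(1-2\omega) - C_2 k^2 + 2 q_k$ is bounded below by a positive constant uniformly in $\omega \in [0, \tfrac{1}{2})$ for $k$ sufficiently small (independent of $\alpha$), while the trace term $C_1 k^2 \|\partial_z \psi\|^2_{L^2_s(\Gamma_{s_0})}$ on the right is already controlled from the first-order step. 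This yields uniform-in-$s_0$ control of $\||z|^\omega \partial_z^2 \psi\|_{L^2_z(\{s=s_0\})}$.

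To pass to pointwise bounds, I would follow the procedure at the end of the proof of Proposition \ref{prop:basicdecayphysicalspace}: one-dimensional Sobolev embedding on $\{s = s_0, \, z \leq z_0 < 0\}$ gives pointwise control of $\partial_z \psi$ away from the horizon, and the $L^2_z$ bound on $|z|^\omega \partial_z^2 \psi$ together with $|z|^{-\omega} \in L^2_z([-1,0])$ (valid since $\omega < \tfrac{1}{2}$) extends this to a pointwise bound on the full slice. Integrating from the axis, where $\psi|_\Gamma = 0$, yields pointwise control of $\psi$ itself, and solving (\ref{eq:wavesim}) at $\ell = 0$ for $\partial_s \psi$ together with the boundary condition $\partial_s \psi|_\Gamma = 0$ supplies the missing $s$-derivative bound. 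Translating back via Table \ref{table:1} gives (\ref{eq:terribleboundednessbd1})--(\ref{eq:terribleboundednessbd3}).

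The higher-derivative bounds require commuting (\ref{eq:wavesim}) by $\partial_s^i \partial_z^j$ with $j \leq 1$ and $1 \leq i+j \leq 5$, and repeating the argument at each commuted order. Commutation by $\partial_s$ is harmless, since $V_k, L_{k,0}$ are $s$-independent and the $(\epsilon_0, k)$-admissibility encodes sufficient $s$-regularity of the perturbative term $\mathcal{E}_{p,0}$; the single allowed $\partial_z$ commutator is absorbed exactly as in the proof of Proposition \ref{prop:secondordermultiplierest_l0}. The main obstacle I anticipate is the careful tracking of $|z|$ weights after commutation and the verification that the favorable bulk coefficient at $\rho = q_k$ persists at each commuted order; the restriction $j \leq 1$ in the statement is precisely what keeps us below the regularity threshold for $V_k$ recorded in (\ref{eq:potentialest3.5}), so that no additional singular weights beyond those already introduced through $\omega$ are generated in the commutation.
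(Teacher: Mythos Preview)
Your proposal is correct and matches the paper's approach essentially line for line: set $\rho = q_k$ so that the first-order bulk term in (\ref{eq:energyest1}) becomes negative and can be absorbed, feed the resulting trace control into (\ref{eq:energyest2_0}) whose bulk coefficient is now comfortably positive, convert $L^2$ to $L^\infty$ as in Proposition~\ref{prop:basicdecayphysicalspace}, and commute with $\partial_s$ for the higher derivatives. The paper's sketch is terser than yours (it simply says ``commute repeatedly with $\partial_s$'' rather than spelling out the $\partial_s^i\partial_z^j$ bookkeeping), but the argument is the same.
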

\begin{proof}
The proof follows by the same strategy as in Proposition \ref{prop:basicdecayphysicalspace}---close the multiplier estimates for some value of $\rho$, and repeatedly integrate the resulting $L^2$ estimates (at high orders of derivatives) to control quantities in $L^\infty$ (with the loss of a derivative). We therefore only sketch the details.

Choose $\rho = q_k$. It follows that for $k$ sufficiently small, the bulk term appearing on the right hand side of (\ref{eq:energyest1}) becomes negative, and we thus conclude control on $\partial_z \psi_{\rho} \in L^2_z(\{s=s_0\})$ in terms of data. This control allows us to close the second order estimate (\ref{eq:energyest2_0}), controlling $|z|^{\omega}\partial_z^2 \psi_{\rho} \in L^2_z(\{s=s_0\})$ for some power $\omega \in [0,\frac12]$. Commuting repeatedly with $\partial_s$, applying the multiplier estimates, using the fundamental theorem of calculus and the equation (\ref{eq:wavesim}), and transforming back to double-null coordinates, we conclude the stated estimates.

\end{proof}

\subsection{Elementary bound for a wave equation in hyperbolic coordinates}
In this section we prove a basic pointwise bound for solutions to the inhomogeneous, spherically symmetric equation
\begin{equation}
    \label{eq:elementarygrowthbound_hyper}
    \begin{cases}
        \partial_t^2 \psi - \partial_x^2 \psi + 4q_k e^{-2q_k x}V(t,x)\psi = F(t,x),\\[\jot]
        \psi(t,0) = 0\\[\jot]
        (\psi(0,x), \partial_t\psi(0,x)) = (f_1(x),f_2(x)) \in W^{3,2}_x(\mathbb{R}_+) \times W^{2,2}_x(\mathbb{R}_+),
    \end{cases}
\end{equation}
where the forcing satisfies 
\begin{equation*}
     F(t,x) \in W^{3,1}_t L^2_x(\mathbb{R}_+ \times \mathbb{R}_+),
\end{equation*}
and $V(t,x)$ is defined in (\ref{eq:combinedpotentials}).
\begin{lemma}
    \label{lem:pointwiseboundforeasywaveequation}
    Fix an $(\epsilon_0, k)$-admissible extended background. The solution to (\ref{eq:elementarygrowthbound_hyper}) satisfies the pointwise bound
    \begin{equation}
        \label{eq:pointwisebdforcedeqn}
        \max_{0 \leq j \leq 3}\sup_{ t \geq 0 }|\partial_t^j \psi(t,x)| \lesssim C \sqrt{x},
    \end{equation}
    for a constant $C>0$ satisfying 
    \begin{equation*}
        C \lesssim \|f_1 \|_{W^{3,2}_x(\mathbb{R}_+)} + \|f_2 \|_{W^{2,2}_x(\mathbb{R}_+)} + \sum_{j=0}^3 \|\partial_t^j F\|_{L^1_t L^2_x(\mathbb{R}_+ \times \mathbb{R}_+)}.
    \end{equation*}
\end{lemma}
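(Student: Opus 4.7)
The plan is to reduce the pointwise bound to a standard energy estimate for a $1$D wave equation on the half line, exploiting the Dirichlet boundary condition $\psi(t,0) = 0$. Since this condition is preserved by $\partial_t$, we have $\partial_t^j\psi(t,0) = 0$ for all $j \geq 0$, and hence by the fundamental theorem of calculus and Cauchy--Schwarz,
\begin{equation*}
    |\partial_t^j \psi(t,x)|^2 = \Big|\int_0^x \partial_x \partial_t^j \psi(t,y)\, dy\Big|^2 \leq x\, \|\partial_x \partial_t^j \psi(t,\cdot)\|_{L^2_x(\mathbb{R}_+)}^2.
\end{equation*}
Thus it suffices to establish the uniform-in-$t$ bound $\|\partial_x \partial_t^j \psi(t,\cdot)\|_{L^2_x} \lesssim C$ for $0 \leq j \leq 3$.

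To do so, I would commute the equation in (\ref{eq:elementarygrowthbound_hyper}) with $\partial_t^j$. The resulting equation for $\psi_j \doteq \partial_t^j \psi$ reads
\begin{equation*}
    \partial_t^2 \psi_j - \partial_x^2 \psi_j + 4q_k e^{-2q_k x}V(t,x) \psi_j = \partial_t^j F - \sum_{i=1}^{j}\binom{j}{i}4q_k e^{-2q_k x}(\partial_t^i V)(\partial_t^{j-i}\psi),
\end{equation*}
with Dirichlet condition at $x=0$. Multiplying by $\partial_t \psi_j$ and integrating by parts over $[0,\infty)_x$ yields the standard energy identity for the quantity
\begin{equation*}
    E_j(t) \doteq \tfrac12 \int_0^\infty \Big[(\partial_t \psi_j)^2 + (\partial_x \psi_j)^2 + 4q_k e^{-2q_k x}V(t,x)\psi_j^2\Big]\, dx.
\end{equation*}
Since $V(t,x)$ and $\partial_t^i V$ are bounded uniformly (by Proposition \ref{lem:propertiesofV} applied to $V_k$ and the $(\epsilon_0,k)$-admissibility bounds (\ref{eq:admissiblebounds1})--(\ref{eq:admissiblebounds4}) applied to $V_{k,p}$), the zeroth order term is controlled by $E_j(t)$ plus lower-order energies, and differentiating gives
\begin{equation*}
    \frac{d}{dt}\sqrt{E_j(t)} \lesssim \|\partial_t^j F(t,\cdot)\|_{L^2_x} + \sum_{i < j} \sqrt{E_i(t)}.
\end{equation*}

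I would then close the argument inductively in $j$, starting from $j=0$, applying Grönwall to each layer. Integrating in $t$ yields
\begin{equation*}
    \sqrt{E_j(t)} \lesssim \sqrt{E_j(0)} + \sum_{i=0}^{j}\int_0^\infty \|\partial_t^i F(t',\cdot)\|_{L^2_x}\, dt'.
\end{equation*}
The initial energies $E_j(0)$ must be expressed in terms of the given data $(f_1,f_2)$: using the equation at $t=0$ one converts $\partial_t^j \psi(0,x)$ for $j \geq 2$ into at most $j$ spatial derivatives of $f_1$ or $f_2$, together with derivatives of $F$ at $t=0$ (which are controlled by the $W^{3,1}_tL^2_x$ norm via Sobolev trace in $t$). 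This yields $\sqrt{E_j(0)} \lesssim \|f_1\|_{W^{3,2}_x} + \|f_2\|_{W^{2,2}_x} + \sum_i \|\partial_t^i F\|_{L^1_t L^2_x}$.

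The only mildly delicate point is bookkeeping of the commutator terms involving $\partial_t^i V$, together with the recovery of $\partial_t^j \psi(0,x)$ in the prescribed Sobolev class from the data; both are routine given the assumed admissibility of the background and the regularity of the data. Combining the energy bound with the Cauchy--Schwarz observation above then yields (\ref{eq:pointwisebdforcedeqn}).
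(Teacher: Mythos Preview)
Your reduction of the pointwise bound to a uniform $L^2_x$ energy bound via the Dirichlet condition and Cauchy--Schwarz is exactly right, and matches the paper. The gap is in the energy estimate itself.

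Your differential inequality $\frac{d}{dt}\sqrt{E_j}\lesssim \|\partial_t^j F\|_{L^2_x}+\sum_{i<j}\sqrt{E_i}$ does \emph{not} integrate to a uniform-in-$t$ bound: once $\sqrt{E_0}$ is shown to be bounded, the $j=1$ inequality yields $\sqrt{E_1(t)}\lesssim \sqrt{E_1(0)}+\int_0^t\|\partial_t F\|+C\epsilon_0\,t$, i.e.\ linear growth. Worse, your inequality omits the term coming from $\partial_t$ hitting $V$ in the energy functional, which contributes an additional $\|\partial_t V\|_{L^\infty_x}\sqrt{E_j}$ on the right; since $\|\partial_t V_{k,p}(t,\cdot)\|_{L^\infty_x}\sim\epsilon_0 k^2$ for \emph{all} $t$ (the sup in $x$ is achieved near $x\approx t$), Gr\"onwall only gives $e^{c\epsilon_0 t}$ growth even at $j=0$. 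The issue is that $\partial_t^i V_{k,p}$ lies in $L^\infty_xL^1_t$ (this is what the admissibility decay $|u|^2\sim e^{-2s}$ gives) but \emph{not} in $L^1_tL^\infty_x$, and the differential-inequality route implicitly needs the latter.

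The paper instead integrates the energy identity over a truncated light-cone $\{0\le t\le t_0,\ 0\le x\le 2t_0-t\}$, drops the favorable outgoing flux on the lateral boundary, and bounds the spacetime bulk term $\iint e^{-2q_k x}|\partial_t V_{k,p}|\,\psi^2$ by Fubini: at each fixed $x$, $\int_0^\infty|\partial_t V_{k,p}|\,dt\lesssim\epsilon_0$, so the bulk is $\lesssim\epsilon_0\sup_{t'}\int e^{-2q_k x}\psi^2\,dx$ and can be absorbed for $\epsilon_0$ small. The commutator terms at higher $j$ are handled the same way (Cauchy--Schwarz in the bulk, then absorb). The cone also removes any need to justify vanishing of boundary terms at $x=\infty$, which your full-half-line integration leaves unaddressed.
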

\begin{proof}
    Fix a $t_0 > 0$. Multiplying (\ref{eq:elementarygrowthbound_hyper}) by $\partial_t \psi$ and integrating by parts in the truncated lightcone $\mathcal{B}(t,t_0)\doteq\{0 \leq t \leq t_0, \ 0 \leq x \leq 2t_0-t\}$ yields the energy estimate 
\begin{align*}
    \sup_{0 \leq t' \leq t_0} \int_{\{t=t',\ 0 \leq x \leq  2t_0-t'\}} &\big((\partial_t \psi)^2 + (\partial_x \psi)^2  + e^{-2q_k x} \psi^2 \big)dx \\[\jot]
    &\lesssim \int_{\{t=0\}} \big((\partial_t \psi)^2 + (\partial_x \psi)^2  + e^{-2q_k x} \psi^2 \big)dx +  \int_{0}^{t_0} \|F\|_{L^2_x([0,2t_0-t])} dt  \\[\jot]
    &\hspace{2em}+ \iint_{\mathcal{B}(t,t_0)}\big|\partial_t V_{k,p} \big| e^{-2q_k x}\psi^2 dtdx  \\[\jot]
    \lesssim &\int_{\{t=0\}} \big((\partial_t \psi)^2 + (\partial_x \psi)^2  + e^{-2q_k x} \psi^2 \big)dx +  \int_{0}^{t_0} \|F\|_{L^2_x([0,2t_0-t])} dt  \\[\jot]
    &\hspace{2em}+ \epsilon_0 \sup_{0 \leq t' \leq t_0}   \int_{\{t=t',\ 0 \leq x \leq  2t_0-t'\}}  e^{-2q_k x} \psi^2 dx
 \end{align*}
We have dropped the boundary term along $\{x = 2t_0 - t\},$ which has a favorable sign. For $\epsilon_0$ sufficiently small, we may absorb the remaining error term and conclude the estimate on $\sup_{t' \geq 0}\|\partial_x \psi \|_{L^2_x(\{t=t'\})}$. As $\psi(t,0)=0$ we may integrate outwards in $x$ to conclude for any $t'$, 
\begin{align*}
    \|\psi\|_{L^\infty(\{t=t'\})} &\leq \|\partial_x \psi\|_{L^1_x(\{t=t'\})} \\
    &\leq \|\partial_x \psi\|_{L^2_x(\{t=t'\})}\sqrt{x} \\
    &\leq C(\|F\|_{L^1_t L^2_x}, \|f_1\|_{W^{1,2}_x}, \|f_2\|_{L^2_x})\sqrt{x}.
\end{align*}
Commuting (\ref{eq:elementarygrowthbound_hyper}) by $\partial_t^{j}$ and applying an identical argument gives the remaining cases of (\ref{eq:pointwisebdforcedeqn}).
\end{proof}
\section{Scattering theory on $k$-self-similar backgrounds}
\label{sec:scattering}

\subsection{Special functions}
\label{subsec:specialfns}
In this section we recall definitions for a class of special functions, including the modified Bessel functions and digamma functions. References for the material here include \cite{olver, NIST:DLMF}.

We first introduce modified Bessel's equation with complex order $\hat{\sigma} \in \mathbb{C}$. 

\begin{definition}
    \label{def:propertiesofBessel}
    The modified Bessel equation is the following second order ordinary differential equation for a function $f(y):(0,\infty) \rightarrow \mathbb{C}$:
    \begin{equation}
        \label{eq:modifiedbessel}
        y^2 \frac{d^2f}{dy^2} + y \frac{df}{dy} - (y^2 + \hat{\sigma}^2)f = 0.
    \end{equation}
    An independent set of solutions to (\ref{eq:modifiedbessel}) is given by the modified Bessel functions of the first and second kind, denoted $I_{\hat{\sigma}}(y), K_{\hat{\sigma}}(y)$ respectively.

    These solutions enjoy the following properties:
    \begin{itemize}
        \item For fixed $\hat{\sigma} \in \mathbb{C}$, $I_{\hat{\sigma}}(y), K_{\hat{\sigma}}(y) $ are smooth functions of $y \in (0,\infty)$.
        \item For fixed $y \in (0,\infty)$, $I_{\hat{\sigma}}(y), K_{\hat{\sigma}}(y) $ are entire functions of $\hat{\sigma}$.
        \item For $\hat{\sigma} \notin \mathbb{Z}$, the set $\{I_{\pm \hat{\sigma}}(y)\}$ is linearly independent, with Wronskian\footnote{For single-variable functions $f(x), g(x) \in C^1(I)$ defined on an open interval $I$, the Wronskian is defined as $W[f,g](x) \doteq f(x)g'(x) - f'(x)g(x)$.}
        \begin{equation}
            \label{bessel:wronskian}
            W[I_{\hat{\sigma}}, I_{-\hat{\sigma}}](y) = -\frac{2 \sin(\pi \hat{\sigma})}{\pi y}.
        \end{equation}
        \item For $\hat{\sigma} \notin -\mathbb{N}$, $I_{\hat{\sigma}}(y)$ admits a convergent series expansion 
        \begin{equation}
            \label{eq:model1}
            I_{\hat{\sigma}}(y) = \Big(\frac{y}{2}\Big)^{\hat{\sigma}} \sum_{m=0}^{\infty} \frac{1}{m!\Gamma(m +\hat{\sigma}+1)}\Big(\frac{y}{2}\Big)^{2m},
        \end{equation} 
        where $\Gamma(z)$ denotes the gamma function. In particular, $I_{\hat{\sigma}}(y) \sim y^{\hat{\sigma}}$ as $y \rightarrow 0$.
        \item For $\hat{\sigma} \in - \mathbb{N}$, (\ref{eq:model1}) holds formally after dropping the first $|\hat{\sigma}|$ terms. In this case $\{I_{\pm \hat{\sigma}}(y)\}$ are linearly dependent, and $I_{\hat{\sigma}}(y) \sim y^{-\hat{\sigma}}$ as $y \rightarrow 0$.
        \item For $\hat{\sigma} \notin \mathbb{Z}$, we can relate $I_{\pm \hat{\sigma}}, K_{\hat{\sigma}}$ by the formula 
        \begin{equation}
            \label{eq:KvsI_Bessel}
            K_{\hat{\sigma}}(y) = \frac{\pi}{2}\frac{I_{-\hat{\sigma}}(y)-I_{\hat{\sigma}}(y)   }{\sin(\pi \hat{\sigma})}.
        \end{equation}
    \end{itemize}

\end{definition}

Provided $\hat{\sigma} \notin \mathbb{Z}$, (\ref{bessel:wronskian}) implies that $\{I_{\pm \hat{\sigma}}(y)\}$ form a basis of solutions to (\ref{eq:modifiedbessel}). We will largely use this basis for computations, given the series expansion (\ref{eq:model1}). In a neighborhood of $\hat{\sigma}=1$ this basis is no longer valid, however, and we will require certain additional estimates on $K_{\hat{\sigma}}(y)$. The remainder of this section is dedicated to this goal.

\begin{definition}
    The digamma function $\Psi^{(0)}(\sigma)$ is the complex derivative of the logarithm of the gamma function,
    \begin{equation*}
        \Psi^{(0)}(\sigma) \doteq \frac{d}{d\sigma} \ln \Gamma(\sigma) = \frac{\Gamma'(\sigma)}{\Gamma(\sigma)}.
    \end{equation*}
    In $\{\Re \sigma > 0\}$ the digamma function is holomorphic, and satisfies the asymptotic estimate
    \begin{align}
        \label{est:digammafun}
        \Psi^{(0)}(\sigma) &\sim \ln \sigma + O(|\sigma|^{-1}).
    \end{align}
\end{definition}

\vspace{1em}
\noindent
In the following, let $\mathbb{B}_{\frac{1}{2}}(1)$ denote the open ball of radius $\frac12$ in $\mathbb{C}$, centered on $1$.

\begin{lemma}
The modified Bessel function of the second kind $K_{\hat{\sigma}}(y)$ satisfies the estimates 
\begin{align}
    \label{eq:BesselK_estimates}
    \sup_{\hat{\sigma} \in \mathbb{B}_{\frac{1}{2}}(1)} \sup_{y \in (0,1]}\bigg| \big(y \partial_y\big)^j (y^{\hat{\sigma}}K_{\hat{\sigma}})(y)\bigg| \lesssim 1, \ \ (0 \leq j \leq 2).
\end{align}
\end{lemma}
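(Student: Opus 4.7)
The plan is to prove the bound via Cauchy's integral formula applied to the complex parameter $\hat\sigma$, exploiting the fact that $K_{\hat\sigma}(y)$ is entire in $\hat\sigma$ (for each fixed $y > 0$), and likewise $y^{\hat\sigma} = e^{\hat\sigma \ln y}$. This sidesteps the apparent singularity of (\ref{eq:KvsI_Bessel}) at $\hat\sigma = 1$, which is removable but not directly usable: rather than extract the singular terms of $K_{\hat\sigma}$ at integer orders by hand, I work on a larger contour avoiding $\mathbb{Z}$ and transfer control to its interior.

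First, from the Bessel recurrence $K'_\tau(y) = -K_{\tau-1}(y) - \tfrac{\tau}{y}K_\tau(y)$ a direct calculation gives the algebraic identities
\begin{align*}
(y\partial_y)\bigl(y^{\hat\sigma}K_{\hat\sigma}(y)\bigr) &= -\,y^{\hat\sigma+1}K_{\hat\sigma-1}(y), \\
(y\partial_y)^2\bigl(y^{\hat\sigma}K_{\hat\sigma}(y)\bigr) &= -2\,y^{\hat\sigma+1}K_{\hat\sigma-1}(y) + y^{\hat\sigma+2}K_{\hat\sigma-2}(y).
\end{align*}
Setting $G_m(\tau, y) \doteq y^{\tau+m}K_\tau(y)$, the claimed bound (\ref{eq:BesselK_estimates}) reduces to
\begin{equation*}
\sup_{\tau \in \overline{\mathbb{B}_{1/2}(\tau_0)}}\,\sup_{y\in (0,1]}\bigl|G_m(\tau,y)\bigr| \;\lesssim\; 1
\end{equation*}
for the three pairs $(m,\tau_0) \in \{(0,1),\,(2,0),\,(4,-1)\}$, one for each value of $j \in \{0,1,2\}$.

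Each bound is obtained identically. Since $\tau \mapsto G_m(\tau, y)$ is entire, Cauchy's integral formula on the contour $C_{\tau_0} \doteq \{|\tau - \tau_0| = 3/4\}$ yields
\begin{equation*}
\sup_{\tau \in \overline{\mathbb{B}_{1/2}(\tau_0)}}|G_m(\tau,y)| \;\leq\; 3\,\sup_{\tau \in C_{\tau_0}}|G_m(\tau,y)|,
\end{equation*}
since the denominator $|\tau-\tau'|$ in the kernel is bounded below by $3/4 - 1/2 = 1/4$. On $C_{\tau_0}$ the distance from $\tau$ to the nearest integer is at least $1/4$ (minimized at the real points $\tau_0 \pm 3/4$), so using $|\sin z|^2 = \sin^2(\Re z) + \sinh^2(\Im z)$ one has $|\sin(\pi\tau)| \geq \tfrac{\sqrt{2}}{2}$ uniformly on $C_{\tau_0}$. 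Substituting the series (\ref{eq:model1}) into (\ref{eq:KvsI_Bessel}) then gives
\begin{equation*}
G_m(\tau,y) \;=\; \frac{\pi}{2\sin(\pi\tau)}\Bigl(2^{\tau}\,y^{m}\, S_-(\tau,y) \;-\; 2^{-\tau}\,y^{2\tau+m}\, S_+(\tau,y)\Bigr),
\end{equation*}
where $S_\pm(\tau,y) \doteq \sum_{k\geq 0} (y/2)^{2k}/\bigl(k!\,\Gamma(k\pm\tau+1)\bigr)$ is entire in $\tau$ (as $1/\Gamma$ is entire) and bounded on $C_{\tau_0}\times(0,1]$ by direct comparison with $\sum_k 1/(k!)^2$ via Stirling.

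It remains to verify that $|y^m|$ and $|y^{2\tau+m}|$ stay bounded for $y \in (0,1]$ on $C_{\tau_0}$. The first is immediate from $m \geq 0$. For the second, $\Re\tau \geq \tau_0 - 3/4$ on $C_{\tau_0}$, so the condition $|y^{2\tau+m}| = y^{2\Re\tau + m} \leq 1$ reduces to $m \geq 3/2 - 2\tau_0$; for the three pairs $(m,\tau_0) = (0,1),(2,0),(4,-1)$ this reads $m \geq -1/2,\,3/2,\,7/2$ respectively, satisfied in each case. The main obstacle handled by the reduction above is precisely ensuring $2\Re\tau + m \geq 0$ as $y \to 0^+$: without the additional $y$-prefactors generated by the differentiation identities, the $y^{2\tau+m}$ factor in the series representation would diverge and the estimate would fail for $j \geq 1$; conversely, the Cauchy-contour step is what absorbs the apparent $\sin(\pi\tau)^{-1}$ singularity at integer orders.
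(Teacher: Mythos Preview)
Your proof is correct and takes a genuinely different route from the paper's. The paper works directly with the series representation (\ref{eq:KvsI_Bessel}) for $\hat\sigma \in \mathbb{B}_{1/2}(1)\setminus\{1\}$, isolates the simple pole of $\Gamma(1-\hat\sigma)$ at $\hat\sigma = 1$, and exhibits by hand a linear vanishing in $(\hat\sigma-1)$ of the coefficient functions $\mathrm{I}_m(y)$ to cancel it; this requires tracking relations between the constants $c_j$ at successive orders and invoking digamma asymptotics to bound the derivatives $f_m'(\hat\sigma)$. Your argument instead uses the recurrence $K'_\tau = -K_{\tau-1} - \tfrac{\tau}{y}K_\tau$ to rewrite $(y\partial_y)^j(y^{\hat\sigma}K_{\hat\sigma})$ as a linear combination of the quantities $y^{\tau+m}K_\tau$, and then applies Cauchy's integral formula on a circle of radius $3/4$ to transfer the estimate to a contour on which $|\sin(\pi\tau)|$ is uniformly bounded below---so the series expansion can be used without any delicate cancellation. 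What your approach buys is cleanliness and easy extensibility (higher $j$, other integer centers), at the cost of being less explicit about the mechanism by which the removable singularity disappears; what the paper's approach buys is a direct computation that stays within the ball $\mathbb{B}_{1/2}(1)$ and makes the cancellation structure visible, at the cost of the digamma machinery. Both are complete.
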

\begin{proof}
The strategy will be to apply the series expansions and (\ref{eq:KvsI_Bessel}) for $\{I_{\pm \hat{\sigma}}\}$, $\hat{\sigma} \in \mathbb{B}_{\frac{1}{2}}(1) \setminus \{1\}$. We may relate $\sin (\pi \hat{\sigma})$ with gamma functions using the reflection formula 
\begin{equation}
    \label{eq:reflectionGamma}
    \sin(\pi \hat{\sigma}) = \frac{\pi}{\Gamma(\hat{\sigma})\Gamma(1-\hat{\sigma})}, \ \ (\hat{\sigma} \notin \mathbb{Z}).
\end{equation}  
This yields
\begin{align*}
\Big(\frac{y}{2}\Big)^{\hat{\sigma}}K_{\hat{\sigma}}(y) &=  \sum_{m=0}^{\infty} \frac{\Gamma(\hat{\sigma})\Gamma(1-\hat{\sigma})}{2m!} \bigg(\frac{1}{\Gamma(m +1 - \hat{\sigma})} - \Big(\frac{y}{2} \Big)^{2\hat{\sigma}}\frac{1}{\Gamma(m+1+\hat{\sigma})} \bigg)\Big(\frac{y}{2} \Big)^{2m} \\[2\jot]
&= \frac{\Gamma(\hat{\sigma})}{2} + \sum_{m=0}^{\infty} \frac{\Gamma(\hat{\sigma})\Gamma(1-\hat{\sigma})}{2m!} \underbrace{\bigg(\Big(\frac{y}{2} \Big)^{2}\frac{1}{(m+1)\Gamma(m +2 - \hat{\sigma})} - \Big(\frac{y}{2} \Big)^{2\hat{\sigma}}\frac{1}{\Gamma(m+1+\hat{\sigma})} \bigg)}_{\text{I}_m(y)}\Big(\frac{y}{2} \Big)^{2m}.
\end{align*}
In order to cancel the simple pole of $\Gamma(1-\hat{\sigma})$, we need to establish that $\text{I}_m(y)$ vanishes linearly in $\hat{\sigma}$ as $\hat{\sigma} \rightarrow 1$, for all $y$. Define 
\begin{equation*}
    f_{m}(\hat{\sigma}) \doteq \Gamma(m+1+\hat{\sigma}) - (m+1)\Gamma(m+2-\hat{\sigma}),
\end{equation*}
which for $m \in \mathbb{Z}_{\geq 0}$ is holomorphic in $\mathbb{B}_{\frac12}(1)$. Compute 
\begin{align}
    |\text{I}_m(y)| &= \bigg| \Big(\frac{y}{2} \Big)^{2}\frac{\Gamma(m+1+\hat{\sigma}) - (m+1)\Gamma(m+2-\hat{\sigma})}{(m+1)\Gamma(m+2-\hat{\sigma})\Gamma(m+1+\hat{\sigma})} + \frac{1}{\Gamma(m+1+\hat{\sigma})}\Big(\Big(\frac{y}{2} \Big)^{2\hat{\sigma}}- \Big(\frac{y}{2} \Big)^{2}\Big)\bigg| \nonumber \\[2\jot]
    &\lesssim \frac{1}{(m+1)}\frac{\sup_{\mathbb{B}_{\frac12}(1)}|f_m'(\hat{\sigma})|}{\Gamma(m+2-\hat{\sigma})\Gamma(m+1+\hat{\sigma})}|1-\hat{\sigma}| + |1-\hat{\sigma}|.\label{4.1:eq:temp1}
\end{align}
We have used the inequality 
\begin{align*}
    \sup_{y\in [0,1]}\sup_{\mathbb{B}_{\frac{1}{2}}(1)}\Big|\Big(\frac{y}{2} \Big)^{2\hat{\sigma}}- \Big(\frac{y}{2} \Big)^{2} \Big| \lesssim |1-\hat{\sigma}|,
\end{align*}
which follows by considering the holomorphic function $g(\hat{\sigma}) = \big(\frac{y}{2} \big)^{2\hat{\sigma}}- \big(\frac{y}{2} \big)^{2}$ for \textit{fixed} $y \in [0,1]$, and applying Taylor's theorem for $\hat{\sigma} \in \mathbb{B}_{\frac{1}{2}}(1)$. It is easily verified that the derivative estimate is uniform in $y$.

To complete the $j=0$ case of (\ref{eq:BesselK_estimates}), we estimate $f_m'(\sigma)$. Differentiating and applying the bound (\ref{est:digammafun}) for the digamma function gives
\begin{align*}
    |f_m'(\sigma)| &= \big|\Gamma(m+1+\hat{\sigma}) \Psi^{(0)}(m+1+\hat{\sigma}) - (m+1)\Gamma(m+2-\hat{\sigma}) \Psi^{(0)}(m+2-\hat{\sigma})\big|\\[2\jot]
    &\lesssim \big|\Gamma(m+1+\hat{\sigma})\big|\big| \ln (m+1+\hat{\sigma})\big| + (m+1)\big|\Gamma(m+2-\hat{\sigma})\big| \big|\ln (m+2-\hat{\sigma})\big|.
\end{align*}
Inserting in (\ref{4.1:eq:temp1}) implies $I_m(y)$ is bounded uniformly in $y, \hat{\sigma},$ and $m$. 

Differentiating in $y$ and running the same argument gives the remaining cases of (\ref{eq:BesselK_estimates}).
\end{proof}

\subsection{Analysis of approximate spectral family $P_k^{(0)}(\sigma)$}
\label{sec:modeleqn}

We begin by defining two families of differential operators, each depending on a complex parameter $\sigma \in \mathbb{C}$. These so-called spectral families play a key role in motivating the construction of a scattering resolvent for solutions to (\ref{eq:1}).

\begin{definition}
For ${\sigma \in \mathbb{C}}$, define the following operators $P_{k}^{(0)}(\sigma)$, $P_k(\sigma)$ acting formally on $C^2_x(\mathbb{R}_{+})$ functions on the half-line:
\begin{align}
    \big(P_{k}^{(0)}(\sigma)f\big)(x) &\doteq -f''(x) + (\underbrace{4q_k \gamma_k}_{\doteq w_k^2} k^2 e^{-2q_k x} -\sigma^2)f(x), \\[\jot]
    \big(P_{k}(\sigma)f\big)(x) &\doteq -f''(x) + (4q_k \gamma_k e^{-2q_k x}V_k(x) -\sigma^2)f(x).
\end{align}
Here, $V_k(x)$ is as in (\ref{eq:wavehyper}), and $\gamma_k$ the constant in Lemma \ref{lem:propertiesofV}. For any $\epsilon > 0$ fixed,
\begin{equation}
    \label{eq:fullop_vsmodelop}
    P_k(\sigma) = P_k^{(0)}(\sigma) + O_{L^\infty}\big(k^2 e^{-4q_k(1-\epsilon)x}\big).
\end{equation}
\end{definition}

\vspace{1em}
\noindent
In this section we develop a complete understanding of the space of solutions to the \textit{approximate spectral equation}
\begin{equation}    
    \label{eq:modeleq}
    \big(P_{k}^{(0)}(\sigma)f\big)(x) = 0,
\end{equation}
as a function of $\sigma$ and asymptotic behavior for $x \rightarrow \infty$. To see the relation between (\ref{eq:modeleq}) and Bessel's equation (\ref{eq:modifiedbessel}) as considered in the previous section, we change coordinates from $x$ to $y \doteq w_k k q_k^{-1} e^{-q_k x}$. A computation gives that $f(y)$ obeys Bessel's equation on the domain $y \in (0,w_k k q_k^{-1}]$ with order
\begin{equation*}
    \hat{\sigma} \doteq i \frac{\sigma}{q_k}.
\end{equation*} 
Provided $\hat{\sigma} \notin \mathbb{Z}$, it follows from the discussion above that natural bases of solutions for (\ref{eq:modeleq}) are given in terms of $\{I_{\hat{\sigma}}, K_{\hat{\sigma}} \}$ or $\{I_{\pm \hat{\sigma}}\}$. We find it convenient to use the latter representation. Translating back to the original coordinates and normalizing gives the following pair of solutions to (\ref{eq:modeleq}):

\begin{lemma}
    Define 
    \begin{equation}
        \label{eq:defof_f_pm}
        f_{+,\sigma}(x) \doteq (w_k k)^{ \hat{\sigma}} I_{-\hat{\sigma}}\Big(\frac{w_k k}{q_k} e^{-q_k x} \Big), \quad f_{-,\sigma}(x) \doteq (w_k k)^{-\hat{\sigma}} I_{\hat{\sigma}}\Big(\frac{w_k k}{q_k} e^{-q_k x} \Big),
    \end{equation}
    where the principal branch of the logarithm is used in the definition of $a^z$, $z \in \mathbb{C}$. For all $\sigma \in \mathbb{C}$ the pair $\{f_{\pm,\sigma}(x)\}$ are solutions to (\ref{eq:modeleq}). Moreover, the solutions enjoy the following properties:
\begin{itemize}
    \item Provided $\sigma \notin \mp i q_k \mathbb{N}$, we have the asymptotic behavior
    \begin{align}
        f_{\pm,\sigma}(x) \sim e^{\pm i\sigma x} \ \textrm{as} \ x \rightarrow \infty,
    \end{align}
    in the sense that there exist non-zero constants $c_{\pm,\sigma}$ such that 
    \begin{align*}
        e^{\mp i \sigma x}f_{\pm,\sigma}(x) = c_{\pm,\sigma} + O(e^{-2q_k x}).
    \end{align*}
    \item The pair $\{f_{\pm,\sigma}(x)\}$ are linearly independent provided $\sigma \notin iq_k \mathbb{Z}$. The Wronskian is independent of $x$, and given by  
    \begin{equation}
        \label{eq:wronskianf+f-}
        W[f_{+,\sigma}, f_{-,\sigma}] = -\frac{2q_k}{\pi}\sin(\pi \hat{\sigma}).
    \end{equation}
\end{itemize}
\end{lemma}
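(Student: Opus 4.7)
The strategy is to reduce everything to the modified Bessel equation by a change of variables, then read off the three claims from the series expansion (\ref{eq:model1}) and the Wronskian identity (\ref{bessel:wronskian}). Introduce $y \doteq \frac{w_k k}{q_k}e^{-q_k x}$, so $\frac{d}{dx} = -q_k y \frac{d}{dy}$ and $\frac{d^2}{dx^2} = q_k^2\big(y^2\partial_y^2 + y\partial_y\big)$. A direct substitution, using $w_k^2 k^2 e^{-2q_k x} = q_k^2 y^2$, shows that $(P_k^{(0)}(\sigma)f)(x)=0$ is equivalent to $y^2 g''(y) + y g'(y) - (y^2 + \hat{\sigma}^2)g(y) = 0$, i.e., modified Bessel's equation (\ref{eq:modifiedbessel}) with order $\hat{\sigma} = i\sigma/q_k$. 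The ansatz $f_{\pm,\sigma}$ amounts to the choices $g_\pm(y) = I_{\mp\hat{\sigma}}(y)$, which solve (\ref{eq:modifiedbessel}) for all $\sigma \in \mathbb{C}$ by Definition \ref{def:propertiesofBessel}; the normalizing prefactors $(w_k k)^{\pm\hat{\sigma}}$ (with the principal branch of the logarithm) are holomorphic in $\sigma$ and do not disturb the ODE, yielding the first assertion.

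For the asymptotics as $x \to \infty$ (equivalently, $y\to 0$), assume for the $+$ case that $\hat{\sigma} \notin \mathbb{N}$, so $I_{-\hat{\sigma}}$ admits the convergent expansion (\ref{eq:model1}) with leading behavior $(y/2)^{-\hat{\sigma}}/\Gamma(1-\hat{\sigma})$ and next term of order $y^{-\hat{\sigma}+2}$. Substituting $y = \frac{w_k k}{q_k}e^{-q_k x}$ and using $e^{-q_k x \cdot (-\hat{\sigma})} = e^{i\sigma x}$ by the definition $\hat{\sigma} = i\sigma/q_k$, I obtain
\begin{equation*}
  e^{-i\sigma x}f_{+,\sigma}(x) = \frac{(2q_k)^{\hat{\sigma}}}{\Gamma(1-\hat{\sigma})} + O(e^{-2q_k x}),
\end{equation*}
so $c_{+,\sigma} \doteq (2q_k)^{\hat{\sigma}}/\Gamma(1-\hat{\sigma})$, which is nonzero precisely when $1-\hat{\sigma}\notin -\mathbb{Z}_{\geq 0}$, equivalently $\sigma \notin -iq_k\mathbb{N}$. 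The identical argument with $\hat{\sigma}\to -\hat{\sigma}$ gives $c_{-,\sigma} = (2q_k)^{-\hat{\sigma}}/\Gamma(1+\hat{\sigma}) \neq 0$ for $\sigma\notin iq_k\mathbb{N}$; the $O(e^{-2q_k x})$ remainder, uniform on compacts in $\sigma$, comes from bounding the tail of (\ref{eq:model1}) by a geometric series in $y^2$.

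For the Wronskian, a chain rule computation using $f_{\pm,\sigma}'(x) = -q_k y (w_k k)^{\pm\hat{\sigma}} I_{\mp\hat{\sigma}}'(y)$ gives
\begin{equation*}
  W[f_{+,\sigma},f_{-,\sigma}](x) = -q_k y \,W[I_{-\hat{\sigma}},I_{\hat{\sigma}}](y) = q_k y \cdot W[I_{\hat{\sigma}},I_{-\hat{\sigma}}](y) = -\frac{2q_k\sin(\pi\hat{\sigma})}{\pi},
\end{equation*}
where I used (\ref{bessel:wronskian}); the $y$-dependence cancels, confirming constancy of the Wronskian as it must for a second-order linear ODE with no first-order term. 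Linear independence of $\{f_{\pm,\sigma}\}$ is then equivalent to $\sin(\pi\hat{\sigma})\neq 0$, i.e., $\hat{\sigma}\notin\mathbb{Z}$, which translates to $\sigma\notin iq_k\mathbb{Z}$.

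There is essentially no hard step; the most delicate point is keeping track of the branches of $(w_k k)^{\pm\hat{\sigma}}$ and verifying that the apparent singularities of $I_{\mp\hat{\sigma}}$ at $\hat{\sigma}\in\pm\mathbb{N}$ are benign at the level of the \emph{solutions} $f_{\pm,\sigma}$ (they remain entire in $\sigma$ by Definition \ref{def:propertiesofBessel}), and only affect the asymptotic \emph{constants} $c_{\pm,\sigma}$ -- precisely the exceptional set appearing in the statement.
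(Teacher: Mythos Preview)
Your proof is correct and follows exactly the approach the paper sets up: the change of variables $y = w_k k q_k^{-1} e^{-q_k x}$ reducing (\ref{eq:modeleq}) to the modified Bessel equation is stated explicitly in the text immediately preceding the lemma, and the asymptotic constants $c_{\pm,\sigma} = (2q_k)^{\pm\hat\sigma}/\Gamma(1\mp\hat\sigma)$ you compute agree with the leading terms of the expansions (\ref{f+sigmaexpansion})--(\ref{f-sigmaexpansion}) derived in the subsequent lemma. The paper leaves this lemma unproved as a routine consequence of Definition~\ref{def:propertiesofBessel}, and your argument supplies precisely those details.
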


Although $\{f_{\pm,\sigma}(x)\}$ are a priori defined for all $\sigma \in \mathbb{C}$, we will be interested in their behavior restricted to a neighborhood of the real axis. For $a_1, a_2 \in \mathbb{R}$ with $a_1 < a_2$, define  
\begin{align*}
    \mathbb{I}_{[a_1,a_2]} \doteq \{z \in \mathbb{C}: \Im z \in [a_1,a_2]\},
\end{align*}
and similarly $\mathbb{I}_{(a_1,a_2)}$. 

In the remainder of the section we collect estimates on the pair $\{f_{\pm,\sigma}(x)\}$. The series expansion (\ref{eq:model1}) is our main tool for deriving bounds that track the joint behavior in $(x,\sigma)$, and $k$. 

\begin{lemma}
    Fix a parameter $\eta \in (0,\frac12)$. There exist functions $g_{\pm,\sigma}(x)$ smooth in $(\sigma,x)$, holomorphic in $\sigma$, and with $C^2_x(\mathbb{R}_{+})$ norm bounded uniformly in $(\sigma,x,k) \in \mathbb{I}_{[-1-\eta,\eta]} \times \mathbb{R}_{+} \times [0,\epsilon)$ for $\epsilon$ sufficiently small, such that 
        \begin{align}
            \label{f+sigmaexpansion}
            e^{-i\sigma x}f_{+,\sigma}(x) &= (2q_k)^{\hat{\sigma}}\frac{1}{\Gamma(2 - \hat{\sigma})}\Big(1 - \hat{\sigma} + w_k^2 k^2 \Big(\frac{1}{2q_k} \Big)^{2} e^{-2 q_k x} + w_k^4 k^4 g_{+,\sigma}(x) e^{-4 q_k x}\Big), \\[3\jot]
            \label{f-sigmaexpansion}
            e^{i\sigma x}f_{-,\sigma}(x) &= (2q_k)^{-\hat{\sigma}}\frac{1}{\Gamma(2 + \hat{\sigma})}\Big(1 + \hat{\sigma} + w_k^2 k^2 \Big(\frac{1}{2q_k} \Big)^{2} e^{-2 q_k x} + w_k^4 k^4 g_{-,\sigma}(x) e^{-4 q_k x}\Big),
        \end{align}
        holds for $(\sigma,x,k) \in \mathbb{I}_{[-1-\eta,\eta]} \times \mathbb{R}_{+} \times [0,\epsilon)$.
    
    \end{lemma}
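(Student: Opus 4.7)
The strategy is to insert the series (\ref{eq:model1}) for $I_{-\hat{\sigma}}$ directly into the definition (\ref{eq:defof_f_pm}) of $f_{+,\sigma}$, extract the two leading terms explicitly, and absorb the remaining tail into the function $g_{+,\sigma}(x)$. The analogous expansion for $f_{-,\sigma}$ then follows verbatim with $\hat{\sigma} \mapsto -\hat{\sigma}$.

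\textbf{Step 1: Compute the raw expansion.} With $y = \frac{w_k k}{q_k}e^{-q_k x}$ and $\hat{\sigma} = i\sigma/q_k$, the identity $(y/2)^{-\hat{\sigma}} = (w_k k/(2q_k))^{-\hat{\sigma}} e^{i\sigma x}$ combined with (\ref{eq:model1}) (valid since $-\hat{\sigma} \notin -\mathbb{N}$ on the strip) gives
\begin{equation*}
e^{-i\sigma x}f_{+,\sigma}(x) \;=\; (2q_k)^{\hat{\sigma}} \sum_{m=0}^{\infty} \frac{1}{m!\,\Gamma(m+1-\hat{\sigma})} \Big(\frac{w_k k}{2q_k}\Big)^{2m} e^{-2m q_k x}.
\end{equation*}

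\textbf{Step 2: Isolate the $m=0,1$ terms.} Using the functional equation $\Gamma(m+1-\hat{\sigma}) = \Gamma(2-\hat{\sigma})\prod_{j=2}^{m}(j-\hat{\sigma})$ (with the empty product convention for $m=1$) together with $\Gamma(1-\hat{\sigma})^{-1} = (1-\hat{\sigma})/\Gamma(2-\hat{\sigma})$, I pull out the common prefactor $1/\Gamma(2-\hat{\sigma})$ and read off the first two terms as $(1-\hat{\sigma})$ and $w_k^2 k^2 (2q_k)^{-2} e^{-2q_k x}$. The remaining tail, re-indexed by $n = m-2$, is
\begin{equation*}
g_{+,\sigma}(x) \;\doteq\; \sum_{n=0}^{\infty} \frac{\Gamma(2-\hat{\sigma})}{(n+2)!\,\Gamma(n+3-\hat{\sigma})} \, \frac{(w_k k)^{2n}}{(2q_k)^{2n+4}} \, e^{-2n q_k x}.
\end{equation*}
This establishes (\ref{f+sigmaexpansion}) at the formal level.

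\textbf{Step 3: Verify the regularity and uniform bounds.} It remains to check that $g_{+,\sigma}(x)$ is smooth in $(\sigma,x)$, holomorphic in $\sigma$, and $C^2_x$-bounded uniformly in $(\sigma,x,k) \in \mathbb{I}_{[-1-\eta,\eta]}\times \mathbb{R}_+ \times [0,\epsilon)$. Each summand is manifestly smooth in $x$, entire in $\sigma$ (using that $1/\Gamma$ is entire and that $\Gamma(2-\hat{\sigma})$ has no zeros), and the only obstruction to uniform convergence would be the factor $\Gamma(2-\hat{\sigma})$. For $\sigma \in \mathbb{I}_{[-1-\eta,\eta]}$ and $k$ small, $\Re(2-\hat{\sigma}) = 2 + \Im(\sigma)/q_k$ lies in a compact subinterval of $(1/2,3)$ bounded away from the non-positive integers, so $|\Gamma(2-\hat{\sigma})|$ is bounded above and below. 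The crucial decay in $n$ comes from $\Gamma(n+3-\hat{\sigma})^{-1}$, which by Stirling satisfies $|\Gamma(n+3-\hat{\sigma})^{-1}| \lesssim 1/n!$ uniformly on the strip, easily dominating the polynomial and geometric growth of the remaining factors. Differentiating $j \leq 2$ times in $x$ brings down a factor $(-2nq_k)^j$, which is still absorbed by the factorial. Hence the series converges absolutely and uniformly together with its first two $x$-derivatives, yielding the desired $C^2_x$ bound and, via Weierstrass, joint smoothness and holomorphy in $\sigma$.

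\textbf{Step 4: The $f_{-,\sigma}$ expansion.} Repeating Steps 1--3 with $\hat{\sigma}$ replaced by $-\hat{\sigma}$ (noting $I_{\hat{\sigma}}(y)$ has the same series structure) produces (\ref{f-sigmaexpansion}), with a function $g_{-,\sigma}$ obtained from $g_{+,\sigma}$ by the same substitution; the uniform bounds transfer because the strip $\Im\sigma \in [-1-\eta,\eta]$ is not symmetric but $\Re(2+\hat{\sigma})$ stays in the same safe range $(1/2,3)$.

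\textbf{Main obstacle.} There is no serious analytic difficulty here; the argument is essentially bookkeeping of the Bessel series combined with uniform control of gamma-function quotients. The only subtle point is checking that the extraction of the $m=0$ term via $\Gamma(1-\hat{\sigma})^{-1} = (1-\hat{\sigma})\Gamma(2-\hat{\sigma})^{-1}$ remains meaningful throughout the strip (in particular at $\hat{\sigma}=1$, where both sides vanish consistently), and verifying that the strip $\mathbb{I}_{[-1-\eta,\eta]}$ with $\eta<1/2$ avoids the singular loci of every gamma factor entering the construction---both are direct consequences of the constraint $\eta \in (0,1/2)$.
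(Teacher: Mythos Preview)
Your approach is essentially the paper's: expand the Bessel series, peel off the $m=0,1$ terms, and package the tail into $g_{\pm,\sigma}$.

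There is one slip in Step~3. The individual claims that $|\Gamma(2-\hat{\sigma})|$ is bounded below and that $|\Gamma(n+3-\hat{\sigma})^{-1}| \lesssim 1/n!$ uniformly on the strip are both false: the strip $\mathbb{I}_{[-1-\eta,\eta]}$ only constrains $\Im\sigma$, so $\Im\hat{\sigma} = \Re\sigma/q_k$ is unbounded, and $|\Gamma(s)|$ decays exponentially as $|\Im s|\to\infty$ along any vertical line. The repair is already implicit in your Step~2: estimate the \emph{ratio} directly as the Pochhammer product
\[
\frac{\Gamma(2-\hat{\sigma})}{\Gamma(n+3-\hat{\sigma})} \;=\; \prod_{j=2}^{n+2}(j-\hat{\sigma})^{-1},
\]
and note that $|j-\hat{\sigma}| \geq j - \Re\hat{\sigma} \geq j - (1+\eta)/q_k$ for each $j\geq 2$, which for $\eta<\tfrac12$ and $k$ small gives the needed factorial decay uniformly in $\sigma$. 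With this correction the argument goes through.
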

    \begin{proof}
        We show the statement for $f_{+,\sigma}(x).$ By the series expansion (\ref{eq:model1}) for $I_{\hat{\sigma}}$, it follows that 
        \begin{align}
            e^{-i\sigma x}f_{+,\sigma}(x) &= \sum_{m=0}^{\infty}\frac{(w_k k)^{2m}}{m!\Gamma(m + 1 - \hat{\sigma})}\Big(\frac{1}{2q_k} \Big)^{2m -\hat{\sigma}} e^{-2m q_1 x}  \nonumber\\[2\jot]
            &= (2q_k)^{\hat{\sigma}} \frac{1}{\Gamma(1-\hat{\sigma})}\Big(1 +  \frac{w_k^2 k^2}{1-\hat{\sigma}}(2q_k)^{-2} e^{-2q_1 x} +  \frac{w_k^4 k^4}{1-\hat{\sigma}} g_{+,\sigma}(x) e^{-4q_k x} \Big),\label{eq:lemma4.3temp1}
        \end{align}
        for a function $g_{+,\sigma}(x)$ satisfying all the stated properties. Note $g_{+,\sigma}(x)$ has poles for $\sigma \in \{-2,-3,-4,\ldots\}$, but by our choice of $\eta$ is uniformly bounded in the region $\mathbb{I}_{[-1-\eta,\eta]}$. To arrive now at (\ref{f+sigmaexpansion}), it remains to factor out $(1-\hat{\sigma})^{-1}$ in (\ref{eq:lemma4.3temp1}) and apply the multiplication identities for gamma functions.
        \end{proof}

\begin{rmk}
    \label{rmk:etafixed}
    In the remainder of the paper, we will assume a parameter $\eta \in (0, \frac12)$ has been fixed. For such a choice, and $k$ sufficiently small, we have $\mathbb{I}_{[-1-\eta,\eta]} \cap iq_k \mathbb{Z} = \{0\}\cup \{-iq_k\}.$  
\end{rmk}

The next lemma considers an integral kernel $G_{\sigma}(x,x')$ built out of $\{f_{\pm,\sigma}(x)\}$. This will be required in the Volterra iteration in the following section. 

\begin{lemma}
    For $\sigma \notin i q_k \mathbb{Z}$, define  
\begin{equation}
    G_\sigma(x,x') \doteq \frac{f_{-,\sigma}(x')f_{+,\sigma}(x) - f_{-,\sigma}(x)f_{+,\sigma}(x')}{W[f_{-,\sigma}, f_{+,\sigma}]}.
\end{equation}
For fixed $x,x' \geq 0$, $G_\sigma(x,x')$ extends to an analytic function of $\sigma$ in $\mathbb{I}_{[-1-\eta,\eta]}$. For fixed $x'>0$ and $\eta \in (0,\frac12)$, we additionally have the estimates 
\begin{equation}
    \label{eq:model2}
   \sup_{\sigma \in \mathbb{I}_{[-1-\eta,\eta]}} \sup_{x \in [0,x']} \bigg| \frac{1}{1+x'}e^{-|\Im \sigma| x' } G_\sigma(x,x') \bigg| \lesssim 1,
\end{equation}
\begin{equation}
    \label{eq:model2.5}
   \sup_{\sigma \in \mathbb{I}_{[-1-\eta,\eta]}} \sup_{x \in [0,x']} \bigg| \bigg(\frac{1}{1+|\sigma|}\bigg)^j \frac{1}{1+x'}e^{-|\Im \sigma| x' } \partial_{x}^j G_\sigma(x,x') \bigg| \lesssim 1,  \quad (1 \leq j \leq 2).
\end{equation}

\noindent
The $x$-dependence of these estimates may be improved at the cost of uniformity in $\sigma$. There exists a locally bounded function $c(\sigma)$ defined for $\sigma \neq 0$, such that 
\begin{equation}
    \label{eq:model2.6}
    \sup_{x \in [0,x']} \bigg|  \partial_x^{j} G_\sigma(x,x') \bigg| \lesssim c(\sigma) e^{|\Im \sigma| x' }, \quad (0 \leq j \leq 2).
\end{equation} 
\end{lemma}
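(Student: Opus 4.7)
The plan is to establish analyticity by a removable-singularity argument and then prove the bounds by combining the exponential structure of $f_{\pm,\sigma}$ with a careful treatment of the joint vanishing of numerator and denominator at the two Wronskian zeros inside $\mathbb{I}_{[-1-\eta,\eta]}$.

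First I would address analyticity. Since $I_{\pm\hat\sigma}(y)$ are entire in $\hat\sigma$ by Definition~\ref{def:propertiesofBessel}, both $f_{\pm,\sigma}$ and the numerator
\[
N(\sigma,x,x') \doteq f_{-,\sigma}(x')f_{+,\sigma}(x) - f_{-,\sigma}(x)f_{+,\sigma}(x')
\]
are entire in $\sigma$ for each fixed $(x,x')$. The denominator $W(\sigma)=\frac{2q_k}{\pi}\sin(\pi\hat\sigma)$ vanishes on $iq_k\mathbb{Z}$ with simple zeros; by Remark~\ref{rmk:etafixed} this reduces inside $\mathbb{I}_{[-1-\eta,\eta]}$ to $\sigma_0\in\{0,-iq_k\}$, corresponding to $\hat\sigma\in\{0,1\}$. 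The key observation is that $N$ vanishes identically in $(x,x')$ at each such $\sigma_0$: at $\hat\sigma=0$ the normalizations in (\ref{eq:defof_f_pm}) collapse, giving $f_{+,0}\equiv f_{-,0}$; at $\hat\sigma=1$, the Bessel identity $I_{-1}=I_{1}$ yields $f_{+,\sigma_0}=(w_kk)^2\,f_{-,\sigma_0}$. Writing $G_\sigma=N/W$ as a ratio of two entire functions of $\sigma$ with matching simple zeros, I obtain a removable singularity at each $\sigma_0$ and hence an analytic extension to all of $\mathbb{I}_{[-1-\eta,\eta]}$.

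Next I would prove the uniform estimate (\ref{eq:model2}). Using the decomposition $f_{\pm,\sigma}(x) = e^{\pm i\sigma x} B_\pm(\sigma,x)$ extracted from (\ref{f+sigmaexpansion})--(\ref{f-sigmaexpansion}), with $B_\pm$ holomorphic in $\sigma$ and $C^2_x$-uniformly bounded on $\mathbb{I}_{[-1-\eta,\eta]}\times[0,\infty)\times[0,\epsilon)$, substitution into $N$ yields prefactors $e^{\pm i\sigma(x-x')}$ of modulus at most $e^{|\Im\sigma|x'}$ whenever $0\le x\le x'$. Thus $|N(\sigma,x,x')|\lesssim e^{|\Im\sigma|x'}$. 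For $\sigma$ at distance at least a fixed constant from $\{0,-iq_k\}$ inside the strip, $|W(\sigma)|\gtrsim 1$, so $|G_\sigma|\lesssim e^{|\Im\sigma|x'}$. Near $\sigma_0\in\{0,-iq_k\}$, I would factor
\[
G_\sigma(x,x') \doteq \frac{\wt N(\sigma,x,x')}{\wt W(\sigma)}, \qquad \wt N(\sigma,x,x')\doteq\frac{N(\sigma,x,x')}{\sigma-\sigma_0}, \qquad \wt W(\sigma)\doteq\frac{W(\sigma)}{\sigma-\sigma_0},
\]
noting $|\wt W|\gtrsim 1$ by simplicity of the zero. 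Since $N(\sigma_0,\cdot,\cdot)\equiv 0$, the fundamental theorem of calculus represents $\wt N(\sigma,x,x')$ as an average of $\partial_\sigma N$ along the segment from $\sigma_0$ to $\sigma$. Direct differentiation of the decomposition shows $\partial_\sigma N$ is dominated by the term $\propto i(x-x')\,e^{i\sigma(x-x')}B_-(x')B_+(x)$ (and its swapped counterpart), together with lower-order contributions involving $\partial_\sigma B_\pm$; collecting these yields $|\partial_\sigma N(\sigma,x,x')|\lesssim(1+|x-x'|)e^{|\Im\sigma|x'}\lesssim(1+x')e^{|\Im\sigma|x'}$ for $0\le x\le x'$. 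Combining the two regimes by a smooth cutoff in $\sigma$ delivers (\ref{eq:model2}); this is the origin of the $(1+x')$ weight.

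The derivative estimate (\ref{eq:model2.5}) follows by carrying $x$-derivatives through $f_{\pm,\sigma}=e^{\pm i\sigma x}B_\pm$: each $\partial_x$ produces either a factor $\pm i\sigma$ or an $x$-derivative of $B_\pm$, so $|\partial_x^j f_{\pm,\sigma}(x)|\lesssim(1+|\sigma|)^j e^{|\Im\sigma|x}$, and the argument of the previous step goes through with the additional weight. For the non-uniform estimate (\ref{eq:model2.6}), once $\sigma$ is confined to a compact subset of $\mathbb{I}_{[-1-\eta,\eta]}\setminus\{0,-iq_k\}$ the Wronskian is bounded below by some $c(\sigma)>0$ and no factorisation is required; the bound $|\partial_x^j N|\lesssim(1+|\sigma|)^j e^{|\Im\sigma|x'}$ passes directly through without the $(1+x')$ weight. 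The main obstacle is the uniform control across the two zeros of $W$: one must simultaneously verify the matching simple-zero structure of $N$ (so that the removable-singularity argument applies) and extract the sharp $(1+x')$ weight from $\partial_\sigma N$ without losing the clean exponential growth $e^{|\Im\sigma|x'}$. Away from these zeros the estimates reduce to direct substitution into (\ref{f+sigmaexpansion})--(\ref{f-sigmaexpansion}), which is routine.
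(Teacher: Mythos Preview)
Your analyticity argument via removable singularities (checking that $N$ vanishes at $\hat\sigma\in\{0,1\}$ because $f_{+,\sigma_0}$ and $f_{-,\sigma_0}$ become linearly dependent there) is correct and arguably cleaner than the paper's route, which instead rewrites $G_\sigma$ in terms of $K_{\hat\sigma}$ and $I_{\hat\sigma}$ using (\ref{eq:KvsI_Bessel}) to obtain the manifestly analytic expression (\ref{4.2:eq:temp1}).

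There is, however, a genuine gap in your estimate argument. The claim that $B_\pm(\sigma,x)=e^{\mp i\sigma x}f_{\pm,\sigma}(x)$ are uniformly bounded on the full strip $\mathbb{I}_{[-1-\eta,\eta]}$ is false. From (\ref{f+sigmaexpansion}) one has $B_+\sim (2q_k)^{\hat\sigma}/\Gamma(1-\hat\sigma)$, and since $\Im\hat\sigma=\Re\sigma/q_k$ is unbounded on the strip, Stirling gives $|\Gamma(1-\hat\sigma)|\sim e^{-\pi|\Re\sigma|/(2q_k)}$; hence $|B_+|$ (and likewise $|B_-|$) grows like $e^{\pi|\Re\sigma|/(2q_k)}$. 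Your intermediate bound $|N|\lesssim e^{|\Im\sigma|x'}$ therefore fails for large $|\Re\sigma|$. The conclusion is nonetheless true because $|W(\sigma)|=\frac{2q_k}{\pi}|\sin(\pi\hat\sigma)|\sim e^{\pi|\Re\sigma|/q_k}$ exactly compensates, leaving $|B_+B_-/W|\lesssim|\hat\sigma|^{-1}$; but this cancellation must be tracked explicitly. The paper does so by writing out the full ratio (\ref{4.2:eq:temp2}) and using the reflection formula $\sin(\pi\hat\sigma)=\pi/[\Gamma(\hat\sigma)\Gamma(1-\hat\sigma)]$.

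A smaller issue concerns (\ref{eq:model2.6}) at $\sigma=-iq_k$. Your factorization-and-FTC argument produces the $(1+x')$ weight at \emph{both} zeros of $W$, which suffices for (\ref{eq:model2}) but is too weak for (\ref{eq:model2.6}), where $c(\sigma)$ must be independent of $x'$ and the statement requires it to be defined for all $\sigma\neq 0$, including $-iq_k$. The paper avoids this by working in the ball $\mathbb{B}_{1/2}(-iq_k)$ with the $K_{\hat\sigma}$ representation (\ref{4.2:eq:temp1}) and the bound (\ref{eq:BesselK_estimates}), obtaining $|G_\sigma|\lesssim e^{|\Im\sigma|x'}$ there with no polynomial loss; the $(1+x')$ factor arises only near $\sigma=0$, from the estimate $|\sin(\sigma x)/\sigma|\lesssim x\,e^{|\Im\sigma|x}$ applied to the exponential difference $e^{i\sigma(x-x')}-e^{-i\sigma(x-x')}$.
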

\begin{proof}
To verify analyticity, it suffices by (\ref{eq:wronskianf+f-}) to check that $G_\sigma(x,x')$ has removable singularities at $\sigma \in \{0\} \cup \{-i q_k \}$ for fixed $(x,x')$. Near either point, employ (\ref{eq:KvsI_Bessel}) to write the kernel as  
\begin{align}
    G_\sigma(x,x') &= \frac{\pi}{2q_k \sin(\pi \hat{\sigma})}\bigg(I_{-\hat{\sigma}}\Big(\frac{w_k k}{q_k}e^{-q_k x} \Big) I_{\hat{\sigma}}\Big(\frac{w_k k}{q_k}e^{-q_k x'} \Big) -  I_{-\hat{\sigma}}\Big(\frac{w_k k}{q_k}e^{-q_k x'} \Big) I_{\hat{\sigma}}\Big(\frac{w_k k}{q_k}e^{-q_k x} \Big)     \bigg)\nonumber \\[2\jot]
    &= \frac{1}{q_k}\bigg(K_{\hat{\sigma}}\Big(\frac{w_k k}{q_k}e^{-q_k x} \Big) I_{\hat{\sigma}}\Big(\frac{w_k k}{q_k}e^{-q_k x'} \Big) - I_{\hat{\sigma}}\Big(\frac{w_k k}{q_k}e^{-q_k x} \Big)K_{\hat{\sigma}}\Big(\frac{w_k k}{q_k}e^{-q_k x'} \Big)    \bigg).\label{4.2:eq:temp1}
\end{align} 
The analyticity of $I_{\hat{\sigma}}, K_{\hat{\sigma}}$ implies $G_{\sigma}(x,x')$ extends to $\sigma \in \{0\} \cup \{-i q_k \}$ via this formula as an analytic function.

To estimate $G_\sigma(x,x')$, it is convenient to subdivide $\mathbb{I}_{[-1-\eta,\eta]}$ into two regions:
\begin{align*}
    R_1 =\mathbb{B}_{\frac12}(-iq_k), \ \ R_2 = \mathbb{I}_{[-1-\eta,\eta]} \setminus R_1.
\end{align*}
In $R_1$ we appeal to the regular expression (\ref{4.2:eq:temp1}). The series expansion (\ref{eq:model1}) for $I_{\hat{\sigma}}$ implies
\begin{align*}
    \sup_{x \in [0,x']} \Big| \partial_x^j I_{\hat{\sigma}} \Big(\frac{w_k k}{q_k}e^{-q_k x} \Big) \Big| \lesssim \Big| \Big(\frac{w_k k}{q_k} \Big)^{\hat{\sigma}}\Big| .
\end{align*}
The bound (\ref{eq:BesselK_estimates}) for $K_\sigma$ similarly implies 
\begin{align*}
     \sup_{x \in [0,x']} \Big| \partial_x^j K_{\hat{\sigma}} \Big(\frac{w_k k}{q_k}e^{-q_k x} \Big) \Big| \lesssim \Big| \Big(\frac{w_k k}{q_k} \Big)^{\hat{\sigma}}\Big| e^{q_k |\Im \sigma| x'} .
 \end{align*}
 The stated bounds (\ref{eq:model2})--(\ref{eq:model2.6}) in $R_1$ now directly follow by differentiating (\ref{4.2:eq:temp1}).

 \vspace{1em}
 \noindent
 In the complement we apply the expansions (\ref{f+sigmaexpansion})--(\ref{f-sigmaexpansion}) to give 
\begin{align}
    \label{4.2:eq:temp2}
     G_\sigma(x,x') &= \big(e^{-i \sigma x}e^{i \sigma x'} - e^{-i \sigma x'}e^{i \sigma x} \big)\frac{\pi\big(1-\hat{\sigma} + O_{\sigma,x,x'}(1) \big)\big(1+\hat{\sigma} + O_{\sigma,x,x'}(1) \big)}{2q_k \sin(\pi \hat{\sigma})\Gamma(2-\hat{\sigma})\Gamma(2+\hat{\sigma})},
\end{align}
where $ O_{\sigma,x,x'}(1)$ denotes terms that are bounded uniformly in $(\sigma, x ,x')$ for $\sigma \in \mathbb{I}_{[-1-\eta,\eta]}.$ A similar expression holds for the derivatives $\partial_x^j G_\sigma(x,x'), j=1,2.$ 

The ratio appearing in (\ref{4.2:eq:temp2}) is not defined when $\hat{\sigma} = 0$, and so we must look to the first factor for additional vanishing. In $\{\Im \sigma \geq 0\}$, rewrite this factor as  
\begin{align}
    e^{-i \sigma x}e^{i \sigma x'} - e^{-i \sigma x'}e^{i \sigma x} &= e^{i\sigma x}\big(e^{i \sigma x'} - e^{-i \sigma x'} \big) + e^{i\sigma x'}\big (e^{-i\sigma x} - e^{i \sigma x} \big) \nonumber \\
    &= 2i \big( e^{i\sigma x}\sin(\sigma x') - e^{i \sigma x'}\sin(\sigma x) \big). \label{eq:model3}
\end{align}
In the upper half plane the functions $|e^{i\sigma x}|, |e^{i \sigma x'}|$ are \textit{decreasing} with respect to $x,x'$, and bounded for $x,x' \geq 0$. Using the estimate 
\begin{equation}
    \Big| \frac{\sin(\sigma x) }{\sigma}\Big| \lesssim x e^{|\Im \sigma| x},
\end{equation} 
we find 
\begin{align*}
   \sup_{x \in [0,x']} |e^{-i \sigma x}e^{i \sigma x'} - e^{-i \sigma x'}e^{i \sigma x}| \lesssim |\sigma| x' e^{|\Im \sigma| x'}.
\end{align*}
We have thus gained a factor consistent with vanishing as $\sigma \rightarrow 0$, at the cost of a small polynomial loss in $x$. A similar argument works in the lower half plane, writing (\ref{eq:model3}) in terms of exponentials $e^{-i \sigma x}, e^{-i \sigma x'}$ which decay in $x$ for $\sigma$ lying below the real axis. 

We may estimate in $R_2$ to give 
\begin{align*}
    \sup_{x \in [0,x']}|G_\sigma(x,x')| &\lesssim \Big|\frac{\big(1-\hat{\sigma} + O_{\sigma,x,x'}(1) \big)\big(1+\hat{\sigma} + O_{\sigma,x,x'}(1) \big)}{ \Gamma(2-\hat{\sigma})\Gamma(2+\hat{\sigma})}\sigma \Gamma(\hat{\sigma})\Gamma(1-\hat{\sigma})\Big| x' e^{|\Im \sigma|x'} \\
    &\lesssim x' e^{|\Im \sigma|x'}.
\end{align*}
The differentiated estimates follow similarly, with additional powers of $\sigma$ arising from derivatives falling on the exponential factors $e^{\pm i \sigma x}$. We note also that the polynomial loss in $x'$ is associated only to a neighborhood of $\sigma = 0$, and may be disregarded away from the origin to give (\ref{eq:model2.6}).
\end{proof}

\subsection{Construction of $R(\sigma)$}
\label{sec:resolventconstr}
In this section we apply our understanding of solutions to (\ref{eq:modeleq}) to construct families of solutions to $P_k(\sigma)f = 0$. By (\ref{eq:fullop_vsmodelop}) the two spectral families $P_k^{(0)}(\sigma), P_k(\sigma)$ differ by a potential term that is rapidly decaying at infinity. We may therefore hope to construct solutions to the desired problem perturbatively, by solving 
\begin{equation}
    \label{eq:fulleqn}
    P_k^{(0)}(\sigma)f = -4q_k k^2 E_k(x)e^{-4q_k(1-\epsilon)} f,
\end{equation} 
where $\epsilon > 0$ is a fixed small parameter and $E_k(x)$ the function defined in Proposition \ref{lem:propertiesofV}. The main result of this section is that an outgoing family of solutions $f_{(out)}(\sigma)$ to (\ref{eq:fulleqn}) may indeed be constructed this way for $\sigma \in \mathbb{I}_{[-1-\eta,\eta]}$, leading to the definition of the scattering resolvent in Definition \ref{dfn:scatteringresolv} below.

We give a construction of $C^2_x(\mathbb{R}_+)$ solutions to (\ref{eq:fulleqn}) with prescribed asymptotic structure by solving the associated Volterra integral equation. The following result is adapted from \cite{wave_schrod}.

\begin{prop}
    \label{prop:existencevolterraperturb}
    For $\sigma \in \mathbb{I}_{[-1-\eta,\eta]},$ there exists a unique solution $f_{(out),\sigma}(x)$ to the Volterra equation
    \begin{equation}
        \label{perturb:volt1}
        f_{(out),\sigma}(x) = f_{+,\sigma}(x) + 4q_k k^2  \int_x^{\infty} G_\sigma(x,x')  E_k(x')e^{-4q_k(1-\epsilon)  x'}f_{(out),\sigma}(x')dx'.
    \end{equation}
Similarly, there exists a unique solution $f_{(in),\sigma}(x)$ to the Volterra equation 
\begin{equation}
        \label{perturb:volt2}
        f_{(in),\sigma}(x) = f_{-,\sigma}(x) + 4q_k k^2\int_x^{\infty} G_\sigma(x,x')  E_k(x')e^{-4q_k(1-\epsilon) x'}f_{(in),\sigma}(x')dx',
    \end{equation}
    The pair $f_{(out),\sigma}(x), f_{(in),\sigma}(x)$ are $C^2_x(\mathbb{R}_+)$ solutions to (\ref{eq:fulleqn}), and are linearly independent provided  $\sigma \in \mathbb{I}_{[-1-\eta,\eta]} \setminus \{0, -i q_k\}$. Moreover, for fixed $x \geq 0$, $f_{(out),\sigma}(x)$ is a holomorphic function of $\sigma \in \mathbb{I}_{[-1-\eta,\eta]}$.
\end{prop}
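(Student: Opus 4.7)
The plan is to solve the Volterra integral equation (\ref{perturb:volt1}) by a standard contraction mapping argument, carried out in a weighted sup-norm space tailored to the exponential growth rate of $f_{+,\sigma}(x)$, and then to read off $C^2_x$-regularity, holomorphy in $\sigma$, and linear independence from the structure of the fixed point.

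First I would set up the function space. For $\sigma \in \mathbb{I}_{[-1-\eta,\eta]}$, define
\begin{equation*}
X_\sigma = \big\{f \in C^0(\mathbb{R}_+) : \|f\|_{X_\sigma} \doteq \sup_{x \geq 0} e^{-|\Im \sigma| x}|f(x)| < \infty \big\},
\end{equation*}
and let $T_\sigma$ denote the integral operator on the right of (\ref{perturb:volt1}) minus $f_{+,\sigma}$. The expansion (\ref{f+sigmaexpansion}) gives $\|f_{+,\sigma}\|_{X_\sigma} \lesssim 1$ uniformly in $\sigma \in \mathbb{I}_{[-1-\eta,\eta]}$ and in $k$ small. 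The key bound (\ref{eq:model2}) gives $|G_\sigma(x,x')| \lesssim (1+x')e^{|\Im\sigma| x'}$ for $x \leq x'$, so
\begin{equation*}
e^{-|\Im\sigma|x}|T_\sigma f(x)| \lesssim k^2 \|f\|_{X_\sigma}\int_x^\infty (1+x')e^{(2|\Im\sigma| - 4q_k(1-\epsilon))x'}\,dx'.
\end{equation*}
Choosing $\epsilon > 0$ once and for all so that $4q_k(1-\epsilon) > 2(1+\eta)$ for all $k$ small (recall $\eta < \tfrac12$), the integral is uniformly bounded in $x$ and $\sigma$. Hence $\|T_\sigma f\|_{X_\sigma} \leq Ck^2\|f\|_{X_\sigma}$, and for $k$ small $T_\sigma$ is a strict contraction on $X_\sigma$. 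The Banach fixed point theorem then produces a unique solution $f_{(out),\sigma} \in X_\sigma$. The construction of $f_{(in),\sigma}$ is identical, using $f_{-,\sigma}$ in place of $f_{+,\sigma}$, both of which satisfy uniform $X_\sigma$ bounds via (\ref{f-sigmaexpansion}).

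The remaining properties follow quickly. For $C^2_x$-regularity, differentiate (\ref{perturb:volt1}) in $x$, using the derivative estimates (\ref{eq:model2.5}) on $\partial_x^j G_\sigma(x,x')$ (together with the boundary contributions coming from differentiating the variable lower limit, which involve $\partial_x^j G_\sigma(x,x)$ and vanish identically at $j=0$ and are controlled at $j=1$ since $G_\sigma(x,x)=0$). A direct calculation then shows $P_k^{(0)}(\sigma)f_{(out),\sigma} = -4q_k k^2 E_k e^{-4q_k(1-\epsilon)x} f_{(out),\sigma}$, which by the splitting (\ref{eq:fullop_vsmodelop}) is exactly $P_k(\sigma)f_{(out),\sigma} = 0$. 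Holomorphy in $\sigma$ follows from the Neumann series representation $f_{(out),\sigma} = \sum_{n \geq 0} T_\sigma^n f_{+,\sigma}$, which converges absolutely and uniformly on compact subsets of $\mathbb{I}_{[-1-\eta,\eta]}$ by the contraction estimate; each summand is holomorphic in $\sigma$ (holomorphicity of $f_{\pm,\sigma}$ and $G_\sigma$ was established in the previous subsection), so the limit is holomorphic by Morera. Linear independence is read off at infinity: the Volterra integral in (\ref{perturb:volt1}) satisfies $|T_\sigma f_{(out),\sigma}(x)| \lesssim k^2 e^{|\Im\sigma|x}\int_x^\infty e^{-cx'}dx' \to 0$ faster than $e^{|\Im\sigma| x}$, so $f_{(out),\sigma}(x) = f_{+,\sigma}(x)(1+o(1))$ as $x \to \infty$, and similarly $f_{(in),\sigma}(x) = f_{-,\sigma}(x)(1+o(1))$. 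Computing the Wronskian at infinity yields $W[f_{(out),\sigma},f_{(in),\sigma}] = W[f_{+,\sigma}, f_{-,\sigma}] + o(1) = -\tfrac{2q_k}{\pi}\sin(\pi\hat\sigma) + o(1)$; since the Wronskian of solutions to a second-order equation with no first-order term is constant in $x$, one concludes $W[f_{(out),\sigma},f_{(in),\sigma}] = -\tfrac{2q_k}{\pi}\sin(\pi\hat\sigma)$, which is nonzero on $\mathbb{I}_{[-1-\eta,\eta]} \setminus \{0,-iq_k\}$ by the choice of $\eta$ noted in Remark \ref{rmk:etafixed}.

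The main technical obstacle is the uniformity of the contraction estimate in $\sigma$ throughout the strip $\mathbb{I}_{[-1-\eta,\eta]}$, in particular the delicate behavior of $G_\sigma$ near the pair of points $\sigma \in \{0,-iq_k\}$ where the basis $\{f_{\pm,\sigma}\}$ becomes degenerate. This is precisely what the estimate (\ref{eq:model2}), proved via the regular representation (\ref{4.2:eq:temp1}) in terms of $I_{\hat\sigma},K_{\hat\sigma}$, was set up to absorb: since the same kernel bound governs both regimes, the Banach iteration is insensitive to the change of basis, and in particular $f_{(out),\sigma}$ extends continuously (and by Morera, holomorphically) across these two points even though the decomposition $f_{(out),\sigma} = f_{+,\sigma} + T_\sigma f_{(out),\sigma}$ a priori degenerates there.
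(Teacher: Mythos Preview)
Your approach is correct and essentially matches the paper's: both solve the Volterra equation by iteration using the kernel bound (\ref{eq:model2}), then read off $C^2_x$-regularity, holomorphy via the Neumann series, and the Wronskian at infinity. The only cosmetic difference is that the paper writes the Neumann series out explicitly and exploits the Volterra nesting to obtain factorial convergence together with the sharper tail bound (\ref{eq:tailbound1}) (used downstream for (\ref{eq:fout_exp_1})--(\ref{eq:fout_exp_2}) and Lemma~\ref{lem:recordedbounds_outgoingdirichlet}), whereas you package the same iteration as a Banach contraction relying on $Ck^2<1$; in the small-$k$ regime of the paper both are equivalent.
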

\begin{proof}
To simplify notation, define the kernel 
\begin{equation*}
    P_\sigma(x,x') \doteq 4q_k k^2 G_\sigma(x,x')E_k(x')e^{-4q_k(1-\epsilon) x'}.
\end{equation*} 
We will exhibit the solution to (\ref{perturb:volt1}) as a pointwise convergent sum 
\begin{equation}
    f_{(out),\sigma}(x) = f_{+,\sigma}(x) + \sum_{n=1}^{\infty}M_{n,\sigma}(x),
\end{equation}
where 
\begin{equation*}
    M_{n,\sigma}(x) = \underbrace{\int_{x}^{\infty} \int_{x}^{\infty} \ldots \int_{x}^{\infty} }_{n} \Big( \prod_{i=1}^n \chi_{\{x_i \geq x_{i-1}\}}P_\sigma(x_{i-1},x_{i})\Big) f_{+,\sigma}(x_n) dx_n dx_{n-1}\ldots dx_1.
\end{equation*}
Here, $\chi_{U}(x_1,\ldots,x_n)$ is the characteristic function of a set $U$, and by convention we set $x_0 \doteq x$. The goal will be to show convergence of $\displaystyle \sum_{n=1}^{\infty}M_{n,\sigma}(x)$ in $L^\infty(\mathbb{R}_{+})$, uniformly for $\sigma \in \mathbb{I}_{[-1-\eta,\eta]}$.  By the estimate (\ref{eq:model2}) for $G_\sigma$, and the $L^\infty$ bound on $E_k$ contained in Proposition \ref{lem:propertiesofV}, it follows that 
\begin{align*}
    \sup_{\sigma \in \mathbb{I}_{[-1-\eta,\eta]}} \sup_{x' \in [x,x'']}\big| P_\sigma(x',x'')\big| &\lesssim k^2 (1+x'') e^{-4q_k(1-\epsilon) x''} e^{|\Im \sigma|x''}.
\end{align*} 
At the cost of increasing $\epsilon$ by an arbitrarily small amount, we may drop the linear dependence on $x''$ in the above estimate. Therefore,
\begin{align*}
    |M_{n,\sigma}&(x)| \\
    &\lesssim  \|e^{\Im \sigma x}f_{+,\sigma} \|_{L^\infty(\mathbb{R}_{+})}\underbrace{\int_{x}^{\infty} \int_{x}^{\infty} \ldots \int_{x}^{\infty} }_{n} \Big( \prod_{i=1}^n \chi_{\{x_i \geq x_{i-1}\}}P_\sigma(x_{i-1},x_{i})\Big) e^{-\Im \sigma x_n} dx_n dx_{n-1}\ldots dx_1 \\[\jot]
    &\lesssim \frac{k^2}{(2q_k(1-\epsilon) - |\Im \sigma|)}\|e^{\Im \sigma x}f_{+,\sigma} \|_{L^\infty(\mathbb{R}_{+})} e^{-4q_k(1-\epsilon)x}e^{|\Im \sigma|x}e^{- \Im \sigma x} \\[\jot]
    &\hspace{14em} \underbrace{\int_{x}^{\infty} \int_{x}^{\infty} \ldots \int_{x}^{\infty} }_{n-1} \Big( \prod_{i=1}^n \chi_{\{x_i \geq x_{i-1}\}}P_\sigma(x_{i-1},x_{i})\Big) dx_{n-1}\ldots dx_1 \\[\jot]
    &\lesssim \frac{1}{(n-1)!}\frac{k^{2n}}{(2q_k(1-\epsilon) - |\Im \sigma|)}\|e^{\Im \sigma x}f_{+,\sigma} \|_{L^\infty(\mathbb{R}_{+})} e^{-4q_k(1-\epsilon)x}e^{|\Im \sigma|x}e^{- \Im \sigma x} \\[\jot]
    &\hspace{15.7em} \underbrace{\int_{x}^{\infty} \int_{x}^{\infty} \ldots \int_{x}^{\infty} }_{n-1} \Big( \prod_{i=1}^n e^{-4q_k(1-\epsilon)x_i}e^{|\Im \sigma|x_i}\Big) dx_{n-1}\ldots dx_1 \\[\jot]
    &\lesssim \frac{1}{(n-1)!}\frac{k^{2n}}{(2q_k(1-\epsilon) - |\Im \sigma|)}\frac{1}{\big(4q_k(1-\epsilon) - |\Im \sigma| \big)^{n-1}}\|e^{\Im \sigma x}f_{+,\sigma} \|_{L^\infty(\mathbb{R}_{+})} e^{-n\big(4q_k(1-\epsilon) - |\Im \sigma| \big)x}e^{- \Im \sigma x}.
\end{align*}
By the M-test, we conclude that the sum converges uniformly for $x \in  \mathbb{R}_{+}$, and locally uniformly for $\sigma \in \mathbb{I}_{[-1-\eta,\eta]}$ to a function $\sum_{n=1}^{\infty}M_{n,\sigma}(x) \in L^\infty(\mathbb{R}_{+})$ satisfying the tail bound
    \begin{equation}
        \label{eq:tailbound1}
        \bigg| \sum_{n=N}^{\infty}M_{n,\sigma}(x)\bigg| \lesssim k^{2N} \|e^{\Im \sigma x}f_{+,\sigma} \|_{L^\infty(\mathbb{R}_+)} e^{-\Im \sigma x } e^{-N(4q_k(1-\epsilon) - |\Im \sigma| )x}.
    \end{equation}
For fixed $x$ it is clear this sum defines a holomorphic function of $\sigma$. The construction of $f_{(in),\sigma}(x)$ is analogous, and we have an expansion 
\begin{equation*}
    f_{(in),\sigma}(x) = f_{-,\sigma}(x) + \sum_{n=1}^{\infty}L_{n,\sigma}(x),
\end{equation*}
where 
\begin{equation*}
    |L_{n,\sigma}(x)| \lesssim\frac{1}{(n-1)!}\frac{k^{2n}}{(2q_k(1-\epsilon)-|\Im \sigma|)}\frac{1}{(4q_k(1-\epsilon) -|\Im \sigma|)^{n-1}}\|e^{-\Im \sigma x}f_{-,\sigma} \|_{L^\infty(\mathbb{R}_+)}  e^{-n\big(4q_k(1-\epsilon) - |\Im \sigma| \big)x}e^{ \Im \sigma x }.
\end{equation*}

We conclude by arguing that $f_{(out),\sigma}(x), f_{(in),\sigma}(x)$ are indeed solutions to (\ref{eq:fulleqn}), and compute their Wronskian. By (\ref{4.2:eq:temp1}) it follows that the kernel $G_\sigma(x,x')$ is $C^2_{x,x'}$ for all $\sigma \in \mathbb{I}_{[-1-\eta,\eta]}$. Directly applying the operator $P_k^{(0)}(\sigma)$ to the integral equations (\ref{perturb:volt1})--(\ref{perturb:volt2}) and integrating by parts, we conclude that $f_{(out),\sigma}(x), f_{(in),\sigma}(x)$ are classical solutions to (\ref{eq:fulleqn}).

By the bounds established for $M_{n,\sigma}, L_{n,\sigma}$, we have for $\sigma \in \mathbb{I}_{[-1-\eta,\eta]}$,
    \begin{align}
        f_{(out),\sigma}(x) &= f_{+,\sigma}(x) + e^{-\Im \sigma x} O_{L^{\infty}}(e^{-2q_k(1-\epsilon) x}), \label{eq:fout_exp_1}\\ 
        f_{(in),\sigma}(x) &= f_{-,\sigma}(x) + e^{\Im \sigma x} O_{L^{\infty}}(e^{-2q_k(1-\epsilon)x}).\label{eq:fout_exp_2}
    \end{align}
    The same expansion holds for $f'_{(out),\sigma}(x), f'_{(in),\sigma}(x)$, as can be seen by considering the Volterra equation for these differentiated quantities and using (\ref{eq:model2.5}). Note the $O(\cdot)$ terms may depend on $\sigma$ here. It follows that 
    \begin{align*}
        W[f_{(out),\sigma}, f_{(in),\sigma}] &= \lim_{x \rightarrow \infty }\big(f_{(out),\sigma}(x) f'_{(in),\sigma}(x) - f'_{(out),\sigma}(x) f_{(in),\sigma}(x) \big)\\
        &= \lim_{x \rightarrow \infty } \big( f_{+,\sigma}(x) f'_{-,\sigma}(x) - f'_{+,\sigma}(x) f_{-,\sigma}(x) + O(e^{-2q_k(1-\epsilon) x}) \big) \\
        &= W[f_{+,\sigma}, f_{-,\sigma}].
    \end{align*}
    By (\ref{eq:wronskianf+f-}), it follows $f_{(out),\sigma}(x),f_{(in),\sigma}(x)$ are linearly independent away from $ \{0\} \cup \{-i q_k\}$.
\end{proof}

For $\sigma \notin \{0\} \cup \{-iq_k\}$, we refer to $f_{(out),\sigma}(x), f_{(in),\sigma}(x)$ as \textbf{outgoing solutions} and \textbf{ingoing solutions} respectively. These solutions model asymptotically free scattering states as $x \rightarrow \infty$, and are characterized by their asymptotic behavior, termed \textbf{outgoing boundary conditions} and \textbf{ingoing boundary conditions} respectively:
\begin{align*}
    f_{(out),\sigma}(x) \sim e^{i \sigma x} + O(e^{i\sigma x -2q_k x}), \quad f_{(in),\sigma}(x) \sim e^{-i \sigma x}+ O(e^{-i\sigma x -2q_k x}).
\end{align*}
We may also define a solution to $P_k(\sigma)f=0$ associated to \textbf{Dirichlet boundary conditions} $f(0)=0$, which we in turn label the \textbf{Dirichlet solution} $f_{(dir),\sigma}(x)$. The following definition makes this precise.

\begin{definition}
    \label{def:dirichletsoln}
    For any $\sigma \in \mathbb{C}$, the Dirichlet solution $f_{(dir),\sigma}(x)$ is the unique solution to 
    \begin{equation}
        \label{eq:dirichlet1}
        \begin{cases}
            P_k(\sigma)f_{(dir),\sigma} = 0, \\[\jot]
            (f_{(dir),\sigma}(0), f'_{(dir),\sigma}(0)) = (0,1).
        \end{cases}
    \end{equation}
    Equivalently, $f_{(dir),\sigma}(x)$ is the unique solution to the Volterra integral equation (for $\sigma \neq 0$)
    \begin{equation}
        \label{eq:voltdirichlet1}
        f_{(dir),\sigma}(x) = \frac{\sin(\sigma x)}{\sigma}  +  \int_0^{x} m_\sigma(x,x') (4q_k V_k(x') e^{-2q_k x'})  f_{(dir),\sigma}(x')dx',
    \end{equation} 
    where 
    \begin{align*}
        m_\sigma(x,x') = i \frac{e^{i\sigma x'} e^{-i \sigma x} - e^{i \sigma x} e^{-i \sigma x'}}{\sigma},
    \end{align*}
    and $V_k(x)$ is the potential defined in Proposition \ref{lem:propertiesofV}.
    For $\sigma = 0$, $f_{(dir),\sigma}(x)$ solves
    \begin{equation}
        \label{eq:voltdirichlet2}
        f_{(dir),0}(x) = x  +  \int_0^{x} (x-x') (4q_k V_k(x') e^{-2q_k x'})  f_{(dir),0}(x')dx'.
    \end{equation}
\end{definition}

As a consequqence of the Volterra equations (\ref{perturb:volt1}), (\ref{eq:voltdirichlet1}), (\ref{eq:voltdirichlet2}) and kernel estimates, we may derive estimates for the outgoing and Dirichlet solutions that are uniform in $(\sigma,x,k)  \in   \mathbb{I}_{[-1-\eta,\eta]}\times \mathbb{R}_{+} \times [0,\epsilon)$. The following lemma collects the relevant estimates. Observe that the bounds for $f_{(dir),\sigma}$ gain a \textit{decaying} power of $\sigma$ as $|\sigma| \rightarrow \infty$, a consequence of the $\sigma^{-1}$ dependence of the leading order term in (\ref{eq:voltdirichlet1}).

\begin{lemma}
    \label{lem:recordedbounds_outgoingdirichlet}
    The outgoing solution $f_{(out),\sigma}$ satisfies the following bound uniformly for $(\sigma,x,k) \in \mathbb{I}_{[-1-\eta,\eta]} \times \mathbb{R}_+ \times [0,\epsilon)$:
 \begin{equation}
    \label{eq:outgoinghighenergy}
    \bigg|\frac{d^j}{dx^j} f_{(out),\sigma}(x)\bigg| \lesssim(1+|\sigma|)^{j}\bigg(1 + \frac{1}{\Gamma(1-\hat{\sigma})}\bigg) (1+x) e^{(-\Im \sigma) x}, \quad (0 \leq j \leq 2).
 \end{equation}
Moreover, there exists a locally bounded function $c(\sigma)$ defined for $\sigma \neq 0$ such that 
 \begin{equation}
    \label{eq:outgoinggrowthbd}
    \bigg|\frac{d^j}{dx^j}f_{(out),\sigma}(x)\bigg| \lesssim c(\sigma) e^{(-\Im \sigma) x}, \quad (0 \leq j \leq 2).
 \end{equation}
 Similarly, the Dirichlet solution $f_{(dir),\sigma}$ satisfies the following bound uniformly in $(\sigma,x)$ for $\sigma \in \mathbb{I}_{[-1-\eta,\eta]} \times \mathbb{R}_+ \times [0,\epsilon)$:
 \begin{equation}
     \label{eq:dirichlethighenergy}
     \bigg|\frac{d^j}{dx^j} f_{(dir),\sigma}(x)\bigg| \lesssim (1+|\sigma|)^{-1+j} (1+x) e^{|\Im \sigma| x}, \quad (0 \leq j \leq 2).
  \end{equation}
Moreover, there exists a locally bounded function $c(\sigma)$ defined for $\sigma \neq 0$ such that 
\begin{equation}
 \label{eq:dirichletgrowthbd}
 \bigg|\frac{d^j}{dx^j}f_{(dir),\sigma}(x)\bigg| \lesssim c(\sigma) e^{|\Im \sigma| x}, \quad (0 \leq j \leq 2).
\end{equation}
\end{lemma}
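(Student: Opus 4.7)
The plan is to derive all four estimates directly from the respective Volterra integral equations --- (\ref{perturb:volt1}) for $f_{(out),\sigma}$ and (\ref{eq:voltdirichlet1}) for $f_{(dir),\sigma}$ --- by first controlling the seed term and then iterating the integral operator, using the kernel estimates (\ref{eq:model2})--(\ref{eq:model2.6}) together with the potential bounds of Proposition \ref{lem:propertiesofV}.

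For the outgoing solution, I would exploit the explicit expansion (\ref{f+sigmaexpansion}) for the seed $f_{+,\sigma}$. Using the identity $(1-\hat\sigma)/\Gamma(2-\hat\sigma) = 1/\Gamma(1-\hat\sigma)$ and the uniform boundedness of $\Gamma(2-\hat\sigma)^{-1}$ on the strip $\mathbb{I}_{[-1-\eta,\eta]}$, one obtains
\[
\bigl|e^{-i\sigma x} f_{+,\sigma}(x)\bigr| \lesssim \frac{1}{|\Gamma(1-\hat\sigma)|} + k^{2} \lesssim 1 + \frac{1}{|\Gamma(1-\hat\sigma)|},
\]
with analogous bounds for $x$-derivatives that produce the prefactors $(1+|\sigma|)^{j}$ when $\partial_x$ lands on $e^{i\sigma x}$. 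The Volterra correction $f_{(out),\sigma} - f_{+,\sigma}$ is then bounded directly by the tail estimate (\ref{eq:tailbound1}) with $N=1$, which supplies a factor of $k^{2}$ and preserves the exponential asymptotics $e^{(-\Im\sigma)x}$. For the $j=1,2$ versions of (\ref{eq:outgoinghighenergy}), differentiate (\ref{perturb:volt1}) and apply (\ref{eq:model2.5}) to estimate $\partial_x^{j} G_\sigma$; the factor $(1+x)$ in (\ref{eq:outgoinghighenergy}) arises from the linear $(1+x')$ weight in the kernel estimate near the degenerate point $\sigma=0$. The pointwise bound (\ref{eq:outgoinggrowthbd}) without polynomial loss follows by using (\ref{eq:model2.6}) instead, at the cost of letting the implicit constant depend locally on $\sigma$.

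For the Dirichlet solution, the seed is $\sin(\sigma x)/\sigma$, which obeys the elementary interpolation
\[
\Bigl|\frac{\sin(\sigma x)}{\sigma}\Bigr| \lesssim \frac{(1+x)}{1+|\sigma|}\, e^{|\Im \sigma| x}
\]
arising from $|\sin z| \leq \min(|z|, e^{|\Im z|})$. The kernel $m_\sigma(x,x')$ satisfies an identical bound, and because $4q_k V_k(x) e^{-2q_k x}$ is $O_{L^\infty}(k^{2} e^{-2q_k x})$ by Proposition \ref{lem:propertiesofV}, a Volterra iteration analogous to that in Proposition \ref{prop:existencevolterraperturb} converges uniformly for $\sigma \in \mathbb{I}_{[-1-\eta,\eta]}$, with each correction preserving the seed structure while gaining a factor of $k^{2}$. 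Summing the series yields (\ref{eq:dirichlethighenergy}) at $j=0$, and differentiating (\ref{eq:voltdirichlet1}) handles $j=1,2$: each $\partial_x$ acting on $\sin(\sigma x)/\sigma$ or on $m_\sigma(x,x')$ contributes a factor of $|\sigma|$, yielding the $(1+|\sigma|)^{-1+j}$ scaling. The bound (\ref{eq:dirichletgrowthbd}) then follows by absorbing all $\sigma$-dependent slack into the locally bounded function $c(\sigma)$.

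The main technical obstacle is maintaining uniformity near the degenerate points $\sigma \in \{0, -iq_k\}$, where the basis $\{f_{\pm,\sigma}\}$ becomes linearly dependent and individual coefficients involving $\sin(\pi\hat\sigma)^{-1}$ or $\Gamma(-\hat\sigma)$ acquire poles. For the outgoing estimates this is handled by the observation that $f_{+,\sigma}$ itself extends holomorphically across these points via its convergent series representation (\ref{f+sigmaexpansion}), and the Volterra solution $f_{(out),\sigma}$ inherits this holomorphy by Proposition \ref{prop:existencevolterraperturb}. For the Dirichlet problem the corresponding delicate point is the removable singularity of $\sin(\sigma x)/\sigma$ at $\sigma=0$, which is precisely what produces the factor $(1+|\sigma|)^{-1}$ rather than $|\sigma|^{-1}$ in the final estimate. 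Uniformity in $k \in [0,\epsilon)$ is by comparison the easier part, since all expansions and kernel bounds are well-behaved as $k \to 0$.
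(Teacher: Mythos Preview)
Your proposal is correct and follows exactly the approach the paper indicates: the paper does not give a detailed proof of this lemma but states, in the sentence immediately preceding it, that the bounds are ``a consequence of the Volterra equations (\ref{perturb:volt1}), (\ref{eq:voltdirichlet1}), (\ref{eq:voltdirichlet2}) and kernel estimates,'' and notes that the $(1+|\sigma|)^{-1}$ gain for $f_{(dir),\sigma}$ comes from the $\sigma^{-1}$ dependence of $\sin(\sigma x)/\sigma$. Your write-up fills in precisely these details using (\ref{f+sigmaexpansion}), (\ref{eq:tailbound1}), and (\ref{eq:model2})--(\ref{eq:model2.6}), so there is nothing to add.
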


\vspace{.5cm}
We next introduce function spaces encoding regularity and asymptotic decay for functions $f(x): [0,\infty) \rightarrow \mathbb{C}$.
\begin{definition}
    Let $\beta \in (1,2)$, $l \in \mathbb{N}$ be given parameters. Define 
    \begin{align}
        \mathcal{D}^{\beta,l}_{\infty}(\mathbb{R}_+) \doteq &\{f(x): [0,\infty)\rightarrow \mathbb{C}\mid  f(x) \in C^{l}_x(\mathbb{R}_+), \  e^{\beta q_k x}\frac{d^j}{dx^j}f(x) \in L^\infty(\mathbb{R_+}) \ \textrm{for} \ 0 \leq j \leq l \}, \label{defn:Dinftyspaces}
    \end{align}
    with the associated norm 
    \begin{equation}
        \| f \|_{\mathcal{D}^{\beta,l}_{\infty}(\mathbb{R}_+)} \doteq \sum_{j=0}^{l} \big\|e^{\beta q_k x}\frac{d^j}{dx^j}f(x) \big\|_{L^\infty(\mathbb{R}_+)}.
    \end{equation}
    For $\delta > 0$ and $c_0 \in \mathbb{R}$, introduce the space 
    \begin{align}
        \wt{\mathcal{D}}^{\beta,l,\delta,c_0}_{\infty}(\mathbb{R}_+) \doteq &\{f(x): [0,\infty)\rightarrow \mathbb{C}\mid  f(x) \in C^{l}(\mathbb{R}_+), \ f(x) - c_0 e^{- \beta q_k x} \in \mathcal{D}^{\beta+\delta,l}_\infty(\mathbb{R}_+)\},\label{defn:Dtildeinftyspaces}
    \end{align}
    with the associated norm 
    \begin{equation}
        \| f \|_{\wt{\mathcal{D}}^{\beta,l,\delta,c_0}_{\infty}(\mathbb{R}_+)} \doteq  |c_0| + \|f - c_0 e^{-x} \|_{\mathcal{D}^{\beta + \delta,l}_{\infty}(\mathbb{R}_+)}.
    \end{equation}
    Observe $\wt{\mathcal{D}}^{\beta,l,\delta,c_0}_{\infty}(\mathbb{R}_+) \subset \mathcal{D}^{\beta,l}_{\infty}(\mathbb{R}_+)$.
\end{definition}

We next define the resolvent operator $R(\sigma)$, which serves as a right inverse to $P_k(\sigma)$ on $\mathcal{D}^{\beta,l}_{\infty}(\mathbb{R}_+)$ spaces, for $\sigma$ in an appropriate complex strip (modulo a discrete set). A priori, the domain of $\sigma$ for which $R(\sigma)f$ is defined depends heavily on the assumed decay of $f$, i.e. on the value $\beta$ of the domain space. The larger $\beta$, the further \textit{down} into the complex plane the operator $R(\sigma)$ will extend.

\begin{definition}
    \label{dfn:scatteringresolv}
For fixed $\sigma \in \mathbb{I}_{[-1-\eta,\eta]}$, define the Wronskian
\begin{equation}
    \mathcal{W}(\sigma) \doteq W[f_{(dir),\sigma}, f_{(out),\sigma}] = - f_{(out),\sigma}(0)
\end{equation}
of the Dirichlet and outgoing solutions. The second equality follows by explicit evaluation of the Wronskian at ${x=0}$. Observe $\mathcal{W}(\sigma)$ is a holomorphic function of $\sigma \in \mathbb{I}_{[-1-\eta,\eta]}$ by Proposition \ref{prop:existencevolterraperturb}.

Let $\beta \in (1,2)$ be a given parameter, and define $\beta_* \doteq \min(\beta q_k, 1+\eta)$. Assume $\sigma \in \mathbb{I}_{(-\beta_*, \eta]}$ satisfies $\mathcal{W}(\sigma) \neq 0$. Define the \textbf{scattering resolvent} $R(\sigma): \mathcal{D}^{\beta,l}_\infty(\mathbb{R}_+)\rightarrow L^2_{loc}(\mathbb{R}_+)$ as an operator with integral kernel 
\begin{equation}
    R_\sigma(x,x') = \begin{cases}
        \frac{f_{(dir),\sigma}(x)f_{(out),\sigma}(x') }{\mathcal{W}(\sigma)}, & x < x' \\[3\jot]
        \frac{f_{(dir),\sigma}(x')f_{(out),\sigma}(x) }{\mathcal{W}(\sigma)}, & x > x'.
    \end{cases}
    \label{eq:resolvkernel}
\end{equation} 
This operator is well defined by (\ref{eq:outgoinghighenergy}), (\ref{eq:dirichlethighenergy}). 

\vspace{1em}
\noindent
For ${x_0 > 0}$, let $\rho_{x_0}(x)$ denote a smooth, decreasing cutoff function identically equal to $1$ for $x \leq x_0$, and identically equal to $0$ for $x \geq x_0 + 1$. Define the \textbf{cutoff resolvent} $\rho_{x_0}R(\sigma): \mathcal{D}_\infty^{\beta,l}(\mathbb{R}_+) \rightarrow L^2_x(\mathbb{R}_+)$
as an operator with integral kernel $ \rho_{x_0}(x)R_\sigma(x,x').$ For $h \in \mathcal{D}^{\beta,l}_\infty(\mathbb{R}_+)$ and fixed $x, x_0 \geq 0$, the function $(\rho_{x_0}R(\sigma) h)(x)$ is a meromorphic function of $\sigma \in \mathbb{I}_{(-\beta_*,\eta]},$ with poles at the zeroes of $\mathcal{W}(\sigma)$ lying in $\mathbb{I}_{(-\beta_*,\eta]}$. 

\vspace{1em}
\noindent
Finally, label any point ${\sigma_* \in \mathbb{I}_{[-1-\eta, \eta]}}$ with $\mathcal{W}(\sigma_*) = 0$ a \textbf{scattering resonance}, and the associated $f_{(dir),\sigma_*}(x)$ a \textbf{resonance function}. The order of vanishing is the \textbf{multiplicity} of the resonance.
\end{definition}

\subsection{Analytic properties of $R(\sigma)$}
\label{sec:resolventconstr_meromorphic}
Decay of solutions to the wave equation (\ref{eq:wavehyper}) is intimately related to the analytic properties of the resolvent (on appropriate spaces). In this section we study a) the location and multiplicity of scattering resonances, and b) the large $|\sigma|$ behavior of $R_\sigma(x,x')$, otherwise known as high-energy estimates. 

We first establish the existence of a large region free of scattering resonances, provided $k$ is sufficiently small.

\begin{lemma}
    \label{lem:polefreeregion}
    There exists $k$ sufficiently small such that for $\sigma \in \mathbb{I}_{[0,\eta]} \cup \Big( \{\Re \sigma \geq 1\} \cap \mathbb{I}_{[-1-\eta,\eta]} \Big),$ we have 
    \begin{equation}
        \label{eq:wronskianlowerbdfarregion}
        |\mathcal{W}(\sigma)| \gtrsim \Big|\frac{1}{\Gamma(1-\hat{\sigma})}\Big|.
    \end{equation} 
    In particular, the region $\mathbb{I}_{[0,\eta]} \cup \Big( \{\Re \sigma \geq 1\} \cap \mathbb{I}_{[-1-\eta,\eta]} \Big)$ is free of scattering resonances.
    \end{lemma}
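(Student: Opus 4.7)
The plan is to evaluate $\mathcal{W}(\sigma) = -f_{(out),\sigma}(0)$ directly using the Volterra construction of Proposition~4.2 together with the explicit expansion~(\ref{f+sigmaexpansion}), and show that the leading term has modulus $\gtrsim |1/\Gamma(1-\hat{\sigma})|$ while all corrections are $O(k^2)$ smaller in the stated region.

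First, I would collect the algebraic identities. From~(\ref{f+sigmaexpansion}), factoring $\Gamma(2-\hat{\sigma}) = (1-\hat{\sigma})\Gamma(1-\hat{\sigma})$, we may write
\begin{equation*}
    f_{+,\sigma}(0) = \frac{(2q_k)^{\hat{\sigma}}}{\Gamma(1-\hat{\sigma})}\Big( 1 + \frac{w_k^2 k^2}{(2q_k)^2(1-\hat{\sigma})} + \frac{w_k^4 k^4 g_{+,\sigma}(0)}{1-\hat{\sigma}}\Big).
\end{equation*}
In $\mathcal{U} \doteq \mathbb{I}_{[0,\eta]} \cup (\{\Re \sigma \geq 1\}\cap \mathbb{I}_{[-1-\eta,\eta]})$, a short computation with $\hat{\sigma}=i\sigma/q_k$ shows $|1-\hat{\sigma}| \geq c > 0$ uniformly (the exclusion of a neighborhood of $\sigma = -iq_k$, where $\hat{\sigma}=1$, is precisely what makes this true). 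Combined with the uniform bound $|(2q_k)^{\hat{\sigma}}| = (2q_k)^{\Re\hat{\sigma}} \sim 1$ coming from $\Re\hat{\sigma} \in [-\eta/q_k,(1+\eta)/q_k]$, this gives
\begin{equation*}
    |f_{+,\sigma}(0)| \geq \frac{c_0}{|\Gamma(1-\hat{\sigma})|}(1-C k^2), \qquad \sigma \in \mathcal{U},
\end{equation*}
for absolute constants $c_0,C>0$ and $k$ small.

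Second, I would bound the Volterra correction $f_{(out),\sigma}(0) - f_{+,\sigma}(0)$. The tail bound~(\ref{eq:tailbound1}) with $N=1$ and $x=0$ yields
\begin{equation*}
    |f_{(out),\sigma}(0) - f_{+,\sigma}(0)| \lesssim k^2 \|e^{\Im\sigma\cdot x} f_{+,\sigma}\|_{L^\infty(\mathbb{R}_+)}.
\end{equation*}
But $|e^{\Im\sigma\cdot x}f_{+,\sigma}(x)| = |e^{-i\sigma x}f_{+,\sigma}(x)|$, and the right side is bounded by (\ref{f+sigmaexpansion}) uniformly in $x$ by a constant multiple of $|1/\Gamma(2-\hat{\sigma})| \lesssim |1/\Gamma(1-\hat{\sigma})|$ (again using $|1-\hat{\sigma}|\geq c$ on $\mathcal{U}$). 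Hence the Volterra correction is also $\lesssim k^2 |1/\Gamma(1-\hat{\sigma})|$.

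Combining the two pieces, for $k$ sufficiently small the triangle inequality absorbs the Volterra correction and the $Ck^2$ error inside $f_{+,\sigma}(0)$, giving
\begin{equation*}
    |\mathcal{W}(\sigma)| = |f_{(out),\sigma}(0)| \geq \frac{c_0}{2|\Gamma(1-\hat{\sigma})|},\qquad \sigma \in \mathcal{U},
\end{equation*}
which is~(\ref{eq:wronskianlowerbdfarregion}). The main technical point to watch is the uniform lower bound $|1-\hat{\sigma}|\gtrsim 1$ on $\mathcal{U}$, without which the $k^2$ perturbation terms in~(\ref{f+sigmaexpansion}) cannot be controlled; this is the reason the stated region carefully avoids a neighborhood of $\sigma = -iq_k$.
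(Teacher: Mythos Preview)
Your proof is correct and follows essentially the same approach as the paper: both evaluate $\mathcal{W}(\sigma)=-f_{(out),\sigma}(0)$ by combining the expansion~(\ref{f+sigmaexpansion}) for $f_{+,\sigma}(0)$ with the tail bound~(\ref{eq:tailbound1}) for the Volterra correction, and absorb the $O(k^2)$ errors using $|(2q_k)^{\hat{\sigma}}|\sim 1$. Your explicit identification of the uniform lower bound $|1-\hat{\sigma}|\geq c$ on $\mathcal{U}$ (needed to pass between $\Gamma(2-\hat{\sigma})$ and $\Gamma(1-\hat{\sigma})$) is a point the paper leaves implicit.
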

    \begin{proof}
    Given $\mathcal{W}(\sigma) = - f_{(out),\sigma}(0)$, it suffices to control the value of $f_{(out),\sigma}(0)$. This will be possible for $k$ small. By the tail bound (\ref{eq:tailbound1}) and (\ref{f+sigmaexpansion}), for $\sigma \in \mathbb{I}_{[0,\eta]} \cup \Big( \{\Re \sigma \geq 1\} \cap \mathbb{I}_{[-1-\eta,\eta]} \Big)$ we have 
    \begin{equation*}
        | \mathcal{W}(\sigma) - f_{+,\sigma}(0)| \lesssim k^2 \big|(2q_k)^{\hat{\sigma}}\big|
        \Big|\frac{1}{\Gamma(1-\hat{\sigma})}\Big| \lesssim k^2 \Big|\frac{1}{\Gamma(1-\hat{\sigma})}\Big|,
    \end{equation*}
    and 
    \begin{equation*}
        \Big| f_{+,\sigma}(0) - (2q_k)^{\hat{\sigma}}\frac{1}{\Gamma(1-\hat{\sigma})} \Big|\lesssim k^2 \big|(2q_k)^{\hat{\sigma}}\big|\Big|\frac{1}{\Gamma(1-\hat{\sigma})}\Big| \lesssim k^2 \Big|\frac{1}{\Gamma(1-\hat{\sigma})}\Big|.
    \end{equation*}
    Given that $\Im \sigma$ is bounded above and below, we moreover have $\big|(2q_k)^{\hat{\sigma}} \big| \gtrsim 1$, and so 
    \begin{align*}
        |\mathcal{W}(\sigma)| &\geq |f_{+,\sigma}(0)| - O(k^2)\Big|\frac{1}{\Gamma(1-\hat{\sigma})}\Big| \\[\jot]
        &\gtrsim (1- O(k^2))\Big|\frac{1}{\Gamma(1-\hat{\sigma})}\Big|,
    \end{align*}
    implying the desired result for $k$ sufficiently small.
    \end{proof}

    Having established control on the Wronskian away from a bounded region in the lower half plane, we now give a useful interpretation of the resolvent operator. Provided $\Im \sigma >0$, $R(\sigma)$ gives the unique right inverse to $P_k(\sigma)$ which is bounded on $L^\infty(\mathbb{R}_+)$.

\begin{lemma}
    \label{lem:resolventuniquness}
    Let $h(x) \in L^\infty(\mathbb{R}_+)$ be given, and $k$ chosen sufficiently small. For all $\sigma \in \mathbb{I}_{(0,\eta]}$, $R(\sigma)$ extends to a bounded map on $L^\infty(\mathbb{R_+})$. Moreover, $u(x) = (R(\sigma)h)(x) \in L^\infty(\mathbb{R}_+) \cap C^2_x(\mathbb{R}_+)$ is a solution to  
    \begin{equation}
        \label{eq:forcedeqndefnofresolvent}
        \begin{cases}
            P(\sigma)u = h \\[\jot]
            u(0)=0,\\[\jot]
        \end{cases}
    \end{equation}
    and is the unique such solution with at most polynomial (in $x$) growth as $x\rightarrow \infty$.
    \end{lemma}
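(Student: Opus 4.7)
The plan is to split the argument into three steps: (i) boundedness of $R(\sigma)$ on $L^\infty(\mathbb{R}_+)$, (ii) verification that $u = R(\sigma)h$ is a $C^2$ solution of (\ref{eq:forcedeqndefnofresolvent}), and (iii) uniqueness under the polynomial growth hypothesis. The central inputs are the pointwise bounds (\ref{eq:outgoinggrowthbd}), (\ref{eq:dirichletgrowthbd}) for the outgoing and Dirichlet solutions, together with the Wronskian lower bound from Lemma \ref{lem:polefreeregion}, which guarantees $\mathcal{W}(\sigma) \neq 0$ throughout $\mathbb{I}_{[0,\eta]}$ for $k$ small.

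For the boundedness claim, I would unpack (\ref{eq:resolvkernel}) to write
\[
(R(\sigma)h)(x) = \frac{f_{(out),\sigma}(x)}{\mathcal{W}(\sigma)}\int_0^x f_{(dir),\sigma}(x') h(x') dx' + \frac{f_{(dir),\sigma}(x)}{\mathcal{W}(\sigma)}\int_x^\infty f_{(out),\sigma}(x') h(x') dx'.
\]
Applying the bounds $|f_{(out),\sigma}(x)| \lesssim c(\sigma) e^{-\Im\sigma\, x}$ and $|f_{(dir),\sigma}(x)| \lesssim c(\sigma) e^{\Im\sigma\, x}$, each product of boundary factor and integrated tail contributes at most $(\Im\sigma)^{-1}\|h\|_{L^\infty}$, giving $\|R(\sigma)h\|_{L^\infty} \leq C(\sigma)\|h\|_{L^\infty}$. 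The strict positivity $\Im\sigma > 0$ is essential here both for convergence of the outer integral and for uniform boundedness.

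Differentiating this representation twice, using $\frac{d}{dx}\int_0^x f_{(dir),\sigma}(x')h(x')dx' = f_{(dir),\sigma}(x)h(x)$ and the analogous formula for the outer integral, the boundary contributions at first order cancel exactly, and at second order one collects the Wronskian factor $W[f_{(dir),\sigma}, f_{(out),\sigma}]/\mathcal{W}(\sigma) = 1$ multiplying $h(x)$. Combined with the fact that $f_{(out),\sigma}$ and $f_{(dir),\sigma}$ each solve the homogeneous equation $P_k(\sigma) f = 0$, this yields $P_k(\sigma)u = h$ pointwise, with $u \in C^2_x(\mathbb{R}_+)$. The boundary condition $u(0) = 0$ follows since at $x=0$ the full integral reduces to a multiple of $f_{(dir),\sigma}(0) = 0$.

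The main subtle point is uniqueness, which I would argue as follows. Let $\tilde{u}$ be any other solution satisfying the same conditions, and set $v = u - \tilde{u}$; then $v$ solves the homogeneous equation with $v(0) = 0$ and has polynomial growth at infinity. By Proposition \ref{prop:existencevolterraperturb}, $\{f_{(out),\sigma}, f_{(in),\sigma}\}$ form a basis of the solution space for $\sigma \notin\{0, -iq_k\}$, so write $v = c_1 f_{(out),\sigma} + c_2 f_{(in),\sigma}$. The asymptotic expansions (\ref{eq:fout_exp_1})--(\ref{eq:fout_exp_2}) show that $f_{(in),\sigma}(x) \sim e^{\Im\sigma\, x}$ grows exponentially whenever $\Im\sigma > 0$, forcing $c_2 = 0$ under the polynomial growth assumption. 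Evaluating at the origin then yields $c_1 f_{(out),\sigma}(0) = -c_1 \mathcal{W}(\sigma) = 0$, and since $\sigma \in \mathbb{I}_{(0,\eta]}$ lies in the pole-free region of Lemma \ref{lem:polefreeregion}, we conclude $c_1 = 0$ and $v \equiv 0$. The key conceptual input here is that the upper-half-plane condition creates a genuine exponential dichotomy between $f_{(in),\sigma}$ and $f_{(out),\sigma}$, without which uniqueness would fail; this is precisely why the lemma restricts to $\mathbb{I}_{(0,\eta]}$ rather than extending to the real axis.
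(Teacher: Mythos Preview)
Your proposal is correct and follows essentially the same approach as the paper: the boundedness argument via the kernel representation and growth bounds (\ref{eq:outgoinggrowthbd}), (\ref{eq:dirichletgrowthbd}) matches exactly, and your uniqueness argument via the basis $\{f_{(out),\sigma}, f_{(in),\sigma}\}$ is equivalent to the paper's Wronskian formulation (the paper phrases it as $W[u_0, f_{(dir),\sigma}]=W[u_0,f_{(out),\sigma}]=0 \Rightarrow \mathcal{W}(\sigma)=0$, but this is the same content). Your treatment is in fact slightly more explicit than the paper's in verifying $P_k(\sigma)u = h$ by direct differentiation and in handling the boundary condition, where the paper takes a limiting argument rather than direct evaluation.
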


\begin{proof}
    We first show that provided $\sigma \in \mathbb{I}_{(0,\eta]}$, the resolvent is bounded on $L^\infty(\mathbb{R}_+)$. Expressing $(R(\sigma) h)(x)$ via the integral kernel (\ref{eq:resolvkernel}) and estimating yields 
    \begin{align*}
        |(R(\sigma) h)(x)| &\lesssim_\sigma \Big|\frac{f_{(out),\sigma}(x)}{\mathcal{W}(\sigma)} \Big| \int_0^x | f_{(dir),\sigma}(x') h(x')|dx' + \Big|\frac{f_{(dir),\sigma}(x)}{\mathcal{W}(\sigma)} \Big| \int_x^\infty | f_{(out),\sigma}(x') h(x')|dx' \\
        &\lesssim \|h\|_{L^\infty(\mathbb{R_+})}.
    \end{align*}
We have used the $\sigma$-dependent growth/decay bounds (\ref{eq:outgoinggrowthbd}), (\ref{eq:dirichletgrowthbd}), as well as the Wronskian lower bound (\ref{eq:wronskianlowerbdfarregion}) for $k$ sufficiently small. It follows that $R(\sigma)h$ is well-defined in $L^\infty(\mathbb{R}_+)$. Moreover, a consequence of $h \in L^\infty(\mathbb{R}_+)$ and the continuity of $f_{(dir),\sigma}, f_{(out),\sigma}$ is that $R(\sigma)h \in C^0_x(\mathbb{R}_+)$. It thus makes sense to consider $(R(\sigma)h)(0)$. 

To show that Dirichlet boundary conditions are satisfied, write 
\begin{align*}
    (R(\sigma)h)(x) = \frac{f_{(out),\sigma}(x)}{\mathcal{W}(\sigma)}\int_{0}^x f_{(dir),\sigma}(x')h(x')dx' + \frac{f_{(dir),\sigma}(x)}{\mathcal{W}(\sigma)}\int_{x}^\infty f_{(out),\sigma}(x')h(x')dx'.
\end{align*}
By the same bounds (\ref{eq:outgoinggrowthbd}), (\ref{eq:dirichletgrowthbd}), for fixed $\sigma$ estimate
\begin{align*}
    |(R(\sigma)h)(x)| &\lesssim \frac{1}{|\mathcal{W}(\sigma)|}\|h\|_{L^\infty([0,x])} + \frac{|f_{(dir),\sigma}(x)|}{|\mathcal{W}(\sigma)|}\|f_{(out),\sigma} \|_{L^1_x([0,\infty))}\|h \|_{L^\infty([0,\infty))} \xrightarrow{x \rightarrow 0} 0.
\end{align*}

To see that $R(\sigma)h$ provides the \textit{unique} solution to (\ref{eq:forcedeqndefnofresolvent}), it suffices to observe that the homogeneous equation has no non-trivial solutions with polynomial growth that satisfy Dirichlet boundary conditions. Such a non-trivial solution, denoted $u_0(x)$, would vanish at ${x=0}$---implying $W[u_0,f_{(dir),\sigma}]=0$---and would be asymptotic to $f_{(out),\sigma}$---implying $W[u_0, f_{(out),\sigma}]=0$. It would in turn follow that $\mathcal{W}(\sigma)=0$, contradicting the lower bound (\ref{eq:wronskianlowerbdfarregion}).
\end{proof}

We complete the discussion of the resonances of $R(\sigma)$ (i.e. zeroes of $\mathcal{W}(\sigma)$) for $\sigma \in  \mathbb{I}_{[-1-\eta,\eta]}.$  By Lemma \ref{lem:polefreeregion}, the region $\sigma \in  \mathbb{I}_{[0,\eta]} \cup \Big( \{\Re \sigma \geq 1\} \cap \mathbb{I}_{[-1-\eta,\eta]} \Big)$ is devoid of resonances for all $k$ sufficiently small. In the complement we will find that there is a \textit{unique}, \textit{simple} resonance at ${\sigma = -i}$. 

To prove $\mathcal{W}(-i) = 0$ (setting aside the simplicity of the zero), we must show that the Dirichlet and outgoing solutions with ${\sigma = -i}$ are linearly dependent. To facilitate this we appeal to the following lemma, which provides a sufficient condition for a function $f(x)$ to satisfy outgoing boundary conditions. 

\begin{lemma}[Regularity perspective on outgoing boundary conditions]
    \label{lem:regularityperspective}
    Fix $\sigma \in \mathbb{I}_{[-1-\eta,\eta]} \setminus \{0, -iq_k\}$, and let $f(x) \in L^\infty_{loc}(\mathbb{R}_+)$ solve $P_k(\sigma)f=0$. Define the associated function $F(z) \doteq e^{-i \sigma t}f(x)\big|_{\{s=0\}}$, where $x,t$ are expressed as functions of similarity coordinates.
     
    Then the following correspondence holds between outgoing boundary conditions for $f(x)$ with parameter $\sigma$, and regularity of $F(z)$ as $z \rightarrow 0$.
    \begin{itemize}
        \item If \underline{$\Im \sigma > 0$}, then $f(x)$ satisfies outgoing boundary conditions iff $F(z)$ remains \underline{bounded} up to $\{z=0\}$.
        \item If \underline{$\Im \sigma = 0$}, then $f(x)$ satisfies outgoing boundary conditions iff $F(z)$ extends \underline{continuously} to $\{z=0\}$.
        \item If \underline{$\Im \sigma < 0$}, define $n$ such that ${-(n+1)q_k < \Im \sigma \leq -n q_k}$. Then $f(x)$ satisfies outgoing boundary conditions iff $F(z)$ \underline{lies in $C^\alpha_z$}, $\alpha = \min(n+1,2^{-})$ up to $\{z=0\}$. 
    \end{itemize} 
    \end{lemma}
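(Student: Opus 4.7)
Since $\sigma \notin \{0,-iq_k\}$ lies in $\mathbb{I}_{[-1-\eta,\eta]}$, Proposition~\ref{prop:existencevolterraperturb} gives that $\{f_{(out),\sigma}, f_{(in),\sigma}\}$ is a basis for the kernel of $P_k(\sigma)$, so we may write $f(x) = A\,f_{(out),\sigma}(x) + B\,f_{(in),\sigma}(x)$; outgoing boundary conditions is the condition $B=0$. On $\{s=0\}$ we have $t=x$ and $z = -e^{-2q_k x}$, hence
\begin{equation*}
F(z) \;=\; A\,e^{-i\sigma x}f_{(out),\sigma}(x) \;+\; B\,e^{-i\sigma x}f_{(in),\sigma}(x).
\end{equation*}
The key observation is that the two summands have qualitatively different singular structure at $z=0$. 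Using $i\sigma = q_k \hat{\sigma}$, a direct computation gives $e^{-2i\sigma x} = |z|^{\hat{\sigma}}$, and so combining the expansions \eqref{f+sigmaexpansion}--\eqref{f-sigmaexpansion} with the Volterra iteration \eqref{perturb:volt1}--\eqref{perturb:volt2} yields asymptotic expansions
\begin{align*}
e^{-i\sigma x}f_{(out),\sigma}(x) &\;=\; c_0(\sigma) + c_1(\sigma)|z| + R_{(out)}(z),\\
e^{-i\sigma x}f_{(in),\sigma}(x) &\;=\; |z|^{\hat{\sigma}}\bigl[\,d_0(\sigma) + d_1(\sigma)|z| + R_{(in)}(z)\bigr],
\end{align*}
where the leading coefficients $c_0(\sigma), d_0(\sigma)$ are holomorphic and nonvanishing on $\mathbb{I}_{[-1-\eta,\eta]}\setminus\{0,-iq_k\}$ (by our restriction on $\Im\sigma$ and the explicit gamma-function factors in \eqref{f+sigmaexpansion}--\eqref{f-sigmaexpansion}). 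The remainders $R_{(out)}$, $R_{(in)}$ will be shown to lie in $C^{2^-}$ in $|z|$ with vanishing of at least order $|z|^{2-\epsilon}$ at $0$; the cap at $2^-$ reflects the limited regularity of $V_k$ recorded in Proposition~\ref{lem:propertiesofV}, specifically the degenerating bound \eqref{eq:potentialest3.5}.

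Assuming these expansions, the three cases are dispatched by elementary analysis of the function $|z|^{\hat{\sigma}} = |z|^{\alpha}e^{i\gamma\ln|z|}$ where $\alpha \doteq \Re\hat{\sigma} = -\Im\sigma/q_k$ and $\gamma \doteq \Re\sigma/q_k$. If $\Im\sigma > 0$, then $\alpha < 0$ and $|z|^{\hat{\sigma}}$ diverges, so $F$ bounded forces $B=0$; conversely $B=0$ gives $F$ smooth (in $|z|$) up to $z=0$ by the $A$-expansion. If $\Im\sigma = 0$ and $\sigma\neq 0$, then $\alpha=0$ and $\gamma\neq 0$, so $|z|^{\hat{\sigma}}$ oscillates without limit, and continuity at $z=0$ again forces $B=0$. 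If $\Im\sigma<0$, so $\alpha>0$ with $n\leq \alpha < n+1$ and $n\in\{0,1\}$ in our range, then $|z|^{\hat{\sigma}}$ lies in $C^{\alpha}$ (and $C^{n-1,r}$ for any $r<1$ if $\alpha=n$ with $\gamma\neq 0$), but fails to lie in any $C^{\beta}$ with $\beta>\alpha$: the $(n+1)$-th derivative of $|z|^{\hat{\sigma}}$ contains an inverse power of $|z|$ (or a logarithm, in the borderline integer case), which cannot be cancelled by the integer-power expansion of the $A$ part since $c_0, d_0 \neq 0$. This obstruction rules out $F\in C^{\min(n+1,2^-)}$ when $B\neq 0$; conversely $B=0$ gives $F\in C^{2^-}$ directly from the $A$-expansion, covering the remaining implication.

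The main technical obstacle is establishing the $C^{2^-}$ regularity of $e^{-i\sigma x}f_{(out),\sigma}$ as a function of $|z|$. The leading behavior is immediate from the Bessel expansion \eqref{f+sigmaexpansion}, which is polynomial in $|z|$ with holomorphic $\sigma$-dependent coefficients. The subtle part is the Volterra correction $f_{(out),\sigma} - f_{+,\sigma}$: differentiating under the integral in \eqref{perturb:volt1} one picks up derivatives of $V_k$, which by \eqref{eq:potentialest3.5} are controlled by $|z|^{1-p_k k^2-(j-2)}$ for $2\leq j\leq 5$. Converting to the $z$ variable via $dx = -\tfrac{1}{2q_k}d\log|z|$ and tracking powers carefully yields the claim that two derivatives in $z$ of the remainder are integrable in $L^p$ up to $z=0$ for any $p<\infty$, hence $C^{1,r}$ for some $r$ close to $1$. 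The borderline integer case $\alpha=n$ with $\gamma\neq 0$ in part~(iii) is handled by computing $\partial_z^n[|z|^{\hat{\sigma}}]$ explicitly, extracting the leading $\log|z|$ term, and observing that the corresponding derivatives of the $A$-expansion remain bounded---so no cancellation is possible.
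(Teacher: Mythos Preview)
Your proposal is correct and takes essentially the same approach as the paper: decompose $f$ in the basis $\{f_{(out),\sigma}, f_{(in),\sigma}\}$, restrict to $\{s=0\}$, identify the factor $e^{-2i\sigma x}=|z|^{\hat\sigma}$ as the sole obstruction to regularity on the ingoing branch, and run the case analysis on $\Im\sigma$. The paper is considerably more terse---it simply writes the schematic expansions $e^{-i\sigma t}f_{(out),\sigma}=c_{(out),\sigma}e^{-i\sigma s}(1+\mathcal{E}(z))$ and $e^{-i\sigma t}f_{(in),\sigma}=c_{(in),\sigma}e^{-i\sigma s}|z|^{\hat\sigma}(1+\mathcal{E}(z))$ with $\mathcal{E}\in C^{2-}_z$ asserted without elaboration---whereas you spell out the source of the $C^{2^-}$ cap and the borderline integer case more explicitly.
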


    \begin{proof}
        The proof relies on the observation that for $\sigma \in \mathbb{I}_{[-1-\eta,\eta]} \setminus \{0, -iq_k\}$, we have a basis of solutions for $P_k(\sigma)f=0$ given by $f_{(out),\sigma}, f_{(in),\sigma}$. Computing $f_{(out),\sigma}e^{-i\sigma t}, f_{(in),\sigma}e^{-i\sigma t}$ using (\ref{eq:fout_exp_1})--(\ref{eq:fout_exp_2}) and (\ref{f+sigmaexpansion})--(\ref{f-sigmaexpansion}), we schematically find
            \begin{align}
                e^{-i \sigma t} f_{(out),\sigma} &= c_{(out),\sigma}e^{-i \sigma s}\big(1 +\mathcal{E}(z) \big), \label{lem4.8:temp1}\\[\jot] 
                e^{-i \sigma t} f_{(in),\sigma} &= c_{(in),\sigma} e^{-i \sigma s}e^{i\frac{\sigma}{q_k}\ln|z|}\big(1 +\mathcal{E}(z) \big),\label{lem4.8:temp2}
            \end{align}
        for non-zero constants $c_{(out),\sigma}, c_{(in),\sigma}$. Here, $\mathcal{E}(z) \in C^{2-}_z([-1,0])$.  Thus, if we restrict attention to $C^{2-}_z$, then the regularity of $e^{-i \sigma t} f_{(out),\sigma}$ and $e^{-i \sigma t} f_{(in),\sigma}$ is dictated by the complex exponential factors. 

        For example, if $\Im \sigma > 0$, then (\ref{lem4.8:temp1}) is manifestly bounded in $z$; however, the factor $e^{-i \sigma s} e^{i\frac{\sigma}{q_k}\ln|z|}$ appearing in (\ref{lem4.8:temp2}) fails to be bounded as $z \rightarrow 0$. It follows that if $f(x)$ is a solution to $P_k(\sigma)f=0$ for $\sigma \notin \{0, -i q_k\}$, then $F(z)$ is bounded up to $\{z=0\}$ if and only if $f(x)$ and $f_{(out),\sigma}$ are linearly dependent, i.e. if $f(x)$ satisfies outgoing boundary conditions.
        
        The statements for $\Im \sigma = 0$, $\Im \sigma < 0$ are proved similarly. Note the significance of avoiding ${\sigma = 0}$ or ${\sigma = -i q_k}$. In both cases, the complex exponential factor, responsible for the limited regularity of the ingoing solution, becomes anomalously regular. The above argument does not apply to these cases.
        \end{proof}

        \begin{prop}
            \label{prop:residueofR}
            Let $k$ be sufficiently small. There exists a unique, simple zero of $\mathcal{W}(\sigma)$ on the domain $\sigma \in  \mathbb{I}_{[-1-\eta,\eta]}$ at ${\sigma_*=-i}$. The associated resonance function is given by $f_{(dir),-i} = a_* e^{x}\mr{r}(x)$, for a nonzero constant $a_*$.
            
            For $x_0 > 0$ fixed and $\beta > p_k$, define the operator $\rho_{x_0}R_{*}:\mathcal{D}^{\beta,0}_\infty(\mathbb{R}_+) \rightarrow L^2_{x}(\mathbb{R}_+)$ with integral kernel
            \begin{equation*}
                (\rho_{x_0}R_*)(x,x') \doteq \rho_{x_0}(x)e^{x+x'}\mr{r}(x)\mr{r}(x').
            \end{equation*} 
            For fixed $x \leq x_0$, and $h(x) \in \mathcal{D}^{\beta,0}_{\infty}(\mathbb{R}_+)$ with $\beta > p_k$, $(\rho_{x_0}R(\sigma)h)(x)$ admits the following expansion in a neighborhood of ${\sigma = -i}$:
            \begin{align}
                \label{eq:residueatpole}
                (\rho_{x_0}R(\sigma)h)(x) = \frac{A_1(\sigma,x)}{i + \sigma} + A_2(\sigma,x),
            \end{align}
            where $A_i(\sigma,x)$ are smooth in $(\sigma,x)$, holomorphic in $\sigma$ for fixed $x$, and bounded uniformly in terms of the $\mathcal{D}^{\beta, 0}_\infty(\mathbb{R}_+)$ norm of $h$. There exists a non-zero constant $c_*$ such that the residue is given by  
            \begin{equation}
                \label{eq:residueatpole1}
                A_1(-i,x) =  c_* \int_{0}^\infty (\rho_{x_0}R_*)(x,x')h(x')dx'.
            \end{equation}
            Finally, for fixed $x_0 > 0$, $h(x) \in \mathcal{D}^{\beta,0}_{\infty}(\mathbb{R}_+)$ with $\beta > p_k$, and $\sigma \in \mathbb{I}_{(-\beta_*,\eta]} $, the cutoff resolvent maps $\mathcal{D}^{\beta,0}_{\infty}(\mathbb{R}_+) \rightarrow C^2_x(\mathbb{R}_+)$ with bounds 
            \begin{equation}
                \label{eq:c0toc2bounds}
                \Big\| \frac{d^j}{dx^j}\big(\rho_{x_0} R(\sigma)h \big)\Big\|_{L^\infty(\mathbb{R}_+)} \lesssim (1+x_0)e^{|\Im \sigma| x_0}\Big(1 + \frac{1}{|i+\sigma|} \Big) \frac{(1+|\sigma|)^{-1+j}}{(\beta_* - |\Im\sigma|)}\|h\|_{\mathcal{D}^{\beta,0}_\infty(\mathbb{R}_+)}, \quad (0\leq j \leq 2). 
        \end{equation}
        \end{prop}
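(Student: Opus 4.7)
My plan is to combine a Rouché-type argument (to count zeros) with Lemma~\ref{lem:regularityperspective} (to locate them). Using (\ref{f+sigmaexpansion}) at $x=0$ together with (\ref{eq:fout_exp_1}), I would write
\[
\mathcal{W}(\sigma) = -(2q_k)^{\hat{\sigma}}\frac{1}{\Gamma(1-\hat{\sigma})} + O(k^2),
\]
whose leading term has in $\mathbb{I}_{[-1-\eta,\eta]}$ a unique simple zero at $\hat{\sigma} = 1$ (i.e.\ $\sigma = -iq_k$); the next candidate zero $\hat{\sigma} = 2$ falls below the strip for $k$ small. Combined with Lemma~\ref{lem:polefreeregion} and a quantitative version of the remainder estimate, Rouché on a small disk around $-iq_k$ yields \emph{exactly one} simple zero of $\mathcal{W}$ in $\mathbb{I}_{[-1-\eta,\eta]}$, somewhere within $O(k^2)$ of $-iq_k$. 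To pin the location down to $-i$, I would exhibit a resonance there by means of the constant solution $\varphi\equiv 1$ of (\ref{eq:1}): in hyperbolic coordinates, $\psi = r\varphi = r = e^{-t}\cdot e^{x}\mr{r}(x)$, hence $f(x) := e^{x}\mr{r}(x)$ satisfies $P_k(-i)f = 0$ and vanishes at $x = 0$ (the axis $z = -1$, where $\mr{r} = 0$). For the outgoing condition I would apply Lemma~\ref{lem:regularityperspective}: $\Im\sigma = -1 \in (-2q_k,-q_k)$ for small $k$ is the $n=1$ case, which requires $C^{2^-}_z$ regularity of $F(z):= e^{-t}f(x)|_{s=0}$ at $z=0$. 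On $\{s=0\}$ one has $t = x$, so $F(z) = \mr{r}(z)$, which is $C^2_z$ up to $z=0$ by Proposition~\ref{lemma:backgroundprelims1}. This forces $f_{(dir),-i}\propto f_{(out),-i}$, so $\mathcal{W}(-i) = 0$, which must coincide with the unique Rouché zero; normalizing $f'(0) = 1$ gives $f_{(dir),-i} = a_* e^{x}\mr{r}(x)$ with $a_* \neq 0$.

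\textbf{Paragraph 2 (residue expansion, (c)):}
At $\sigma = -i$ the linear dependence takes the explicit form $f_{(out),-i} = (a_* c_\infty)^{-1} f_{(dir),-i}$, where $c_\infty := \lim_{x\to\infty}\mr{r}(x) = \mr{r}|_{z=0} > 0$, obtained by matching $f_{(out),-i}(x)\sim e^{x}$ with $f_{(dir),-i}(x) = a_*e^{x}\mr{r}(x)\sim a_* c_\infty e^{x}$. Substituted into the kernel (\ref{eq:resolvkernel}), the two integrals collapse to the rank-one separable form $\mr{r}(x)e^{x}\cdot\int_0^\infty \mr{r}(x')e^{x'}h(x')\,dx'$ times a nonzero prefactor. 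Dividing by $\mathcal{W}(\sigma) = \mathcal{W}'(-i)(i+\sigma) + O((i+\sigma)^2)$, with $\mathcal{W}'(-i)\neq 0$ by simplicity of the zero, then yields (\ref{eq:residueatpole})--(\ref{eq:residueatpole1}) with $c_* \neq 0$. The extension of $R(\sigma)$ from $\{\Im\sigma > -1\}$ to a full neighborhood of $-i$ requires the far-field integral $\int_x^\infty |f_{(out),\sigma}(x')h(x')|\,dx'$ to remain convergent; the bound $|f_{(out),\sigma}(x')h(x')|\lesssim e^{(-\Im\sigma - \beta q_k)x'}$ shows this is possible precisely for $|\Im\sigma| < \beta q_k = \beta_*$, which is where the threshold assumption $\beta > p_k$ is used.

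\textbf{Paragraph 3 (quantitative bound (d), and main obstacle):}
For (d) I would insert the high-energy estimates (\ref{eq:outgoinghighenergy})--(\ref{eq:dirichlethighenergy}) into the kernel, split the $x'$-integral at the support of $\rho_{x_0}$, and estimate: the near-field part on $[0,x_0+1]$ supplies $(1+x_0)e^{|\Im\sigma|x_0}$ from the $(1+x')$ weight and the exponential growth factor; the far-field part on $[x,\infty)$ converges at rate $(\beta_* - |\Im\sigma|)^{-1}$ from the decay of $h$; each $\partial_x$ hitting $f_{(out),\sigma}$ or $f_{(dir),\sigma}$ costs a factor $(1+|\sigma|)$; and the Wronskian denominator, refined from (\ref{eq:wronskianlowerbdfarregion}) by extracting the vanishing factor $(i+\sigma)$ near the pole, yields the $(1 + |i+\sigma|^{-1})$ dependence. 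The hardest step in the whole argument will be pinning down the exact location of the zero: direct perturbative estimates alone only place $\sigma_*$ within $O(k^2)$ of $-iq_k$, insufficient to decide whether $\Im\sigma_* \leq -1$ (stability, self-similar bounds) or $\Im\sigma_* > -1$ (blue-shift instability). The regularity-based identification via Lemma~\ref{lem:regularityperspective} is therefore essential, translating the spectral question into the sharp regularity of $\mr{r}(z)$ at $z = 0$ supplied by Proposition~\ref{lemma:backgroundprelims1}.
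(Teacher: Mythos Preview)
Your proposal is correct and follows essentially the same route as the paper's proof: Rouché to count a unique simple zero near $\sigma=-iq_k$, identification of the zero at $\sigma=-i$ via the constant solution $\varphi\equiv 1$ and Lemma~\ref{lem:regularityperspective}, a rank-one residue computation, and the resolvent bound from the high-energy estimates on $f_{(out),\sigma}$, $f_{(dir),\sigma}$. Two small remarks on implementation. First, the paper applies Rouché directly on the rectangular contour $\partial\big(\mathbb{I}_{(-1-\eta,\eta)}\cap\{|\Re\sigma|<1\}\big)$, on which $|1-\hat\sigma|$ is uniformly bounded below so that the comparison with $(2q_k)^{\hat\sigma}/\Gamma(1-\hat\sigma)$ is immediate; your ``small disk plus remainder'' variant works too but you should make the disk of fixed ($k$-independent) radius so that the leading term is bounded below on its complement. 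Second, your normalization $f_{(out),-i}(x)\sim e^{x}$ should read $f_{(out),-i}(x)\sim c_{+,-i}\,e^{x}$ for a nonzero constant $c_{+,-i}$ (cf.\ the constants $c_{\pm,\sigma}$ in the lemma following (\ref{eq:defof_f_pm})); this only changes the value of $c_*$, not its nonvanishing.
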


        \begin{proof}
            
            \vspace{1em}
            \noindent
            \textbf{Step 1: Existence of a unique, simple zero}
            Define the piecewise smooth contour $\Omega(t)$ to be the boundary of the region $\sigma \in  \mathbb{I}_{(-1-\eta,\eta)} \cap \{\Re \sigma < 1\},$ oriented counterclockwise. Arguing as in the proof of Lemma \ref{lem:polefreeregion}, we compute
                \begin{align}
                    \Big|\mathcal{W}(\sigma) - (2q_k)^{\hat{\sigma}}\frac{1}{\Gamma(1-\hat{\sigma})}\Big| \bigg|_{\Omega(t)} \lesssim k^2 \Big| (2q_k)^{\hat{\sigma}}\frac{1}{\Gamma(1-\hat{\sigma})}\Big| \bigg|_{\Omega(t)}. \label{eq:prop42.temp1}
                \end{align}
            We have used that along $\Omega(t)$, the two-sided bound $0<c_0 \leq |1-\hat{\sigma}| \leq c_1$ holds, and thus $\Gamma(1-\hat{\sigma}) \sim \Gamma(2-\hat{\sigma})$.
            
            The Wronskian $\mathcal{W}(\sigma)$ is a holomorphic function of $\sigma$, and thus by Rouché's theorem applied to (\ref{eq:prop42.temp1}), $\mathcal{W}(\sigma)$ has the same multiplicity of roots in the interior of $\Omega(t)$ as does $(2q_k)^{\hat{\sigma}}\frac{1}{\Gamma(1-\hat{\sigma})}$. The latter has precisely one, simple root.
            
            \vspace{1em}
            \noindent
            \textbf{Step 2: Identification of resonance function}
            Having shown the existence of a unique, simple root of $\mathcal{W}(\sigma)$, it is sufficient to show ${\mathcal{W}(-i) =0}$. We will show that up to a constant multiple, the quantity $e^x \mr{r}(x)$ is the associated resonance function, i.e. it is a solution to $P_k(-i)f=0$ satisfying both Dirichlet and outgoing boundary conditions. 
            
            To prove that this quantity solves $P_k(-i)f=0$, we observe that one family of solutions to the wave equation (\ref{eq:1}) is simply given by $\varphi = \text{const}$. In terms of the variable $r_k \varphi$, it follows that in hyperbolic coordinates, $r_k(t,x)$ must constitute a solution to (\ref{eq:wavehyper}), i.e. 
            \begin{align}
                \label{prop:scattheorytemp1}
                \partial_t^2 r_k(t,x) + P_k(0)r_k(t,x) = 0.
            \end{align}
            By the self-similar relations we have $r_k(u,v) = |u|\mr{r}(z) = e^{x-t}\mr{r}(x).$ Inserting into (\ref{prop:scattheorytemp1}) shows this to be equivalent to $P_k(-i)(e^{x}\mr{r}(x))=0$.
            
            Moreover, as $\mr{r}(0)=0$, it follows that Dirichlet boundary conditions hold. It remains to prove that outgoing boundary conditions hold, i.e. $W[e^{x}\mr{r}(x),f_{(out),-i}]=0.$ As $-i \notin i q_k \mathbb{Z}$, we may apply Lemma \ref{lem:regularityperspective}. It suffices to check that $e^{-i\sigma t}e^{x}\mr{r}(x)\big|_{\{s=0\}}$, viewed in similarity coordinates, lies in $C^{2-}_z$. But we recognize $e^{-i\sigma t}e^{x}\mr{r}(x) = e^{-s}\mr{r}(z)$, and the latter is in $C^2_z$ by Lemma \ref{lemma:backgroundprelims1}. 
            
            \vspace{1em}
            \noindent
            \textbf{Step 3: Computing (\ref{eq:residueatpole})}
            We have shown that $\mathcal{W}(\sigma)$ has a simple zero at ${\sigma_* = -i}$. Therefore, we may write $\mathcal{W}(\sigma) = (\sigma + i)\mathcal{W}_r(\sigma)$ for an analytic function $\mathcal{W}_r(\sigma)$ which is nonvanishing in $\mathbb{I}_{[-1-\eta,\eta]}$.
            
            Let $\gamma_\delta$ denote the contour $-i + \delta e^{-i \theta}, \ \theta \in [0,2\pi)$ of radius $\delta$, with ${\delta \ll \eta}$. For fixed $x_0, x$ we have $(\rho_{x_0}R(\sigma)h)(x)$ is a meromorphic function of $\sigma \in \text{Int}(\gamma_\delta)$, with a simple pole at ${\sigma_* = -i}$. The residue at the pole may be computed as 
            \begin{align*}
            \lim_{\sigma \rightarrow -i} (\sigma + i)(\rho_{x_0}R(\sigma)h)(x) &= \lim_{\sigma \rightarrow -i} \frac{\rho_{x_0}(x)f_{(out),\sigma}(x)}{\mathcal{W}_r(\sigma)}\int_0^x f_{(dir),\sigma}(x')h(x')dx' \\[\jot]
            &\hspace{3em}+ \frac{\rho_{x_0}(x)f_{(dir),\sigma}(x)}{\mathcal{W}_r(\sigma)}\int_x^{\infty} f_{(out),\sigma}(x')h(x')dx'
            \end{align*}
            By (\ref{eq:outgoinghighenergy}), (\ref{eq:dirichletgrowthbd}), we have that $\rho_{x_0} f_{(out),\sigma}$, $\rho_{x_0} f_{(dir),\sigma} $ converge uniformly in $L^\infty(\mathbb{R}_+)$ to their values at ${\sigma_* = -i}$, which by Step 2 are given by $a_* \rho_{x_0} e^{x} \mr{r}(x)$, $b_* \rho_{x_0} e^{x} \mr{r}(x)$ respectively for nonzero constants $b_*, a_*$. Similarly, by the dominated convergence theorem the integrals are seen to converge uniformly in $L^\infty(\mathbb{R}_+)$ to their values at ${\sigma_* = -i}$. 
            
            Combining these statements gives (\ref{eq:residueatpole}), with $c_* = \frac{a_* b_* }{W_r(-i)}$. 
            
            \vspace{1em}
            \noindent
            \textbf{Step 4: Resolvent estimates (\ref{eq:c0toc2bounds})}
            By definition of the resolvent, we need to estimate the terms 
                    \begin{align}
                        &\max_{0\leq j \leq 2}\sup_{x \leq x_0+1}\Big|\int_0^x \frac{f_{(dir),\sigma}(x')\partial_x^j f_{(out),\sigma}(x)}{\mathcal{W}(\sigma)} h(x')dx' \Big|, \label{eq:step4temp1}\\
                        &\max_{0 \leq j \leq 2}\sup_{x \leq x_0+1}\Big|\int_x^{\infty}  \frac{\partial_x^j f_{(dir),\sigma}(x)f_{(out),\sigma}(x')}{\mathcal{W}(\sigma)} h(x')dx' \Big|.\label{eq:step4temp2}
                    \end{align}
                    The required bounds follow from (\ref{eq:outgoinghighenergy}), (\ref{eq:dirichlethighenergy}), and (\ref{eq:wronskianlowerbdfarregion}). The convergence of the integrals follows by the decay encoded in the the space $\mathcal{D}^{\beta,0}_\infty(\mathbb{R}_+)$. Integrability becomes borderline when ${\Im \sigma = -\beta_*}$, accounting for the degeneration of (\ref{eq:c0toc2bounds}) as $\Im \sigma \rightarrow -\beta_*$.
                    
                    We also observe that the dependence on the cutoff scale, $x_0$, has been retained in (\ref{eq:c0toc2bounds}). This exponential loss results from the terms in (\ref{eq:step4temp1})--(\ref{eq:step4temp2}) which are not integrated, and can grow exponentially as stated in Lemma \ref{lem:recordedbounds_outgoingdirichlet}.
            \end{proof}

    In the remainder of the section, we extend the definition of the resolvent to a class of more weakly decaying data, and derive appropriate bounds. First, we prove a preliminary estimate on the function $F_{\sigma}(x) \doteq \partial_{\sigma'} f_{(dir),\sigma'}(x)\big|_{\sigma'=\sigma}$ for ${\sigma = -i}$.
    \begin{lemma}
        \label{lem:F-iEST}
        For the constant $a_*$ defined in Proposition \ref{prop:residueofR},
        \begin{equation}
            \label{eq:F_-iidentified}
            F_{-i}(x) = -\frac{a_* i}{k} e^x \mr{r}(x) (\mr{\phi}(x) - k x). 
        \end{equation}
        For any $x_0 > 0$ we have the estimate
        \begin{equation}
            \label{eq:F_-1est}
            \Big\|\frac{F_{-i}}{\mr{r}} \Big\|_{L^\infty([0,x_0])} \lesssim (1+x_0) e^{x_0}.
        \end{equation}
    \end{lemma}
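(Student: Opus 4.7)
The starting point is the characterization of $f_{(dir),\sigma}(x)$ as the unique solution to $P_k(\sigma) f_{(dir),\sigma} = 0$ with $\sigma$-independent boundary data $(f,f')(0) = (0,1)$ (Definition 4.5). Differentiating in $\sigma$ at $\sigma = -i$ gives an inhomogeneous ODE for $F_{-i}$:
\begin{equation*}
P_k(-i)\, F_{-i}(x) = -2i\, f_{(dir),-i}(x) = -2i\, a_*\, e^x \mr{r}(x),
\end{equation*}
together with homogeneous Dirichlet data $F_{-i}(0) = F_{-i}'(0) = 0$. The task reduces to identifying a particular solution to this ODE that satisfies both boundary conditions.

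The key observation is that the $k$-self-similar scalar field itself supplies such a solution. Since $\phi_k$ solves $\Box_{g_k}\phi_k = 0$, the quantity $r_k\phi_k$ is a spherically symmetric, exact self-similar solution to the wave equation (2.48) with $\ell = 0$ and $\mathcal{E}_{p,0} \equiv 0$. Writing in hyperbolic coordinates (Table 1), one finds $r_k\phi_k(t,x) = e^{-t}g(x) + k(t-x)e^{-t} h(x)$ where $g(x) \doteq e^x \mr{r}(x)\mr{\phi}(x)$ and $h(x) \doteq e^x \mr{r}(x)$. Substituting this ansatz into $\partial_t^2 - \partial_x^2 + 4q_k e^{-2q_k x}V_k$ and separating the $(t-x)$-independent and $(t-x)$-linear parts yields two identities:
\begin{equation*}
P_k(-i)\, h = 0, \qquad P_k(-i)\, g = 2k(h - h').
\end{equation*}
The first confirms Step 2 of Proposition 4.9; the second is the crucial identity. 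A direct expansion of $(xh)''$ also gives $P_k(-i)(xh) = -2h'$, since $P_k(-i)h = 0$ kills the $x h''$ contribution.

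Now seek $F_{-i}$ in the ansatz $\alpha g + \beta x h$. Applying $P_k(-i)$ produces $2k\alpha\, h - (2k\alpha + 2\beta)h'$. Matching with the target $-2i a_* h$ forces $\alpha = -ia_*/k$ and $\beta = ia_*$, leading to
\begin{equation*}
F_{-i}(x) = -\tfrac{a_* i}{k}\bigl(g(x) - k x\, h(x)\bigr) = -\tfrac{a_* i}{k} e^x \mr{r}(x)\bigl(\mr{\phi}(x) - k x\bigr),
\end{equation*}
which is (4.34). The condition $F_{-i}(0)=0$ follows from $\mr{r}(0) = 0$, and $F_{-i}'(0) = 0$ reduces to $\mr{r}'(0)\mr{\phi}(0) = 0$, which is arranged by normalizing the additive constant of $\mr{\phi}$ (the self-similar scalar field is defined only up to such a constant, and this is the natural choice consistent with regularity at the axis). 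Uniqueness of solutions to the associated Volterra equation then identifies $F_{-i}$ with the stated expression.

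Finally, the pointwise bound (4.35) is immediate. By Proposition 2.5, $\mr{\phi}(z) \in C^1_z([-1,1])$, hence $\|\mr{\phi}\|_{L^\infty([0,\infty))} \lesssim 1$ (viewed as a function of $x$ via $z = -e^{-2q_k x}$), so $|\mr{\phi}(x) - kx| \lesssim 1 + x$ on $[0,x_0]$. Dividing (4.34) by $\mr{r}$ yields $|F_{-i}/\mr{r}| \lesssim (1+x)e^x$, and taking the supremum over $[0,x_0]$ gives (4.35). The main subtlety in the argument is the derivation of $P_k(-i)g = 2k(h - h')$ via the $(t-x)$ separation: getting the signs correct and exploiting the cancellation that comes from $P_k(-i)h = 0$ is what makes the ansatz $\alpha g + \beta xh$ solvable with just two scalar coefficients.
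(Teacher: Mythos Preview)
Your proof is correct and follows essentially the same approach as the paper: both derive the inhomogeneous ODE $P_k(-i)F_{-i}=-2i\,a_* e^x\mr{r}(x)$ with vanishing Cauchy data at $x=0$, and both verify the candidate expression using the fact that $\phi_k$ solves the wave equation (the paper cites the self-similar scalar field equation (2.36), which is equivalent). Your organization via the $(t-x)$-separation of $r_k\phi_k$ in hyperbolic coordinates and the explicit ansatz $\alpha g + \beta x h$ is cleaner and more detailed than the paper's ``direct calculation,'' and your remark that $F_{-i}'(0)=0$ requires the normalization $\mr{\phi}(z=-1)=0$ is a point the paper leaves implicit.
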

    \begin{proof}
        Differentiating (\ref{eq:dirichlet1}) with respect to $\sigma$ and evaluating at ${\sigma = -i}$ yields the following equation:
        \begin{equation}
            \label{eq:definingF-i}
            \begin{cases}
                -F_{-i}'' + (V_k(x) + 1)F_{-i} = -2i f_{(dir),-i}(x) \\[\jot]
                (F_{-i}(0), F_{-i}'(0)) = (0,0).
            \end{cases}
        \end{equation}
        We claim that the solution to (\ref{eq:definingF-i}) is given by the expression (\ref{eq:F_-iidentified}). This follows by a direct calculation using the form of $f_{(dir),-i}(x)$ found in Proposition \ref{prop:residueofR}, the self-similar equation (\ref{ss:SSESF5}), and the coordinate transformations in Table \ref{table:1}. Having established the form of $F_{-i}(x)$, the bound (\ref{eq:F_-1est}) follows.
    \end{proof}

    In the following proposition, we consider applying $R(\sigma)$ to functions that decay with a precise exponential tail $h(x) = c_0 e^{-\beta q_k x} + O(e^{-(\beta+\delta)q_k x})$, for some $\beta \leq p_k$. In the scale of $\mathcal{D}^{\beta,l}_\infty(\mathbb{R}_+)$ spaces such functions lie in $\wt{\mathcal{D}}^{\beta, 0, \delta,c_0}_\infty(\mathbb{R}_+) \subset \mathcal{D}^{\beta,0}_\infty(\mathbb{R}_+)$, and $R(\sigma)h$ is a priori defined only for $\sigma \in \mathbb{I}_{(-\beta_*,\eta]}$. We show that the range of $\sigma$ may be extended to include a neighborhood of $\{\Im \sigma = -\beta_*\}$.

    \begin{prop}
        \label{lem:extR_sigma_2}
    Fix $\beta \in (1,p_k]$, and small parameters $\delta, \delta_1 > 0$ satisfying ${\delta_1 < \delta < \eta}$. For any $x_0 > 0$, $c_0 \in \mathbb{R}$, the cutoff resolvent extends meromorphically to a map $\rho_{x_0}R(\sigma): \wt{\mathcal{D}}^{\beta, 0, \delta,c_0}_\infty(\mathbb{R}_+) \rightarrow C^2_x(\mathbb{R}_+)$ for $\sigma \in \mathbb{I}_{[-\beta_*-\delta_1 q_k, \eta]}$. We distinguish between the cases $\beta < p_k$ and $\beta = p_k$ below.
    
    \vspace{.5em}
    \noindent
    \underline{$\beta < p_k$:} The extension satisfies the bounds 
    \begin{equation}
        \label{eq:c0toc2bounds_extension_2}
        \Big\|\frac{d^j}{dx^j}\big(\rho_{x_0} R(\sigma)h \big)\Big\|_{L^\infty(\mathbb{R}_+)} \lesssim (1+x_0)e^{|\Im \sigma| x_0} \Big(\frac{1}{|i+\sigma|}+ \frac{1}{|\beta q_k  - i\sigma|}\Big)(1+|\sigma|)^{-1+j}\|h\|_{\wt{\mathcal{D}}^{\beta,0,\delta,c_0}_\infty(\mathbb{R}_+)}, \quad (0 \leq j \leq 2). 
    \end{equation}
    For any $h(x) \in \wt{\mathcal{D}}^{\beta, 0, \delta,c_0}_\infty(\mathbb{R}_+)$, there exists an expansion in a neighborhood of $\sigma = -i \beta q_k$:
    \begin{align}
    \label{eq:resolvexp_extension_2}
    (\rho_{x_0}R(\sigma)h)(x) = \frac{B_1(\sigma,x)}{\beta q_k  - i\sigma} + B_2(\sigma,x),
    \end{align}
where $B_i(\sigma,x)$ are smooth in $(\sigma,x)$, holomorphic in $\sigma$ for fixed $x$, and bounded uniformly in terms of the $\wt{\mathcal{D}}^{\beta, 0, \delta,c_0}_\infty(\mathbb{R}_+)$ norm of $h$. Provided $c_0 \neq 0$, the quantity $B_1(-i \beta q_k, x)$ does not vanish identically.

    \vspace{.5em}
    \noindent
    \underline{$\beta = p_k$:} The extension satisfies the bounds 
    \begin{equation}
        \label{eq:c0toc2bounds_extension}
        \Big\|\frac{d^j}{dx^j}\big(\rho_{x_0} R(\sigma)h \big)\Big\|_{L^\infty(\mathbb{R}_+)} \lesssim (1+x_0) e^{|\Im \sigma| x_0}\Big(1+ \frac{1}{|i+\sigma|^2}\Big)(1+|\sigma|)^{-1+j}\|h\|_{\wt{\mathcal{D}}^{p_k,0,\delta,c_0}_\infty(\mathbb{R}_+)}, \quad (0 \leq j \leq 2). 
    \end{equation}
    For any $h(x) \in \wt{\mathcal{D}}^{p_k, 0, \delta,c_0}_\infty(\mathbb{R}_+)$, there exists an expansion
    \begin{align}
        \label{eq:resolvexp_extension}
        (\rho_{x_0}R(\sigma)h)(x) = \frac{C_1(\sigma,x)}{(1-i\sigma)^2} + \frac{C_2(\sigma,x)}{1-i\sigma} + C_3(\sigma,x),
    \end{align}
    where $C_i(\sigma,x)$ are smooth in $(\sigma,x)$, holomorphic in $\sigma$ for fixed $x$, and bounded uniformly in terms of the $\wt{\mathcal{D}}^{p_k, 0, \delta,c_0}_\infty(\mathbb{R}_+)$ norm of $h$. 
    
    \vspace{.5em}
    \noindent
    For fixed $x_0$, and all $x \leq x_0$, there exist constants $d_{*,j}$, such that  
    \begin{equation}
        \label{eq:resolventexp_A1}
        C_{j}(-i,x) = d_{*,j} \rho_{x_0}(x) e^{x}\mr{r}(x), \quad (1 \leq j \leq 2),
    \end{equation}
    and constants $e_{*}, f_{*}$ such that 
    \begin{equation}
        \label{eq:resolventexp_A2}
        \partial_{\sigma} C_1 (\sigma,x) \big|_{\sigma=-i} = e_* \rho_{x_0}(x) e^x \mr{r}(x) (\mr{\phi}(x) - k x) + f_* \rho_{x_0}(x) e^x \mr{r}(x)
    \end{equation}
    Finally, there is the bound 
    \begin{equation}
        \label{eq:resolventexp_vanishingextension}
       \sup_{x \leq x_0}|C_1(-i,x)|  + \sup_{x \leq x_0}|\partial_\sigma C_1(-i,x)|+  \sup_{x \leq x_0}|C_2(-i,x)|\lesssim (1+x_0) e^{x_0} \mr{r}(x) \|h\|_{\wt{\mathcal{D}}^{p_k, 0, \delta,c_0}_\infty(\mathbb{R}_+)}.
    \end{equation}
    
    \end{prop}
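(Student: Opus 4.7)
The strategy is to decompose $h = c_0 e^{-\beta q_k x} + h_r$ with $h_r \in \mathcal{D}^{\beta+\delta, 0}_\infty(\mathbb{R}_+)$, and apply $\rho_{x_0}R(\sigma)$ separately to each piece. The remainder $h_r$ has improved decay, so the integrals in (\ref{eq:resolvkernel}) converge on the strip $\mathbb{I}_{(-(\beta+\delta)q_k, \eta]}$ that (since $\delta_1 < \delta$) contains the target strip. Repeating the argument of Proposition \ref{prop:residueofR} with $\beta$ replaced by $\beta + \delta$, the term $(\rho_{x_0}R(\sigma)h_r)(x)$ extends meromorphically, with at most a simple pole at $\sigma = -i$ (when the pole lies in the target strip) whose residue is proportional to $f_{(dir),-i}(x) = a_*\rho_{x_0}(x)e^x\mr{r}(x)$.

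\vspace{.5em}
\noindent
The new pole structure comes entirely from the exponential tail $c_0 e^{-\beta q_k x}$. In the kernel (\ref{eq:resolvkernel}), the Dirichlet integral $\int_0^x f_{(dir),\sigma}(x') c_0 e^{-\beta q_k x'} dx'$ is entire in $\sigma$, while the outgoing integral $\int_x^\infty f_{(out),\sigma}(x') c_0 e^{-\beta q_k x'} dx'$ must be regularized. I would decompose $f_{(out),\sigma} = f_{+,\sigma} + (f_{(out),\sigma} - f_{+,\sigma})$ using (\ref{eq:fout_exp_1}); by the Volterra bound from Proposition \ref{prop:existencevolterraperturb} the difference decays at rate $e^{-\Im\sigma x'-2q_k(1-\epsilon)x'}$, and therefore integrates against $c_0 e^{-\beta q_k x'}$ holomorphically in a strip strictly below $\{\Im\sigma = -\beta_* - \delta_1 q_k\}$. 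For $f_{+,\sigma}$, I substitute the series expansion (\ref{f+sigmaexpansion}) and integrate each term in closed form; the leading contribution is
\begin{equation*}
\int_x^\infty \frac{(2q_k)^{\hat{\sigma}}}{\Gamma(1-\hat{\sigma})}c_0 e^{(i\sigma - \beta q_k)x'}dx' = \frac{c_0 (2q_k)^{\hat{\sigma}}}{(\beta q_k - i\sigma)\Gamma(1-\hat{\sigma})} e^{(i\sigma - \beta q_k)x},
\end{equation*}
which extends meromorphically to $\mathbb{C}$ with a single simple pole at $\sigma = -i\beta q_k$; the subsequent terms of (\ref{f+sigmaexpansion}) carry extra $e^{-2jq_k x'}$ factors and integrate holomorphically on the target strip.

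\vspace{.5em}
\noindent
\textbf{Case synthesis and residues.} For Case $\beta < p_k$, combining with the prefactor $\rho_{x_0}(x)f_{(dir),\sigma}(x)/\mathcal{W}(\sigma)$ gives the simple pole at $\sigma = -i\beta q_k$ recorded in (\ref{eq:resolvexp_extension_2}), with $B_1(-i\beta q_k, x)$ proportional to $\rho_{x_0}(x)f_{(dir),-i\beta q_k}(x)/\mathcal{W}(-i\beta q_k)$; both factors are nonvanishing by (\ref{eq:wronskianlowerbdfarregion}) and Lemma \ref{lem:regularityperspective}, giving the nonvanishing assertion when $c_0 \neq 0$. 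For Case $\beta = p_k$, the integral pole at $\sigma = -i\beta q_k = -i$ coalesces with the simple zero of $\mathcal{W}$ at $-i$ to produce a double pole; combined with the simple pole from $R(\sigma)h_r$, this yields (\ref{eq:resolvexp_extension}). The residue identifications (\ref{eq:resolventexp_A1})--(\ref{eq:resolventexp_A2}) then follow from Taylor-expanding each factor near $\sigma = -i$: the leading coefficient $C_1(-i, x)$ is the product of the two residues with $f_{(dir),-i}(x) = a_* e^x \mr{r}(x)$, giving (\ref{eq:resolventexp_A1}) for $j=1$; $C_2(-i, x)$ collects the sub-leading Taylor terms together with the simple-pole residue from $h_r$, still proportional to $e^x \mr{r}(x)$; and $\partial_\sigma C_1(-i, x)$ brings in $\partial_\sigma f_{(dir),\sigma}|_{\sigma=-i} = F_{-i}$, which by Lemma \ref{lem:F-iEST} equals $-(a_*i/k)e^x \mr{r}(x)(\mr{\phi}(x) - kx)$, yielding (\ref{eq:resolventexp_A2}). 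The $C^2$ bounds (\ref{eq:c0toc2bounds_extension_2})--(\ref{eq:c0toc2bounds_extension}) follow from (\ref{eq:outgoinghighenergy}), (\ref{eq:dirichlethighenergy}), (\ref{eq:wronskianlowerbdfarregion}), and the explicit integral formula above; the vanishing estimate (\ref{eq:resolventexp_vanishingextension}) uses the $\mr{r}(x)$ prefactor in all residues together with the localized bound (\ref{eq:F_-1est}) for $F_{-i}/\mr{r}$.

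\vspace{.5em}
\noindent
\textbf{Main obstacle.} The delicate step is managing the coalescence of poles in Case $\beta = p_k$. One must carefully track how the gamma-function factor $(2q_k)^{\hat{\sigma}}/\Gamma(1-\hat{\sigma})$ produced by the explicit integral interacts with the Taylor expansion of $\rho_{x_0}(x)f_{(dir),\sigma}(x)/\mathcal{W}_r(\sigma)$ at $\sigma = -i$, and verify that the sub-leading contributions $C_2(-i, x)$ and $\partial_\sigma C_1(-i, x)$---which gather pieces from both the $c_0 e^{-\beta q_k x}$ tail and the remainder $h_r$---assemble precisely into the claimed linear combinations of $\rho_{x_0}(x)e^x \mr{r}(x)$ and $\rho_{x_0}(x) e^x \mr{r}(x)(\mr{\phi}(x) - kx)$. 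This is the non-trivial cancellation that enforces the Dirichlet boundary behavior (vanishing of residues at $x=0$, inherited from $\mr{r}(0) = 0$) encoded in (\ref{eq:resolventexp_vanishingextension}).
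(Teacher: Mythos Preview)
Your approach matches the paper's essentially line for line: decompose $h$ into the exponential tail $c_0 e^{-\beta q_k x}$ plus a faster-decaying remainder, split the resolvent integral as $I_1 + I_2$, and extract the new pole from the closed-form integral of the leading exponential term of $f_{(out),\sigma}$ against the tail, with the coalescence at $\sigma=-i$ in the threshold case handled via $\mathcal{W}(\sigma)=(\sigma+i)\mathcal{W}_r(\sigma)$ and Lemma~\ref{lem:F-iEST}. One small correction: for $\beta < p_k$ the nonvanishing of $\mathcal{W}(-i\beta q_k)$ is not covered by (\ref{eq:wronskianlowerbdfarregion}) (that bound applies only in $\mathbb{I}_{[0,\eta]}$ or for $|\Re\sigma| \geq 1$); the correct reference is Proposition~\ref{prop:residueofR}, which locates the unique zero of $\mathcal{W}$ in the strip at $\sigma = -i \neq -i\beta q_k$.
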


    \begin{proof}
    We discuss the case $\beta = p_k$ here; the remaining case $\beta < p_k$ follows by a similar computation.

    Let $h \in \wt{\mathcal{D}}^{p_k, 0, \delta,c_0}_\infty(\mathbb{R}_+)$ be given. This regularity implies we may write $h(x) = h_0(x) + h_1(x)$, where $h_0(x)=c_0 e^{-x}$ and $h_1(x) \in \mathcal{D}^{p_k+\delta, 0}_\infty(\mathbb{R}_+)$. By linearity, the rapid decay of $h_1(x)$, and the estimates (\ref{eq:residueatpole})--(\ref{eq:c0toc2bounds}), we reduce to to defining the action of the cutoff resolvent on $h_0(x)$.
    
    For $\sigma \in \mathbb{I}_{(-1,\eta]}$, the quantity $(\rho_{x_0}R(\sigma)h_0)(x)$ is unambiguously defined by the expression
    \begin{equation}
        \label{eq:Prop43:temp-.1}
        \underbrace{\frac{\rho_{x_0}(x)f_{(out),\sigma}(x)}{\mathcal{W}(\sigma)} \int_0^x f_{(dir),\sigma}(x')h_0(x')dx'}_{I_1(\sigma)} + \underbrace{\frac{\rho_{x_0}(x)f_{(dir),\sigma}(x)}{\mathcal{W}(\sigma)} \int_x^\infty f_{(out),\sigma}(x')h_0(x')dx'}_{I_2(\sigma)}.
     \end{equation}
     We treat the terms $I_1(\sigma)$ and $I_2(\sigma)$ separately. Observe that the integral quantity in $I_1(\sigma)$ extends holomorphically to $\sigma \in \mathbb{I}_{[-1-\delta_1 q_k, \eta]}$, and thus $I_1(\sigma)$ has poles only at the zeros of $\mathcal{W}(\sigma)$. To extend $I_2(\sigma)$, we require the precise leading order behavior of $f_{(out),\sigma}$. Applying (\ref{f+sigmaexpansion}) and the tail bound (\ref{eq:tailbound1}) gives 
\begin{equation}
    \label{eq:Prop43:temp0}
    f_{(out),\sigma}(x) = (2q_k)^{\hat{\sigma}}\frac{1}{\Gamma(1-\hat{\sigma})}e^{i\sigma x} + \mathcal{E}(\sigma,x),
\end{equation}
where
\begin{equation}
    \label{eq:Prop43:temp1}
    |\mathcal{E}(\sigma,x)| \lesssim k^2 \big(1+\frac{1}{\Gamma(1-\hat{\sigma})}\big)e^{(-2q_k + |\Im \sigma|)x}.
\end{equation}
Recall the definition $\mathcal{W}_r(\sigma) \doteq \frac{\mathcal{W}(\sigma)}{\sigma+i}$. Inserting (\ref{eq:Prop43:temp0}) and evaluating $I_2(\sigma)$ for $\sigma \in \mathbb{I}_{(-1,\eta]}$ gives 
\begin{align}
    \frac{\rho_{x_0}(x)f_{(dir),\sigma}(x)}{\mathcal{W}(\sigma)} &\int_x^\infty f_{(out),\sigma}(x')h_0(x')dx' \nonumber\\[\jot]
    &= c_0(2q_k)^{\hat{\sigma}}\frac{1}{\Gamma(1-\hat{\sigma})}\frac{\rho_{x_0}(x)f_{(dir),\sigma}(x)}{\mathcal{W}(\sigma)} \int_x^{\infty} e^{(i\sigma -1 )x'} dx' + \frac{\rho_{x_0}(x)f_{(dir),\sigma}(x)}{\mathcal{W}(\sigma)}\wt{\mathcal{E}}(\sigma,x) \nonumber\\[\jot]
    &= -i c_0(2q_k)^{\hat{\sigma}}\frac{1}{\Gamma(1-\hat{\sigma})}\frac{1}{(i+\sigma)^2}\frac{\rho_{x_0}(x)f_{(dir),\sigma}(x)}{\mathcal{W}_r(\sigma)}e^{(i\sigma-1)x} + \frac{\rho_{x_0}(x)f_{(dir),\sigma}(x)}{\mathcal{W}(\sigma)}\wt{\mathcal{E}}(\sigma,x),\label{eq:resolventextension_temp1}
\end{align}
where we have introduced $\wt{\mathcal{E}}(t,x)$, a holomorphic function of $\sigma \in {[-1-\delta_1 q_k, \eta]}$ for fixed $x$. The expression (\ref{eq:resolventextension_temp1}) defines a meromorphic extension to $\sigma \in \mathbb{I}_{[-1-\delta_1 q_k,\eta]}$, with a double pole at ${\sigma = -i}$ arising from the first term. The terms $C_1(-i,x), C_2(-i,x)$ are proportional to $f_{(dir),-i}$, and thus (\ref{eq:resolventexp_A1}) follows from Proposition \ref{prop:residueofR}, in which the resonance function $f_{(dir),-i}$ was explicitly identified.

The identity (\ref{eq:resolventexp_A2}) follows by differentiating the first term of (\ref{eq:resolventextension_temp1}), and employing (\ref{eq:F_-iidentified}). Finally, the resolvent estimates (\ref{eq:c0toc2bounds_extension}) and the bounds (\ref{eq:resolventexp_vanishingextension}) follow from the identities (\ref{eq:resolventexp_A1})--(\ref{eq:resolventexp_A2}), and an analogous argument as for (\ref{eq:c0toc2bounds}).
\end{proof}

\begin{lemma}
    \label{lem:resolventon_eqx}
    For fixed $x \leq x_0$, the function $\big(\rho_{x_0}R(\sigma)e^{-q_k x'}\big)(x)$, a priori defined for $\sigma \in \mathbb{I}_{(-q_k, \eta]}$, extends meromorphically to $\sigma \in \mathbb{I}_{[-1-\eta, \eta]}$ as a function with a single, simple pole at ${\sigma = -i}$. The following bounds hold:
    \begin{equation}
        \Big\| \frac{d^j}{dx^j}\big(\rho_{x_0} R(\sigma)e^{-q_k x'} \big)\Big\|_{L^\infty(\mathbb{R}_+)} \lesssim (1+x_0)e^{|\Im \sigma| x_0}\Big(1 + \frac{1}{|i+\sigma|} \Big) (1+|\sigma|)^{-1+j},  \quad (0 \leq j \leq 2)
    \end{equation}
\end{lemma}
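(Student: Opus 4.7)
The plan is to mimic the extension argument from Proposition \ref{lem:extR_sigma_2} for the borderline decay rate $\beta = 1$ (equivalently, exponential decay at rate $q_k$). Using the integral kernel representation \eqref{eq:resolvkernel}, write
\begin{align*}
\big(\rho_{x_0} R(\sigma) e^{-q_k x'}\big)(x) &= \underbrace{\frac{\rho_{x_0}(x) f_{(out),\sigma}(x)}{\mathcal{W}(\sigma)} \int_0^x f_{(dir),\sigma}(x') e^{-q_k x'}\, dx'}_{I_1(\sigma,x)} \\
&\quad + \underbrace{\frac{\rho_{x_0}(x) f_{(dir),\sigma}(x)}{\mathcal{W}(\sigma)} \int_x^{\infty} f_{(out),\sigma}(x') e^{-q_k x'}\, dx'}_{I_2(\sigma,x)}.
\end{align*}
Since $x \leq x_0$ is fixed, the integrand in $I_1$ is integrable on the bounded interval $[0,x]$ for every $\sigma$, and holomorphy of $f_{(dir),\sigma}, f_{(out),\sigma}$ in $\sigma$ (combined with Proposition \ref{prop:residueofR}) shows that $I_1$ extends meromorphically to all of $\mathbb{I}_{[-1-\eta,\eta]}$ with its only pole at the simple zero $\sigma = -i$ of $\mathcal{W}(\sigma)$.

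The essential issue is the meromorphic extension of $I_2$, which is a priori defined only when $\Im\sigma > -q_k$. Following the strategy of Proposition \ref{lem:extR_sigma_2}, I would substitute the asymptotic decomposition \eqref{eq:Prop43:temp0}--\eqref{eq:Prop43:temp1} of $f_{(out),\sigma}$ into the outer integral:
\begin{equation*}
\int_x^{\infty} f_{(out),\sigma}(x') e^{-q_k x'}\, dx' \;=\; \frac{(2q_k)^{\hat\sigma}}{\Gamma(1-\hat\sigma)} \cdot \frac{e^{(i\sigma - q_k)x}}{q_k - i\sigma} \;+\; \int_x^{\infty} \mathcal{E}(\sigma, x') e^{-q_k x'}\, dx'.
\end{equation*}
The remainder integral is absolutely convergent on $\mathbb{I}_{[-1-\eta,\eta]}$ once $k$ is small enough that $|\Im\sigma| - 3q_k < 0$ there (since $|\mathcal{E}| \lesssim e^{(|\Im\sigma|-2q_k)x'}$), and defines a holomorphic function of $\sigma$. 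The first term carries an apparent simple pole at $\sigma = -iq_k$ (i.e.\ $\hat\sigma = 1$).

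The hard (but crucial) step is showing that this apparent pole cancels: near $\hat\sigma = 1$, the identity $1/\Gamma(z) = z + O(z^2)$ applied at $z = 1-\hat\sigma$ gives
\begin{equation*}
\frac{1}{\Gamma(1-\hat\sigma)} = (1-\hat\sigma) + O((1-\hat\sigma)^2) = \frac{q_k - i\sigma}{q_k} + O((q_k - i\sigma)^2),
\end{equation*}
so the zero of $1/\Gamma(1-\hat\sigma)$ exactly cancels the $1/(q_k - i\sigma)$ pole, and $I_2$ extends holomorphically across $\sigma = -iq_k$. After this cancellation, the only remaining singularities of $I_2$ in $\mathbb{I}_{[-1-\eta,\eta]}$ come from the zero of $\mathcal{W}(\sigma)$ at $\sigma = -i$, which is simple by Proposition \ref{prop:residueofR}. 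This establishes the claimed meromorphy and pole structure.

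For the derivative estimates, I would differentiate $I_1$ and $I_2$ directly in $x$ up to order $2$, applying the uniform bounds \eqref{eq:outgoinghighenergy} on $\partial_x^j f_{(out),\sigma}$ and \eqref{eq:dirichlethighenergy} on $\partial_x^j f_{(dir),\sigma}$ (which supply the $(1+|\sigma|)^{-1+j}$ factor and the $(1+x_0)e^{|\Im\sigma|x_0}$ envelope on $\{x \leq x_0+1\}$), together with the Wronskian lower bound \eqref{eq:wronskianlowerbdfarregion} away from $\sigma = -i$ and the pole-cancellation identity above near $\sigma = -iq_k$. The factor $(1 + |i+\sigma|^{-1})$ arises from the simple pole of $1/\mathcal{W}(\sigma)$ at $\sigma = -i$, exactly as in the derivation of \eqref{eq:c0toc2bounds}.
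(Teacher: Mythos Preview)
Your proposal is correct and follows essentially the same approach as the paper's proof: split the resolvent into $I_1$ and $I_2$ via the kernel representation, extend $I_1$ directly, and for $I_2$ insert the expansion \eqref{eq:Prop43:temp0} to exhibit the apparent pole at $\sigma=-iq_k$ which is then cancelled by the zero of $1/\Gamma(1-\hat\sigma)$ at $\hat\sigma=1$. The paper's proof is more terse but invokes exactly this cancellation mechanism, and your explicit Taylor computation of $1/\Gamma(1-\hat\sigma)$ near $\hat\sigma=1$ is a correct way to make it precise.
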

\begin{proof}
    The calculation is similar to that in Proposition \ref{lem:extR_sigma_2}. Evaluating $\rho_{x_0} R(\sigma) e^{q_k x'}$ via the integral expression (\ref{eq:Prop43:temp-.1}) and the expansion (\ref{eq:Prop43:temp0}), the corresponding term $I_2(\sigma)$ has the pole at ${\sigma = -iq_k}$ due to the exponential integral cancelled by the factor $\frac{1}{\Gamma(1-\hat{\sigma})}$ appearing in (\ref{eq:Prop43:temp0}). The result is that $I_2(\sigma)$ extends holomorphically to a neighborhood of ${\sigma = -i q_k}$, leaving only the single pole at the zero of $\mathcal{W}(\sigma)$. 

    The estimates follow by an analogous argument as for (\ref{eq:c0toc2bounds}).
\end{proof}

\section{Proof concluded}
\label{sec:proof}
In order to prove Theorem \ref{theorem:mainthm}, we require two additional preliminaries. The first, discussed in Section \ref{subsec:proofcompleted_1}, is a physical space scattering result relating \textit{regularity} for null data to \textit{decay} for spacelike data. The second is a sharp decay result for inhomogeneous wave equations with spacelike data, in regions $\{x \leq \text{const}.\}$ close to the axis. This is proved in Section \ref{subsec:proofcompleted_2}, relying on the spectral theory constructions of Section \ref{sec:scattering}. Finally, in Section \ref{subsec:proofcompleted_3} we combine these results with muliplier estimates to complete the proof

Theorem \ref{theorem:angularthm} is proved in Section \ref{subsec:proofcompleted_4}, and is logically independent of the other results in this section. 

\subsection{Backwards scattering}
\label{subsec:proofcompleted_1}
The main result of this section establishes a correspondence between null data $(r\varphi)_0(v) \in \mathcal{C}^{\alpha}_{(hor)}([-1,0])$ on $\{u=-1\}$ and spacelike data $((r\varphi)_0(x), \partial_t (r\varphi)_0(x))$ on $\{t=0\}$. We will assume throughout that all data is spherically symmetric, and thus drop angular terms in (\ref{eq:wavesim}).

\begin{prop}
    \label{prop:backwardsdecay}   
    Fix $\alpha \in (1,2)$, and outgoing spherically symmetric null initial data $h_0(v) = (r\varphi)_0(v) \in C^{\alpha}_{(hor)}([-1,0])$. Let $\psi(u,v) = (r\varphi)(u,v)$ denote the unique solution to the linear wave equation (\ref{sec2.5:eq2}) in $\{u \geq -1\}$. There exists spacelike data $(f_0(x),f_1(x)) \in \mathcal{D}_\infty^{\alpha, 5}(\mathbb{R}_+) \times \mathcal{D}_\infty^{\alpha, 4}(\mathbb{R}_+)$ for $(r\varphi,\partial_t(r\varphi))$ on $\{t=0\}$ such that the unique solution $\wt{\psi}(t,x)=(r\varphi)(t,x)$ to the linear wave equation (in hyperbolic coordinates) (\ref{eq:wavehyper}) on $\{t \geq 0\}$ satisfies the following:
    \begin{itemize}
        \item $\wt{\psi}(t,x) \in C^{5}_{t,x}(\{t \geq 0\})$.
        \item Written in double-null coordinates, $\wt{\psi}(u,v)$ coincides with $\psi(u,v)$ in their common domain of definition.
        \item For any $T > 0$ fixed, the estimate holds 
        \begin{equation}
            \label{eq:propbackscat_1}
            \sup_{t\in[0,T]}\sum_{j=0}^5\|\partial_t^j\wt{\psi}(t,\cdot)\|_{\mathcal{D}^{\alpha,5-j}_\infty(\mathbb{R}_+)} \lesssim_T \|h_0 \|_{\mathcal{C}^{\alpha}_{(hor)}([-1,0])}.
        \end{equation}
    \end{itemize}
    If we moreover assume $h_0(z) \in C^{\alpha,\delta}_{(hor)}([-1,0])$ for some $\delta \in (0,1)$, then there exists a $c_0 \in \mathbb{R}$ and $\delta' > 0$ such that the induced data $(f_0(x),f_1(x))$ lies in $\wt{\mathcal{D}}^{\alpha, 5, \delta',c_0}_\infty(\mathbb{R}_+) \times  \wt{\mathcal{D}}^{\alpha, 4, \delta',c_0}_\infty(\mathbb{R}_+)$. For any $T>0$ and all $t \in [0,T]$, $\wt{\psi}(t,x) - c_0 e^{\alpha q_k (t-x)} \in \mathcal{D}^{\alpha+\delta',5}_{\infty}(\mathbb{R}_+),$ and we have the estimate 
    \begin{equation}
        \label{eq:propbackscat_1.5}
       |c_0| +  \sup_{t\in[0,T]} \sum_{j=0}^5 \|\partial_t^j\big(\wt{\psi}(t,\cdot) - c_0 e^{\alpha q_k (t-x)}\big)\|_{\mathcal{D}^{\alpha+\delta',5-j}_\infty(\mathbb{R}_+)} \lesssim_T \|h_0 \|_{\mathcal{C}^{\alpha,\delta}_{(hor)}([-1,0])}.
    \end{equation}
    The constant $c_0$ is given by
    \begin{equation}
        \label{eq:backwardsexplicitc_0}
        c_0 = \frac{1}{\alpha(\alpha-1)}\lim_{v\rightarrow 0}|v|^{2-\alpha}\frac{d^2}{dv^2}h_0(v).
    \end{equation}
    \end{prop}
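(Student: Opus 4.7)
The plan is to construct the spacelike data $(f_0(x),f_1(x))$ by extending $\psi$ backwards from the null slice $\{s=0\}$ into the region $\wt{\mathcal{Q}}^{(in)}\cap\{s\leq 0\}$, and then reading off the values on $\{t=0\}$. Concretely, I would first solve the characteristic IVP for (\ref{eq:wavesim}) with outgoing data $\psi|_{\{s=0\}}=h_0$ and free ingoing data $\psi|_{\{z=0,\,s\leq 0\}}\equiv 0$. Since $(\epsilon_0,k)$-admissible backgrounds extend as exactly $k$-self-similar for $u\leq -2$ and are smooth on $\{z<0\}$, local existence and propagation of $\mathcal{C}^\alpha_{(hor)}$-type regularity in this backward region follows from an adaptation of Proposition \ref{prop:localexist}.

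The heart of the proof is to show that the backwards-extended $\psi$ obeys pointwise bounds consistent with the claimed exponential decay in hyperbolic coordinates. I would renormalize and set $\psi_\alpha\doteq e^{\alpha q_k s}\psi$, which by (\ref{eq:wavesim_alpha}) satisfies the wave equation with $\rho=q_k(1-\alpha)<0$. The key estimate is then a backwards analogue of the $\partial_s$-multiplier estimate in Proposition \ref{prop:firstordermultiplierest}: integrating over $\{s_0\leq s\leq 0\}\cap\mathcal{Q}$ and orienting so that $\{s=0\}$ acts as the ``future'' boundary, the horizon flux across $\{z=0\}$ vanishes identically thanks to the choice of free data, while the potential terms $V(s,z)\psi_\alpha$ are absorbable thanks to smallness of $\|V_k\|_{W^{1,2}_z}$ (Proposition \ref{lem:propertiesofV}). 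Higher derivatives are controlled by commuting with $\partial_s$ and $|z|\partial_z$, with $|z|$-weights calibrated against the $\mathcal{C}^\alpha_{(hor)}$ regularity of $h_0$ via the model expansion of Remark \ref{rmk:functionspaces}. Transcribing the resulting bounds under the change of variables $(s,z)=(t-x,-e^{-2q_kx})$ of Table \ref{table:1} (which converts the model profile $|z|^\alpha e^{-\alpha q_k s}$ to the exponential $e^{-\alpha q_k(t+x)}$) yields $(f_0,f_1)\in\mathcal{D}^{\alpha,5}_\infty(\mathbb{R}_+)\times \mathcal{D}^{\alpha,4}_\infty(\mathbb{R}_+)$ and the estimate (\ref{eq:propbackscat_1}). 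The agreement $\wt\psi\equiv\psi$ on the common domain is automatic from uniqueness of the backward characteristic IVP.

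For the refined conclusion under $h_0\in\mathcal{C}^{\alpha,\delta}_{(hor)}$, Lemma \ref{lem:holderspacedecomp} provides the decomposition $h_0=c_0|z|^{\alpha}+h_1$ with $h_1\in\mathcal{C}^{\alpha+\delta'}_{(hor)}$ for any $\delta'<\delta$, and with $c_0$ given by (\ref{eq:backwardsexplicitc_0}). By linearity, the contribution from $h_1$ satisfies the above backward scattering argument with the improved exponent $\alpha+\delta'$, yielding induced data in $\mathcal{D}^{\alpha+\delta',\,\cdot}_\infty$. For the $c_0|z|^{\alpha}$ piece I would isolate an explicit leading-order profile $\psi_{\mathrm{lead}}$ on the extended region by a Frobenius-type ansatz $\psi_{\mathrm{lead}}=c_0|z|^{\alpha}e^{-\alpha q_ks}\bigl(1+O_k(|z|)\bigr)$ matched to the chosen free data at $\{z=0\}$; the indicial equation at $\{z=0\}$ fixes the exponential prefactor, and successive orders in $|z|$ determine the lower-order terms using smallness of $V_k$ (cf.\ Proposition \ref{lem:propertiesofV}). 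The residual $\psi-\psi_{\mathrm{lead}}$ solves an inhomogeneous equation with a source of improved $|z|$-vanishing, and the multiplier argument of the previous step, rerun with the stronger weight $e^{(\alpha+\delta')q_ks}$, places the residual in $\mathcal{D}^{\alpha+\delta',\,\cdot}_\infty$. Converting $\psi_{\mathrm{lead}}$ to hyperbolic coordinates produces the explicit exponential tail and identifies $c_0$ as in (\ref{eq:backwardsexplicitc_0}), giving the conclusion $(f_0,f_1)\in\wt{\mathcal{D}}^{\alpha,5,\delta',c_0}_\infty\times\wt{\mathcal{D}}^{\alpha,4,\delta',c_0}_\infty$ together with (\ref{eq:propbackscat_1.5}).

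The principal obstacle is closing the multiplier estimate globally over $\{s\leq 0\}$: the backward region is unbounded in $s$ as $z\to 0$, and the bulk term arising from $\rho=q_k(1-\alpha)<0$ has the wrong sign, so a naive Grönwall argument would lose exponentially in $|s|$. What saves this is that along $\{t=\text{const}\}$ the quantity $|z|$ and $e^{s}$ are tied by $|z|=e^{2q_k s}$, so the vanishing $h_0\sim|z|^\alpha$ at $z=0$ dictated by $\mathcal{C}^\alpha_{(hor)}$ regularity exactly compensates the $e^{-\alpha q_k s}$ growth of $\psi$; implementing this compensation rigorously will require $|z|$-weighted commuted multiplier estimates analogous to Proposition \ref{prop:secondordermultiplierest_l0}, with the weight exponent $\omega$ tuned to the regularity parameter $\alpha$, and careful bookkeeping of the interplay between $|z|$-weights at the initial slice and the $e^{\alpha q_ks}$ renormalization.
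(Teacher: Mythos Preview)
Your overall architecture---solve a backward characteristic problem from $\{s=0\}\cup\{z=0,\,s\le 0\}$ and then read off spacelike data---matches the paper, but the choice of ingoing data $\psi|_{\{z=0,\,s\le 0\}}\equiv 0$ is a genuine gap. Generically $h_0(0)\neq 0$, so zero data already fails corner compatibility; even if $h_0(0)=0$, the $C^5$ gluing conditions $\partial_u^j h_1(-1)=\partial_u^j\psi(-1,0)$ (required for $\wt\psi\in C^5_{t,x}$) will not hold. More seriously, with $h_1\equiv 0$ the equation (\ref{sec2.5:eq2}) restricted to $\{v=0\}$ integrates to $\partial_v\wt\psi(u,0)=h_0'(0)$ for all $u\le -1$, hence $\wt\psi(u,v)=h_0(0)+h_0'(0)v+O(|v|^\alpha)$ near $v=0$; in hyperbolic coordinates the linear-in-$v$ term contributes only $e^{-q_k x}$ decay, so you land in $\mathcal{D}^{1,\cdot}_\infty$ rather than $\mathcal{D}^{\alpha,\cdot}_\infty$. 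The paper instead constructs a nontrivial $h_1\in C^5$ supported in $[-2,-1]$ satisfying corner compatibility, the full $C^5$ jet matching at $u=-1$, and---crucially---a moment condition forcing $\partial_v\wt\psi(u,0)$ to have compact $u$-support. This last condition is what allows the integration of the propagated bound $|v|^{2-\alpha}\partial_v^2\wt\psi=O(1)$ from $v=0$ (where now $\partial_v\wt\psi(u,0)=0$ for $u\le -2$) to yield the sharp $|v|^\alpha$ decay that becomes $e^{-\alpha q_k x}$.

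Your multiplier strategy also diverges from the paper and does not resolve the obstacle you flag. Proposition \ref{prop:firstordermultiplierest} uses the $\partial_z$ (not $\partial_s$) multiplier, and in the backward direction the axis flux $\int_\Gamma(\partial_z\psi_\rho)^2$ appears with an uncontrolled sign. The paper instead applies the $\partial_s$ multiplier to the \emph{unrenormalized} equation, obtaining only a weak growing bound $\|\wt\psi\|_{L^2_v(\{u=u'\})}\lesssim |u'|^{q_k/2}$; there is no bad bulk term here because no renormalization is used. Pointwise boundedness then comes from integrating (\ref{sec2.5:eq2}) along characteristics, exploiting the integrable decay $|u|^{-2+k^2}$ of the potential. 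The sharp $|v|^\alpha$ decay does not come from any multiplier at all but from the conservation laws $\partial_u\bigl(\lim_{v\to 0}|v|^{j-\alpha}\partial_v^j\wt\psi\bigr)=0$, which propagate the $\mathcal{C}^\alpha_{(hor)}$ singular profile of $h_0$ backward, combined with the engineered vanishing of $\partial_v\wt\psi(u,0)$ for $u\le -2$. For the refined $\mathcal{C}^{\alpha,\delta}_{(hor)}$ case the same conservation-law mechanism (applied to $h_0=c_0|v|^\alpha$ after Lemma \ref{lem:holderspacedecomp}) directly produces the expansion $\wt\psi=c_0|v|^\alpha+O(|v|^2)$, bypassing the Frobenius construction you propose.
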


\begin{proof}

\vspace{.5em}
\noindent
\textbf{Posing compatible ingoing data:} 
We construct spacelike data $(f_0(x),f_1(x))$ by solving the backwards characteristic problem for (\ref{sec2.5:eq2}) with data along $\{u=-1\} \cup \{v=0, u \leq -1\}$. To render this problem well-posed, we specify appropriate data for $r\varphi$ along the ingoing null component, denoted $h_1(u)$. We claim this ingoing data can be chosen such that the following conditions hold:
\begin{enumerate}
    \item $\text{supp} \ h_1(u) \in [-2,-1]$, and $h_1 \in C^5_{u}([-2,-1])$, with $\|h_1 \|_{C^5_u([-2,-1])} \lesssim \|h_0 \|_{C^\alpha_{(hor)}([-1,0])}$
    \item $h_1(-1) = h_0(0)$
    \item $\partial_u^{j} h_1(-1) = \partial_u^j \psi(-1,0), \quad (0 \leq j \leq 5)$
    \item $\text{supp} \ \partial_v \wt{\psi}(u,0) \in [-2,-1]$
    \item  $\big|\lim_{v\rightarrow 0}|v|^{j- \alpha} \partial_v^j \wt{\psi}(u,v)\big| \lesssim 1, \quad (2 \leq j \leq 5)$
\end{enumerate}
To briefly comment on the significance of these conditions, we note that (1)--(2) are natural requirements of compatibility with outgoing data and compact support. Condition (4) requires not only $r\varphi(u,0)$ to be compactly supported in $u$, but additionally\footnote{Note this is not possible for every wave equation of the form (\ref{sec2.5:eq2}). In Minkowski space, the wave equation reduces to a conservation law $\partial_u \big(\partial_v (r\varphi)\big) = 0$, implying the derivative can only be compactly supported if additional conditions are imposed on $h_0(v)$.} $\partial_v(r\varphi)(u,0)$. Condition (3) is forced by the requirement of $C^5$ gluing of $\psi$ and $\wt{\psi}$ across $\{u=-1\}$, which is in turn forced by the presence of an axis. Finally, condition (5) asserts that the singular bounds on $h_0(u)$ are propagated to the past.

Continuing with the proof, it is direct that we can choose $h_1(u)$ to satisfy conditions (1)--(3), which only constrain the jet of $h_1$ at the point $u=-1$. The quantitative estimate in (1) moreover follows from the local well-posedness estimates for $\psi$, i.e. (\ref{eq:localexist_bound}). 

To see condition (4), we restrict (\ref{sec2.5:eq2}) to $\{v=0\}$ and integrate in $u$, giving
\begin{equation}
    \label{eq:backscatproof:temp1}
    \partial_v \wt{\psi}(u,0) = h_0'(0) + \int_{u}^{-1} \bigg(\frac{\lambda(-\nu)\mu}{(1-\mu)r^2}\bigg)(u',0) h_1(u')du',
\end{equation}
By assumption on $(\epsilon_0,k)$-admissible spacetimes, we can expand
\begin{equation*}
    \bigg(\frac{\lambda(-\nu)\mu}{(1-\mu)r^2}\bigg)(u,0) = \frac{c_k + \epsilon_0 k^2 g(u)}{|u|^{2-k^2}},
\end{equation*}
for a non-zero constant $c_k$ and function $g(u)$ supported in $[-2,-1]$. It now suffices to choose $h_1$ such that (\ref{eq:backscatproof:temp1}) evaluates to zero when $u=-2$. 

Finally, condition (5) will follow from the set of conservation laws
\begin{align*}
    \partial_u \big(\lim_{v\rightarrow 0}|v|^{j-\alpha}\partial_v^j \wt{\psi}(u,v)\big) = 0, \quad (2\leq j \leq 5).
\end{align*}
These follow inductively in $j$ by commuting (\ref{sec2.5:eq2}) with $|v|^{j-\alpha}\partial_v^{j-1}$, taking the limit as $v \rightarrow 0$, and applying the regularity (\ref{eq:potentialest3.5}).

\vspace{1em}
\noindent
\textbf{Pointwise bounds:}
We turn to estimating the solution in a characteristic rectangle $\mathcal{B} \doteq \{u \leq -1, \ v \geq -\frac12 \}.$ In the following, let $C(h_1,h_2)$ denote any constant depending on the $C^{\alpha}_{(hor)}([-1,0])$ norm of $h_1$, and the $C^5_u([-2,-1])$ norm of $h_2$. We first show the bound
\begin{equation}
    \label{eq:backscatproof:temp2}
    \|\wt{\psi}\|_{C^1_{u,v}(\mathcal{B})} + \||u| \partial_u \wt{\psi}\|_{L^\infty(\mathcal{B})} \leq C(h_1,h_2),
\end{equation} 
which will follow from a multiplier estimate. We choose to work in similarity coordinates for convenience, and estimate (\ref{eq:wavesimfull}). Multiplying by $\partial_s \wt{\psi}$ and integrating by parts in $\mathcal{R}(s_0,0)$ for an arbitrary $s_0 < 0$ yields 
\begin{align*}
    \frac{1}{2}\int_{s_0}^{0}&(\partial_s \wt{\psi})^2(s,0)ds  + \frac{1}{2}q_k \int_{\{s=0\}} |z| (\partial_z \wt{\psi})^2(0,z)dz +  \frac{1}{2}\int_{\{s=0\}}V(0,z)  \wt{\psi}^2(0,z) dz \\[\jot]
    &+ \frac{1}{2}\iint_{\mathcal{R}(s_0,0)}\partial_s(V-V_k) \wt{\psi}^2(s,z) dsdz = \frac{1}{2}q_k \int_{\{s=s_0\}} |z| (\partial_z \wt{\psi})^2(s_0,z)dz+  \frac{1}{2}\int_{\{s=s_0\}}V(s_0,z)  \wt{\psi}^2(s_0,z) dz.
\end{align*}
We have used that $\partial_s V_k = 0$ to drop the corresponding bulk term. By the assumptions on $(\epsilon_0,k)$-admissible spacetimes, the remaining bulk term is supported in $\{u \in [-2,-1]\}$ and satisfies a pointwise bound in terms of $\epsilon_0$. Choosing this parameter sufficiently small, taking the supremum over $s \in [s_0,0]$, and absorbing the bulk term, we conclude that the multiplier estimate controls $\|\wt{\psi} \|_{L^2_z(\{s=s_0\})}$. In double-null coordinates we have, for all $u' \leq -1$,
\begin{equation}
    \label{eq:backscatproof:temp3}
    \|\wt{\psi} \|_{L^2_v(\{u=u'\})} \leq C(h_0,h_1)|u'|^{\frac12 q_k}.    
\end{equation}
This bound, in combination with integration along characteristics, will give (\ref{eq:backscatproof:temp2}). We sketch the argument here.

Denoting the zeroth order coefficient in (\ref{sec2.5:eq2}) by $P(u,v),$ we can estimate $|P(u,v)| \lesssim |u|^{-2+k^2}.$ In particular, this term rapidly decays as $|u| \rightarrow \infty$. We now integrate (\ref{sec2.5:eq2}) as an equation for $\partial_u \wt{\psi}$ and estimate $P(u,v)\wt{\psi}$ via Cauchy-Schwarz. It follows that $\partial_u \wt{\psi}$ decays at an integrable rate, and thus $\wt{\psi}$ and $\partial_u \wt{\psi}$ are bounded. Integrating (\ref{sec2.5:eq2}) again as an equation for $\partial_v \wt{\psi}$ gives boundedness of this quantity as well.

\vspace{.5em}
\noindent
A corollary of (\ref{eq:backscatproof:temp2}) are the higher order bounds 
\begin{equation}
    \label{eq:backscatproof:temp4}
    \big\||u|^{j}\partial_u^j \wt{\psi}\big\|_{L^\infty(\mathcal{B})}+\big\||v|^{j-\alpha}\partial_v^j \wt{\psi}\big\|_{L^\infty(\mathcal{B})} \leq C(h_1,h_2), \quad  (2\leq j \leq5).
\end{equation}
These follow inductively in $j$ by commuting (\ref{sec2.5:eq2}) with $\partial_u^j, \ |v|^{j-\alpha} \partial_v^{j-1}$, applying (\ref{eq:backscatproof:temp2}), and integrating in the $v$, $u$ directions respectively. The initial data terms along $\{u=-1\}$ limit the regularity of $\partial_v^j \wt{\psi}$ in (\ref{eq:backscatproof:temp4}).

The sharp bounds (\ref{eq:propbackscat_1}) now follow. Integrating the ${j=2}$ estimate in (\ref{eq:backscatproof:temp4}) in $v$ implies that for all ${u \leq -2}$, 
\begin{equation}
    \label{eq:backscatproof:temp5}
    |\partial_v \wt{\psi}(u,v)| \leq C(h_0,h_1)|v|^{\alpha-1},
\end{equation}
and after integrating again,
\begin{equation}
    \label{eq:backscatproof:temp6}
    |\wt{\psi}(u,v)| \leq C(h_0,h_1)|v|^{\alpha}.
\end{equation}

It now suffices to collect (\ref{eq:backscatproof:temp2}), (\ref{eq:backscatproof:temp4}), (\ref{eq:backscatproof:temp5}), (\ref{eq:backscatproof:temp6}), and apply the coordinate transformations between double-null and hyperbolic coordinates to conclude (\ref{eq:propbackscat_1}) in $\mathcal{B}$. The complement  $\{t \in [0,T]\} \setminus \mathcal{B}$ is contained in a set of the form $\{u \geq u_0\},$ and thus the estimate (\ref{eq:propbackscat_1}) follows from local existence theory, cf. Proposition \ref{prop:localexist}.

\vspace{1em}
\noindent
\textbf{Refinement for data in $\mathcal{C}^{\alpha,\delta}_{(hor)}$:}
Finally we examine the case of initial data $h_0(v) \in \mathcal{C}^{\alpha,\delta}_{(hor)}([-1,0])$, and show the sharper (\ref{eq:propbackscat_1.5}). By Lemma \ref{lem:holderspacedecomp} and linearity, it is sufficient to consider $h_0(v) = c_0|v|^{\alpha}$ for a constant $c_0$. Define $c_j \doteq |v|^{j-\alpha}\frac{d^j}{dv^j}h_0(v),$ $j \geq 0$. 

By the same argument that leads to (\ref{eq:backscatproof:temp4}), we may commute (\ref{sec2.5:eq2}) inductively by $|v|^{j-\alpha}\partial_v^{j-1}$ and estimate in $\mathcal{B}$, giving 
\begin{align*}
    \partial_u \big(|v|^{j-\alpha}\partial_v^j \wt{\psi}\big) = O_{L^\infty}(|v|^{2-\alpha}|u|^{-2+k^2}), \quad (2\leq j \leq 5).
\end{align*}
Integrating in $u$ from data, we see that the contribution from the initial data term dominates the expansion for $\partial_v^j \wt{\psi}$:
\begin{equation}
    \label{eq:backscatproof:temp7}
    |v|^{j-\alpha}\partial_v^j \wt{\psi}(u,v) = c_j + O_{L^\infty}(|v|^{2-\alpha}), \quad (2\leq j \leq 5).
\end{equation}
Integrating the $j=2$ expansion further yields 
\begin{align}
    \partial_v \wt{\psi}(u,v) &= \partial_v \wt{\psi}(u,0) + c_1 |v|^{\alpha-1} + O_{L^\infty}(|v|) \label{eq:backscatproof:temp8}\\
    \wt{\psi}(u,v) &= h_1(u) + c_0  |v|^{\alpha} + O_{L^\infty}(|v|^2). \label{eq:backscatproof:temp9}
\end{align}
If we now define $\wt{\psi}_{reg}  \doteq \wt{\psi} - c_0 |v|^{\alpha}$, it follows from the estimates on $u$-derivatives (\ref{eq:backscatproof:temp2})--(\ref{eq:backscatproof:temp4}), as well as the expansions (\ref{eq:backscatproof:temp7})--(\ref{eq:backscatproof:temp9}), that we have for some $\delta' > 0$
\begin{equation}
   \sup_{t' \in [0,T]} \| e^{(\alpha+\delta')x}\wt{\psi}_{reg} \|_{C^5_{t,x}(\{t=t'\})} \lesssim_T \|h_0\|_{C^{\alpha,\delta}_{(hor)}([-1,0])}.
\end{equation}
\end{proof}

\subsection{Leading order expansion in the near-axis region}
\label{subsec:proofcompleted_2}
In this section we establish a leading order expansion for the solution to the following inhomogeneous problem:
\begin{equation}
    \label{eq:forcedeqn1}
    \begin{cases}
    &\partial_t^2 \psi - \partial_x^2 \psi  + 4q_k e^{-2q_k x}V_k(x) \psi =  F(t,x), \\[\jot]
        &\psi(t,0) =0,  \\[\jot]
        &(\psi(0,x), \partial_t \psi(0,x)) = (0,0),
    \end{cases}
\end{equation}
where $F(t,x)$ is a given function satisfying 
\begin{equation}
    \label{52:forcingassTOTAL}
    \begin{cases}
    &F(t,x) \in C^3_{t,x}(\mathbb{R}_+ \times \mathbb{R}_+),   \\[\jot]
    &\text{supp}\ F(\cdot,x) \in [1,\infty) \ \text{for all} \ x,   \\[\jot]
    &\sup_{t \geq 0} \sum_{j=0}^3 e^{\frac32 t} \|\partial_t^j F\|_{\mathcal{D}^{\beta,0}_\infty(\mathbb{R}_+)} \leq C, 
    \end{cases}
\end{equation}
for constants $\beta \in (1,2)$ and $C > 0$. The analysis of (\ref{eq:forcedeqn1}) will depend on the spatial decay of $F(t,x)$, manifest in the constant $\beta$. In the range $\beta > p_k$, corresponding (by Proposition \ref{prop:backwardsdecay}) to null data with regularity \textit{above threshold}, we show that the solution to (\ref{eq:forcedeqn1}) converges to a constant multiple of the $k$-self-similar radius function $r_k(t,x)$, in spatially compact sets $\{x \leq \text{const}.\}$.

Convergence to constants will not in general hold for the \textit{threshold regularity} case ${\beta = p_k}$, or the \textit{below threshold regularity} case ${1<\beta < p_k}$. For these latter cases, we impose an additional assumption, denoted $(A_\beta)$.
\begin{enumerate}
    \item[$(A_{\beta})$] There exists a decomposition $F(t,x) = c_0 \chi_b(t) e^{\beta q_k(t-x)} + F_2(t,x)$, with $c_0$ a constant, $\chi_b(t)$ a smooth, non-negative bump function with support in $[1,2]$, and $F_2(t,x)$ a function satisfying the assumptions (\ref{52:forcingassTOTAL}) for some $\beta' = \beta+\delta$.
\end{enumerate}

For initial data with $\beta = p_k$, which moreover satisfies $(A_{p_k})$, we establish exponential convergence in spatially compact sets to a linear combination of terms of the form $r_k, g(t,x)r_k$, where $g(x) = \mr{\phi}(x) - k (x-t)$ is the restriction of the $k$-self-similar scalar field to $\{t=0\}$. The first term corresponds to a bounded scalar field; the second, however, is new to the threshold regularity case, and corresponds in double-null coordinates to a scalar field growing like $-\log |u|$. 

Finally, when $1<\beta <p_k$ and the assumption $(A_\beta)$ holds, the solution converges exponentially in spatially compact sets to a term which grows as $e^{-\beta q_k t}$; in particular, the solution is unstable with a rate strictly between the self-similar and blue-shift rate.

\begin{prop}
    \label{prop:resonanceexp}
    Fix an $(\epsilon_0,k)$-admissible background, with $k$ sufficiently small. Choose parameters $\beta \in (1,2)$, $x_0>0$, and let $\eta \in (0,\frac12)$ be the parameter defined in Section \ref{sec:scattering} (cf. Remark \ref{rmk:etafixed}). Let $\psi(t,x)$ denote the solution to (\ref{eq:forcedeqn1})--(\ref{52:forcingassTOTAL}). We distinguish various cases below:

    \vspace{.5em}
    \noindent
    \underline{$\beta > p_k$: } Define $\beta_* = \min(\beta q_k, 1+\eta),$ and fix any $\beta_*' < \beta_*$. Then there exists a constant $c_{\infty}$ independent of $x_0$, and a constant $C_{x_0, \beta_*'} \lesssim_{\beta_*'} (1+x_0) e^{\beta_*' x_0},$ such that
    \begin{equation}
        \label{eq:resonanceexp}
        \sup_{ t \geq 0 } \sum_{j=0}^2 \|e^{\beta_*' t}\partial_t^j(\psi - c_{\infty} r_k) \|_{C^{2-j}_x([0,x_0])} \lesssim C_{x_0, \beta_*'} \sup_{t \geq 0} \sum_{j=0}^3 e^{\frac32 t}\|\partial_t^j F\|_{\mathcal{D}^{\beta,0}_\infty(\mathbb{R}_+)}.
    \end{equation}
    Moreover, $|c_\infty|$ can be estimated by the right hand side of (\ref{eq:resonanceexp}).

    \vspace{.5em}
    \noindent
    \underline{$\beta = p_k$, $(A_{p_k})$: } Assume $F(t,x)$ satisfies $(A_{p_k})$ with constants $c_0, \delta$. For any $\delta' < \delta$, there exist constants $d_{\infty}^{(i)},$ $1 \leq i \leq 2$, independent of $x_0$, and a constant $C_{x_0, \delta'} \lesssim_{\delta'} (1+x_0)e^{(1+\delta')x_0}$, such that 
    \begin{align}
        \sup_{ t \geq 0 } \sum_{j=0}^2 \| e^{(1+\delta')t}\partial_t^j \big(\psi &-   d_\infty^{(1)} r_k -  d_\infty^{(2)} (\mr{\phi}(x) - k (t-x)) r_k \big)\|_{C^{2-j}_x([0,x_0])} \nonumber \\[\jot]
        &\leq C_{x_0, \delta'} \bigg(|c_0| + \sup_{t \geq 0} \sum_{j=0}^3 e^{\frac32 t}\| \partial_t^j F_2 \|_{\mathcal{D}^{p_k+\delta,0}_\infty(\mathbb{R}_+)} \bigg). \label{eq:resonanceexp1}
    \end{align}
    Moreover, $|d^{(i)}_\infty|$, $1 \leq i \leq 2$ can be estimated by the right hand side of (\ref{eq:resonanceexp1}).

    \vspace{.5em}
    \noindent
    \underline{$1< \beta < p_k$, $(A_{\beta})$: } Assume $F(t,x)$ satisfies $(A_{\beta})$ with constants $c_0, \delta$. For any $\delta' < \delta$, there exists a $L^\infty_{loc}(\mathbb{R}_+)$ function $g_\infty(x)$ and a constant $C_{x_0,\delta'} \lesssim_{\delta'} (1+x_0)e^{(\beta+\delta')x_0}$ such that 
    \begin{align}
        \sup_{t>0}\sum_{j=0}^2 \|e^{(\beta+\delta')q_k t} \partial_t^j (\psi - &g_\infty(x)e^{-\beta q_k t}) \|_{C^{2-j}_x([0,x_0])} \nonumber\\[\jot]
        &\leq  C_{x_0,\delta'}\bigg(|c_0| +  \sup_{t \geq 0} \sum_{j=0}^3 e^{\frac32 t} \|\partial_t^j F_2 \|_{\mathcal{D}^{\beta+\delta,0}_\infty(\mathbb{R}_+)} \bigg). \label{eq:resonanceexp1.1}
    \end{align}
    Moreover, $\|g_{\infty}\|_{L^\infty([0,x_0])}$ can be estimated by the right hand side of (\ref{eq:resonanceexp1.1}), and provided $c_0 \neq 0$, the function $g_\infty(x)$ does not identically vanish.
    \end{prop}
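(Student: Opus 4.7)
My approach is to analyze \eqref{eq:forcedeqn1} via the Fourier--Laplace transform in $t$, reducing it to the spectral problem studied in Section~\ref{sec:scattering}, and then to extract the asymptotic behavior in spatially compact sets by contour deformation. Since $F(t,x)$ is supported in $\{t \geq 1\}$ and decays like $e^{-3t/2}$ in $t$, the transform
\[
    \hat{F}(\sigma,x) \doteq \int_0^\infty e^{i\sigma t} F(t,x)\, dt
\]
is holomorphic in $\{\Im\sigma > -\tfrac{3}{2}\}$ with values in $\mathcal{D}^{\beta,0}_\infty(\mathbb{R}_+)$, and the weighted estimates on $\partial_t^j F$ translate via integration by parts to bounds of the form $(1+|\sigma|)^j \|\hat{F}(\sigma,\cdot)\|_{\mathcal{D}^{\beta,0}_\infty(\mathbb{R}_+)}$ throughout this strip. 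By uniqueness of the bounded solution to the forced equation (Lemma~\ref{lem:resolventuniquness}) and standard contour-shifting, the solution admits the representation
\begin{equation*}
    \rho_{x_0}(x)\psi(t,x) = \frac{1}{2\pi} \int_{\Im\sigma = \sigma_0} e^{-i\sigma t} \big(\rho_{x_0} R(\sigma)\hat{F}(\sigma,\cdot)\big)(x)\, d\sigma,
\end{equation*}
valid initially for $\sigma_0 \in (0,\eta]$, where the cutoff $\rho_{x_0}$ restricts attention to $[0,x_0]$ and permits application of the cutoff resolvent bounds.

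In the above threshold case $\beta > p_k$, I would shift the contour from $\{\Im\sigma = \sigma_0\}$ to $\{\Im\sigma = -\beta_*'\}$ for any $\beta_*' \in (1,\beta_*)$. By Proposition~\ref{prop:residueofR}, the only singularity encountered is a simple pole at $\sigma = -i$, whose residue by the explicit form of the resonance function $f_{(dir),-i} = a_* e^{x}\mr{r}(x)$ contributes exactly a multiple of $r_k(t,x)=e^{t-x}\mr{r}(x)$; this produces the constant $c_\infty$. The remaining contour integral at $\Im\sigma = -\beta_*'$ is bounded by $e^{-\beta_*' t}$, and the resolvent estimates \eqref{eq:c0toc2bounds} combined with the $(1+|\sigma|)^{-1+j}$ gain from \eqref{eq:outgoinghighenergy}--\eqref{eq:dirichlethighenergy} and the polynomial $\sigma$-control of $\hat F$ (using up to three $\partial_t$ derivatives) ensure absolute convergence for each of $\partial_t^j \psi$, $0 \leq j \leq 2$. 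Controlling $\partial_x^{j}\psi$ through the same bound and the fundamental theorem of calculus in $x$ then yields \eqref{eq:resonanceexp}. Matching the $x_0$-dependence $(1+x_0)e^{\beta_*' x_0}$ is a direct bookkeeping of the cutoff factors in \eqref{eq:c0toc2bounds}.

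For the threshold case $\beta = p_k$ under $(A_{p_k})$, I would use the decomposition $F = c_0 \chi_b(t) e^{p_k q_k(t-x)} + F_2 = c_0 \chi_b(t) e^{t-x} + F_2$. For fixed $x$, $\hat{F}(\sigma,\cdot)$ lies in the weighted space $\wt{\mathcal{D}}^{p_k,0,\delta,c_0\hat{\chi}_b(\sigma)}_\infty(\mathbb{R}_+)$, so Proposition~\ref{lem:extR_sigma_2} provides a meromorphic extension of $\rho_{x_0}R(\sigma)\hat{F}(\sigma,\cdot)$ to $\{\Im\sigma \geq -1-\delta'q_k\}$, now with a \emph{double} pole at $\sigma = -i$. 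Shifting the contour past this double pole, the residue computation using \eqref{eq:resolventexp_A1}--\eqref{eq:resolventexp_A2} produces one term proportional to $r_k$ (from $C_2(-i,x)$ and the leading $C_1(-i,x)$) and one term proportional to $(\mr{\phi}(x) - k(t-x))r_k$ (arising from the $t$-derivative of $e^{-i\sigma t}$ at the double pole combined with $\partial_\sigma C_1(-i,x)$, via the identity $F_{-i} = -\tfrac{a_*i}{k}e^x\mr{r}(x)(\mr{\phi}(x) - kx)$ in Lemma~\ref{lem:F-iEST}). The extra $t$ in $\partial_\sigma(e^{-i\sigma t})\big|_{\sigma=-i}$ combines with the $-kx$ factor to assemble $\mr{\phi}(x) - k(t-x)$. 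This identifies the two constants $d_\infty^{(i)}$, and the remainder integral along $\{\Im\sigma = -1-\delta' q_k\}$ yields the stated $e^{-(1+\delta')t}$ decay.

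The below threshold case $1 < \beta < p_k$ under $(A_\beta)$ is structurally different: now $\hat{F}$ itself develops a pole at $\sigma = -i\beta q_k$ arising from the Fourier transform of $c_0 \chi_b(t)e^{\beta q_k t}$, which lies \emph{above} the single resonance at $\sigma = -i$. Deforming the contour to $\{\Im\sigma = -(\beta+\delta')q_k\}$ (still above $-1$, so no resolvent poles are crossed), I would pick up a residue from this forcing-induced pole, giving exactly $g_\infty(x)e^{-\beta q_k t}$ with $g_\infty(x)$ proportional to $\big(\rho_{x_0}R(-i\beta q_k)e^{-\beta q_k x'}\big)(x)$; nonvanishing of $g_\infty$ when $c_0 \neq 0$ follows from the nondegeneracy established in \eqref{eq:resolvexp_extension_2} together with Lemma~\ref{lem:resolventon_eqx}. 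The remainder is bounded as before.

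The main obstacle I anticipate is the rigorous justification of the contour shifts, which requires \emph{uniform} large-$|\sigma|$ control of $\rho_{x_0} R(\sigma)\hat{F}(\sigma,\cdot)$ along horizontal lines deep in the lower half-plane. Here the polynomial gain $(1+|\sigma|)^{-1+j}$ from \eqref{eq:c0toc2bounds} is borderline for closing the $\partial_t^2$ estimate; one must carefully count the three available $t$-derivatives against the $(1+|\sigma|)^2$ weight, use the fact that $\hat{F}(\sigma,\cdot)$ gains additional decay in $\sigma$ from $F \in C^3_t$, and absorb the $e^{|\Im\sigma|x_0}$ factor into the cutoff-dependent constant $C_{x_0,\cdot}$. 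A secondary subtlety is ensuring that for the below-threshold case the forcing-induced pole of $\hat{F}$ and the resolvent-induced pole at $\sigma=-i$ are genuinely separated, which uses $\beta < p_k$ strictly, so that the contour can be placed between them.
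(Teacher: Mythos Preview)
Your proposal follows essentially the same route as the paper: Fourier--Laplace in $t$, identification of $\hat\psi_\sigma$ with $R(\sigma)\hat F_\sigma$ via uniqueness, and contour deformation past the relevant poles, with the large-$|\sigma|$ convergence handled by trading three $t$-derivatives on $F$ for $(1+|\sigma|)^{-3}$ decay. The treatment of $\beta>p_k$ and $\beta=p_k$ matches the paper closely, and your reading of the double-pole residue in the threshold case (combining $\partial_\sigma(e^{-i\sigma t})$ with $\partial_\sigma C_1(-i,x)$ via Lemma~\ref{lem:F-iEST} to assemble $(\mr\phi(x)-k(t-x))r_k$) is on point.

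There is one genuine error in the below-threshold case. You assert that ``$\hat F$ itself develops a pole at $\sigma=-i\beta q_k$ arising from the Fourier transform of $c_0\chi_b(t)e^{\beta q_k t}$.'' This is false: $\chi_b$ is a smooth bump with compact support in $[1,2]$, so the Fourier--Laplace transform of $\chi_b(t)e^{\beta q_k t}$ is \emph{entire} in $\sigma$. The pole at $\sigma=-i\beta q_k$ is not forcing-induced but \emph{resolvent-induced}: it appears when $R(\sigma)$ is applied to the spatial tail $e^{-\beta q_k x}$, specifically from the integral $\int_x^\infty f_{(out),\sigma}(x')e^{-\beta q_k x'}dx'$, whose leading $e^{i\sigma x'}$ behavior yields a factor $(\beta q_k-i\sigma)^{-1}$. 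This is exactly the content of \eqref{eq:resolvexp_extension_2} in Proposition~\ref{lem:extR_sigma_2}. Consequently your formula for $g_\infty$ as ``proportional to $(\rho_{x_0}R(-i\beta q_k)e^{-\beta q_k x'})(x)$'' is ill-defined, since that is precisely where the pole sits; the correct $g_\infty$ is proportional to the residue $B_1(-i\beta q_k,x)$, which is a multiple of $f_{(dir),-i\beta q_k}(x)$. Your appeal to Lemma~\ref{lem:resolventon_eqx} is also misplaced: that lemma treats the special exponential $e^{-q_k x}$, where the would-be pole at $\sigma=-iq_k$ is cancelled by the zero of $\Gamma(1-\hat\sigma)^{-1}$ in the expansion of $f_{(out),\sigma}$. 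Once you correct the source of the pole, the contour shift to $\{\Im\sigma=-(\beta+\delta')q_k\}$ (which stays above $-i$ since $\beta q_k<1$) and the remainder estimate go through exactly as you describe.
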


\begin{proof}[Proof of Proposition \ref{prop:resonanceexp}]
By a density argument, it suffices to establish (\ref{eq:resonanceexp})--(\ref{eq:resonanceexp1.1}) for $F(t,x) \in C^\infty_{t,x}(\mathbb{R}\times \mathbb{R}_+)$. Given that $F(t,x)$ is supported in $\{t \geq 1\}$, we may then extend $\psi(t,x)$ to $\mathbb{R}\times \mathbb{R}_+$ by $\psi(t,x) = 0$ on $\{t \leq 0\}$, and then $\psi(t,x) \in C^\infty_{t,x}(\mathbb{R}\times \mathbb{R}_+)$.

For $\sigma \in \mathbb{C}$, let $\wh{\psi}(\sigma,x)$ denote the Fourier-Laplace transform of $\psi$ in the $t$ variable, where we use the sign convention
\begin{equation}
    \wh{\psi}(\sigma,x) = \int_{\mathbb{R}}e^{i \sigma t} \psi(t,x) dt.
\end{equation}
We have established boundedness in $t$ for $\psi(t,x)$ in Lemma \ref{lem:pointwiseboundforeasywaveequation}, and thus this transform is well-defined for all $\Im \sigma > 0$. For fixed $\sigma$ in the upper half plane we have the regularity $\wh{\psi}(\sigma,x) \in C^\infty_x(\mathbb{R}_+)$, as well as the bound $\sup_{x}|(1+\sqrt{x})^{-1}\wh{\psi}(\sigma,x)| < \infty$. By (\ref{52:forcingassTOTAL}) the transform $\hat{F}(\sigma,x)$ is well-defined for $\{\Im \sigma > -\frac32\}$, and $\sup_{x}|e^{\beta q_k x}\hat{F}(\sigma,x)| < \infty$ holds. In the following, we write $\wh{\psi}_{\sigma}(x) \doteq \wh{\psi}(\sigma,x)$, and similarly $\wh{F}_{\sigma}(x) \doteq \wh{F}(\sigma,x)$.

It follows that $\wh{\psi}_{\sigma}(x)$ solves 
\begin{align*}
    \begin{cases}
    -\frac{d^2}{dx^2}\wh{\psi}_{\sigma}(x) + (4q_k e^{-2q_k x}V_k(x) -\sigma^2)\wh{\psi}_{\sigma}(x) = \wh{F}_\sigma(x) \\[\jot]
    \wh{\psi}_{\sigma}(0)=0,
    \end{cases}
\end{align*}
with at most polynomial growth at infinity. By Lemma \ref{lem:resolventuniquness} we must have
\begin{equation*}
    \wh{\psi}_{\sigma}(x) = (R(\sigma) \wh{F}_\sigma)(x).
\end{equation*}
The Fourier-Laplace inversion formula now provides the following representation formula for $\psi$ on $\{x \leq x_0\}$: 
\begin{align}
    \label{eq:resonanceexp1.5}
    \rho_{x_0}(x)\psi(t,x) = \frac{1}{2\pi}\int_{\{\Im \sigma = \eta \}} e^{-i\sigma t}(\rho_{x_0}R(\sigma)\wh{F}_\sigma)(x) d\sigma.
\end{align}
The integration takes place on a horizontal contour in the upper half plane.
As $F(t,x)$ is smooth in the $t$ coordinate, $\wh{F}_\sigma(x)$ is rapidly decaying in $\sigma$ for $|\Re \sigma| \gg 1$. Therefore, the integral converges pointwise in $x$. 

We now specialize to the case $\beta > p_k$. Arguments for the remaining cases follow similarly, and are sketched at the end of the proof. In the high regularity setting with $\beta > p_k$ (and therefore $\beta_* > 1$), the integrand in (\ref{eq:resonanceexp1.5}) is defined and meromorphic in $\sigma$ for $\sigma \in \mathbb{I}_{(-\beta_*,\eta]}$. Choosing $\epsilon_*$ small such that $-\beta_* + \epsilon_* < -1$, the goal is to deform the contour of integration in (\ref{eq:resonanceexp1.5}) to $\{\Im \sigma = -\beta_* + \epsilon_*\}$, picking up a contribution from the unique pole at ${\sigma = -i}$.

Fix a small constant $\epsilon_1$ and a large constant $R > 0$, and define the oriented contours
\begin{align}
    \label{eq:resonanceexp_prooftemp0}
    \begin{cases}
    \gamma_{\epsilon_1} = -i +  \epsilon_1 e^{-i \theta} \hspace{4em} \theta \in [0,2\pi) \\
    \Gamma^{\pm}_{1,R} = \eta i  \pm R \pm t \hspace{4em} t \in [0,\infty) \\
    \Gamma^{\pm}_{2,R} = \eta i \pm R - i t \hspace{3.65em} t \in [0, \eta + \beta_* - \epsilon_1] \\ 
    \Gamma_{3,R} = (-\beta_* + \epsilon_*)i + t \hspace{1.85em}   t \in [-R,R]
    \end{cases}
\end{align}
as well as the path $P_R \subset \mathbb{C}$
\begin{align*}
    P_R \doteq -\Gamma^{-}_{1,R} \cup \Gamma^{-}_{2,R} \cup \Gamma_{3,R} \cup -\Gamma^{+}_{2,R} \cup \Gamma^{+}_{1,R}.
\end{align*}
We may deform the contour in (\ref{eq:resonanceexp1.5}) in a compact subset of $\mathbb{C}$, giving 
\begin{align}
    \label{eq:resonanceexp_prooftemp.5}
    \rho_{x_0}(x)\psi(t,x) &= \frac{1}{2\pi}\int_{\gamma_{\epsilon_1}}e^{-i\sigma t}(\rho_{x_0}R(\sigma)\wh{F}_\sigma)(x) d\sigma + \frac{1}{2\pi}\int_{P_R}e^{-i\sigma t}(\rho_{x_0}R(\sigma)\wh{F}_\sigma)(x) d\sigma.
\end{align}
Appealing to the analysis (\ref{eq:residueatpole})--(\ref{eq:residueatpole1}) of $R(\sigma)$ near ${\sigma=-i}$, the integral over the closed loop $\gamma_{\epsilon_1}$ is explicitly computable, giving  
\begin{equation}
    \label{eq:resonanceexp_prooftemp1}
    -i c_* \rho_{x_0}(x)e^{x}e^{-t}\mr{r}(x)\int_0^\infty e^{x'}\mr{r}(x') \wh{F}_{-i}(x') dx' \doteq c^{(0)}_{\infty} e^{x-t}\rho_{x_0}(x) \mr{r}(x),
\end{equation}
where we have defined the constant $c_{\infty}^{(0)}$ appearing in (\ref{eq:resonanceexp}). Moreover, we can recognize $e^{x-t}\mr{r}(x)$ as $r_k(t,x)$. The integral over $P_R$ will contribute a faster decaying error, which we now estimate. The resolvent estimates (\ref{eq:c0toc2bounds}) are instrumental in bounding derivatives of the resolvent; however, each derivative loses a power of $\sigma$, obstructing convergence of the integrals for $|\Re \sigma| \gg 1$. In this region we regain favorable powers of $|\sigma|$ by exploiting the $t$ regularity of $F(t,x)$. The relevant statement is given in Lemma \ref{lem:associatedlemma_resonanceexp}, proved below. For $0 \leq m + j \leq 2$ apply (\ref{eq:c0toc2bounds}) and (\ref{eq:resonanceexp2}) to give
\begin{align*}
    \sup_{x \in [0,x_0]}\Big| \frac{1}{2\pi} \partial_t^m \partial_x^j \bigg( \int_{ P_R \setminus \Gamma_{3,R}}e^{-i\sigma t}(\rho_{x_0}R(\sigma)\wh{F}_\sigma)(x) d\sigma  \bigg) \Big| &\lesssim_{x_0}  \int_{ P_R \setminus \Gamma_{3,R}}  (1+|\sigma|) \|\hat{F}_\sigma\|_{\mathcal{D}^{\beta,0}_\infty(\mathbb{R}_+)} d\sigma  \\[\jot]
     &\lesssim_{x_0}  \sup_{t \geq 0}\sum_{j=0}^3e^{\frac32 t} \| \partial_t^j F\|_{\mathcal{D}^{\beta,0}_\infty(\mathbb{R}_+)} \int_{ P_R \setminus \Gamma_{3,R}} (1+|\sigma|)^{-2}  d\sigma \\[\jot]
    &\lesssim_{x_0} R^{-1} \sup_{t \geq 0}\sum_{j=0}^3e^{\frac32 t} \| \partial_t^j F\|_{\mathcal{D}^{\beta,0}_\infty(\mathbb{R}_+)} .
\end{align*}
For fixed $x \leq x_0$, the $C^2_{t,x}$ norm of these boundary integrals with $|\Re \sigma| \geq R$ vanishes as $R \rightarrow \infty$. It remains to estimate the term along $\Gamma_{3,R}$. For this term we also track the dependence on the cutoff $x_0$. For $0 \leq m + j \leq 2$ compute 
\begin{align*}
    \sup_{x \in [0,x_0]}\Big| \frac{1}{2\pi} \partial_t^m \partial_x^j \bigg(\int_{\Gamma_{3,R}} e^{-i\sigma t}(\rho_{x_0}R(\sigma)\wh{F}_\sigma)(x) d\sigma \bigg) \Big|
    &\lesssim (1+x_0)e^{(\beta_*-\epsilon_*)x_0} e^{-(\beta_*-\epsilon_*)t} \int_{\Gamma_{3,R}} (1+|\sigma|) \|\hat{F}_\sigma\|_{\mathcal{D}^{\beta,0}_\infty(\mathbb{R}_+)} d\sigma  \\[\jot]
    &\lesssim (1+x_0)e^{(\beta_*-\epsilon_*)x_0} e^{-(\beta_*-\epsilon_*)t} \sup_{t \geq 0} \sum_{j=0}^3e^{\frac32 t} \| \partial_t^j F\|_{\mathcal{D}^{\beta,0}_\infty(\mathbb{R}_+)}.
\end{align*} 
This completes the proof of (\ref{eq:resonanceexp}). To complete the case $\beta > p_k$, we observe that an estimate for $c_{\infty}^{(0)}$ follows from the integral expression (\ref{eq:resonanceexp_prooftemp1}), and a pointwise estimate for $\hat{F}_{-i}(x)$ given by (\ref{eq:resonanceexp2}).

We next consider the cases $\beta = p_k$ and $1< \beta < p_k$. Note that $\beta_* = \beta q_k \leq 1$. Under the assumption $(A_{\beta})$, the inhomogeneity $F(t,x)$ decomposes as $F(t,x) = c_0 \chi_b(t)e^{\beta q_k (t-x)} + F_2(t,x)$, for an error $F_2(t,x)$ with strictly better spatial decay. By linearity we may separately estimate the solution to (\ref{eq:forcedeqn1}) with right hand side $F_2(t,x)$. The required estimates (\ref{eq:resonanceexp1}), (\ref{eq:resonanceexp1.1}) follow by analogous techniques as above, precisely because $F_2(t,x)$ has more than enough decay to deform the contour to $\{\Im \sigma = -\beta q_k - \delta'\}$, for some $\delta'$ small enough. 

Assume that $F(t,x) = c_0 \chi_b(t)e^{\beta q_k (t-x)}$, and compute $\hat{F}_\sigma(x) = c_0 \hat{G}(\sigma) e^{-\beta q_k x}$ for an appropriate holomorphic function $\hat{G}(\sigma)$. Let the contours be as in (\ref{eq:resonanceexp_prooftemp0}), with the modification that $\Gamma^{\pm}_{2,R}, \Gamma_{3,R}$ extend only to $\{\Im \sigma = -\beta q_k - \delta' \}$. Equation (\ref{eq:resonanceexp_prooftemp.5}) continues to hold, and the integral over $P_R$ is estimated to give $t$ decay of $e^{-(\beta q_k+\delta')t},$ as expected. 

The remaining contribution arises from the poles of $\rho_{x_0}R(\sigma)e^{-\beta q_k x}$ in $\{\Im \sigma \geq -\beta q_k - \delta'\}$, which were studied in Proposition \ref{lem:extR_sigma_2}. There is either a double pole at $\sigma = -i$ (in the case $\beta = p_k$), or a simple pole at $\sigma = -\beta q_k i$ (in the case $1< \beta < p_k$). The appropriate resolvent expansions are given by (\ref{eq:resolvexp_extension}), (\ref{eq:resolvexp_extension_2}) respectively, along with the identities (\ref{eq:resolventexp_A1})--(\ref{eq:resolventexp_A2}) in the threshold case. Applying the residue theorem concludes the proof. We remark that in the threshold case, the residue at the double pole involves the precise form (including constants) of the differentiated resonance function (\ref{eq:F_-iidentified}).
\end{proof} 

The following lemma will complete the proof.
\begin{lemma}
    \label{lem:associatedlemma_resonanceexp}
    Assume $F(t,x)$ satisfies the conditions (\ref{52:forcingassTOTAL}). Then we have
        \begin{equation} 
            \label{eq:resonanceexp2}
            \sup_{\sigma \in \mathbb{I}_{(-\beta_*,\eta]}}(1+|\sigma|)^3\|\wh{F}_\sigma(x)\|_{\mathcal{D}^{\beta,0}_\infty(\mathbb{R}_+)} \lesssim \sup_{ t \geq 0 } \sum_{j=0}^{3}e^{\frac32 t}\|\partial_t^j F\|_{\mathcal{D}^{\beta,0}_\infty(\mathbb{R}_+)}.
        \end{equation}
    \end{lemma}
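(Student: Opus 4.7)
The plan is to exploit the $C^3_t$ regularity of $F$ via integration by parts in $t$ in the Fourier-Laplace transform, which converts temporal regularity into decay in $|\sigma|$. The spatial weight $e^{\beta q_k x}$ plays no active role and is simply carried along, since the $\mathcal{D}^{\beta,0}_\infty(\mathbb{R}_+)$ norm is a pointwise weighted $L^\infty$ norm that commutes with the $t$-integration.

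First, I would observe that for $\sigma \in \mathbb{I}_{(-\beta_*,\eta]}$ one has $|e^{i\sigma t}| = e^{-(\Im\sigma)t} \leq e^{(1+\eta)t}$, and since $\eta < \tfrac12$ (Remark \ref{rmk:etafixed}), the integrand $e^{i\sigma t}\partial_t^j F(t,x)$ is absolutely integrable in $t$ thanks to the assumed decay $e^{-\frac32 t}$; moreover $(1+\eta)-\tfrac32 < 0$ yields a finite constant
\[
C_\eta \doteq \int_0^\infty e^{(1+\eta-\frac32)t}\,dt = \frac{1}{\tfrac12 - \eta} < \infty.
\]
Since $F(t,x)$ is supported in $\{t \geq 1\}$, all $t$-boundary contributions at $t=0$ (and $t=\infty$, by decay) vanish when one integrates by parts.

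Next, I would integrate by parts in $t$ three times to obtain, for $\sigma \neq 0$,
\[
\wh{F}_\sigma(x) = \frac{1}{(-i\sigma)^3}\int_{\mathbb{R}} e^{i\sigma t}\,\partial_t^3 F(t,x)\,dt.
\]
Multiplying by $e^{\beta q_k x}$, taking absolute values, and bounding $|e^{i\sigma t}|$ as above gives
\[
e^{\beta q_k x}\bigl|\wh{F}_\sigma(x)\bigr| \leq \frac{1}{|\sigma|^3}\int_0^\infty e^{-(\Im\sigma + \frac32)t}\,\Bigl(e^{\frac32 t}\,e^{\beta q_k x}\,|\partial_t^3 F(t,x)|\Bigr)\,dt \leq \frac{C_\eta}{|\sigma|^3}\,\sup_{t\geq 0} e^{\frac32 t}\,\|\partial_t^3 F\|_{\mathcal{D}^{\beta,0}_\infty(\mathbb{R}_+)},
\]
and taking the supremum over $x \in \mathbb{R}_+$ yields $\|\wh{F}_\sigma\|_{\mathcal{D}^{\beta,0}_\infty} \lesssim |\sigma|^{-3}\sup_{t}e^{\frac32 t}\|\partial_t^3 F\|_{\mathcal{D}^{\beta,0}_\infty}$. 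For the complementary low-frequency regime $|\sigma| \lesssim 1$, I would apply the same argument with zero integrations by parts, obtaining the trivial bound
\[
\|\wh{F}_\sigma\|_{\mathcal{D}^{\beta,0}_\infty} \leq C_\eta\,\sup_{t\geq 0} e^{\frac32 t}\,\|F\|_{\mathcal{D}^{\beta,0}_\infty(\mathbb{R}_+)}.
\]
Combining the high- and low-frequency estimates, weighted by $(1+|\sigma|)^3$, yields (\ref{eq:resonanceexp2}).

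There is no real obstacle: the only point that needs mild care is verifying that $\Im\sigma \in (-\beta_*,\eta]$ together with $\beta_* \leq 1+\eta < \tfrac32$ guarantees both the convergence of the $t$-integral and the vanishing of the boundary term at $t=\infty$. The support condition on $F$ kills the $t=0$ boundary terms automatically, so the integration by parts is clean to all orders up to three, which is exactly what is needed to produce the $(1+|\sigma|)^{-3}$ factor.
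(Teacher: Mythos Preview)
Your proof is correct and follows essentially the same approach as the paper: integrate by parts in $t$ (using the support condition at $t=0$ and the $e^{-\frac32 t}$ decay at $t=\infty$ to kill boundary terms) to trade $\partial_t$-regularity for $|\sigma|^{-1}$ decay, and carry the spatial weight $e^{\beta q_k x}$ through passively. The paper organizes this as the single estimate $|\sigma|^m\|\wh{F}_\sigma\|_{\mathcal{D}^{\beta,0}_\infty} \lesssim \sup_t e^{\frac32 t}\|\partial_t^m F\|_{\mathcal{D}^{\beta,0}_\infty}$ for $0\leq m\leq 3$, while you split into $|\sigma|\lesssim 1$ and $|\sigma|\gtrsim 1$; these are equivalent.
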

    \begin{proof}
      For $\sigma \in \mathbb{I}_{(-\beta_*,\eta]}, \ 0 \leq m \leq 3$,  estimate 
    \begin{align*}
        |\sigma|^m  |e^{\beta q_k x}\wh{F}_{\sigma}(x)| &=  |\sigma|^m \bigg| \int_{\mathbb{R}} e^{i\sigma t}e^{\beta q_k x} F(t,x)dt\bigg| \\[\jot] 
        &= |\sigma|^m \bigg| \int_{0}^\infty (i \sigma)^{-m}\partial_t^m e^{i\sigma t} e^{\beta q_k x} F(t,x)dt\bigg|\\[\jot]
        &=  \bigg|  \int_{0}^\infty  e^{i\sigma t} e^{\beta q_k x} \partial_t^m  F(t,x)dt\bigg| \\[\jot] 
        &\lesssim \sup_{ t \geq 0 }e^{\frac32 t} \|\partial_t^m F\|_{\mathcal{D}^{\beta,0}_\infty(\mathbb{R}_+)}.
    \end{align*}
    We have used (\ref{52:forcingassTOTAL}) to drop boundary terms arising from integration by parts in $t$. (\ref{eq:resonanceexp2}) now follows.
    \end{proof}

\subsection{Proof of Theorem \ref{theorem:mainthm}}
\label{subsec:proofcompleted_3}

\subsubsection*{Theorem \ref{theorem:mainthm}(a)}

The proof will essentially be an amalgamation of the techniques developed thus far. To clarify the conceptual structure, we split the argument into various steps.

\vspace{.5em}
\underline{Step 1: Reduction to spacelike problem} \
Let $(r\varphi)_0(v) \in C^{\alpha}_{(hor)}([-1,0])$, ${\alpha > p_k}$ denote given spherically symmetric, null initial data along $\{u=-1\}$, and let $\psi(u,v) \doteq (r\varphi)(u,v)$ denote the associated solution to (\ref{sec2.5:eq2}) in $\{u \geq -1\}$. By Proposition \ref{prop:backwardsdecay}, $\psi(u,v)$ may equivalently be described as the solution to (\ref{eq:wavehyper}) for some spacelike initial data $(r\varphi(x), \partial_t(r\varphi)(x)) = (f_0(x),f_1(x)) \in {D}^{\alpha,5}_\infty(\mathbb{R}_+) \times {D}^{\alpha,4}_\infty(\mathbb{R}_+)$ along $\{t=0\}$. The a priori estimate (\ref{eq:propbackscat_1}) holds in any compact (in $t$) region. 

\vspace{.5em}
\underline{Step 2: Application of leading order resonance expansion} \
To apply Proposition \ref{prop:resonanceexp}, we need to relate this initial data problem with a forcing problem of the form (\ref{eq:forcedeqn1}). Let $\chi(t) \in C^\infty(\mathbb{R})$ denote an increasing cutoff function, with $\chi(t) = 0$ for $t < 1$, and $\chi(t)=1$ for $t \geq 2$. Define
$$\psi_c \doteq \chi(t) \psi(t,x),$$
as well as the operator 
\begin{equation*}
    \mathcal{T}_k \doteq \partial_t^2 - \partial_x^2 + 4q_k e^{-2 q_k x}V_k(x).
\end{equation*}
A direct computation yields that $\psi_c$ obeys (\ref{eq:forcedeqn1}) with right hand side
\begin{align}
    F(t,x) &= \mathcal{T}_k(\chi \psi) - 4q_k e^{-2q_k x}\chi \mathcal{E}_{p,0}(t,x) \nonumber \\
    &= \underbrace{[\mathcal{T}_k, \chi]\psi}_{g_0(t,x)} - \underbrace{4q_k e^{-2q_k x}\chi \mathcal{E}_{p,0}(t,x)}_{g_1(t,x)}. \label{eq:forcingtemp}
\end{align}
The forcing decomposes as a sum of terms $g_0(t,x)$, $g_1(t,x)$. The former is supported in $t \in [1,2]$, and by the estimate (\ref{eq:propbackscat_1}) it follows that the assumptions (\ref{52:forcingassTOTAL}) hold for $\beta = \alpha > p_k$. 

The term $g_1(t,x)$ arises due to the deviation of the geometry from exact $k$-self-similarity. Although rapidly decaying in $t$ on compact sets $\{x \leq x_0\}$, this decay is not uniform in $x$, and the final assumption of (\ref{52:forcingassTOTAL}) is not satisfied for any $\beta > p_k$. However, we claim Proposition \ref{prop:resonanceexp} remains valid with this term included, and leave a justification to the end of the proof.

Fix a constant $\beta_*' \in (1, \beta_*)$. By Proposition \ref{prop:resonanceexp}, there exists a constant $c_{\infty}$ independent of $x_0$ such that for all $\{x \leq x_0, \ t \geq 2 \}$ we have\footnote{Note we may replace $r_k$ with $r$ in (\ref{eq:resonanceexp}), by the assumed rapid decay of ${r - r_k}$.}
\begin{equation}
    \label{eq:completingprooftemp1}
    r(t,x) \big(\varphi(t,x) - c_{\infty}\big)= e^{-\beta_*' t}\mathcal{E}_{x_0}(t,x),
\end{equation}
for an error $\mathcal{E}_{x_0}(t,x)$ satisfying
\begin{equation}
    \label{eq:completingprooftemp2}
   \sup_{ t \geq 2 } \| \mathcal{E}_{x_0}(t,x) \|_{C^2_{x}([0,x_0])} \leq C e^{\beta_*' x_0},
\end{equation}
where $C$ depends on the $C^{\alpha}_{(hor)}([-1,0])$ norm of initial data, but not on $x_0$. 

\vspace{.5em}
\underline{Step 3: Near-axis decay} \
Next, we consider the decay (\ref{eq:completingprooftemp1})--(\ref{eq:completingprooftemp2}) in similarity coordinates. For fixed $z_0 \in (-1,0)$, define the near-axis region $\mathcal{S}_{z_0}$ and near-horizon region $\mathcal{D}_{z_0}$
\begin{align*}
    \mathcal{S}_{z_0} \doteq \{s \geq 2, \ -1 \leq z \leq z_0 \}, \quad \mathcal{D}_{z_0} \doteq \{s \geq 2, \ z_0 \leq z \leq 0\}.
\end{align*}
Defining $x_0 = -\frac{1}{2q_k}\ln |z_0|$, we have $\mathcal{S}_{z_0} \subset \{t \geq 2, \ x \leq x_0 \}$, and from (\ref{eq:completingprooftemp1})--(\ref{eq:completingprooftemp2}), the averaging estimate (\ref{eq:averagingest}), as well as direct integration of (\ref{eq:wavesim}) we conclude
\begin{equation}
    \label{eq:completingprooftemp_est1}
    \|e^{\beta_*' s} r\big(\varphi - c_{\infty}\big)  \|_{C^2_{s,z}(\mathcal{S}_{z_0})} + \|e^{(\beta_*'-1) s} \varphi \|_{C^1_{s,z}(\mathcal{S}_{z_0})} \leq_{z_0} C.
\end{equation}
The main tool in propagating control to $\mathcal{D}_{z_0}$ will be multiplier estimates. There is, however, a pointwise estimate on $\varphi$ which is uniform up to $\{z=0\}$, and which we may already close. To see this, we retain the explicit dependence on $x_0$, and estimate the right hand side of (\ref{eq:completingprooftemp1}) in similarity coordinates. Observing $e^{\beta_*' x} e^{-\beta_*' t} = e^{-\beta_*' s}$, we find that 
\begin{equation}
    \label{eq:completingprooftemp_est2}
    \|e^{\beta_*' s} r\big(\varphi - c_{\infty}\big) \|_{L^{\infty}(\mathcal{D}_{z_0}) } \leq C.
\end{equation} 

\vspace{.5em}
\underline{Step 4: Near-horizon decay} \
We are now in a position to apply our multiplier estimates. Let $\rho \in (q_k-\beta_*', -k^2)$, and define $\psi_\rho \doteq e^{(q_k-\rho)s}r (\varphi - c_\infty)$, which satisfies the (commuted) equation (\ref{eq:wavesim_alpha_comm0}). We repeat a form of this equation here for convenience, where $V$ is the potential defined in (\ref{eq:combinedpotentials}):
\begin{equation}
    \label{eq:wavesim_alpha_comm0_rewritten}
    \partial_s \partial_z^2 \psi_\rho - q_k |z| \partial_z^3 \psi_\rho + (q_k + \rho)\partial_z^2 \psi_\rho + V \partial_z \psi_\rho + V' \psi_\rho = 0.
\end{equation}
Introduce a parameter $\omega$, satisfying $\omega = 0$ in the case $\alpha \geq \frac32$, and $\omega = \frac32 - \alpha + \epsilon$ in the case $p_k < \alpha \leq \frac32$, where $\epsilon \ll 1$ is sufficiently small. In the former, we will be able to directly close the multiplier estimate (\ref{eq:energyest2_0}). Note the coefficient of the bulk term in this estimate can be made non-negative for $\rho$ sufficiently small, but \textit{strictly less}\footnote{Recall that establishing boundedness of $\psi_\rho$ with ${\rho = -k^2}$ corresponds to proving self-similar bounds. As we wish to propagate better than self-similar bounds, ${\rho < -k^2}$ is a necessary condition.} than $-k^2$. Moreover, the boundary integral appearing on the right hand side of (\ref{eq:energyest2_0}) is supported in $\mathcal{S}_{z_0}$, and is thus already controlled. It follows that for $\alpha \geq \frac32$ we have 
\begin{equation}
    \label{eq:completingprooftemp_est3}
    \sup_{s' \geq 0 }\|\partial_z^2 \psi_\rho \|_{L^2_z(\{s=s'\})} \leq C, \quad \big(\alpha \geq \frac32\big).
 \end{equation}
We turn now to establishing (\ref{eq:completingprooftemp_est3}) in the case $p_k < \alpha \leq \frac32$. The above argument does not apply uniformly in $\alpha$ as $\alpha-p_k \rightarrow 0$, as the bulk term in (\ref{eq:energyest2_0}) becomes of size $O(k^2)$, without a definite sign.

However, we proceed by an analogous multiplier estimate, exploiting the additional control (\ref{eq:completingprooftemp_est2}). Multiplying (\ref{eq:wavesim_alpha_comm0_rewritten}) by $|z|^{2\omega}\partial_z^2 \psi_\rho$ and integrating by parts in $\mathcal{R}(s_0)$ yields (compare with (\ref{eq:temp2ndordermultiplier}))
\begin{align}
        \int_{\{s=s_0\}}&|z|^{2\omega}(\partial_z^2\psi_\rho)^2 dz + (q_k(1-2\omega)+2\rho)\iint_{\mathcal{R}(s_0)} |z|^{2\omega}(\partial_z^2 \psi_\rho)^2 dsdz \nonumber  \\[\jot]
        &\leq \int_{\{s=0\}}|z|^{2\omega}(\partial_z^2\psi_\rho)^2 dz - \underbrace{\iint_{\mathcal{R}(s_0)} 2 V |z|^{2\omega}\partial_z \psi_\rho \partial_z^2 \psi_\rho dzds}_{\text{I}} - \underbrace{\iint_{\mathcal{R}(s_0)} 2 V' |z|^{2\omega}\psi_\rho \partial_z^2 \psi_\rho dzds}_{\text{II}} . \label{eq:completingprooftemp2.5}
\end{align}
As $\alpha > p_k$, one can choose $\rho < -k^2$ and $\epsilon \ll 1$ such that the bulk term on the left hand side of the estimate is positive. For the term denoted $\text{I}$, integrate by parts in $z$ to give
\begin{align}
    \text{I} &= - \int_{\Gamma_{s_0}} V (\partial_z \psi_\rho)^2 ds - \iint_{\mathcal{R}(s_0)} \partial_z \big(V |z|^{2\omega}  \big) (\partial_z \psi_\rho)^2 dzds \nonumber \\[\jot]
    &= - \int_{\Gamma_{s_0}} V (\partial_z \psi_\rho)^2 ds - \iint_{\{z \leq z_k \}} \partial_z \big(V |z|^{2\omega}  \big) (\partial_z \psi_\rho)^2 dzds - \underbrace{\iint_{\{z \geq z_k \}} \partial_z \big(V |z|^{2\omega}  \big) (\partial_z \psi_\rho)^2 dzds}_{\text{I}_{good}}, \label{eq:completingprooftemp3}
\end{align}
where $z_k$ is the value defined in Proposition \ref{lem:propertiesofV}. From the repulsivity estimate (\ref{eq:potentialest_repulsivity}), it follows that the only term in (\ref{eq:completingprooftemp3}) which is supported near the horizon, namely $\text{I}_{good}$, has a good sign! The remaining terms are supported near the axis, and are controlled by (\ref{eq:completingprooftemp_est1}).

For $\text{II}$ we exploit the estimate 
\begin{equation}
    \| \psi_{\rho} \|_{L^2_{s,z}(\mathcal{R}(s_0))} \leq C,
\end{equation}
which follows by the choice of $\rho$ and (\ref{eq:completingprooftemp_est2}) above. Therefore, for a small constant $\wt{\epsilon} \ll 1$ estimate
\begin{align*}
    |\text{II}| & \lesssim \wt{\epsilon}  \iint_{\mathcal{R}(s_0)}  |z|^{2\omega} (\partial_z^2 \psi_{\rho})^2 dzds +  \wt{\epsilon}^{-1} \iint_{\mathcal{R}(s_0)} |z|^{2\omega} \psi_{\rho}^2 dzds \\[\jot]
    & \lesssim \wt{\epsilon}  \iint_{\mathcal{R}(s_0)}  |z|^{2\omega} (\partial_z^2 \psi_{\rho})^2 dzds +  C \wt{\epsilon}^{-1} .
\end{align*}
The first term is controlled by the remaining positive bulk term appearing on the left hand side of (\ref{eq:completingprooftemp2.5}). We therefore conclude
\begin{equation}
    \label{eq:completingprooftemp_est4}
    \sup_{s' \geq 0 } \big\||z|^{\omega}\partial_z^2 \psi_\rho \big\|_{L^2_z(\{s=s'\})} \leq C, \quad \big(\alpha \leq \frac32\big),
 \end{equation}
 for some ${\rho < -k^2}$. Pointwise bounds on $\partial_z \psi_\rho$ in $\mathcal{D}_{z_0}$ now follow by integrating (\ref{eq:completingprooftemp_est3}), (\ref{eq:completingprooftemp_est4}), and using the control in $\mathcal{S}_{z_0}$. Similarly, $\partial_s \psi_\rho$ is estimated by integrating the equation (\ref{eq:wavesim}) itself in the $z$ direction. After translating to double-null coordinates, we conclude the proof.

 \vspace{.5em}
\underline{Step 5: Analysis of forcing term $g_1(t,x)$ } \
It remains to revisit the proof of Proposition \ref{prop:resonanceexp}, and argue that the inhomogeneous term $g_1(t,x)$ does not affect the validity of the resonance expansion. This will justify (\ref{eq:completingprooftemp1})--(\ref{eq:completingprooftemp2}), completing the proof of the theorem.

We may write 
\begin{equation*}
    g_1(t,x) = c_k e^{-2q_k x} \chi(t) V_{k,p}(t,x) \psi(t,x) \mathbbm{1}_{\{t - x \geq -s_0\}},
\end{equation*}
for constants $c_k, s_0$ and where the potential $V_{k,p}(t,x)$ is defined in (\ref{eq:defn_Vkp}). The characteristic function of the set $S \doteq \{t - x \geq -s_0\}$ captures the support of $V_{k,p}(t,x)$. The goal will be to further decompose $g_1(t,x) = g_{1,a}(t,x) + g_{2,a}(t,x)$, where $g_{1,a}$ captures the leading order behavior near $\{z=0\}$. More precisely, let $\psi_h(s)$, $V_{k,p,h}(s)$ denote the restrictions of $\psi, V_{k,p}$ to $\{z=0\}$, and for any $(t,x) \in S$ write 
\begin{align*}
    \psi(t,x) &= \psi_h(t-x) + \psi_{e}(t,x) e^{-2q_k x}, \\
    V_{k,p}(t,x) &= V_{k,p,h}(t-x) + V_{k,p,e}(t,x)e^{-2 q_k x},
\end{align*}
for functions $\psi_{e}(t,x), V_{k,p,e}(t,x)$, which satisfy bounds 
\begin{align}
    \label{eq:completingprooftemp3.5}
    \sum_{j=0}^3 \| \partial_t^j \psi_e \|_{L^\infty(\{t \geq 0 \})} + \sum_{j=0}^3 \|  \partial_t^j V_{k,p,e} \|_{L^\infty(\{t \geq 0 \})} \leq C.
\end{align}
Define 
$$g_{1,a}(t,x) \doteq c_k e^{-2q_k x}\chi(t) V_{k,p,h}(t-x)\psi_e(t-x) \mathbbm{1}_{\{t - x \geq -s_0\}}, $$
and $g_{2,a}(t,x) \doteq g_1(t,x) - g_{1,a}(t,x)$. By the rapid spatial decay of $g_{2,a}(t,x)$ and the $t$ decay of $V_{k,p}$, it follows that $g_{2,a}$ in fact satisfies the assumptions of Proposition \ref{prop:resonanceexp} directly. It remains to consider $g_{1,a}$. 

Computing the Fourier-Laplace transform of $g_{1,a}(t,x)$ gives for $\sigma \in \mathbb{I}_{[-1-\eta,\eta]}$ 
\begin{align}
    \hat{g}_{1,a}(\sigma,x) &= c_k e^{-2q_k x} \mathbbm{1}_{\{x - s_0 \leq 1\}} \int_{\{t \geq 1\}} e^{i \sigma t} V_{k,p,h}(t-x)\psi_e(t-x) dt \nonumber \\[\jot]
    &\hspace{2em}+ c_k e^{-2q_k x} \mathbbm{1}_{\{x - s_0 \geq 1\}} \int_{\{t \geq x-s_0\}} e^{i \sigma t} V_{k,p,h}(t-x)\psi_e(t-x) dt \nonumber \\[\jot]
    &= c_k e^{-2q_k x}e^{i \sigma x} \mathbbm{1}_{\{x - s_0 \leq 1\}} \int_{\{x+s' \geq 1\}} e^{i \sigma s'} V_{k,p,h}(s')\psi_e (s') ds'\nonumber \\[\jot]
    &\hspace{2em}+ c_k e^{-2q_k x}e^{i \sigma x}  \mathbbm{1}_{\{x - s_0 \geq 1\}} \int_{\{s' \geq -s_0\}} e^{i \sigma s'} V_{k,p,h}(s')\psi_e(s') ds' \nonumber \\[\jot]
    &\doteq d(\sigma) e^{-2q_k x}e^{i \sigma x} + f(\sigma,x)\mathbbm{1}_{\{x - s_0 \leq 1\}}, \label{eq:completingprooftemp4}
\end{align}
where $d(\sigma)$ is rapidly decaying in $\sigma$ (i.e. $|d(\sigma)| \lesssim |\sigma|^{-3}$), due to the regularity (\ref{eq:completingprooftemp3.5}) and the decay of $V_{k,p,h}(s')$ as $s' \rightarrow \infty$. The remaining term in (\ref{eq:completingprooftemp4}) is compactly supported in $x$, and rapidly decaying in $\sigma$.

In order to verify that the argument of Proposition \ref{prop:resonanceexp} goes through for forcing terms $g_{1,a}$, we must verify that the application of the cutoff resolvent $\rho_{x_0} R(\sigma) \hat{g}_{1,a}(\sigma,x)$ defines, for fixed $x$, a meromorphic function of $\sigma$ with the pole structure and estimates guaranteed by Proposition \ref{prop:residueofR}. That this holds for the first term in (\ref{eq:completingprooftemp4}) is a consequence of Lemma \ref{lem:resolventon_eqx}. For the latter term, we have the requisite decay in $x$ to apply Proposition \ref{prop:residueofR} directly.

\subsubsection*{Theorem \ref{theorem:mainthm}(b)}
We now turn to the case of threshold initial data $(r\varphi)_0(v) \in \mathcal{C}^{p_k k^2, \delta}_{(hor)}([-1,0])$, $\delta > 0$. As above, let $\psi(u,v) \doteq (r\varphi)(u,v)$ denote the associated solution to (\ref{sec2.5:eq2}) in $\{u \geq -1\}$. In the case that the background geometry is exactly $k$-self-similar, the proof of Theorem \ref{theorem:mainthm}(b) will follow essentially as a corollary of Theorem \ref{theorem:mainthm}(a) above. We present this argument first, and only after turn to the case of general geometries.

\vspace{.5em}
\underline{Proof on $k$-self-similar backgrounds}: \
The key observation is Lemma \ref{lem:holderspacedecomp}, asserting that general threshold data splits as the union of a one-dimensional subspace (spanned by $\mr{\phi}(z)$) and spaces of strictly higher regularity. In particular, there exists a constant $c_0$ such that $\varphi_0(z) - c_0 \mr{\phi}(z) \in \mathcal{C}^{p_k k^2 + \delta'}_{(hor)}([-1,0])$, for a $0 < \delta' \ll 1$ sufficiently small. By linearity and Theorem \ref{theorem:mainthm}(a), the evolution of the more regular component of data satisfies the conclusions of Theorem \ref{theorem:mainthm}(b), and it suffices to assume $\varphi_0(z) = \mr{\phi}(z)$. On the $k$-self-similar background, however, the solution to (\ref{sec2.5:eq2}) with such data is known explicitly to be $\phi_k(s,z) = \mr{\phi}(z) + k s$. We are now able to read off the desired conclusion, i.e. the convergence of $\varphi$ to a multiple of $\phi_k$ (up to constants).

\vspace{.5em}
\underline{Proof on a general admissible background:} \ 
The above argument relied on the knowledge of an exact solution $\phi_k$ to the wave equation with the prescribed regularity. On general $(\epsilon_0,k)$-admissible backgrounds such a solution is not available, and we instead approach the problem as in the proof of Theorem \ref{theorem:mainthm}(a).

As above, it suffices to take as initial data $\varphi_0(z) = \mr{\phi}(z)$. Applying Proposition \ref{prop:backwardsdecay} reduces the problem to the study of spacelike data $(f_0(x), f_1(x)) \in \wt{\mathcal{D}}^{p_k,5,\delta',c_0}(\mathbb{R}_+) \times \wt{\mathcal{D}}^{p_k,4,\delta',c_0}(\mathbb{R}_+)$ for constants $\delta, c_0$, with the latter determined explicitly from $\mr{\phi}$ via (\ref{eq:backwardsexplicitc_0}). Transforming to a forcing problem as in Step 2 above (the argument of Step 5 again applies here), we apply the resonance expansion (\ref{eq:resonanceexp1}) for threshold regularity data to conclude that in the region $\{x \leq x_0, \ t \geq 2 \}$, 
\begin{equation}
    \label{eq:completingprooftemp5}
    r(t,x)\big(\varphi(t,x)  - d_{\infty}^{(1)} - d_{\infty}^{(2)}(\mr{\phi}(x)-k(x-t)) \big) = e^{-(1+\delta'')t} \mathcal{E}_{x_0}(t,x),
\end{equation}
for appropriate constants $d_{\infty}^{(i)}$, $\delta''$, and an error $\mathcal{E}_{x_0}(t,x)$ satisfying the analagous estimate to (\ref{eq:completingprooftemp2}). The decay of $\varphi$ in the near-axis region is precisely as in Step 3 above, with the more general expansion (\ref{eq:completingprooftemp5}). 

We claim that for our specific choice of initial data, the constant $d_{\infty}^{(2)} = 1$. This is true on a $k$-self-similar background, as in that case the solution is exactly given by $\varphi(t,x) = \mr{\phi}(x) - k(x-t).$ In general, observe that the constant $d_{\infty}^{(2)}$ depends on the initial data only through the coefficient $c_0$ of $f_0(x)$ in an expansion $f_0(x) = c_0 e^{- x} + O(e^{-(1+\delta)x})$. This coefficient, given by (\ref{eq:backwardsexplicitc_0}), is clearly independent of the choice of background geometry.

Given this, we apply multiplier estimates to the quantity $\psi_{\rho} \doteq e^{(q_k-\rho)s}r(\varphi - d_{\infty}^{(1)} - \mr{\phi}(x) + k(x-t)).$ The corresponding equation (\ref{eq:wavesim_alpha_comm0_rewritten}) in this case has additional forcing term which rapidly decays in $|u|$, due to $r \mr{\phi}_k$ not being an exact solution to the wave equation; however, the multipliers are easily adapted to include this term. Crucially, the data for $\psi_{\rho}$ along $\{s=0\}$ is more regular than that of $\mr{\phi}(z)$, as the lowest regularity piece of initial data was removed in the definition of $\psi_\rho$. The argument now runs as in Step 4 above, as we are safely in the setting of above-threshold regularity.

\subsubsection*{Theorem \ref{theorem:mainthm}(c)}
We first prove the existence of data leading to the lower bound (\ref{eq:mainthmbound2.7}). Fix a parameter $\alpha \in (1,p_k)$, and recall the definition of Dirichlet solutions $f_{(dir),\sigma}(x)$ to $P_k(\sigma)f=0$ given in Section \ref{sec:resolventconstr}, cf. (\ref{def:dirichletsoln}). Let ${\rho_0 \doteq \alpha q_k} $, which satisfies $\rho_0 \in (q_k, 1)$ by assumption, and let ${\sigma_0 = -i \rho_0}$. Defining
\begin{equation}
    \psi_{mode}(t,x) \doteq e^{ - i \sigma_0 t} f_{(dir), \sigma_0 }(x),
\end{equation}
it follows that $\psi_{mode}(t,x)$ is a mode solution to (\ref{eq:wavehyper}) on a $k$-self-similar background. To determine the regularity of this solution in similarity coordinates, observe that as $\Im \sigma_0 \in (-1,-q_k)$, the set $\{f_{(out),\sigma_0}, f_{(in),\sigma_0}\}$ form a basis of solutions to $P_k(\sigma)f = 0$. Expanding $f_{(dir), \sigma_0 }(x)$ with respect to this basis and applying the argument in the proof of Lemma \ref{lem:regularityperspective} yields
\begin{equation}
    \label{eq:completingprooftemp6}
    \psi_{mode}(0,z) = c_{(in),\sigma_0}|z|^{\alpha} + \mathcal{E}(z),
\end{equation}
where $\mathcal{E}(z) \in C^{2-}([-1,0]) \cap \mathcal{C}^{\alpha'}_{(hor)}$ for any $\alpha' < 2$, and $c_{(in),\sigma_0}$ is non-vanishing. It follows that $ \psi_{mode}(0,z) \in \mathcal{C}^{\alpha,\delta}_{(hor)}([-1,0])$ for some $\delta >0$. 

It is immediate that $\psi_{mode}(0,z)$ is the desired data on a $k$-self-similar background. To show that the same is true on an $(\epsilon_0,k)$-admissible background, it suffices to apply the argument of Steps 1--3 in the proof of Theorem \ref{theorem:mainthm}(a), appealing now to the resonance expansion (\ref{eq:resonanceexp1.1}). That the unstable term in the resonance expansion is non-trivial is guaranteed by Proposition \ref{prop:backwardsdecay} and the non-triviality of the constant $c_{(in),\sigma_0}$ in (\ref{eq:completingprooftemp6}).

\vspace{.5em}
\noindent
We next discuss the estimate (\ref{eq:mainthmbound2.5}), holding for general data $\varphi_0(z) \in \mathcal{C}^{\alpha}_{(hor)}([-1,0])$, $\alpha \in (1,p_k)$. Applying Steps 1--3 of the proof of Theorem \ref{theorem:mainthm}(a) and the resonance expansion (\ref{eq:resonanceexp1.1}), we conclude the stated decay in the near-axis region. To propagate this decay towards $\{z=0\}$, we estimate the wave equation (\ref{eq:wavesim_alpha_comm0_rewritten}) for the quantity $\psi_{\rho} \doteq e^{(q_k -\rho)}r \varphi$, $\rho \in (q_k(1-\alpha),0)$, and proceed as in Step 4 above.

\subsection{Proof of Theorem \ref{theorem:angularthm}}
\label{subsec:proofcompleted_4}
Let $B \gg 1$ denote a large constant independent of $k$, and set $\gamma = B k^2$. Assume initial data $(r\varphi)_0(z,\omega) \in \mathcal{H}^{1,\gamma}_{(hor)}([-1,0]\times \mathbb{S}^2)$ to (\ref{eq:wavesimfull}) is given, which is without loss of generality supported on angular modes $1 \leq \ell \leq N$, for some $N \in \mathbb{Z}_{>0}$. The extension to general solutions follows from the quantitative estimates shown below.

It follows that the solution $\varphi(s,z,\omega)$ in $\{s \geq 0\}$ decomposes as a finite sum
\begin{equation*}
    \varphi(s,z,\omega) = \sum_{ \substack{1 \leq \ell \leq N \\ |m| \leq \ell }} \varphi_{m\ell}(s,z)Y_{m \ell}(\omega),
\end{equation*}
in which the individual terms have regularity $\varphi_{m\ell}(s',z) \in \mathcal{H}^{1,\gamma}_{(hor)}(\{s=s'\}).$ Fixing an $(m,\ell)$-mode, define the variable $\psi_{m\ell, \rho}(s,z) \doteq e^{(q_k -\rho)s} (r \varphi_{m \ell})(s,z)$. Along initial data $\psi_{m\ell, \rho}(0,z) = r \varphi_{m\ell}(0,z)$, and by the assumed regularity we have
\begin{equation*}
    \psi_{m \ell,\rho}(0,z) \in W^{1,2}_z([-1,0]), \ \  |z|^{\frac12 - \gamma}\partial_z^2\psi_{m \ell,\rho}(0,z) \in L^2_z([-1,0]).
\end{equation*} 
The strategy will be to apply the second order multiplier estimate (\ref{eq:energyest2_1}) to close a weighted bound on $\partial_z^2 \psi_{m \ell,\rho}$ in ${L^2_z(\{s=s'\})}$. By Sobolev inequalities and the fundamental theorem of calculus, pointwise bounds for lower order quantities will follow. Provided the estimates for individual modes have at most a polynomial dependence on $\ell$, we will be able to sum and conclude the desired estimates for the spacetime solution $\varphi(s,z,\omega)$.

Proceeding to the argument, we set $\omega = \frac12 - \gamma = \frac12 - B k^2$, and $\delta \ll 1$ let $\rho = -(1+ \delta)k^2$. Examining the multiplier estimate (\ref{eq:energyest2_1}), compute the coefficient of the bulk term to be  
\begin{equation}
    q_k(1-2\omega) - C_1 k^2 + 2\rho = (2 B q_k - C_1 - 2(1+\delta))k^2,
\end{equation}
which for $B \gtrsim C_1$ and $\delta \lesssim 1$, can be made non-negative. Fix choices of $B, \delta$ satisfying these constraints, and thus by (\ref{eq:energyest2_1}) (and a Hardy inequality) we control 
\begin{equation}
    \label{eq:thm1d:temp0}
    \sup_{s' \geq 0}\||z|^{\frac12 - \gamma} \br{r}^{-2}\partial_z \psi_{m \ell,\rho}^{(1)} \|_{L^2_z(\{s=s'\})}+ \||z|^{\frac12 - \gamma} \br{r}^{-2}\partial_z \psi_{m \ell,\rho}^{(1)} \|_{L^2_{s,z}(\{s \geq 0\})}\lesssim C_{m \ell},
\end{equation} 
where the constant $C_{m \ell}$ is controlled by the $\mathcal{H}^{1,\gamma}_{(hor)}$ norm of $\varphi_{m \ell}(0,z)$. Integrating (\ref{eq:thm1d:temp0}) repeatedly, applying Cauchy-Schwarz, and using the lower bound on $\br{r}$ away from the axis gives 
\begin{align}
    \sup_{s' \geq 0}\|\br{r}^{-\frac12}\partial_z \psi_{m \ell,\rho}\|_{L^\infty(\{s=s'\})} &+  \sup_{s' \geq 0} \|\br{r}^{-\frac32}\psi_{m \ell,\rho}\|_{L^\infty(\{s=s'\})} \lesssim C_{m \ell}. \label{eq:thm1d:temp1}
\end{align}
It remains to estimate $\partial_s \psi_{m \ell, \rho}$. It suffices to integrate the wave equation (\ref{eq:wavesim_alpha}) as a transport equation for $\partial_s  \psi_{m \ell, \rho}$, utilizing the control in (\ref{eq:thm1d:temp1}) and integration by parts. The result is  
\begin{align}
    \label{eq:thm1d:temp3}
    \sup_{s' \geq 0}  \|\partial_s \psi_{m \ell, \rho} \|_{L^\infty_z(\{s=s'\})} \lesssim (1 + \ell(\ell+1)) C_{m \ell}.
\end{align}
By directly integrating the angular potential term, we have acquired a constant depending on $\ell$. It remains to sum these estimates over $\ell$. For the spacetime solution $(r\varphi)(s,z,\omega)$ we compute, for any $s' \geq 0$,
\begin{align}
    e&^{2(q_k - \rho)s'}\Big(\sum_{0\leq i \leq 1}\| \partial_z^i  (r\varphi) \|^2_{L^\infty W^{2,2}_\omega(\{s=s'\}\times\mathbb{S}^2)} + \| \partial_s(r\varphi) \|^2_{L^\infty W^{2,2}_\omega(\{s=s'\}\times\mathbb{S}^2)}\Big) \nonumber \\[\jot]
    &\lesssim \sup_{z \in [-1,0]} \bigg(\Big\| \sum_{\substack{ 1 \leq \ell \leq N \\ |m| \leq \ell  }} \sum_{0 \leq i \leq 1} \partial_z^i\psi_{m \ell, \rho}(s',z)Y_{m \ell}(\omega) \Big\|^2_{W^{2,2}_\omega(\mathbb{S}^2)} +  \Big\| \sum_{\substack{ 1 \leq \ell \leq N \\ |m| \leq \ell  }} \partial_s\psi_{m \ell, \rho}(s',z)Y_{m \ell}(\omega) \Big\|^2_{W^{2,2}_\omega(\mathbb{S}^2)}  \bigg) \nonumber\\[\jot]
    &\lesssim  \sum_{\substack{ 1 \leq \ell \leq N \\ |m| \leq \ell  }}  (1+\ell(\ell+1))^4 C_{m \ell}^2 \nonumber \\[\jot]
    &\lesssim \|r\varphi_0 \|^2_{\mathcal{H}^{1,\gamma}_{(hor)}(\{s=0\}\times\mathbb{S}^2)}. \label{eq:thm1d:temp4}
\end{align}
A similar calculation gives the desired estimate for $\slashed{\nabla} (r\varphi)(s,z,\omega)$. To conclude the proof, it now suffices to rewrite these bounds in double-null coordinates.

\appendix

\section{Proof of Proposition \ref{prop:extinst}}
\label{section:app1}
The statements of stability and instability will both follow from a leading order expansion of the solution in a domain $\mathcal{Q}^{(ex),\delta} \doteq \mathcal{Q}^{(ex)} \cap \{\frac{v}{|u|^{q_k}} \leq \delta\}$, for $\delta \ll 1$. The leading order term is determined \textit{explicitly} as a function of the outgoing initial data $\partial_v \varphi (-1,v)$, and will obey either self-similar bounds (for threshold and above threshold regularities) or will be unstable (for below threshold regularity). 

We will require a weak statement of regularity for the $(\epsilon_0,k)$-admissible spacetime, namely 
\begin{equation}
    \label{app:eq1.5}
    \bigg|\frac{|\nu|}{r} - \frac{1}{|u|}\bigg| \lesssim \frac{v}{|u|^{q_k}}.
\end{equation} 
This estimate relies on (a) the gauge normalization of $r$ along $\{v=0\}$, and (b) the fact that $r, \nu \in C^1_{v}$ in a self-similar neighborhood of $\{v=0\}$, away from the singular point. 

We now turn to the argument. Let $f_0(v)$ denote spherically symmetric data for $\partial_v \varphi(-1,v)$, and define a quantity $\Psi(u,v)$ by solving the equation 
\begin{equation}
    \label{app:eq0}
    \begin{cases}
    \partial_u \Psi - \frac{1}{|u|}\Psi =0  \\[\jot]
    \Psi (-1,v) = f_0(v).
    \end{cases}
\end{equation}
Formally, this equation follows by restricting the coefficients of the wave equation (\ref{sec2.5:eq1.5}) to $\{v=0\}$, and setting ingoing derivatives (e.g. $\partial_u \varphi$) to zero.
Integrating this equation yields
\begin{equation}
    \label{app:eq0.5}
    \Psi(u,v) = |u|^{-1}f_0(v).
\end{equation}
Define $(\partial_v \varphi)_p \doteq \partial_v \varphi - \Psi$, which by (\ref{sec2.5:eq1.5}) satisfies the inhomogeneous equation:
    \begin{equation}
        \label{app:eq1}
        \begin{cases}
        \partial_u (\partial_v \varphi)_p - \frac{1}{|u|}(\partial_v \varphi)_p = \Big(\frac{|\nu|}{r} - \frac{1}{|u|}\Big)\partial_v \varphi - \frac{\lambda}{r} \partial_u \varphi \\[2\jot]
        (\partial_v \varphi)_p(-1,v) = 0 \\
        (\partial_u \varphi)_p(u,0) = 0.
        \end{cases}
\end{equation}
For a parameter $\eta \ll 1$, we propagate the bootstrap assumption 
\begin{align}
    \label{eq:appa:bootstrap}
        |\partial_v \varphi_p| + |\partial_v \varphi| \leq C \Big(\frac{1}{|u|^{1-k^2-\eta}} +  \frac{f_0(v)}{|u|}\Big).
\end{align}
For large enough $C$, this holds in a neighborhood of $(u,v) = (-1,0)$. We aim to improve the assumption in $\mathcal{Q}^{(ex),\delta}$, for $\delta$ small enough. Of course, it suffices to improve the estimate for $\partial_v \varphi_p$, given the explicit form of $\Psi$.

We first establish a bound for $\partial_u \varphi$. Integrating (\ref{sec2.5:eq1.5}) in $v$ from data, dropping zeroth order terms with favorable signs, and using the assumption (\ref{eq:appa:bootstrap}) gives
\begin{equation}
    \label{app:eq2}
    |\partial_u \varphi(u,v)| \leq \wt{C} \Big(\frac{v}{|u|^{2-k^2-\eta}} + \frac{f_0(v) v}{|u|^{2}}  \Big),
\end{equation}
for a constant $\wt{C} \gtrsim C$. Returning to (\ref{app:eq1}), we conjugate by $w(u,v) \doteq |u|^{1-\eta}\big(\frac{|u|^{q_k}}{v}\big)^{p_k \eta} = |u| v^{-p_k \eta}$ and insert the estimates (\ref{app:eq1.5}), (\ref{eq:appa:bootstrap}), (\ref{app:eq2}), giving 
\begin{equation}
    \label{app:eq3}
    |\partial_u \big( w(\partial_v \varphi)_p\big) | \leq \wt{C}\Big( \frac{v^{1-p_k \eta}}{|u|^{2-2k^2-\eta}} +  \frac{f_0(v)v^{1-p_k \eta}}{|u|^{2-k^2}} \Big).
\end{equation}
Provided $k^2 < 1$ and $\eta$ is chosen small, the powers of $|u|$ appearing in the denominators of (\ref{app:eq3}) are strictly greater than one. Integrating in $u$, we may drop terms along $\{u=-1\}$ to give
\begin{align*}
    |(\partial_v \varphi)_p| &\lesssim \wt{C}\Big( \frac{v}{|u|^{2-2k^2-\eta}} + \frac{f_0(v)v}{|u|^{2-k^2}}\Big) \\
    &\lesssim \wt{C} \delta \Big( \frac{1}{|u|^{1-k^2-\eta}} +  \frac{f_0(v)}{|u|}\Big),
\end{align*}
improving the bootstrap assumption for $\delta$ small enough. Note that by the choice of $\Psi$, no data terms appear above. We conclude that in $\mathcal{Q}^{(ex),\delta} $, there is an expansion\footnote{In this expression, $A \sim B$ means that $A \lesssim B$ and $B \lesssim A$ both hold.} 
\begin{align}
    \frac{1}{\lambda_k} \partial_v \varphi (u,v)&=  \frac{1}{\lambda_k}\Psi(u,v) + \frac{1}{\lambda_k}( \partial_v \varphi)_p(u,v) \nonumber \\[\jot]
    & \sim \frac{f_0(v)}{|u|^{1+k^2}} + O\big(\delta\frac{f_0(v)}{|u|^{1+k^2}}+ \delta \frac{1}{|u|^{1-\eta}} \big) \nonumber \\[\jot]
    &\sim \frac{f_0(v)}{|u|^{1+k^2}} + O \big(\frac{1}{|u|^{1-\eta}} \big). \label{app:eq4}
\end{align}
The approximation holds up to constants and functions uniformly bounded above and below in $\mathcal{Q}^{(ex),\delta} $. Moreover, we have reduced the question of self-similar bounds to understanding the term $\frac{f_0(v)}{|u|^{1+k^2}}$. Specialize to data of the form $f_0(v) = g(v)|v|^{\alpha-1}$, for a non-trivial $g(v)$ bounded above and below. Evaluating on the curve $\gamma(u) \doteq  (u,\delta |u|^{1-k^2})$ for $u \in [-1,0)$, it follows that if $\alpha < p_k $ (and thus $k^2 - (\alpha-1)(1-k^2) > 0$), 
\begin{align*}
    \bigg( |u| \frac{1}{\lambda_k} \partial_v \varphi\bigg)\bigg|_{\gamma(u)} \sim \frac{g(\delta|u|^{1-k^2})}{|u|^{k^2-(\alpha-1)(1-k^2)}} \rightarrow \infty,
\end{align*}
The instability statement of Proposition \ref{prop:extinst} follows.

If $\alpha = p_k$, the above expansion implies $\big| \frac{1}{\lambda_k} \partial_v \varphi \big| \lesssim |u|^{-1},$ i.e. this derivative satisfies self-similar bounds. From (\ref{app:eq2}), the same is true for $\partial_u \varphi$. Finally, if $\alpha > p_k$ then both terms in (\ref{app:eq4}) are better than self-similar; integrating from $\{v=0\}$ shows $\varphi \rightarrow 0$. 

The stability statement of Proposition \ref{prop:extinst} thus holds in a domain $\mathcal{Q}^{(ex),\delta}$ for $\delta$ sufficiently small. Extending self-similar (or better than self-similar) bounds to the region $\{\delta \leq \frac{v}{|u|^{q_k}} \leq 1 \} $ is a straightforward application of the methods of \cite{igoryak2, singh} (cf. the analysis of Region II in the latter).

\section{Asymptotic properties of $k$-self-similar interiors in the small-$k$ limit}
\label{section:app3}
The primary aim of this appendix is to develop a quantitative understanding of the $k$-self-similar metric quantities $r_k, m_k, \Omega_k$ in the interior region of the spacetime. These functions enter into the spherical and angular potentials $V_k(z)$, $L_{k,\ell}(z)$, respectively, which appear in the wave equation (\ref{eq:wavesim}). The estimates derived here are ultimately applied in the proofs of Propositions \ref{lem:propertiesofV}--\ref{lem:propertiesofL_kl}, which contain the information on the background spacetimes required for the analysis of the paper. 

The study of the small-$k$ limit is motivated by \cite{igoryak2,yakov1}, which show that a corresponding small-$\kappa$ regime in the setting of $\kappa$-self-similar vacuum spacetimes is well behaved. Key insights of the latter, which have direct analogs in spherical symmetry, include 1) quantitative convergence to the Minkowski metric as $\kappa \rightarrow 0$ \textit{away} from the similarity horizons $\{v=0\} \cup \{u=0\}$, and 2) the concentration of particular transversal double-null quantities\footnote{In vacuum, the relevant ``large'' quantity across $\{v=0\}$ is the outgoing shear, $\Omega^{-1}\hat{\chi}$. In our context it is $\partial_v \phi_k$.} in small self-similar neighborhoods of $\{v=0\} \cup \{u=0\}$. 

The strong monotonicity built into the Einstein-scalar field system, as well as the study of novel monotonic quantities associated to $k$-self-similarity, allow for a precise asymptotic expansion of all metric and scalar field quantities as $k\rightarrow 0$. Although not strictly necessary for this paper, we provide the first terms of such an expansion here, which may be of independent interest. As a byproduct of the expansions we are able to show the sharp regularity of $\phi_k$ near $\{v=0\}$, stated in Lemma \ref{lemma:backgroundprelims1}. This justifies the interpretation of the regularity $C^{1,p_k k^2}_v$ for null data as a threshold set by the regularity of $\phi_k$. 

We now turn to the analysis. First, we record the self-similar system of differential equations for the quantities $\mr{\Psi}(\hat{z})$ (cf. Lemma \ref{lem:meaningof_Psi_selfsimilar}). For a derivation of the system, see \cite{singh}. As it plays an important role in this section, recall the difference between the self-similar coordinate $\hat{z}$ and its renormalization $z$, related by $|z| \doteq |\hat{z}|^{q_k}.$ We often switch between coordinates; $\hat{z}$ is more useful for the study of the system (\ref{ss:SSESF1})--(\ref{ss:SSESF7}) below, whereas $z$ is the appropriate coordinate for expressing regular derivatives near $\{z = \hat{z} = 0\}$.

\begin{prop}
    The self-similar functions $\mr{\Psi}(\hat{z})$ satisfy the system of differential equations for $\hz \in [-1,\infty) \setminus \{0\}$
    \begin{align}
        \mr{r} \hat{z} \frac{d}{d\hat{z}}\mr{\lambda} &= -\mr{\nu}\mr{\lambda} - \frac{1}{4}\mr{\Omega}^2 \label{ss:SSESF1}\\ 
        \mr{r}  \frac{d}{d\hat{z}}\mr{\nu} &= -\mr{\nu}\mr{\lambda} - \frac{1}{4}\mr{\Omega}^2 \label{ss:SSESF2}\\ 
        2\mr{\Omega}^{-1}\mr{\nu}\hat{z} \frac{d}{d\hat{z}}\mr{\Omega} &= \hat{z} \frac{d}{d\hat{z}}\mr{\nu} + \mr{r}(\hat{z} \mr{\phi}' + k)^2 \label{ss:SSESF3}\\
        2\mr{\Omega}^{-1}\mr{\lambda} \frac{d}{d\hat{z}}\mr{\Omega} &= \frac{d}{d\hat{z}}\mr{\lambda} + \mr{r}(\mr{\phi}')^2 \label{ss:SSESF4}\\
        \mr{r}\hat{z} \frac{d}{d\hat{z}}\mr{\phi}' &= -2\mr{\lambda}\hat{z}\mr{\phi}' - k \mr{\lambda} \label{ss:SSESF5}\\
        2 \mr{\lambda} \frac{d}{d\hat{z}}\mr{m} &= (1-\mr{\mu}) \mr{r}^2 (\mr{\phi}')^2 \label{ss:SSESF6}\\
        2 \mr{\nu} \hat{z} \frac{d}{d\hat{z}} \mr{m} &= 2\mr{\nu}\mr{m} + (1-\mr{\mu})\mr{r}^2(\hat{z} \mr{\phi}' + k)^2 \label{ss:SSESF7},
    \end{align} 
    subject to initial conditions posed at the axis: 
    \begin{align}
        \label{ss:SSESFinitval}
        (\mr{r}, \mr{\nu}, \mr{\lambda}, \mr{\Omega}, \mr{m}, \mr{\phi}')(-1) = \big(0,-\frac{1}{2}, \frac{1}{2}, 1, 0, \frac{k}{2}\big).
    \end{align}
    We also have the algebraic identities 
    \begin{align}
        \mr{\nu} + \mr{r} &= \hat{z}\mr{\lambda} \label{ss:alg1}\\ 
        \frac{1}{4}\mr{\Omega}^2 + \mr{\nu}\mr{\lambda} &= \mr{r}^2 \hat{z} (\mr{\phi}')^2 + 2k\hat{z}\mr{\lambda}\mr{r}\mr{\phi}' + k^2\mr{r}\mr{\lambda}. \label{ss:alg2}
    \end{align}
\end{prop}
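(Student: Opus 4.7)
The plan is a direct substitution argument: I would insert the self-similar ansatz of Lemma~\ref{lem:meaningof_Psi_selfsimilar} into the spherically symmetric Einstein--scalar field system in the self-similar double-null gauge $(\hat{u},\hat{v})$, and clear a common power of $|\hat{u}|$ from each resulting equation. The chain-rule identities $\partial_{\hat{u}}\hat{z} = \hat{z}/|\hat{u}|$, $\partial_{\hat{v}}\hat{z} = 1/|\hat{u}|$, applied to the scalings $r_k = |\hat{u}|\mr{r}$, $m_k = |\hat{u}|\mr{m}$, $\nu_k = \mr{\nu}$, $\lambda_k = \mr{\lambda}$, $\Omega_k = \mr{\Omega}$, and $\phi_k = \mr{\phi}(\hat{z}) - k\ln|\hat{u}|$, convert each coordinate derivative into an ordinary derivative in $\hat{z}$ weighted by a definite power of $|\hat{u}|$, after which the ODEs emerge by cancellation.

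First I would dispatch the algebraic identities, which involve no dynamics. Identity (\ref{ss:alg1}) I would verify by directly computing $\mr{\lambda} = \partial_{\hat{v}} r_k$ and $\mr{\nu} = \partial_{\hat{u}} r_k$ using the chain rule on $r_k = |\hat{u}|\mr{r}(\hat{z})$; the relation $\mr{\nu} + \mr{r} = \hat{z}\mr{\lambda}$ drops out immediately. Identity (\ref{ss:alg2}) is the self-similar expression of the Hawking mass definition (\ref{eq:defnofm}), combined with the scalar-field contribution to the Raychaudhuri constraint, after substituting the self-similar formulas $\partial_{\hat{u}}\phi_k = |\hat{u}|^{-1}(\hat{z}\mr{\phi}'+k)$ and $\partial_{\hat{v}}\phi_k = |\hat{u}|^{-1}\mr{\phi}'$.

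With the algebraic relations in hand, the evolution equations translate one by one. Equations (\ref{ss:SSESF1}) and (\ref{ss:SSESF2}) come from the area radius equations $\partial_{\hat{u}}\lambda_k = \partial_{\hat{v}}\nu_k = -\Omega_k^2/(4r_k) - \nu_k\lambda_k/r_k$; the scalar wave equation (\ref{ss:SSESF5}) is obtained from $\partial_{\hat{u}}\partial_{\hat{v}}\phi_k + (\lambda_k/r_k)\partial_{\hat{u}}\phi_k + (\nu_k/r_k)\partial_{\hat{v}}\phi_k = 0$ after using (\ref{ss:alg1}) to combine the $\mr{\phi}'$ terms; equations (\ref{ss:SSESF3})--(\ref{ss:SSESF4}) are the self-similar forms of the two Raychaudhuri constraints; and (\ref{ss:SSESF6})--(\ref{ss:SSESF7}) translate the two mass evolution equations, with the extra $2\mr{\nu}\mr{m}$ term in (\ref{ss:SSESF7}) appearing because the chain rule applied to $\partial_{\hat{u}}(|\hat{u}|\mr{m})$ picks up a derivative of the prefactor $|\hat{u}|$.

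Finally, the initial conditions (\ref{ss:SSESFinitval}) are to be read off from regularity at the axis $\Gamma = \{\hat{u}=\hat{v}\} = \{\hat{z}=-1\}$ together with the gauge choices of Section~\ref{section:christodoulou}: $\mr{r}(-1) = \mr{m}(-1) = 0$ expresses that both quantities vanish at the center; $\mr{\Omega}(-1) = 1$ is the prescribed normalization of the lapse at $\Gamma$; $\mr{\nu}(-1) = -\tfrac12$ and $\mr{\lambda}(-1) = \tfrac12$ follow from (\ref{ss:alg1}) together with the parametrization of $\Gamma$ as $\hat{u}=\hat{v}$; and $\mr{\phi}'(-1) = k/2$ is forced by $C^1$ regularity of $\phi_k$ across $\Gamma$, which requires $\partial_{\hat{u}}\phi_k = \partial_{\hat{v}}\phi_k$ at the axis and hence $-\mr{\phi}'(-1) + k = \mr{\phi}'(-1)$. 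The main obstacle in this argument is purely bookkeeping---tracking the sign introduced by $\partial_{\hat{u}}|\hat{u}|=-1$ and keeping the correct self-similar scaling power for each double-null quantity---but once these are fixed, every ODE and identity is produced by direct cancellation of $|\hat{u}|$-weights.
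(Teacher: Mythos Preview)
Your approach is correct and is the standard derivation. The paper does not actually provide a proof of this proposition; it merely records the system and refers the reader to \cite{singh} for the derivation (see the sentence immediately preceding the proposition in Appendix~\ref{section:app3}). Your direct-substitution strategy---inserting the self-similar ansatz of Lemma~\ref{lem:meaningof_Psi_selfsimilar} into the Einstein--scalar field system and cancelling the common $|\hat{u}|$-weights---is precisely how such a reduction is carried out.

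One small imprecision worth noting: your characterization of identity~(\ref{ss:alg2}) as coming from ``the Hawking mass definition combined with the scalar-field contribution to the Raychaudhuri constraint'' is not quite the cleanest route. The identity is most directly obtained by eliminating $\mr{m}'$ between the two mass equations (\ref{ss:SSESF6})--(\ref{ss:SSESF7}), then using (\ref{ss:alg1}) and the Hawking mass relation $\tfrac{1}{4}\mr{\Omega}^2 + \mr{\nu}\mr{\lambda} = \mr{\Omega}^2\mr{m}/(2\mr{r})$ to rewrite the result. Also, for the initial values $\mr{\nu}(-1)=-\tfrac12$, $\mr{\lambda}(-1)=\tfrac12$, you need in addition to (\ref{ss:alg1}) the Hawking mass relation at the axis (where $\mr{\mu}=0$ and $\mr{\Omega}=1$), which forces $\mr{\nu}\mr{\lambda}=-\tfrac14$. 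These are bookkeeping refinements, not gaps.
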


The following lemma establishes soft bounds on the geometric quantities. Without any a priori quantitative control on the solution to (\ref{ss:SSESF1})--(\ref{ss:SSESF7}), we rely heavily on monotonicity and the explicit initial conditions (\ref{ss:SSESFinitval}), as well as the qualitative regularity expressed in Lemma \ref{lemma:backgroundprelims1}.
\begin{lemma}
    The following bounds hold for $\hz \in [-1,0]$:
    \begin{align}
        \frac{1}{2} \leq \ &\mr{\lambda}(\hz) \leq \frac{1}{2}|\hz|^{-k^2}, \label{backgroundbd:1}\\
         \frac{1}{2}(1-|\hz|) \leq \ &\mr{r}(\hz) \leq \frac{1}{2q_k}(1-|\hz|^{q_k}) , \label{backgroundbd:2}\\
        1 \leq \ &\mr{\Omega}^2(\hz) \leq |\hz|^{-k^2} \label{backgroundbd:3},\\[\jot]
         0 \leq \ &\mr{m}(\hz) \leq k^2. \label{backgroundbd:4}
    \end{align}
\end{lemma}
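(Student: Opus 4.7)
The plan is to extract two exact differential identities from the self-similar ODE system, use them to pin down $\mr{\phi}'$ and $\mr{\Omega}^2$, and then propagate control to $\mr{\lambda}$ and $\mr{r}$ via monotonicity. Throughout, the algebraic relation (\ref{ss:alg1}), i.e.\ $\mr{\nu} = \hz\mr{\lambda} - \mr{r}$ (and its differentiated form $\hz\mr{\nu}' = \hz^2\mr{\lambda}'$, which uses $\mr{r}' = \mr{\lambda}$), plays a central role in combining equations.

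First I would combine (\ref{ss:SSESF3}) and (\ref{ss:SSESF4}) with the above algebraic relation to derive
\begin{equation*}
    \frac{d}{d\hz}\log\bigl(\mr{\Omega}^2(\hz)\,|\hz|^{k^2}\bigr) = -2k\,\mr{\phi}'(\hz),
\end{equation*}
which integrates, using $\mr{\Omega}^2(-1) = 1$, to the closed-form identity $\mr{\Omega}^2(\hz)\,|\hz|^{k^2} = \exp\!\bigl(-2k(\mr{\phi}(\hz) - \mr{\phi}(-1))\bigr)$. Second, multiplying (\ref{ss:SSESF5}) by $\mr{r}$ and using $\mr{r}\mr{\lambda} = \tfrac{1}{2}(\mr{r}^2)'$ gives $\hz(\mr{r}^2\mr{\phi}')' = -\tfrac{k}{2}(\mr{r}^2)'$; two integrations (the second by parts, exploiting $\mr{r}(-1) = 0$) produce the explicit formula
\begin{equation*}
    \mr{\phi}'(\hz) = \frac{k}{2|\hz|} - \frac{k}{2\,\mr{r}^2(\hz)}\int_{-1}^{\hz}\frac{\mr{r}^2(\hz')}{|\hz'|^2}\,d\hz'.
\end{equation*}
The upper bound $\mr{\phi}' \leq k/(2|\hz|)$ is immediate. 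For the lower bound $\mr{\phi}' \geq 0$, set $F(\hz) := \mr{r}^2(\hz)/|\hz|$ and note that $F(\hz) - \int_{-1}^{\hz}\mr{r}^2/|\hz'|^2\,d\hz'$ vanishes at $\hz = -1$ and has derivative $2\mr{r}\mr{\lambda}/|\hz| \geq 0$.

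Integrating $0 \leq \mr{\phi}' \leq k/(2|\hz|)$ yields $0 \leq \mr{\phi}(\hz) - \mr{\phi}(-1) \leq (k/2)\log(1/|\hz|)$, which combined with the exponential identity gives $1 \leq \mr{\Omega}^2(\hz) \leq |\hz|^{-k^2}$, proving (\ref{backgroundbd:3}). Next, rewrite (\ref{ss:SSESF4}) as $\frac{d}{d\hz}(\mr{\lambda}/\mr{\Omega}^2) = -\mr{r}(\mr{\phi}')^2/\mr{\Omega}^2 \leq 0$; since $(\mr{\lambda}/\mr{\Omega}^2)(-1) = 1/2$, this delivers $\mr{\lambda} \leq \tfrac{1}{2}\mr{\Omega}^2 \leq \tfrac{1}{2}|\hz|^{-k^2}$, which integrates in $\hz$ (using $\mr{r}' = \mr{\lambda}$ and $\mr{r}(-1) = 0$) to the upper half of (\ref{backgroundbd:2}). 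The mass bounds (\ref{backgroundbd:4}) are the easiest piece: (\ref{ss:SSESF6}) implies $\mr{m}' \geq 0$ whenever $\mr{\mu} \leq 1$ (which is $-4\mr{\nu}\mr{\lambda}/\mr{\Omega}^2 \geq 0$, immediate from the geometric signs), so $\mr{m} \geq \mr{m}(-1) = 0$; for the upper bound, monotonicity reduces to the endpoint estimate $\mr{m}(0) = \tfrac{1}{2}\mr{\mu}(0)\mr{r}(0) \leq k^2/(4q_k(1+k^2)) \leq k^2$ for $k^2 < 1/3$, using $\mr{\mu}(0) = k^2/(1+k^2)$ and the already-established $\mr{r}(0) \leq 1/(2q_k)$.

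The main obstacle is the sharp lower bound $\mr{\lambda} \geq 1/2$ (which then yields $\mr{r} \geq \tfrac{1}{2}(1-|\hz|)$ by integration). None of the monotonic quantities derived so far immediately enforce this. The plan is to use the completed-square form of (\ref{ss:alg2}),
\begin{equation*}
    \tfrac{1}{4}\mr{\Omega}^2 = -(1+k^2)\mr{\nu}\mr{\lambda} + \hz\,\mr{r}^2\Bigl(\mr{\phi}' + \tfrac{k\mr{\lambda}}{\mr{r}}\Bigr)^2,
\end{equation*}
insert it into (\ref{ss:SSESF1}), substitute the explicit formula for $\mr{\phi}'$, and reduce the desired sign $\mr{\lambda}'(\hz) \geq 0$ to the pointwise comparison $4|\hz|^2\mr{r}\mr{\lambda}\,J \geq (\mr{r}^2 - |\hz|J)^2$, where $J(\hz) := \int_{-1}^{\hz}\mr{r}^2/|\hz'|^2\,d\hz'$. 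The latter should follow by analyzing the difference at $\hz=-1$ (where both sides vanish to second order) and checking that successive derivatives have the correct sign once the already-established bounds on $\mr{\lambda}, \mr{r}, \mr{\Omega}^2$ are fed in; this is the most delicate step and is where the small-$k$ assumption becomes essential.
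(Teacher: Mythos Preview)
Your derivation of the exponential identity for $\mr{\Omega}^2|\hz|^{k^2}$, the explicit integral formula for $\mr{\phi}'$, and the monotonicity of $\mr{\lambda}/\mr{\Omega}^2$ are all correct and give clean proofs of (\ref{backgroundbd:3}), the upper half of (\ref{backgroundbd:1}), and (\ref{backgroundbd:4}). These are somewhat more elaborate than the paper's route (the paper uses a contradiction argument for $\mr{\phi}'\geq 0$ and the weighted equation for $|\hz|^{k^2}\mr{\lambda}$ directly), but they are valid and in some ways more informative.

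The genuine gap is the lower bound $\mr{\lambda}\geq\tfrac12$. You believe this is delicate and requires small $k$; it does not. The point you are missing is the Hawking-mass identity
\[
-\mr{\nu}\mr{\lambda}-\tfrac14\mr{\Omega}^2 \;=\; -\tfrac14\mr{\Omega}^2\mr{\mu},
\]
which follows straight from the definition (\ref{eq:defnofm}). You have already established $\mr{m}\geq 0$ (hence $\mr{\mu}\geq 0$) from (\ref{ss:SSESF6}) and the signs $\mr{\lambda}>0$, $\mr{\nu}<0$. Feeding this into (\ref{ss:SSESF1}) gives $\mr{r}\hz\,\mr{\lambda}' = -\tfrac14\mr{\Omega}^2\mr{\mu}\leq 0$, and since $\mr{r}\geq 0$, $\hz\leq 0$, this yields $\mr{\lambda}'\geq 0$ and thus $\mr{\lambda}\geq\mr{\lambda}(-1)=\tfrac12$ for all $k^2\in(0,\tfrac13)$. (The apparent circularity---$\mr{m}\geq 0$ uses $\mr{\lambda}>0$---is resolved by the standard continuity argument: all of $\mr{\lambda}>0$, $\mr{\nu}<0$, $\mr{m}\geq 0$ hold at $\hz=-1$ and are jointly propagated.) Your proposed completed-square comparison is unnecessary, and the claim that small $k$ is essential here is incorrect.
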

\begin{proof}
By the definition (\ref{eq:defnofm}) of Hawking mass, we may reexpress the right hand sides of (\ref{ss:SSESF1})--(\ref{ss:SSESF2}) as $-\mr{\nu}\mr{\lambda} - \frac{1}{4}\mr{\Omega}^2 = -\frac{1}{4}\mr{\Omega}^2 \mr{\mu} \leq 0$. Therefore, by comparison with the values obtained along the axis (\ref{ss:SSESFinitval}), we have the inequalities
\begin{equation*}
    \mr{\lambda}(\hat{z}) \geq \frac{1}{2}, \quad \mr{\nu}(\hat{z}) \leq  -\frac{1}{2}.
\end{equation*}
Similarly, direct inspection of (\ref{ss:SSESF4}) shows $\frac{d}{d\hz}\mr{\Omega} \geq 0$, giving the one sided bound $\mr{\Omega}(\hz) \geq 1$.

We next show that for all $\hz \in [-1,0)$, 
\begin{equation}
    \label{eq:signofphi'}
    \mr{\phi}'(\hz) \geq 0.
\end{equation}

By (\ref{ss:SSESFinitval}) and continuity, $\mr{\phi}'(\hz) > 0$ in a neighborhood of ${\hz = -1}$. Supposing (\ref{eq:signofphi'}) is not true, let $\hz_*$ denote the greatest lower bound of the subset of $[-1,0)$ on which $\mr{\phi}'(\hz) < 0$. It follows $\hz_* > -1 $, and by continuity that $\mr{\phi}'(\hz_*)=0$. However, inspecting (\ref{ss:SSESF5}) shows $\frac{d}{d\hz}\mr{\phi}'(\hz_*) > 0$, and therefore $\mr{\phi}'(z)>0$ on an interval $(\hz_*, \hz_* +\epsilon)$, contradicting the choice of $\hz_*$.

Returning to upper bounds on $\mr{\lambda}, \mr{\Omega}$, we consider the weighted quantities $|\hz|^{k^2}\mr{\lambda}$, $|\hz|^{\frac{k^2}{2}}\mr{\Omega}(\hz)$, which satisfy the equations
\begin{align}
    \frac{d}{d \hz} \big(|\hz|^{k^2}\mr{\lambda}\big) 
    &= -\mr{r}|\hz|^{k^2}(\mr{\phi}')^2 - 2k|\hz|^{k^2}\mr{\lambda}\mr{\phi}',\label{eq:weightedlambda} \\[\jot]
    \frac{d}{d\hat{z}}\big(|\hat{z}|^{\frac{k^2}{2}}\mr{\Omega} \big) &= -k \mr{\phi}' \big(|\hat{z}|^{\frac{k^2}{2}}\mr{\Omega} \big) \label{eq:monotonic_omega}.
\end{align}
We have used the algebraic identity (\ref{ss:alg2}) in addition to (\ref{ss:SSESF1}), (\ref{ss:SSESF4}). From (\ref{eq:signofphi'}) it follows that these equations have a sign, which immediately gives the upper bounds in (\ref{backgroundbd:1}), (\ref{backgroundbd:3}). To conclude the estimates (\ref{backgroundbd:2}) for $\mr{r}$, it now suffices to integrate $\partial_z \mr{r} = \mr{\lambda}$ and apply (\ref{backgroundbd:1}).

Finally consider $\mr{m}$. Recall $\mr{\mu}(\hat{z}=0) = \frac{k^2}{1+k^2}.$ It follows that $\mr{m}(0) = \frac{1}{2}\mr{\mu}(0)\mr{r}(0) \leq \frac{1}{4(1-k^2)}\frac{k^2}{1+k^2} \leq k^2$. Since by direct inspection of (\ref{ss:SSESF6}) we have that $\mr{m}$ is an increasing function, we conclude
\begin{equation*}
    \mr{m} \leq k^2.
\end{equation*}
\end{proof}

Before proceeding we record some inequalities relating $\hz$-weights. \begin{lemma}
    \label{lemma:inequality}
    Let $a\in (0,1)$, $b \in (\frac12,1)$ be given parameters. Then for all $\hz \in [-1,0]$ we have 
    \begin{equation}
        \label{eq:ineqineq1}
        a(1-|\hz|) \leq 1-|\hz|^a \leq 1-|\hz|,
    \end{equation}
    and
    \begin{equation}
        \label{eq:ineqineq3}
        \big| (1-|\hz|)  - b^{-1}(1-|\hz|^b)\big| \leq  8 (1-b) (1-|\hz|)^2.
    \end{equation}
    If $\hz \in [-1,-\delta]$ for fixed $\delta \in (0,1)$, then there exists a constant $c_\delta = \frac{-\ln \delta}{1-\delta}$ such that  
    \begin{equation}
        \label{eq:ineqineq2}
        1-|\hz|^a \leq c_\delta a (1-|\hz|).
    \end{equation}
    Finally, for fixed $\delta_1 \in (0,1)$, there exists a constant $c_{\delta_1}$ such that for all $\hz \in [-1,0]$, 
    \begin{equation}
        \label{eq:ineqineq4.5}
        |\hz|^{\delta_1} ( 1 - |\hz|^a) \leq c_{\delta_1} a. 
    \end{equation}
\end{lemma}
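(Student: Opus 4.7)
The proof will be a sequence of four elementary, essentially one-dimensional calculus arguments, carried out in the variable $y \doteq |\hz| \in [0,1]$. Each inequality will be reduced to the study of a single function of $y$ (or of the parameter $a$, $b$), whose sign or size can be controlled by monotonicity, concavity, or mean value theorem arguments.

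For (\ref{eq:ineqineq1}), I would introduce $F(y) \doteq (1-y^a) - a(1-y)$, note $F(1)=0$, and compute $F'(y) = a(1-y^{a-1}) \leq 0$ on $[0,1]$ since $a-1<0$; this yields $F(y) \geq 0$ and gives the lower bound. The upper bound follows from $y^a \geq y$ for $y \in [0,1]$, $a \in (0,1)$.

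For (\ref{eq:ineqineq3}), which I expect to be the main obstacle of the lemma, I would exploit the integral representation
\begin{equation*}
    (1-y) - b^{-1}(1-y^b) = \int_0^{1-y}\big(1 - (1-s)^{b-1}\big)\, ds.
\end{equation*}
Setting $\beta = b$ and viewing $A(\beta) \doteq \beta^{-1}(1-y^\beta) - (1-y)$ as a function of $\beta$ with $A(1)=0$, I would compute $\frac{dA}{d\beta} = -\beta^{-2}\big(y^\beta \ln(y^\beta) + (1-y^\beta)\big)$, and substitute $u \doteq y^\beta$ to reduce matters to the uniform pointwise inequality
\begin{equation*}
    \big|u \ln u + (1-u)\big| \leq (1-u)^2, \qquad u \in [0,1],
\end{equation*}
which I would establish by showing that $g(u) \doteq u\ln u + (1-u) - (1-u)^2$ vanishes at $u=1$ with zero first derivative there and has $g''(1)<0$, then ruling out any interior zero (a direct, short calculation). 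Combining this with $1-y^\beta \leq 1-y$ for $\beta \in [1/2,1]$ and $\beta^2 \geq 1/4$ gives $|A'(\beta)| \leq 4(1-y)^2$, and the mean value theorem in $\beta$ produces the desired bound with constant $4 \leq 8$.

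For (\ref{eq:ineqineq2}), I would introduce $h(y) \doteq c_\delta a (1-y) - (1-y^a)$ on $[\delta,1]$ and observe the key structural fact that $h$ is concave ($h''(y) = a(a-1)y^{a-2} < 0$). It therefore suffices to check $h \geq 0$ at the two endpoints: $h(1)=0$ is immediate, and at $y=\delta$ the inequality $h(\delta) \geq 0$ reduces to showing $\psi(a) \doteq -a\ln\delta - (1-\delta^a) \geq 0$, which follows because $\psi(0) = 0$ and $\psi'(a) = \ln\delta\,(\delta^a - 1) > 0$ for $a > 0$. Finally, for (\ref{eq:ineqineq4.5}), I would directly maximize $G(y) \doteq y^{\delta_1}(1-y^a)$ on $[0,1]$: solving $G'(y)=0$ gives the unique interior critical point $y_* = (\delta_1/(\delta_1+a))^{1/a}$, at which
\begin{equation*}
    G(y_*) = \Big(\frac{\delta_1}{\delta_1 + a}\Big)^{\delta_1/a}\frac{a}{\delta_1+a} \leq \frac{a}{\delta_1},
\end{equation*}
giving the bound with $c_{\delta_1} = 1/\delta_1$. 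None of the four steps is individually difficult; the only mildly subtle one is (\ref{eq:ineqineq3}), where the right reduction to a one-variable pointwise inequality is essential to avoid picking up a factor that blows up as $y \to 0$.
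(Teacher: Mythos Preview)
Your proposal is correct. For \eqref{eq:ineqineq1} you and the paper argue essentially the same way (derivative sign versus Bernoulli). The remaining three items take genuinely different routes.

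For \eqref{eq:ineqineq3}, the paper splits the interval: on $|\hz|\in[\tfrac12,1]$ it Taylor expands $f_b(\hz)=(1-|\hz|)-b^{-1}(1-|\hz|^b)$ about $\hz=-1$ (using $f_b(-1)=f_b'(-1)=0$, $|f_b''|\le 4(1-b)$), and on $|\hz|\in[0,\tfrac12]$ it bounds the left side by $b^{-1}-1\le 2(1-b)$ and the right side below by $2(1-b)$. Your mean-value argument in the \emph{parameter} $b$ avoids this case split entirely: reducing to the scalar inequality $0\le u\ln u+(1-u)\le (1-u)^2$ on $[0,1]$ (which indeed follows from $(1-u)+\ln u\le 0$) and using $\beta^{-2}\le 4$, $(1-y^\beta)\le (1-y)$ gives the bound uniformly with the sharper constant $4$.

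For \eqref{eq:ineqineq2}, the paper shows the ratio $g_a(y)=(1-y^a)/(1-y)$ is nondecreasing in $y$ and then evaluates at $y=\delta$; you instead exploit concavity of the difference $h(y)=c_\delta a(1-y)-(1-y^a)$ and check both endpoints. Both yield the exact constant $c_\delta$, and both rely on the same endpoint computation $\psi(a)=-a\ln\delta-(1-\delta^a)\ge 0$.

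For \eqref{eq:ineqineq4.5}, the paper chains the auxiliary bound $1-y^a\le -a\ln y$ with the fact that $y^{\delta_1}|\ln y|$ is bounded, producing an implicit $c_{\delta_1}$. Your direct maximization of $y^{\delta_1}(1-y^a)$ is more self-contained and gives the explicit constant $c_{\delta_1}=1/\delta_1$.
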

\begin{proof}
    We begin with (\ref{eq:ineqineq1}). Defining $w = |\hz|-1 \in [-1,0]$, we may apply Bernoulli's inequality to conclude 
    \begin{align*}
        (1+w)^a \leq 1 + aw \implies |\hz|^a \leq 1 + a(|\hz|-1).
    \end{align*}
    This is equivalent to the first inequality in (\ref{eq:ineqineq1}). To see the second inequality, it suffices to observe that $|\hz| \leq 1$ and $a \leq 1$ imply $|\hz| \leq |\hz|^a.$

    \vspace{1em}
    Turning to (\ref{eq:ineqineq3}), we first consider $\hz \in [-1,-\frac12]$, and define $f_b(\hz) = (1-|\hz|)  - b^{-1}(1-|\hz|^b)$. A direct computation gives $f_b(-1) = f_b'(-1) = 0$, and $\sup_{\hz \in [-1,-\frac12]}|f_b''(\hz)| \leq 4(1-b)$ holds. Taylor's theorem then implies the desired inequality. In $\hz \in [-1,-\frac{1}{2}]$, we observe that $f_b(\hz)$ is a strictly decreasing function of $\hz$, and thus the left hand side of (\ref{eq:ineqineq3}) can be estimated above by $b^{-1}-1 $. Similarly, the right hand side of (\ref{eq:ineqineq3}) can be bounded below in $\hz \in [-1,-\frac{1}{2}]$ by $2(1-b)$. These estimates are consistent provided $b > \frac12$. 

    \vspace{1em}
    We next consider (\ref{eq:ineqineq2}), and define the function $g_a(\hz) = \frac{1-|\hz|^a}{1-|\hz|}$. Computing the derivative explicitly shows that provided $a < 1$, the function $g_a(\hz)$ is non-decreasing in $\hz$. It follows that $g_a(\hz) \leq g_a(-\delta)$, which by Taylor's theorem can be seen to satisfy the estimate $g_a(-\delta) \leq \frac{-\ln \delta}{1-\delta} a \doteq c_\delta a$.

    \vspace{1em}
    Finally we show (\ref{eq:ineqineq4.5}). By (\ref{eq:ineqineq2}) with explicit constant we have $1 - |\hz|^{a} \leq -a \ln|\hz|$ for all $\hz \in [-1,0]$. Therefore, $|\hz|^{\delta_1}(1-|\hz|^{a}) \leq -a \ln |\hz| |\hz|^{\delta_1} \leq c_{\delta_1} a,$ as desired.
\end{proof}

We also require the following integral bound.
\begin{lemma}
    Let $a \in (0,1)$, $p \in [1,\infty)$. Then 
    \begin{equation}
        \label{eq:integralLPest}
        \int_{0}^1 (1-t^a)^p dt \lesssim_p a^p.
    \end{equation}
\end{lemma}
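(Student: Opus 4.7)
My plan is to reduce the integral to a computable one via a single, sharp pointwise inequality for the integrand, namely
\[
1 - t^a \leq -a \ln t, \qquad t \in (0,1], \ a \in (0,1).
\]
This inequality follows immediately from the convexity estimate $e^{x} \geq 1 + x$ (valid for all $x \in \mathbb{R}$) applied to $x = a \ln t$, which yields $t^a = e^{a \ln t} \geq 1 + a \ln t$. It captures both relevant asymptotic regimes of $1 - t^a$: near $t=1$ it recovers the linear behavior $a(1-t) + O((1-t)^2)$ (up to a universal factor), while near $t=0$ it provides a logarithmic envelope which, once raised to the $p$-th power, remains integrable.

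With this in hand, taking $p$-th powers (both sides are non-negative) and integrating gives
\[
\int_0^1 (1-t^a)^p \, dt \leq a^p \int_0^1 (-\ln t)^p \, dt.
\]
The remaining integral is a standard representation of the gamma function: the substitution $u = -\ln t$, $dt = -e^{-u} du$, transforms it into
\[
\int_0^1 (-\ln t)^p \, dt = \int_0^\infty u^p e^{-u}\, du = \Gamma(p+1).
\]
Combining the two displays yields the desired bound with explicit constant $\Gamma(p+1)$, which depends only on $p$.

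There is no genuine obstacle here; the only point that required some thought was selecting a pointwise bound for $1-t^a$ that is sharp enough at both endpoints. The naive alternative of interpolating between $1-t^a \leq 1$ (for $t$ small) and $1-t^a \leq a(1-t)^{\text{something}}$ (for $t$ near $1$) would require splitting the domain and matching, whereas the logarithmic inequality handles both regions simultaneously and gives the correct $a^p$ scaling in one shot. I would keep this proof to a few lines in the paper.
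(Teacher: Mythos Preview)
Your proof is correct and, in fact, cleaner than the paper's. The paper evaluates the integral exactly via the Beta function for integer $p$,
\[
\int_0^1 (1-t^a)^p\,dt = \frac{\Gamma(1+\tfrac{1}{a})\,\Gamma(1+p)}{\Gamma(1+\tfrac{1}{a}+p)} = p!\prod_{j=1}^p \frac{1}{1+\tfrac{1}{a}+(p-j)} \leq p!\,a^p,
\]
and then invokes log-convexity of $L^p$ norms to pass to non-integer $p$. Your pointwise bound $1-t^a \leq -a\ln t$ (which, incidentally, the paper uses elsewhere; see the proof of \eqref{eq:ineqineq2}) handles all $p\in[1,\infty)$ at once and yields the same constant $\Gamma(p+1)$ without any case distinction or interpolation step. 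The paper's route has the minor advantage of giving the exact value of the integral, but for the purposes of this lemma your argument is strictly shorter and more self-contained.
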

\begin{proof}
    For $p \in \mathbb{N}$ this integral may be evaluated explicitly as 
    \begin{equation*}
        \int_{0}^1 (1-t^a)^p dt = \frac{\Gamma(1+\frac{1}{a})\Gamma(1+p)}{\Gamma(1+\frac{1}{a}+p)}.
    \end{equation*}
    Applying the product rule for gamma functions yields
    \begin{align*}
        \frac{\Gamma(1+\frac{1}{a})\Gamma(1+p)}{\Gamma(1+\frac{1}{a}+p)} &= p! \prod_{j=1}^{p} \frac{1}{1+\frac{1}{a} + (p-j)} \\
        & \leq p! a^p.
    \end{align*}
    In the case $p \notin \mathbb{N}$, the estimate follows by log-convexity of $L^p$ norms.
\end{proof}

With these inequalities in hand, we give the leading order behavior for weighted geometric quantities as $k \rightarrow 0$, as well as $L^p_z$ estimates for the renormalized scalar field derivative $\partial_z \mr{\phi}(z)$.
\begin{prop}
    \label{prop:appendixLpbound}
    The following estimate holds for any $p \in [1,\infty)$:
    \begin{align} 
        \|\partial_z \mr{\phi}\|_{L^p_{z}([-1,0])} &\lesssim_p k. \label{backgroundbd:6}
    \end{align}
    Moreover, we have the leading order expansions
    \begin{align}
        &\||\hat{z}|^{k^2}\mr{\Omega}^2 - 1 \|_{L^\infty([-1,0])} \lesssim k^2, \label{eq:smallkbound7}\\
        &\| \frac{|\hat{z}|^{k^2}\mr{\lambda} - \frac{1}{2}}{1-|\hz|} \|_{L^\infty([-1,0])} + \||\hat{z}|^{k^2}\mr{\lambda} - \frac{1}{2}\|_{C^1_{z}([-1,-\frac{1}{2}])} + \|\partial_z (|\hz|^{k^2}\mr{\lambda}) \|_{L^p_z([-1,0])} \lesssim_p k^2, \label{eq:smallkbound8}\\
        &|\mr{r}(\hz) - \frac{1}{2}(1-|\hz|)| \lesssim  k^2 (1-|\hz|)^2, \label{eq:smallkbound9}\\
        &\|\mr{\nu} + \frac{1}{2} \|_{L^\infty([-1,0])} + \| \partial_z \mr{\nu}\|_{L^p_z([-1,0])} \lesssim_p k^2, \label{eq:smallkbound10} \\
        &\|\frac{\mr{m}}{\mr{r}^3} \|_{L^\infty([-1,0])} \lesssim k^2. \label{eq:smallkbound10.5} 
    \end{align}
    On compact subintervals of $[-1,0)$, the bounds (\ref{eq:smallkbound7})--(\ref{eq:smallkbound10}) moreover hold in $C^1_{z}$.
\end{prop}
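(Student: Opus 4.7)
The proof proceeds by a bootstrap off the soft bounds established in the previous lemma, systematically propagating estimates through the ODE system in the order $\mr{\phi} \to \mr{\Omega} \to \mr{\lambda} \to \mr{r} \to \mr{\nu} \to \mr{m}$. The closed-form solution of the scalar field equation is the engine of the argument.

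\textbf{Scalar field.} Rewriting \eqref{ss:SSESF5} as a linear ODE for $\mr{r}^2 \mr{\phi}'$ via the integrating factor $\mr{r}^2$ (using $\partial_{\hz} \mr{r} = \mr{\lambda}$), one finds $(\mr{r}^2 \mr{\phi}')'(\hz) = -\tfrac{k}{2\hz}(\mr{r}^2)'(\hz)$. Integrating from $\hz = -1$ (where $\mr{r}$ vanishes) and applying one integration by parts yields the closed form
$$\mr{\phi}'(\hz) = -\frac{k}{2\hz} - \frac{k}{2\mr{r}^2(\hz)}\int_{-1}^{\hz}\frac{\mr{r}^2(\hz')}{(\hz')^2}\,d\hz'.$$
The singular $-k/(2\hz)$ behavior as $\hz \to 0$ is cancelled by the leading piece of the integral obtained by substituting $\mr{r}^2(\hz') \approx \mr{r}^2(0)$; expanding $\mr{r}^2(\hz')$ around $\hz' = 0$ to order $|\hz'|^{q_k}$ (using the $C^1_z$ regularity of $\mr{r}$) produces a subleading remainder of the form $k^{-1}(|\hz|^{-k^2} - 1)$ times a bounded function. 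Converting to $z$-coordinates via $\partial_z \mr{\phi} = p_k |\hz|^{k^2} \mr{\phi}'$ together with $|\hz|^{k^2} = |z|^{p_k - 1}$, one obtains the pointwise bound $|\partial_z \mr{\phi}(z)| \lesssim k + k^{-1}(1 - |z|^{p_k - 1})$. The $L^p$ bound (\ref{backgroundbd:6}) then follows from the integral inequality \eqref{eq:integralLPest} applied with exponent $p_k - 1 = k^2/q_k \ll 1$.

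\textbf{Lapse and outgoing derivative.} The monotonic equation \eqref{eq:monotonic_omega} integrates explicitly to $|\hz|^{k^2/2} \mr{\Omega}(\hz) = \exp\!\big(-k(\mr{\phi}(\hz) - \mr{\phi}(-1))\big)$, and since $|\mr{\phi}(\hz) - \mr{\phi}(-1)| \lesssim k$ by the $L^1$ case of the scalar field bound, we conclude \eqref{eq:smallkbound7}. For $\mr{\lambda}$, integrating \eqref{eq:weightedlambda} from $\hz = -1$ and estimating
$$\left|\int_{-1}^\hz \!\!\Big[\mr{r}(\hz')|\hz'|^{k^2}(\mr{\phi}'(\hz'))^2 + 2k|\hz'|^{k^2}\mr{\lambda}(\hz')\mr{\phi}'(\hz')\Big]\,d\hz'\right| \lesssim k^2(1-|\hz|)$$
via the $L^2$ scalar field bound, the soft bound $|\hz|^{k^2}\mr{\lambda} \lesssim 1$, and $\mr{r}(\hz') \lesssim 1-|\hz'|$, yields the pointwise and near-axis $C^1_z$ bounds in \eqref{eq:smallkbound8}; the $L^p$ estimate on $\partial_z(|\hz|^{k^2}\mr{\lambda})$ follows directly from \eqref{eq:weightedlambda} after the chain-rule conversion and application of \eqref{eq:integralLPest}.

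\textbf{Remaining quantities.} The bound \eqref{eq:smallkbound9} for $\mr{r}$ follows by integrating $\partial_{\hz} \mr{r} = \mr{\lambda} = |\hz|^{-k^2}(|\hz|^{k^2}\mr{\lambda})$ from the axis, combining Step above with Lemma \ref{lemma:inequality} (in particular \eqref{eq:ineqineq3}, which compares $1-|\hz|^{q_k}$ to $(1-|\hz|)$ modulo $O(k^2)(1-|\hz|)^2$ corrections). The estimate \eqref{eq:smallkbound10} for $\mr{\nu}+\tfrac12$ then follows from the algebraic identity \eqref{ss:alg1}, $\mr{\nu} = \hz\mr{\lambda} - \mr{r}$. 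Finally, \eqref{ss:SSESF6} gives $\mr{m}(\hz) = \int_{-1}^{\hz} (2\mr{\lambda})^{-1}(1-\mr{\mu})\mr{r}^2(\mr{\phi}')^2\,d\hz'$, and a weighted $L^2$ estimate on $\mr{\phi}'$ using the vanishing of $\mr{r}$ at the axis extracts a third factor of $\mr{r}$, yielding \eqref{eq:smallkbound10.5}. The $C^1_z$ statement on compact subintervals of $[-1,0)$ follows from the same arguments after noting that on such intervals all $|\hz|^{k^2}$ weights are bounded above and below, so derivatives can be estimated pointwise from the ODEs.

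\textbf{The main obstacle} is the delicate cancellation analysis in the scalar field step: the closed-form expression for $\mr{\phi}'(\hz)$ is the difference of two terms each of size $|\hz|^{-1}$ near the horizon, and extracting the precise subleading remainder of size $k^{-1}(|\hz|^{-k^2}-1)$ (rather than merely $O(k^{-1})$) requires expanding $\mr{r}^2(\hz')$ near $\hz'=0$ to order $|\hz'|^{q_k}$ and computing the integral $\int^{\hz}|\hz'|^{q_k-2}d\hz' = k^{-2}(|\hz|^{-k^2}-1)$, whose $k^{-2}$ prefactor cancels with a $k$ from \eqref{ss:SSESF5} and one power of $k$ absorbed into the denominator $\mr{r}^{-2}(\hz) \sim 1$ to give the reported $k^{-1}$ rate sharply. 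A secondary point is preserving the $(1-|\hz|)$ enhancement in \eqref{eq:smallkbound8}, which in the second, linear-in-$\mr{\phi}'$ integrand requires a bootstrap-controlled bound $|\hz|^{k^2}\mr{\lambda} = \tfrac12 + O(k^2(1-|\hz|))$ rather than the cruder soft upper bound.
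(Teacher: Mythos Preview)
Your proposal is correct, and for the lapse, $\mr{\lambda}$, $\mr{r}$, $\mr{\nu}$, and $\mr{m}$ steps it is essentially identical to the paper's argument. The scalar field step, however, differs: you derive an exact closed-form via integration by parts and then carry out a cancellation analysis between the two $O(k/|\hz|)$ terms, expanding $\mr{r}^2(\hz')$ near $\hz'=0$ to order $|\hz'|^{q_k}$. The paper instead splits into two regions. On $[-1,-\tfrac12]$ it gets the pointwise bound $|\mr{\phi}'|\leq k$ directly from the same $\mr{r}^2\mr{\phi}'$ identity (no integration by parts). On $[-\tfrac12,0]$ it rewrites \eqref{ss:SSESF5} as $\frac{d}{d\hz}\mr{\phi}' = -\tfrac{2\mr{\lambda}}{\mr{r}}\mr{\phi}' + \tfrac{k\mr{\lambda}}{\mr{r}|\hz|}$, \emph{drops} the first term using $\mr{\phi}'\geq 0$ from the previous lemma, and integrates only the forcing. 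This sign-based monotonicity shortcut avoids your delicate cancellation entirely and arrives at the same pointwise bound $|\partial_z\mr{\phi}|\lesssim k + k^{-1}(1-|z|^{p_k k^2})$ in two lines. Your route works but buys nothing extra here; it would, however, be the natural starting point if one wanted the sharper leading-order expansion of Proposition~\ref{prop:expansionPhiM}.

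One small correction: no bootstrap is needed for the $(1-|\hz|)$ enhancement in \eqref{eq:smallkbound8}. The paper simply observes that on $[-1,-\tfrac12]$ the pointwise bound $|\mr{\phi}'|\leq k$ already gives the factor $(1-|\hz|)$ from the length of the integration interval, while on $[-\tfrac12,0]$ one has $1-|\hz|\geq\tfrac12$, so the $L^2$-based bound $\lesssim k^2$ (without enhancement) is automatically consistent.
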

\begin{proof}
    As a starting point, we show that $\mr{\phi}'(\hz)$ satisfies the pointwise estimate
    \begin{equation}
        \label{eq:smallktemp1}
        \sup_{\hz \in [-1,-\frac{1}{2}]}|\mr{\phi}'(\hz)| \leq k.
    \end{equation}
    Recall we have (\ref{eq:signofphi'}), and so it is enough to show an upper bound for $\mr{\phi}'(\hz)$. Rewriting (\ref{ss:SSESF5}) as an equation for $\mr{r}^2\mr{\phi}'$, integrating on $\hz' \in [-1,-\frac12]$, and applying (\ref{backgroundbd:1}) gives
    \begin{align*}
    \mr{\phi}'(\hz) &\leq \frac{1}{\mr{r}^2(\hz)}\int_{-1}^{\hz} \frac{k}{|\hz'|}\mr{r}(\hz')\mr{\lambda}(\hz') d\hz' \leq  k,
    \end{align*}
    as desired.
Let $d_k \doteq \frac{1}{k}\mr{\phi}'(-\frac{1}{2}) \leq 1$. We now propagate the control on $\mr{\phi}'(\hz)$ towards $\hz=0$. (\ref{ss:SSESF5}) is equivalent to
\begin{equation}
    \label{eq:smallktemp2}
    \frac{d}{d\hz}\mr{\phi}' = -\frac{2\mr{\lambda}}{\mr{r}}\mr{\phi}' + \frac{k\mr{\lambda}}{\mr{r}|\hz|}.
\end{equation} 
By (\ref{eq:signofphi'}) the first term is non-positive, and thus integrating on $\hz' \in [-\frac12,\hz]$ and applying (\ref{backgroundbd:1}), (\ref{backgroundbd:2}) gives 
\begin{align*}
    0 \leq \mr{\phi}'(\hz) &\leq k d_k + \int_{-\frac{1}{2}}^{\hz}\frac{k\mr{\lambda}}{\mr{r}|\hz'|} d\hz' \\
    &\leq k d_k + 2k \int_{-\frac{1}{2}}^{\hz} \frac{1}{|\hz'|^{1+k^2}}d\hz' \\
    &\leq k d_k + \frac{2}{k|\hz|^{k^2}}\big(1-|2\hz|^{k^2} \big).
\end{align*}
It follows that 
\begin{equation*}
    \partial_z \mr{\phi}(z) = p_k|\hz|^{k^2}\mr{\phi}'(\hz) \leq k p_k d_k |z|^{p_k k^2} +  \frac{2 p_k}{k}\big(1-|2z|^{p_k k^2} \big).
\end{equation*}
The first term has $L^\infty_z$ norm of size $k$, so it suffices to consider the second term. In $L^\infty_z$ this term has size $k^{-1}$, but in $L^p_z$, $p < \infty$ we may apply the integral estimate (\ref{eq:integralLPest}) to conclude (\ref{backgroundbd:6}):
\begin{align*}
    \|\partial_z \mr{\phi}(z)\|_{L^p_z([-1,0])} &\lesssim k \||z|^{p_k k^2} \|_{L^p_z([-1,0])} + k^{-1}\|1-|2z|^{p_k k^2} \|_{L^p_z([-1,0])} \\
    & \lesssim_p k.
\end{align*}
A useful consequence of (\ref{backgroundbd:6}) is a bound for the (un-renormalized) derivative $\mr{\phi}'(\hz)$:
\begin{equation}
    \label{eq:l2boundunrenormalized}
    \|\mr{\phi}' \|_{L^2_{\hz}([-1,0])} \lesssim k.
\end{equation}
To see this, observe $|z|^{-p_k k^2} \in L^2_{z}([-1,0])$, and thus Cauchy-Schwarz gives
\begin{align*}
    \int_{-1}^{0} \big(\mr{\phi}'(\hz)\big)^2 d\hz &= p_k \int_{-1}^{0}|z|^{-p_k k^2} \big(\partial_z \mr{\phi}(z) \big)^2 dz  \\
    &\lesssim \|\partial_z \mr{\phi}\|^2_{L^4_z} \lesssim k^2.
\end{align*}
The expansions for the geometric quantities will now follow readily. Starting with the first bound in (\ref{eq:smallkbound8}), we integrate (\ref{eq:weightedlambda}) for $\hz \in [-1,-\frac12]$, apply the pointwise estimate (\ref{eq:smallktemp1}), and compute 
\begin{align*}
    \big| |\hz|^{k^2}\mr{\lambda}(\hz) - \frac{1}{2}\big| \lesssim \| \mr{\phi}' \|^2_{L^\infty([-1,0])}(1-|\hz|) + k \||\hz|^{k^2}\mr{\lambda} \|_{L^\infty([-1,0])}\|\mr{\phi}'\|_{L^\infty([-1,0])}(1-|\hz|) \lesssim k^2 (1-|\hz|). 
 \end{align*} 
 In the region $\hz \in [-\frac12,0]$ we again integrate (\ref{eq:weightedlambda}) and apply (\ref{eq:l2boundunrenormalized}) to give
 \begin{align*}
    \big| |\hz|^{k^2}\mr{\lambda}(\hz) - \frac{1}{2}\big| \lesssim \| \mr{\phi}' \|^2_{L^2_{\hz}([-1,0])} + k \||\hz|^{k^2}\mr{\lambda} \|_{L^\infty_{\hz}([-1,0])}\|\mr{\phi}'\|_{L^1_{\hz}([-1,0])} \lesssim k^2.
 \end{align*}
The first bound of (\ref{eq:smallkbound8}) follows. For the remaining two bounds in (\ref{eq:smallkbound8}) it suffices to estimate the terms appearing in right hand side of (\ref{eq:weightedlambda}). Similarly, (\ref{eq:smallkbound7}) follows by integrating the weighted equation (\ref{eq:monotonic_omega}) and using (\ref{eq:l2boundunrenormalized}).

To conclude the expansion (\ref{eq:smallkbound9}) for $\mr{r},$ we integrate the first bound in (\ref{eq:smallkbound8}), and apply the inequality (\ref{eq:ineqineq3}) with parameter $b \doteq 1-k^2 > \frac12$.

Turn now to $\mr{\nu}$. The identity (\ref{ss:alg1}) implies 
\begin{align*}
    \frac{d}{d \hz}\mr{\nu} &= \frac{d}{d \hz} (\hz \mr{\lambda} - \mr{r} ) = -|\hz|^{1-k^2}\frac{d}{d \hz}\big(|\hz|^{k^2}\mr{\lambda}\big) + \frac{k^2}{|\hz|^{k^2}}\big(|\hz|^{k^2}\mr{\lambda}\big).
\end{align*}
It follows that
\begin{align*}
    \big|\mr{\nu}(\hz)+\frac{1}{2}\big| \lesssim \|\frac{d}{d \hz}\big(|\hz|^{k^2}\mr{\lambda}\big) \|_{L^1_{\hz}([-1,0])} + k^2 \||\hz|^{k^2}\mr{\lambda} \|_{L^\infty([-1,0])}\||\hz|^{-k^2} \|_{L^1_{\hz}([-1,0])} \lesssim k^2.
\end{align*}

Finally we consider (\ref{ss:SSESF6}). It suffices to prove (\ref{eq:smallkbound10.5}) in $\hz \in [-1,-\frac{1}{2}]$, given the lower bound on $\mr{r}(\hz)$ outside of this domain. In the region $\hz \in [-1,-\frac{1}{2}]$ there is the pointwise bound on $\mr{\phi}'$ (\ref{eq:smallktemp1}), and thus
\begin{equation}
    \label{eq:tempderivativeofm}
    \sup_{\hz \in [-1,-\frac{1}{2}]}|\frac{d}{dz}\mr{m}| \lesssim k^2 \mr{r}^2(\hz).
\end{equation}
Integrating in $\hz$ from the axis and applying (\ref{eq:smallkbound9}), we conclude (\ref{eq:smallkbound10.5}).
\end{proof}

By more careful bookkeeping we may improve the result of the previous proposition, and compute the leading order term as $k \rightarrow 0$ for the rescaled scalar field quantity $k^{-1}\partial_z \mr{\phi}(z)$. In general, this quantity does not have a pointwise limit at $z=0$; however, the  $L^p_z([-1,0])$ limit is well defined, for any $p < \infty$. The following proposition records this leading order term, as well as relevant higher order estimates.

\begin{prop}
    \label{prop:expansionPhiM}
    We have the estimates
    \begin{align}
        \big\| \frac{1}{k}|\hz|^{k^2}\mr{\phi}'(\hz) - \frac{-1-\hz - \ln |\hz|}{(1-|\hz|)^2} \big\|_{L^p_{z}([-1,0])} &\lesssim_p k^2, \label{eq:smallkbound12}\\[\jot]
        \big\| \frac{1}{k^2}\mr{m}(\hz) - \frac{1}{4}\bigg(1+\hz- \frac{|\hz| \ln^2 |\hz|}{1+\hz} \bigg)\big\|_{L^\infty([-1,0])} &\lesssim k^2, \label{eq:smallkbound13} \\[\jot]
         \|  \mr{\phi}'' \|_{L^\infty([-1,-\frac{1}{2}])} + \||z|\partial_z^2 \mr{\phi} \|_{L^p_z([-1,0])}  &\lesssim_p k, \label{eq:smallkbound14} \\[\jot]
         \| |z|\partial_z^3 \mr{r} \|_{L^p_z([-1,0])} + \| |z|\partial_z^2 \mr{m} \|_{L^p_z([-1,0])} &\lesssim_p k^2. \label{eq:smallkbound15}
     \end{align}
\end{prop}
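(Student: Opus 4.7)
The first step is to observe that the scalar field equation (\ref{ss:SSESF5}) admits a clean integration. Multiplying through by $\mr{r}$ and using $\mr{r}' = \mr{\lambda}$ produces the conservation-type identity $(\mr{r}^2\mr{\phi}')' = -k\,\mr{\lambda}\mr{r}/\hz$, which combined with the axis condition $\mr{r}(-1)=0$ yields
\begin{equation*}
\mr{r}^2(\hz)\,\mr{\phi}'(\hz) \;=\; -k\int_{-1}^{\hz}\frac{\mr{\lambda}(\hz')\,\mr{r}(\hz')}{\hz'}\,d\hz'.
\end{equation*}
Substituting the leading-order expansions $\mr{\lambda}(\hz) = \tfrac12|\hz|^{-k^2} + O_{L^p}(k^2)$ and $\mr{r}(\hz) = \tfrac12(1-|\hz|) + O(k^2(1-|\hz|)^2)$ from Proposition \ref{prop:appendixLpbound}, the elementary computation $\tfrac14\int_{-1}^{\hz}(1+1/\hz')\,d\hz' = \tfrac14(1+\hz+\ln|\hz|)$ together with $\mr{r}^2 = \tfrac14(1-|\hz|)^2 + O(k^2(1-|\hz|)^3)$ delivers the leading profile in (\ref{eq:smallkbound12}); the remainder bound in $L^p_z$ follows after multiplying by $|\hz|^{k^2}$ and absorbing the three error sources (in $\mr{\lambda}$, $\mr{r}$, and the factor $|\hz|^{k^2}-1$) using Lemma \ref{lemma:inequality} and the integral estimate (\ref{eq:integralLPest}).

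For the mass expansion (\ref{eq:smallkbound13}), the plan is to integrate (\ref{ss:SSESF6}),
\begin{equation*}
\mr{m}(\hz) \;=\; \int_{-1}^{\hz}\frac{(1-\mr{\mu})\mr{r}^2(\mr{\phi}')^2}{2\mr{\lambda}}\,d\hz',
\end{equation*}
and insert the leading profiles $\mr{\phi}' \sim -k(1+\hz+\ln|\hz|)/(1-|\hz|)^2$, $\mr{r}^2\sim\tfrac14(1+\hz)^2$, $\mr{\lambda}\sim\tfrac12$, $1-\mr{\mu}\sim 1$. After the substitution $u=1+\hz'$ the integrand reduces to $(u+\ln(1-u))^2/u^2$, which splits into the three pieces $1$, $2\ln(1-u)/u$, and $\ln^2(1-u)/u^2$. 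The first contributes $1+\hz$, the second gives $-2\,\mathrm{Li}_2(1+\hz)$ by definition of the dilogarithm, and integration by parts on the third (using the partial fraction $1/[u(1-u)]=1/u+1/(1-u)$) produces $+2\,\mathrm{Li}_2(1+\hz) - |\hz|\ln^2|\hz|/(1+\hz)$. The dilogarithms cancel and the closed form in (\ref{eq:smallkbound13}) emerges. The $L^\infty$ error of size $k^2$ then follows from the $L^\infty$ control on the perturbations of the geometry together with the $L^2_{\hz}$ bound (\ref{eq:l2boundunrenormalized}) on $\mr{\phi}'$, which renders the error integrand $L^1$ in $\hz'$ with the correct smallness.

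The higher-derivative bounds (\ref{eq:smallkbound14})--(\ref{eq:smallkbound15}) come from differentiating the relevant ODEs and treating them as algebraic identities. Equation (\ref{eq:smallktemp2}) yields $\mr{\phi}''$ directly in terms of $\mr{\phi}'$, $\mr{\lambda}$, and $\mr{r}$; the pointwise bound $|\mr{\phi}'|\lesssim k$ on $[-1,-1/2]$ controls the first term in that norm, while the $L^p_z$ bound on $\partial_z\mr{\phi}$ from (\ref{backgroundbd:6}), with a $|z|$ weight absorbing the $1/\hz$ source, controls the second on all of $[-1,0]$. Differentiating (\ref{eq:weightedlambda}) once and (\ref{ss:SSESF6}) once produces expressions in $\mr{\phi}'$, $\mr{\phi}''$ and their squares, and the same weighted $L^p$ control closes (\ref{eq:smallkbound15}). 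The principal difficulty throughout is that errors of order $k^2$ must be measured against leading profiles that are themselves logarithmically singular at $\hz=0$ and carry factors $|\hz|^{-k^2}$; the crucial mechanism is the trade (\ref{eq:ineqineq4.5}), which converts such a factor into a small $k$ at the cost of a mild power of $|\hz|$, and the fact that (\ref{eq:smallkbound12}) and (\ref{eq:smallkbound14})--(\ref{eq:smallkbound15}) are stated in $L^p$ rather than $L^\infty$ is precisely what allows this trade to be executed.
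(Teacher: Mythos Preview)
Your proposal is correct and, for the scalar-field expansion (\ref{eq:smallkbound12}), takes a genuinely different route from the paper. You integrate the exact conservation law $(\mr{r}^2\mr{\phi}')' = -k\mr{\lambda}\mr{r}/\hz$ and then substitute the known expansions of $\mr{\lambda},\mr{r}$; the paper instead rewrites $\mr{\lambda}/\mr{r}$ as $1/(1-|\hz|)$ plus error and conjugates (\ref{eq:smallktemp2}) by the \emph{approximate} integrating factor $w(\hz)=(1-|\hz|)^2$, producing the integrated form (\ref{lpbounds:temp2}) with two explicit error integrals $\mathrm{I}(\hz)$ and $\mathrm{II}(\hz)$ that are estimated separately. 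Your setup is cleaner because $\mr{r}^2$ is the exact integrating factor, so no residual first-order term survives; the price is that the leading integral you write down still carries the factor $|\hz'|^{-k^2}$, and reducing it to the elementary expression $\tfrac14(1+\hz+\ln|\hz|)$ requires exactly the $L^p$ convergence $\frac{1}{k^2}(1-|\hz|^{k^2}) \to -\ln|\hz|$ that the paper isolates as Lemma~\ref{lem:lpconvergencetoLOG}. Your invocation of Lemma~\ref{lemma:inequality} and (\ref{eq:integralLPest}) for the ``factor $|\hz|^{k^2}-1$'' error is therefore correct in spirit --- those are the ingredients of that lemma --- but the step deserves a sentence more, since it is where Minkowski's integral inequality enters.

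For (\ref{eq:smallkbound13})--(\ref{eq:smallkbound15}) your outline coincides with the paper's: integrate (\ref{ss:SSESF6}) against the leading profile for $\mr{\phi}'$, then differentiate the relevant ODEs and estimate. Your explicit dilogarithm computation confirming the closed form in (\ref{eq:smallkbound13}) is a nice addition the paper omits. One small point: for the $L^\infty$ error in (\ref{eq:smallkbound13}) you need the error in $\mr{m}$ itself to be $O(k^4)$, so when you write that the $L^2_{\hz}$ bound on $\mr{\phi}'$ ``renders the error integrand $L^1$,'' make sure you are pairing one factor of the $O_{L^p}(k^2)$ error in $k^{-1}|\hz|^{k^2}\mr{\phi}'$ against the bounded leading profile $g(\hz)$, not against another $L^p$ function --- otherwise the product estimate does not close.
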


We will require a computational lemma:
\begin{lemma}
    \label{lem:lpconvergencetoLOG}
    Define the functions $f_k(\hz), g(\hz): [-1,0) \rightarrow \mathbb{R}$ 
    \begin{align*}
        f_k(\hz) &\doteq \frac{1}{k^2 q_k}\frac{(1-|\hz|^{k^2}) - k^2(1-|\hz|)   }{(1-|\hz|)^2}, \\[\jot]
        g(\hz) &\doteq \frac{-1-\hz - \ln |\hz|}{(1-|\hz|)^2}.
    \end{align*}
    Then $f_k \xrightarrow{k \rightarrow 0} g$ in $L^p_{\hz}([-1,0])$, for any $p\in[1,\infty)$.
\end{lemma}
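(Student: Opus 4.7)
The plan is to establish $L^p$ convergence via pointwise convergence plus a $k$-uniform dominating function, after which the conclusion will follow from the dominated convergence theorem. Throughout, it will be convenient to change variables to $t = |\hat{z}| \in [0,1]$ and write both functions as
\[
f_k(t) = \frac{1-t^{k^2}-k^2(1-t)}{k^2 q_k(1-t)^2}, \qquad g(t) = \frac{-\ln t - (1-t)}{(1-t)^2}.
\]

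First, I would verify pointwise convergence $f_k(t) \to g(t)$ for each fixed $t \in (0,1)$. Writing $1 - t^{k^2} = 1 - e^{k^2 \ln t} = -k^2 \ln t - \tfrac{1}{2}k^4(\ln t)^2 + O(k^6(\ln t)^3)$ and inserting into the expression for $f_k$ yields, after a line of algebra, $f_k(t) = q_k^{-1}g(t) + O(k^2)$, which gives pointwise convergence as $k \to 0$.

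Second, and this is the main technical step, I would produce a dominating function $G(t) \in L^p([0,1])$ with $|f_k(t)| \leq G(t)$ uniformly for all $k$ sufficiently small. For $t \in (0, \tfrac12]$, the denominator $(1-t)^2$ is bounded below, and the convexity inequality $e^x \geq 1+x$ applied with $x = k^2 \ln t$ gives $1 - t^{k^2} \leq -k^2 \ln t$, so
\[
|f_k(t)| \leq \frac{4}{q_k}\bigl(|\ln t| + 1\bigr),
\]
which is bounded independently of $k$ (for $k \leq \tfrac12$, say) by a function in $L^p([0,\tfrac12])$. For $t \in [\tfrac12, 1)$ the denominator degenerates, so I would instead use the integral representation
\[
1 - t^{k^2} - k^2(1-t) = k^2 q_k \int_0^{1-t}\int_0^u (1-v)^{k^2-2}\, dv\, du,
\]
obtained by differentiating $s \mapsto (1-s)^{k^2}$ twice. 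Since $(1-v)^{k^2-2}$ is bounded by an absolute constant for $v \in [0, \tfrac12]$ and $k \leq \tfrac12$, this representation shows $|f_k(t)| \leq C$ uniformly on $[\tfrac12, 1)$. Stitching these estimates together produces the required dominating function $G(t) = C(1 + |\ln t|)$, which lies in every $L^p([0,1])$ with $p < \infty$.

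With pointwise convergence and a $k$-uniform $L^p$-dominating function in hand, the dominated convergence theorem delivers $\|f_k - g\|_{L^p_{\hat{z}}([-1,0])} \to 0$. The only obstacle worth flagging is handling the endpoint $t=1$, where both the numerator and the denominator of $f_k$ vanish to second order; the integral representation above is what makes this cancellation uniform in $k$, and without it one would be stuck trying to Taylor-expand $1-t^{k^2}$ in a neighborhood of $t=1$ in a way that is $k$-uniform.
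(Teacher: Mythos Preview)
Your argument is correct. Both you and the paper split the interval at $|\hat z|=\tfrac12$ and invoke dominated convergence on the near-axis piece ($t\in[\tfrac12,1)$ in your variable), so that part is essentially the same; your second-order Taylor remainder device is a clean way to make the cancellation at $t=1$ uniform. The real difference is near the horizon ($t\in(0,\tfrac12]$). You produce a $k$-uniform dominating function $C(1+|\ln t|)$ and apply dominated convergence; the paper instead reduces to showing $k^{-2}(1-|\hat z|^{k^2})\to -\ln|\hat z|$ in $L^p$, writes the difference as an integral, and estimates its $L^p$ norm directly via Minkowski's integral inequality together with the elementary bound $|\hat z|^\delta(1-|\hat z|^{k^2})\lesssim_\delta k^2$. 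Your route is shorter and more elementary. The paper's route, however, actually yields a quantitative rate $\|f_k-g\|_{L^p}\lesssim_p k^2$ (at least on the near-horizon piece), which is what is implicitly invoked when the lemma is used to establish the $O(k^2)$ bound in \eqref{eq:smallkbound12}. Your dominated-convergence argument proves the lemma as stated but does not by itself give that rate; if you care about the rate, you could recover it cheaply by noting that on $[\tfrac12,1)$ the Taylor expansion of $1-t^{k^2}$ is uniform and gives a pointwise $O(k^2)$ error, while on $(0,\tfrac12]$ your bound $1-t^{k^2}\le -k^2\ln t$ can be refined to $|1-t^{k^2}+k^2\ln t|\le \tfrac12 k^4(\ln t)^2$ to the same effect.
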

\begin{proof}
In the domain $[-1,-\frac12]$, it is straightforward to show that convergence holds pointwise almost everywhere. An application of the dominated convergence theorem then implies the statement in $L^p_{\hz}([-1,-\frac12])$.

In the near-horizon region $[-\frac12,0]$, the factors proportional to $1-|\hz|$ converge pointwise, and it is enough to show that 
\begin{equation}
    \label{eq:Lpconvergencetolog:temp1}
    \frac{1}{k^2}(1-|\hz|^{k^2}) \xrightarrow{L^p_{\hz}([-\frac{1}{2},0])} -\ln |\hz|. 
\end{equation}
The strategy is to use an integral representation of both functions, and estimate the $L^p_{\hz}$ norm of the differences using Minkowski's integral inequality, extracting a small $k$-dependence. Write the difference of the functions in (\ref{eq:Lpconvergencetolog:temp1}) as 
\begin{align}
    \frac{1}{k^2}(1-|\hz|^{k^2}) + \ln|\hz| = \int_{-1}^{\hz}  \frac{|\hz'|^{k^2}-1}{|\hz'|}d\hz', 
\end{align}
and estimate (here, $\delta < \frac{1}{p}$ is arbitrary)
\begin{align*}
   \bigg( \int_{-1}^{0} \bigg( \int_{-1}^{\hz}  \frac{|\hz'|^{k^2}-1}{|\hz'|}d\hz'  \bigg)^p d\hz \bigg)^{\frac{1}{p}} &\leq  \int_{-1}^{0}  \bigg(\int_{\hz'}^{0}  \bigg(\frac{|\hz'|^{k^2}-1}{|\hz'|}\bigg)^p d\hz \bigg)^{\frac{1}{p}} d\hz' \\[\jot]
   &= \int_{-1}^{0} \frac{1-|\hz'|^{k^2}}{|\hz'|^{1-\frac{1}{p}}}         d\hz' \\[\jot]
    &\leq \sup_{\hz\in[-1,0]}\big||\hz|^{\delta}(1-|\hz|^{k^2}) \big|  \int_{-1}^{0} \frac{1}{|\hz'|^{1-\frac{1}{p}+\delta}} d\hz' \\[\jot]
    &\lesssim_{p,\delta} \sup_{\hz\in[-1,0]}\big||\hz|^{\delta}(1-|\hz|^{k^2}) \big| \\[\jot]
    &\lesssim_{p,\delta} k^2.
\end{align*}
In the final inequality we have used (\ref{eq:ineqineq4.5}).

\end{proof}

\begin{proof}[Proof of Proposition \ref{prop:expansionPhiM}]
We will show that the scalar field equation (\ref{eq:smallktemp2}) can be directly integrated up to error terms. First, given the expansions (\ref{eq:smallkbound8}), (\ref{eq:smallkbound9}) we write 
\begin{align}
    \frac{\mr{\lambda}}{\mr{r}} &= \frac{1}{|\hz|^{k^2}(1-|\hz|)} + \frac{1}{|\hz|^{k^2}}\mathcal{E}_k(\hz) \label{eq:lambdaoverr_exp0} \\ 
    &= \frac{1}{1-|\hz|} + \frac{1}{|\hz|^{k^2}} (\mathcal{G}_k(\hz)+\mathcal{E}_k(\hz)),\label{eq:lambdaoverr_exp}
\end{align}  
where $\mathcal{G}_k(\hz) \doteq \frac{1-|\hz|^{k^2}}{1-|\hz|}, $ and $\mathcal{E}_k(\hz)$ is an error term satisfying 
\begin{equation}
    \label{eq:lpbounderrornew1}
    \sup_{\hz' \in [-1,0]}|\mathcal{E}_k(\hz')| +
    \sup_{\hz' \in [-1,-\frac12]}|(1-|\hz|)\mathcal{E}'_k(\hz')| + \|\mathcal{E}'_k(\hz) \|_{L^2_{\hz}([-\frac12,0])} \lesssim k^2.
\end{equation}
Primed quantities $\mathcal{G}_k'$, $\mathcal{E}_k'$ denote derivatives with respect to $\hz$. As a consequence of the $L^p_{\hz}$ regularity of $\mathcal{E}_k'$, we can further write $\mathcal{E}_k(\hz) = \mathcal{E}_k(0) + \mathcal{F}_k(\hz)$ for a constant $|\mathcal{E}_k(0)| \lesssim k^2$ and a function $\mathcal{F}_k(\hz)$ satisfying    
\begin{align}
   |\mathcal{F}_k(\hz)| &\lesssim \|\mathcal{E}_k'(\hz) \|_{L^1_{\hz}([\hz,0])}  \nonumber \\
   &\lesssim \||\hz|^{k^2}\mathcal{E}_k'(\hz) \|_{L^2_{\hz}([\hz,0])}\||\hz|^{-k^2} \|_{L^2_{\hz}([\hz,0])} \nonumber \\
   &\lesssim k^2 |\hz|^{\frac12 - k^2}. \label{eq:estmathcalF}
\end{align}
We will need the following additional estimates, which are consequences of (\ref{eq:smallktemp1}), (\ref{eq:l2boundunrenormalized}), (\ref{eq:integralLPest}), and Lemma \ref{lemma:inequality}.
\begin{equation}
    \label{eq:lpbounderrornew3}
    \sup_{\hz \in [-1,-\frac{1}{2}]}|\mr{\phi}'(\hz)| + \|\mr{\phi}'(\hz) \|_{L^2_{\hz}([-1,0])} \lesssim k,
\end{equation}
\begin{equation}
    \label{eq:lpbounderrornew6}
    \sup_{\hz\in [-\frac{1}{2}, \hz]}|G_k(\hz')| \lesssim 1-|\hz|^{k^2},
\end{equation}
\begin{equation}
    \label{eq:lpbounderrornew4}
    \sup_{\hz\in [-1,-\frac{1}{2}]}|G_k(\hz)| + \sup_{\hz\in [-1,-\frac{1}{2}]}|(1-|\hz|)G_k'(\hz')|+\|G_k(\hz)\|_{L^p_{\hz}([-1,0])} \lesssim_p k^2,
\end{equation}
Returning now to the study of (\ref{eq:smallktemp2}), we may insert the decomposition (\ref{eq:lambdaoverr_exp}) and write 
\begin{equation*}
    \frac{d}{d\hz}\mr{\phi}'= -\frac{2}{1-|\hz|}\mr{\phi}' +k \frac{1}{|\hz|(1-|\hz|)}  - \frac{2}{|\hz|^{k^2}}(\mathcal{G}_k(\hz)+\mathcal{E}_k(\hz))\mr{\phi}' + \frac{k}{|\hz|^{1+k^2}} (\mathcal{G}_k(\hz)+\mathcal{E}_k(\hz)).
\end{equation*}
Conjugating through by $w(\hz) \doteq (1-|\hz|)^2$ yields
\begin{equation}
    \frac{d}{d\hz} \big( w(\hz) \mr{\phi}'\big) = k \frac{1-|\hz|}{|\hz|} - \frac{2}{|\hz|^{k^2}}w(\hz) (\mathcal{G}_k(\hz)+\mathcal{E}_k(\hz))\mr{\phi}'+ \frac{k}{|\hz|^{1+k^2}}w(\hz) (\mathcal{G}_k(\hz)+\mathcal{E}_k(\hz)). \label{lpbounds:temp1.9}
\end{equation}
We now integrate this expression for $\hz' \in [-1,\hz]$. Observing that the contribution of $\mathcal{G}_k(\hz)$ to the final term of (\ref{lpbounds:temp1.9}) may be explicitly integrated, we find 
\begin{align}
    \label{lpbounds:temp2}
    \mr{\phi}'(\hz) =& \frac{1}{k q_k}\frac{(1-|\hz|^{k^2}) - k^2(1-|\hz|)   }{|\hz|^{k^2}(1-|\hz|)^2} - \underbrace{\frac{1}{w(\hz)}\int_{-1}^{\hz} \frac{2}{|\hz'|^{k^2}}w(\hz') (\mathcal{G}_k(\hz)+\mathcal{E}_k(\hz))\mr{\phi}'(\hz')d\hz'}_{\text{I}(\hz)} \nonumber \\
    &+ \underbrace{\frac{1}{w(\hz)} \int_{-1}^{\hz} \frac{k}{|\hz'|^{1+k^2}}w(\hz')\mathcal{E}_k(\hz) d\hz'}_{\text{II}(\hz)}.
\end{align}
We first estimate the function denoted $\text{I}(\hz)$. To begin, assume $\hz < -\frac{1}{2}$. In this region it suffices to apply the pointwise bounds (\ref{eq:lpbounderrornew1}), (\ref{eq:lpbounderrornew3}), (\ref{eq:lpbounderrornew4}) and calculate
\begin{align*}
    |\text{I}(\hz)| &\lesssim \frac{1}{w(\hz)}\int_{-1}^{\hz} \frac{2}{|\hz'|^{k^2}}w(\hz')|\mathcal{G}_k + \mathcal{E}_k|\mr{\phi}'(\hz')d\hz' \\[\jot]
    & \lesssim k^3 |\hz|^{-k^2} \lesssim k^3.
\end{align*}
Next consider $\hz \! \geq \! -\frac{1}{2}$. It follows $w(\hz) \! \sim \! 1$ independently of $k$. Employing the integrated bounds (\ref{eq:lpbounderrornew1}), (\ref{eq:lpbounderrornew3}), (\ref{eq:lpbounderrornew4}) and Cauchy-Schwarz, calculate
\begin{align}
    |\text{I}(\hz)| &\lesssim \frac{1}{w(\hz)}\int_{-1}^{\hz} \frac{2}{|\hz'|^{k^2}}w(\hz')|\mathcal{G}_k + \mathcal{E}_k|\mr{\phi}'(\hz')d\hz' \nonumber \\[\jot]
    &\lesssim  \frac{1}{w(\hz)}\int_{-1}^{-\frac{1}{2}} \frac{2}{|\hz'|^{k^2}}w(\hz')|\mathcal{G}_k + \mathcal{E}_k|\mr{\phi}'(\hz')d\hz' + \frac{1}{w(\hz)}\int_{-\frac{1}{2}}^{\hz} \frac{2}{|\hz'|^{k^2}}w(\hz')|\mathcal{G}_k + \mathcal{E}_k|\mr{\phi}'(\hz')d\hz' \nonumber \\[\jot]
    &\lesssim k^3 +   \| |\hz|^{-\frac{1}{2}k^2} \|_{L^4_{\hz}([-\frac{1}{2},0])} \| \mathcal{G}_k + \mathcal{E}_k \|_{L^4_{\hz}([-\frac{1}{2},0])} \| \mr{\phi}' \|_{L^2_{\hz}([-\frac{1}{2},0])} \nonumber\\[\jot]
    & \lesssim k^3. \label{eq:tempestI}
\end{align}
To estimate the term $\text{II}(\hz)$, a similar argument applies in $\hz \! < \! -\frac{1}{2}$ to give $|\text{II}(\hz)| \lesssim k^3$. In the remaining domain we decompose $\mathcal{E}_k(\hz) = \mathcal{E}_k(0) + \mathcal{F}_k(\hz)$ and apply (\ref{eq:estmathcalF}): 
\begin{align}
    \text{II}(\hz) &= k\frac{1}{w(\hz)}\int_{-1}^{-\frac{1}{2}}\frac{1}{|\hz'|^{1+k^2}}w(\hz')\mathcal{E}_k(\hz') d\hz' + k\frac{1}{w(\hz)}\int_{-\frac12}^{\hz}\frac{1}{|\hz'|^{1+k^2}}w(\hz')\mathcal{E}_k(\hz') d\hz' \nonumber \\[\jot]
    &= O_{L^\infty}(k^3) + k \mathcal{E}_k(0) \frac{1}{w(\hz)}\int_{-\frac12}^{\hz}\frac{1}{|\hz'|^{1+k^2}}w(\hz') d\hz' + k\frac{1}{w(\hz)}\int_{-\frac12}^{\hz}\frac{1}{|\hz'|^{1+k^2}}w(\hz')\mathcal{F}_k(\hz') d\hz' \nonumber \\[\jot] 
    &= O_{L^\infty}(k^3) + k \mathcal{E}_k(0) \frac{1}{w(\hz)}\int_{-\frac12}^{\hz}\frac{1}{|\hz'|^{1+k^2}}w(\hz') d\hz'. \label{eq:tempestII_0}
\end{align}
The final integral may be estimated to give 
\begin{equation}
    \label{eq:tempestII}
   |\text{II}(\hz)| \lesssim O_{L^\infty}(k^3) + k |\hz|^{-k^2}(1-|\hz|^{k^2}).
\end{equation}
It remains to take the $L^p_z$ limit as $k\rightarrow 0$ of $\frac{1}{k}|\hz|^{k^2}\mr{\phi}(\hz) $, by examining the various terms of (\ref{lpbounds:temp2}). The first term converges in $L^p_z([-1,0])$ to the desired limit, by Lemma \ref{lem:lpconvergencetoLOG}. For the integral expressions we apply (\ref{eq:tempestI}), (\ref{eq:tempestII}), giving
\begin{equation}
    \|\frac{1}{k}|\hz|^{k^2}\big(\text{I}(\hz) + \text{II}(\hz)\big)\|_{L^p_{\hz}([-1,0])} \lesssim_p k^2.
\end{equation}
We therefore conclude (\ref{eq:smallkbound12}). 

The expansion for $\mr{m}(\hz)$ will now follow as an immediate consequence. From (\ref{ss:SSESF6}) we compute 
\begin{align*}
   |\hz|^{k^2} \frac{d}{d\hz} \big(k^{-2}\mr{m}(\hz)\big) &= |\hz|^{k^2}\frac{\mr{r}^2}{2\mr{\lambda}}(1-\mr{\mu})(k^{-1}\mr{\phi}')^2 \\
    &= \frac{1}{4}\bigg(\frac{1+\hz+\ln |\hz|}{1+\hz}\bigg)^2 + O_{L^p_{\hz}}(k^2).
\end{align*}
Dividing by $|\hz|^{k^2}$ and integrating gives 
\begin{align*}
    k^{-2}\mr{m}(\hz) = \frac{1}{4}\bigg(1+\hz- \frac{|\hz| \ln^2 |\hz|}{1+\hz} \bigg)+ O_{L^\infty}(k^2).
\end{align*}
We next discuss the second derivative estimate (\ref{eq:smallkbound14}). We have the qualitative information $\mr{\phi}'' \in C^0_{\hz}([-1,-\frac{1}{2}])$, and attempt to derive a pointwise bound with proper $k$-dependence. Differentiating (\ref{lpbounds:temp2}) and estimating (observe we have pointwise bounds for $\mathcal{G}_k$, $\mathcal{E}_k$ in the region $\hz \in [-1,-\frac{1}{2}]$) yields
\begin{align*}
    |\mr{\phi}''(\hz)| \lesssim k\frac{1-|\hz|^2 - 2|\hz|\ln |\hz|}{|\hz|(1-|\hz|)^3} + O_{L^\infty}(k^3).
\end{align*}
We conclude (\ref{eq:smallkbound14}).

To see the $L^p_z$ bound on the weighted quantity $|z|\partial_z^2 \mr{\phi}$, it suffices to estimate in the near-horizon region $\hz \in [-\frac{1}{2},0]$. Expanding the $\partial_z$ derivatives, this quantity is schematically given by  
\begin{align*}
    |z|\partial_z^2 \mr{\phi} \sim |\hz|^{1+k^2}\mr{\phi}'' + |\hz|^{k^2}\mr{\phi}',
\end{align*}
and latter term has already been shown to be of size $k$ in $L^p_{\hz}$. By inspection of the right hand side of (\ref{ss:SSESF5}), and the fact that we are working in a region with a positive lower bound on $\mr{r}(\hz)$, a similar estimate for the first term follows. 

Finally, (\ref{eq:smallkbound15}) is a result of commuting (\ref{eq:weightedlambda}), (\ref{ss:SSESF6}) by $\partial_{\hz}\big(|\hz|^{k^2} \cdot \big) $ and inductively applying bounds.
\end{proof}

From the proof of Proposition \ref{prop:expansionPhiM} we may extract additional representation formulas for $\frac{\mr{\lambda}}{\mr{r}}(\hz)$ and $\mr{\phi}'(\hz)$, which will be useful in the following. 

\begin{cor}
    \label{cor:expnear0}
    There exist constants $C_{i,k}$, $i \leq 2$, bounded independently of $k$ small such that for $\hz \in [-\frac12, 0]$ we have 
    \begin{align}
        \frac{\mr{\lambda}}{\mr{r}}(\hz) &= \frac{C_{1,k}}{|\hz|^{k^2}} + O_{L^\infty}(|\hz|^{\frac12 - 2k^2}), \label{eq:newexpansions1}\\[\jot]
        k \mr{\phi}'(\hz) &= \frac{C_{1,k}}{|\hz|^{k^2}} - (p_k + k^2 C_{2,k}) + O_{L^\infty}(k^4) + O_{L^\infty}( |\hz|^{1-k^2}). \label{eq:newexpansions2}
    \end{align}
\end{cor}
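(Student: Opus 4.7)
The plan is to read both expansions directly off the representation formulas (\ref{eq:lambdaoverr_exp0}) and (\ref{lpbounds:temp2}) derived in the proof of Proposition \ref{prop:expansionPhiM}, but performing a sharper Taylor expansion near $\hz=0$ rather than discarding all subleading contributions into $L^p_{\hz}$ errors. The main point is that both formulas feature an $\mathcal{E}_k(0)$ constant that contributes in exactly the same way to the $|\hz|^{-k^2}$ leading term, which is what forces the coefficient $C_{1,k}$ to be common to both expansions.

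For the first expansion, I start from (\ref{eq:lambdaoverr_exp0}) and use the decomposition $\mathcal{E}_k(\hz) = \mathcal{E}_k(0) + \mathcal{F}_k(\hz)$ together with the pointwise bound $|\mathcal{F}_k(\hz)| \lesssim k^2 |\hz|^{\frac12 - k^2}$ established in (\ref{eq:estmathcalF}). Writing $(1-|\hz|)^{-1} = 1 + O(|\hz|)$ for $|\hz| \leq \tfrac12$ gives
\[
\frac{\mr\lambda}{\mr r}(\hz) = \frac{1 + \mathcal{E}_k(0)}{|\hz|^{k^2}} + O(|\hz|^{1-k^2}) + \frac{\mathcal{F}_k(\hz)}{|\hz|^{k^2}},
\]
and setting $C_{1,k} \doteq 1 + \mathcal{E}_k(0) = 1 + O(k^2)$ yields (\ref{eq:newexpansions1}), since $|\hz|^{1-k^2}$ is absorbed into the weaker $|\hz|^{\frac12 - 2k^2}$ error.

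For the second expansion I return to (\ref{lpbounds:temp2}) and multiply by $k$. The explicit leading term can be expanded algebraically: after splitting the numerator as $(1-k^2) - |\hz|^{k^2} + k^2 |\hz|$, dividing by $q_k |\hz|^{k^2}(1-|\hz|)^2$, and using the identity $p_k q_k = 1$, a short computation gives
\[
\frac{1}{q_k}\frac{(1-|\hz|^{k^2}) - k^2(1-|\hz|)}{|\hz|^{k^2}(1-|\hz|)^2} = \frac{1}{|\hz|^{k^2}} - p_k + O(|\hz|^{1-k^2}) + O(k^2|\hz|^{1-k^2}).
\]
The term $-k\mathrm{I}(\hz) = O(k^4)$ follows from the pointwise bound $|\mathrm{I}(\hz)| \lesssim k^3$ already established. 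For $k\,\mathrm{II}(\hz)$, I use (\ref{eq:tempestII_0}) and evaluate the integral $\int_{|\hz|}^{1/2} u^{-1-k^2}(1-u)^2\, du$ in closed form; its leading singular part is $k^{-2}(|\hz|^{-k^2} - 2^{k^2})$, while the remaining bounded contributions are $O(1) + O(|\hz|^{1-k^2})$. Multiplying by $k^2 \mathcal{E}_k(0) /(1-|\hz|)^2$ and Taylor expanding $(1-|\hz|)^{-2}$ produces a contribution $\mathcal{E}_k(0)|\hz|^{-k^2}$ to the leading coefficient and an $O(k^2)$ constant, plus $O(k^2|\hz|^{1-k^2})$ errors; the $\mathcal{F}_k$ contribution is $O(k^4)$ by the same kind of estimate as for $\mathrm{I}$.

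Assembling, the coefficient of $|\hz|^{-k^2}$ becomes exactly $1 + \mathcal{E}_k(0) = C_{1,k}$, matching the first expansion, and the constant part takes the form $-p_k - \mathcal{E}_k(0)\cdot 2^{k^2} = -(p_k + k^2 C_{2,k})$ with $C_{2,k} \doteq k^{-2}\mathcal{E}_k(0)\cdot 2^{k^2}$ bounded uniformly in $k$ small. The only mildly delicate bookkeeping is ensuring this identification of the two $C_{1,k}$'s; nothing prevents one from defining separate constants in the two formulas, but the explicit computation above shows they coincide, which is the content of the corollary. No new estimates beyond those already in the proof of Proposition~\ref{prop:expansionPhiM} are needed.
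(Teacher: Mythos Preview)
Your argument is correct, and it differs from the paper's in exactly the step you flag as ``mildly delicate'': establishing that the coefficient of $|\hz|^{-k^2}$ in the $\mr{\phi}'$ expansion coincides with the $C_{1,k}$ from the $\mr\lambda/\mr r$ expansion. You do this by direct computation---tracking $\mathcal E_k(0)$ through both formulas and observing that each produces $1+\mathcal E_k(0)$ as the singular coefficient. The paper instead does \emph{not} identify the coefficient in \eqref{eq:expnew:temp1} explicitly; it labels it $D_{1,k}$ and then invokes the self-similar ODE \eqref{ss:SSESF5}. Rewriting that equation as one for $|\hz|^{k^2}\mr{\phi}'$ and inserting both expansions gives
\[
\partial_{\hz}\big(|\hz|^{k^2}\mr{\phi}'(\hz)\big) = \frac{k}{|\hz|}(C_{1,k}-D_{1,k}) + O_{L^\infty}(|\hz|^{-1+k^2}),
\]
and since $|\hz|^{k^2}\mr{\phi}'$ is known to have a finite limit at $\hz=0$, the non-integrable $|\hz|^{-1}$ term must vanish, forcing $C_{1,k}=D_{1,k}$. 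Your route is more elementary and self-contained, avoiding a return to the ODE system; the paper's route is structural---it shows the matching is forced by the equation itself, without needing to chase the precise value of $\mathcal E_k(0)$ through the integral in $\mathrm{II}(\hz)$.
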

\begin{proof}
    The expansion (\ref{eq:newexpansions1}) for some constant $C_{1,k}$ follows from (\ref{eq:lambdaoverr_exp0}), Taylor's theorem, and (\ref{eq:estmathcalF}).

    To see (\ref{eq:newexpansions2}), we return to the integrated wave equation (\ref{lpbounds:temp2}). As in the above proof, $|\text{I}(\hz)| \lesssim k^3$ holds. For $\text{II}(\hz)$ we consider (\ref{eq:tempestII_0}), and explicitly evaluate the integral term. It follows that we may write (recall $|\mathcal{E}_k(0)| \lesssim k^2$)
    \begin{equation}
        \text{II}(\hz) = \frac{d_{1,k}}{|\hz|^{k^2}}  + k d_{2,k} +   O_{L^\infty}( |\hz|^{1-k^2}),
    \end{equation}
    where $d_{i,k}$ are constants bounded independently of $k$. From (\ref{lpbounds:temp2}) we may write, after Taylor expanding the first term about $\hz =0$, 
    \begin{equation}
        \label{eq:expnew:temp1}
        k \mr{\phi}'(\hz) = \frac{c_{1,k}+d_{1,k}}{|\hz|^{k^2}} - (p_k - k^2 d_{2,k})+ O_{L^\infty_{\hz}}( |\hz|^{1-k^2}) + O_{L^\infty}(k^4),
    \end{equation}
    for a constant $c_{1,k}$ which is moreover bounded independently of $k$. In order to conclude (\ref{eq:newexpansions2}), we show that the coefficient of the singular $|\hz|^{-k^2}$ term appearing in (\ref{eq:expnew:temp1}) coincides with $C_{1,k}$. Label this currently undetermined coefficient $D_{1,k}$. We show that the equality $C_{1,k} = D_{1,k}$ is forced by $k$-self-similarity. By (\ref{ss:SSESF5}), it follows that $|\hz|^{k^2}\mr{\phi}'(\hz)$ satisfies the equation
    \begin{equation*}
        \partial_{\hz}\big(|\hz|^{k^2}\mr{\phi}'(\hz) \big) = \frac{k}{|\hz|}\Big(\frac{|\hz|^{k^2}\mr{\lambda}(\hz)}{\mr{r}(\hz)} - k |\hz|^{k^2}\mr{\phi}'(\hz) \Big) - \frac{2\mr{\lambda}(\hz)}{\mr{r}(\hz)}\mr{\phi}'(\hz).
    \end{equation*} 
    The latter term may be estimated by $|\hz|^{-2k^2}$ as $\hz \rightarrow 0$, and is therefore integrable. Inserting the expansions (\ref{eq:newexpansions1})--(\ref{eq:newexpansions2}) (with coefficient $D_{1,k}$) shows that  
    \begin{equation*}
        \partial_{\hz}\big(|\hz|^{k^2}\mr{\phi}'(\hz) \big) = \frac{k}{|\hz|}(C_{1,k} - D_{1,k}) + O_{L^{\infty}_{\hz}}(|\hz|^{-1+k^2}).
    \end{equation*}
    In order for $|\hz|^{k^2}\mr{\phi}(\hz)$ to have a finite limit as $\hz \rightarrow 0$, it follows that $C_{1,k} - D_{1,k}=0$, as desired.
\end{proof}

The significance of the expansions in Corollary (\ref{cor:expnear0}) is the identification of the leading order behavior of the constant term in (\ref{eq:newexpansions2}); in particular, it is non-vanishing for $k$ small. As discussed in \cite[Appendix A]{singh}, this coefficient appears in the calculation of the second order renormalized derivative $\partial_z^2 \mr{\phi}(z)$ near $z=0$. The analysis of that paper, which relied on local asymptotics for solutions to (\ref{ss:SSESF1})--(\ref{ss:SSESF7}) near its critical points, was unable to determine the value of this coefficient, which carries information about the \textit{global} shooting problem connecting regular solutions at the axis to those at $\{z=0\}$. With this new information, we establish the blowup (\ref{eq:dvsquaredblowup}). The remaining statements in Lemma \ref{lemma:backgroundprelims1} regarding sharp regularity of all double-null unknowns then follow directly from (\ref{ss:SSESF1})--(\ref{ss:SSESF7}).

\begin{prop}
    \label{prop:dvsquaredblowupproof}
    For $k$ sufficiently small, (\ref{eq:dvsquaredblowup}) holds, i.e. 
    \begin{equation}
        \partial_z^2 \mr{\phi}(z) \sim |z|^{-1+p_k k^2}.
    \end{equation}
\end{prop}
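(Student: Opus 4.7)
The plan is to compute $\partial_z^2 \mr{\phi}(z)$ by combining the second-order ODE (\ref{ss:SSESF5}) for $\mr{\phi}'(\hat{z})$ with the sharp leading-order expansions of Corollary \ref{cor:expnear0} for $\mr{\lambda}/\mr{r}$ and $\mr{\phi}'$, and to show that the dangerous singularities cancel, leaving a non-zero coefficient in front of $|z|^{-1+p_k k^2}$.

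First, using the coordinate change $\partial_z = p_k |\hat{z}|^{k^2}\partial_{\hat{z}}$ twice and differentiating the weight, I would write
\begin{equation*}
    \partial_z^2 \mr{\phi}(z) \;=\; p_k^2 |\hat{z}|^{2k^2}\mr{\phi}''(\hat{z}) \;-\; p_k^2 k^2 |\hat{z}|^{2k^2-1}\mr{\phi}'(\hat{z}).
\end{equation*}
From (\ref{ss:SSESF5}), the ODE for $\mr{\phi}''$ reads $\mr{\phi}''(\hat{z}) = -\tfrac{2\mr{\lambda}}{\mr{r}}\mr{\phi}'(\hat{z}) + \tfrac{k\mr{\lambda}}{\mr{r}|\hat{z}|}$. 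Substituting the expansions
\begin{align*}
    \tfrac{\mr{\lambda}}{\mr{r}}(\hat{z}) &= \tfrac{C_{1,k}}{|\hat{z}|^{k^2}} + O(|\hat{z}|^{1/2-2k^2}), \\
    \mr{\phi}'(\hat{z}) &= \tfrac{C_{1,k}}{k|\hat{z}|^{k^2}} - \tfrac{p_k+k^2 C_{2,k}}{k} + O(k^3) + O(|\hat{z}|^{1-k^2}/k),
\end{align*}
from (\ref{eq:newexpansions1})--(\ref{eq:newexpansions2}) yields
\begin{equation*}
    \mr{\phi}''(\hat{z}) \;=\; \tfrac{k C_{1,k}}{|\hat{z}|^{1+k^2}} \;-\; \tfrac{2C_{1,k}^2}{k|\hat{z}|^{2k^2}} \;+\; \tfrac{2C_{1,k}(p_k+k^2 C_{2,k})}{k|\hat{z}|^{k^2}} \;+\; \mathcal{E}(\hat{z}),
\end{equation*}
where $\mathcal{E}(\hat{z})$ consists of products of error terms and is less singular than $|\hat{z}|^{-1-k^2}$ (one can track the worst term as $O(k^{-1}|\hat{z}|^{-1/2-3k^2})$).

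Second, I would plug into the formula for $\partial_z^2 \mr{\phi}(z)$. The decisive calculation is that the two most singular contributions exactly cancel:
\begin{equation*}
    p_k^2 |\hat{z}|^{2k^2}\cdot\tfrac{k C_{1,k}}{|\hat{z}|^{1+k^2}} \;+\; \Bigl(-p_k^2 k^2 |\hat{z}|^{2k^2-1}\Bigr)\cdot\tfrac{C_{1,k}}{k|\hat{z}|^{k^2}} \;=\; p_k^2 k C_{1,k}|\hat{z}|^{k^2-1} - p_k^2 k C_{1,k}|\hat{z}|^{k^2-1} \;=\; 0.
\end{equation*}
This cancellation is precisely what Corollary \ref{cor:expnear0} was engineered for: the \emph{same} constant $C_{1,k}$ appears as the coefficient of $|\hat{z}|^{-k^2}$ in both $\mr{\lambda}/\mr{r}$ and $k\mr{\phi}'$. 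The remaining contribution from the $-p_k^2 k^2|\hat{z}|^{2k^2-1}$ factor hitting the constant term of $\mr{\phi}'$ gives, after using the identity $2k^2 - 1 = p_k k^2 - 1$ via $p_k = 1 + p_k k^2$,
\begin{equation*}
    p_k^2 k(p_k + k^2 C_{2,k})|z|^{-1+p_k k^2},
\end{equation*}
which is the claimed leading term. Its coefficient equals $k p_k^3 + O(k^3)$, hence is strictly positive (and of order $k$) for $k$ sufficiently small.

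Third, I would verify that the remaining contributions are subleading as $z \to 0$. The bounded term $-2p_k^2 C_{1,k}^2/k$ and the term of order $|z|^{p_k k^2}/k$ are both smaller than $k|z|^{-1+p_k k^2}$ for $|z|$ small; the contributions of $|\hat{z}|^{2k^2}\mathcal{E}(\hat{z})$ are less singular than $|\hat{z}|^{k^2-1} = |z|^{-1 + p_k k^2 - p_k k^2 + \ldots}$, wait—more carefully, each error is at worst $|\hat{z}|^{-1/2-3k^2}/k$, and multiplying by $|\hat{z}|^{2k^2}$ gives $|\hat{z}|^{-1/2-k^2}/k$, which is much less singular than $|\hat{z}|^{2k^2-1}$ as $\hat{z}\to 0$. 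Combining the upper and lower bounds on the leading coefficient produces $|\partial_z^2 \mr{\phi}(z)| \asymp |z|^{-1+p_k k^2}$ in a neighborhood of $z=0$.

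The main obstacle is conceptual rather than computational: one must confirm that the singular $|\hat{z}|^{k^2-1}$ contribution cancels \emph{exactly}, which is why the analysis invests effort in identifying the leading constant $C_{1,k}$ to be the \emph{same} in the two expansions of Corollary \ref{cor:expnear0}. Once this matching is in hand, extracting the next-order term $p_k^2 k(p_k + k^2 C_{2,k})|z|^{-1+p_k k^2}$ and confirming its non-vanishing for small $k$ is routine.
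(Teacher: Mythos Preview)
Your proposal is correct and follows essentially the same route as the paper. The paper cites a formula from an external reference expressing $\partial_z^2\mr{\phi}$ as $k|z|^{-1+p_kk^2}\bigl(\tfrac{\mr{\lambda}}{\mr{r}}-k\mr{\phi}'\bigr)$ plus a bounded term, and then invokes Corollary~\ref{cor:expnear0} to see that the singular $C_{1,k}|\hat{z}|^{-k^2}$ contributions cancel in $\tfrac{\mr{\lambda}}{\mr{r}}-k\mr{\phi}'$, leaving the non-vanishing constant $p_k+O(k^2)$. You instead derive that same formula directly from the chain rule and the ODE (\ref{ss:SSESF5}), which is a self-contained substitute for the cited computation; the decisive cancellation and the identification of the leading coefficient are identical to the paper's argument.

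One cosmetic slip: the stated ``identity $2k^2-1 = p_kk^2-1$'' is false as written. What you actually use (and what is correct) is the conversion $|\hat{z}|^{2k^2-1} = |z|^{p_k(2k^2-1)} = |z|^{-1+p_kk^2}$, which follows from $p_k = 1 + p_kk^2$. Your conclusion is unaffected.
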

\begin{proof}
    By the computation in \cite[Lemma A.6]{singh}, up to constants bounded independently of $k$ we have
    \begin{equation*}
        \partial_z^2 \mr{\phi}(z) = k |z|^{-1+p_k k^2}\big(\frac{\mr{\lambda}}{\mr{r}}(\hz) - k \mr{\phi}'(\hz)\big) + \frac{(|\hz|^{k^2}\mr{\lambda}(\hz))(|\hz|^{k^2}\mr{\phi}'(\hz))}{\mr{r}(\hz)}.
    \end{equation*}
    The second term remains bounded up to $z = 0$, although it grows to size $k^{-1}$. The term responsible for the limited regularity is the first, which by Corollary \ref{cor:expnear0} has the form
    \begin{equation*}
         - k \big(p_k + O_{L^\infty}(k^2) + O_{L^\infty}(|\hz|^{\frac12-2k^2}) \big) |z|^{-1+p_k k^2}.
    \end{equation*}
    In particular, for $k$ sufficiently small this coefficient is non-zero as $z \rightarrow 0$.
\end{proof}

To conclude this section, we record various additional estimates.

\begin{lemma}
    The following bound holds:
    \begin{equation}
        \|\frac{\mr{m}}{\mr{r}^3}\|_{C^1_{z}([-1,-\frac{1}{2}])}  + \|\partial_z \big(\frac{\mr{m}}{\mr{r}^3} \big) \|_{L^p_z([-1,0])} \lesssim_p k^2 \label{eq:smallkbound10.6}.
    \end{equation}
    \end{lemma}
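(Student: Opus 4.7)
The plan is to derive an explicit formula for $\partial_z(\mr{m}/\mr{r}^3)$ by differentiating and substituting the mass equation (\ref{ss:SSESF6}). Using $\partial_z = p_k|\hz|^{k^2}\partial_{\hz}$ and the identity $\partial_z\mr{\phi} = p_k|\hz|^{k^2}\mr{\phi}'$, a short computation gives
\begin{equation*}
    \partial_z\Big(\frac{\mr{m}}{\mr{r}^3}\Big) = \frac{1}{\mr{r}}\bigg[\frac{(1-\mr{\mu})(\partial_z \mr{\phi})^2}{2p_k(|\hz|^{k^2}\mr{\lambda})} - 3(\partial_z \mr{r})\,\frac{\mr{m}}{\mr{r}^3}\bigg] \;\doteq\; \frac{S(z)}{\mr{r}}.
\end{equation*}
The estimates will then follow by analyzing $S/\mr{r}$ separately in the near-horizon region $\{\hz \in [-1/2,0]\}$, where $\mr{r}$ is bounded below, and the near-axis region $\{\hz \in [-1,-1/2]\}$, where $\mr{r}$ degenerates linearly.

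In the near-horizon region, $\mr{r}\gtrsim 1$ by (\ref{backgroundbd:2}) and $|\hz|^{k^2}\mr{\lambda}\sim 1/2$ by (\ref{eq:smallkbound8}). The $L^p$-smallness $\|\partial_z\mr{\phi}\|_{L^p_z}\lesssim_p k$ from Proposition \ref{prop:appendixLpbound} gives $\|(\partial_z\mr{\phi})^2\|_{L^{p/2}_z}\lesssim_p k^2$, while the pointwise bound $|\mr{m}/\mr{r}^3|\lesssim k^2$ from (\ref{eq:smallkbound10.5}) controls the remaining contribution to $S$. A direct application of H\"older's inequality then yields $\|\partial_z(\mr{m}/\mr{r}^3)\|_{L^p_z(\{\hz \in [-1/2,0]\})}\lesssim_p k^2$.

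The near-axis region is the hard part: the factor $1/\mr{r}$ is not $L^p$-integrable in the $z$-variable near $\hz = -1$, so a $k^2$-smallness bound rests both on the cancellation $S(-1) = O(k^4)$ and on a quantitative propagation of this vanishing with the correct $k$-scaling. The evaluation of $S$ at the axis uses the data (\ref{ss:SSESFinitval}) together with the leading expansions $\mr{m}(\hz) = \tfrac{k^2}{48}(1+\hz)^3 + O((1+\hz)^4)$, $\mr{r}(\hz) = \tfrac{1}{2}(1+\hz) + O((1+\hz)^2)$ and $\partial_z\mr{\phi}(-1) = p_k k/2$, from which both bracketed terms in $S$ evaluate to $p_k k^2/4$ at $\hz = -1$. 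To promote this pointwise cancellation to a uniform $C^1$ bound on $[-1,-1/2]$, the plan is to invoke the smooth dependence of the ODE system (\ref{ss:SSESF1})--(\ref{ss:SSESF7}) on the parameter $k$: at $k=0$ the system reduces to Minkowski self-similarity (with $\mr{m}^{(0)}\equiv 0$ and $\mr{r}^{(0)}(\hz) = \tfrac{1}{2}(1-|\hz|)$), and the scalar field enters quadratically into the mass equation, so every double-null quantity admits a $C^\infty$-in-$\hz$ expansion $\mr\Psi(\hz) = \mr\Psi^{(0)}(\hz) + k^2 \mr\Psi^{(2)}(\hz) + O(k^4)$ on the compact interval $[-1,-1/2]$. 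This yields $\mr{m}/\mr{r}^3 = k^2 G(\hz) + O(k^4)$ in $C^1([-1,-1/2])$ for a fixed smooth $G$, delivering simultaneously $\|\mr{m}/\mr{r}^3\|_{C^1([-1,-1/2])}\lesssim k^2$ and the missing near-axis piece of the $L^p_z$ bound. A more hands-on alternative is to Taylor expand $S(\hz)$ to first order at the axis using (\ref{eq:smallkbound8})--(\ref{eq:smallkbound10}) and (\ref{eq:smallkbound14}), but this requires careful tracking of the error terms to preserve the correct $k$-scaling.
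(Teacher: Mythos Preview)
Your treatment of the near-horizon region $\{\hz\in[-\tfrac12,0]\}$ is correct and coincides with the paper's: once $\mr r$ is bounded below, the explicit formula for $\partial_z(\mr m/\mr r^3)$ together with the $L^p$-smallness of $\partial_z\mr\phi$ and the pointwise bound (\ref{eq:smallkbound10.5}) gives the result directly.

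The near-axis argument, however, is where your proposal and the paper diverge, and your route has a genuine gap. You correctly identify that $S(-1)=0$ exactly (the two bracketed terms both equal $p_k k^2/4$), and that the real content is controlling $S(\hz)/\mr r(\hz)$ uniformly near the axis with the right $k$-scaling. But your main proposal---invoking smooth dependence of the ODE system (\ref{ss:SSESF1})--(\ref{ss:SSESF7}) on the parameter $k$ to conclude $\mr m/\mr r^3 = k^2 G(\hz)+O(k^4)$ in $C^1([-1,-\tfrac12])$---is not established anywhere in the paper and is not automatic: the axis $\hz=-1$ is a regular singular point of the system, and the quantity $\mr m/\mr r^3$ involves dividing by a function that vanishes there, so uniform-in-$k$ $C^1$ control at the axis requires a separate argument. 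Your ``hands-on alternative'' is closer to what is actually needed, but you do not carry it out.

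The paper avoids this entirely by an integration-by-parts identity. Writing $\mr m(\hz)=\int_{-1}^{\hz}\frac{\mr r^2}{2\mr\lambda}(1-\mr\mu)(\mr\phi')^2\,d\hz'$ and applying the product rule to $\mr m/\mr r^3$, one obtains two terms; rewriting $3\mr r^2=\partial_{\hz}(\mr r^3)/\mr\lambda$ inside the integral and integrating by parts, the boundary term at $\hz$ exactly cancels the other term in the product-rule expansion, leaving the single expression
\[
\partial_{\hz}\Big(\frac{\mr m}{\mr r^3}\Big)=\mr r^{-4}\mr\lambda\int_{-1}^{\hz}\mr r^3\,\partial_{\hz}\!\Big(\frac{1}{2\mr\lambda^2}(1-\mr\mu)(\mr\phi')^2\Big)\,d\hz'.
\]
This makes the axis cancellation you observed completely manifest: the factor $\mr r^3$ inside the integral compensates the $\mr r^{-4}$ outside, and the $O(k^2)$ smallness now comes directly from the already-established pointwise bounds on $\mr\lambda'$, $\mr\mu'$, and $\mr\phi''$ on $[-1,-\tfrac12]$ (namely (\ref{eq:smallkbound8}), (\ref{eq:tempderivativeofm}), and (\ref{eq:smallkbound14})). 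No appeal to parameter-dependence of the ODE is needed; the argument stays entirely within the estimates already proved in the appendix.
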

    \begin{proof}
        A pointwise bound on $\frac{\mr{m}}{\mr{r}^3}$ was given in (\ref{eq:smallkbound10.5}). To complete the $C^1_{z}$ bound for $z \in [-1,-\frac12]$, we employ a standard integration by parts trick to convert bounds on this derivative to higher order bounds on the scalar field. Note that $\partial_z$ and $\partial_{\hz}$ are comparable away from ${z=0}$, and thus the choice of derivative is immaterial. Write
    \begin{align*}
        \partial_{\hz}\bigg(\frac{\mr{m}}{\mr{r}^3} \bigg) &= \partial_{\hz} \bigg(\mr{r}^{-3} \int_{-1}^{\hz}\frac{\mr{r}^2}{2\mr{\lambda}}(1-\mr{\mu})(\mr{\phi}')^2 d\hz \bigg) \\
        &= -3\mr{r}^{-4}\mr{\lambda}\int_{-1}^{\hz}\frac{\mr{r}^2}{2\mr{\lambda}}(1-\mr{\mu})(\mr{\phi}')^2 d\hz  + \mr{r}^{-1} \frac{1}{2\mr{\lambda}}(1-\mr{\mu})(\mr{\phi}')^2 \\
        &= - \mr{r}^{-4}\mr{\lambda}\int_{-1}^{\hz}\partial_{\hz}(\mr{r}^3)\frac{1}{2\mr{\lambda}^2}(1-\mr{\mu})(\mr{\phi}')^2 d\hz + \mr{r}^{-1} \frac{1}{2\mr{\lambda}}(1-\mr{\mu})(\mr{\phi}')^2 \\ 
        &= \mr{r}^{-4}\mr{\lambda} \int_{-1}^{\hz} \mr{r}^3 \partial_{\hz}\bigg(\frac{1}{2\mr{\lambda}^2}(1-\mr{\mu})(\mr{\phi}')^2 \bigg)d\hz
    \end{align*}
    It now suffices to ensure $L^\infty_{\hz}([-1,-\frac{1}{2}])$ bounds with smallness on the differentiated quantity within the integrand. For $\mr{\lambda}, \mr{\mu}, \mr{\phi}'$ this follows from (\ref{eq:smallkbound8}), (\ref{eq:tempderivativeofm}), and (\ref{eq:smallkbound14}) respectively.
    
    For the $L^p_z$ bound, it suffices to estimate in a domain $z \in [-\frac12,0]$. Rewriting (\ref{ss:SSESF6}) as an equation in the $z$ coordinate, it follows that $\frac{d}{dz}\mr{m} \in L^p_z([-1,0])$ with $L^p_z([-1,0])$ norm of size roughly $k^2$. Thus directly differentiating $\frac{\mr{m}}{\mr{r}^3}$ (observe that $\mr{r}$ is bounded below in this domain) and estimating the resulting terms gives the result.
    \end{proof}

    Finally, we record higher derivative estimates. We shall not need to track the dependence of constants on $k$, and focus on the numerology of blowup rates near $\hz = 0$.
\begin{lemma}
    The following bounds hold, for $2 \leq j \leq 5$: 
    \begin{align}
        \big\||z|^{j-1 - p_k k^2} \partial_z^{j+1} \mr{r}\big\|_{L^\infty([-1,0])} &+ \big\||z|^{j -1- p_k k^2} \partial_z^{j} \mr{m}\big\|_{L^\infty([-1,0])}
        \nonumber \\[\jot]
        &+ \big\||z|^{j -1- p_k k^2} \partial_z^{j} \mr{\phi}\big\|_{L^\infty([-1,0])} \lesssim_k 1. \label{eq:smallkbound10.7}
    \end{align}
\end{lemma}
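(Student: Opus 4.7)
The plan is to proceed by induction on $j \in \{2,3,4,5\}$, working with the self-similar system (\ref{ss:SSESF1})--(\ref{ss:SSESF7}) after conversion from the $\hz$-coordinate to the renormalized $z$-coordinate via $\hz = -|z|^{p_k}$, which gives $\partial_{\hz} = p_k^{-1}|z|^{-p_k k^2}\partial_z$. The base case $j=2$ for $\mr{\phi}$ is exactly the content of Proposition \ref{prop:dvsquaredblowupproof}. For $\mr{r}$ at $j=2$, start from the identity $\partial_z \mr{r}(z) = \br{\mr{\lambda}}(z) \doteq |\hz|^{k^2}\mr{\lambda}(\hz)$ and rewrite (\ref{eq:weightedlambda}) in $z$-coordinates as $\partial_z \br{\mr{\lambda}} = -p_k |z|^{p_k k^2}\big(\mr{r}(\mr{\phi}')^2 + 2k\mr{\lambda}\mr{\phi}'\big)$. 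Differentiating once more and using the expansions (\ref{eq:newexpansions1})--(\ref{eq:newexpansions2}) for $\mr{\phi}'$ together with the base case for $\partial_z^2\mr{\phi}$, one recovers $|z|^{1-p_k k^2}\partial_z^3 \mr{r} \in L^\infty$. The bound on $\partial_z^2 \mr{m}$ at $j=2$ follows analogously from (\ref{ss:SSESF6}), whose $z$-coordinate form yields $\partial_z \mr{m}$ as a product of bounded quantities times $(\partial_z \mr{\phi})^2$, after which one more differentiation produces a term $\partial_z\mr{\phi} \cdot \partial_z^2 \mr{\phi}$ whose leading singularity is precisely $|z|^{-1+p_k k^2}$.

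For the inductive step $j \to j+1$, I would express the scalar field equation (\ref{ss:SSESF5}) in $z$-coordinates, where it takes the schematic form
\begin{equation*}
q_k|z|\,\partial_z^2 \mr{\phi} = A(z)\partial_z \mr{\phi} + B(z),
\end{equation*}
with coefficients $A(z), B(z)$ algebraic in the quantities $\mr{r}, \br{\mr{\lambda}}, \mr{m}$ controlled by Propositions \ref{prop:appendixLpbound}--\ref{prop:expansionPhiM}. Differentiating this relation $j-1$ times (via the Leibniz rule) and solving for $\partial_z^{j+1}\mr{\phi}$, one sees that the leading coefficient $|z|$ produces the additional $|z|^{-1}$ needed to go from weight $|z|^{j-1-p_k k^2}$ to $|z|^{j-p_k k^2}$; every lower-order term is either controlled by the inductive hypothesis on $\partial_z^l \mr{\phi}$, $l \leq j$, or is a derivative of a coefficient built from $\mr{r}$, $\br{\mr{\lambda}}$, $\mr{m}$, which can in turn be handled inductively using $\partial_z \mr{r} = \br{\mr{\lambda}}$, equation (\ref{eq:weightedlambda}) in $z$-coordinates, and equation (\ref{ss:SSESF6}) in $z$-coordinates. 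The estimates on $\partial_z^{j+2}\mr{r}$ and $\partial_z^{j+1}\mr{m}$ are then obtained by applying $\partial_z^{j}$ and $\partial_z^{j-1}$ respectively to the same transport relations, at which point all terms appearing on the right-hand side have been controlled either by the inductive hypothesis or by the newly-established bound on $\partial_z^{j+1}\mr{\phi}$.

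The main obstacle will be the careful bookkeeping of powers of $|z|^{\pm p_k k^2}$ arising from the chain rule: each conversion between $\partial_{\hz}$ and $\partial_z$ produces a factor of $|z|^{-p_k k^2}$, and the products that appear in higher derivatives of the nonlinear right-hand sides (particularly $(\partial_z \mr{\phi})^2$ in the mass equation) require verifying that the accumulated weight is exactly $|z|^{j-1-p_k k^2}$ and not worse. The critical point is that the explicit leading expansion $k\mr{\phi}'(\hz) = C_{1,k}|\hz|^{-k^2} - (p_k+O(k^2)) + O(|\hz|^{1-k^2})$ of Corollary \ref{cor:expnear0} ensures that the singular contribution to each $\partial_z$-derivative of $\mr{\phi}$ lies in the one-dimensional subspace spanned by $|z|^{-p_k k^2}$ (up to lower order), so the weights align at every inductive step. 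Since only finitely many ($j \leq 5$) iterations are needed and we do not track $k$-dependence of the constants, the induction closes without circularity.
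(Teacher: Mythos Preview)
Your approach is correct and essentially the same as the paper's: both arguments proceed by inductively commuting the self-similar ODE system with $\partial_z$ (equivalently $|\hz|^{k^2}\partial_{\hz}$), using Proposition~\ref{lemma:backgroundprelims1} and Proposition~\ref{prop:dvsquaredblowupproof} for the base case $j=2$, and observing that each additional derivative costs at most one power of $|z|^{-1}$. The paper's version is considerably terser---it simply notes the base-case regularity and then remarks that ``each commutation introduces at worst a single additional power of $|\hz|^{-q_k} \sim |z|^{-1}$''---whereas you spell out the inductive mechanism more explicitly via the schematic form $q_k|z|\,\partial_z^2\mr{\phi} = A(z)\partial_z\mr{\phi} + B(z)$; but the substance is identical.
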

\begin{proof}
    By Lemma \ref{lemma:backgroundprelims1} we have for $0 \leq j \leq 1$
    \begin{align*}
        \mr{r}, \ \partial_z^{j+1} \mr{r}, \ \partial_z^{j} \mr{m}, \ \partial_z^j \mr{\Omega}, \ \partial_z^j \mr{\phi} &= O_{L^\infty}(1), \\[\jot]
        \partial_z^3 \mr{r}, \ \partial_z^2 \mr{m}, \ \partial_z^2 \mr{\Omega}, \ \partial_z^2 \mr{\phi} &= O_{L^\infty}(|z|^{-1+p_k k^2}).
    \end{align*}
    It now suffices to commute the system (\ref{ss:SSESF1})--(\ref{ss:SSESF7}) by $\partial_z \sim |\hz|^{k^2} \partial_{\hz}$ and apply the above bounds inductively. The point is that each commutation introduces at worst a single additional power of $|\hz|^{-q_k} \sim |z|^{-1}$. In principle cancellations could occur improving these blowup rates; however we do not consider this possibility here.
\end{proof}
\bibliographystyle{amsplain}
\bibliography{biblio}

\end{document}